\documentclass{sig-alternate-mod}

\newcommand{\mysize}{\fontsize{10pt}{10pt}\selectfont}

\usepackage[boxruled, vlined]{algorithm2e}
\usepackage{graphicx,amssymb,wrapfig,makeidx}



\newtheorem{ex}{EXAMPLE}[section]
\newenvironment{example}{\begin{ex} \nopagebreak
  \begin{rm}}{{\hfill$\Box$}\end{rm}\end{ex}} 

\newtheorem{defin}{Definition}[section]
\newenvironment{definition}[1]{\begin{defin}\begin{rm}({\bf #1})}{{\hfill$\Box$}\end{rm}\end{defin}}

\newtheorem{lemm}{Lemma}[section]
\newenvironment{lemma}{\begin{lemm}}{{\hfill$\Box$}\end{lemm}}

\newtheorem{thm}{Theorem}[section]
\newenvironment{theorem}{\begin{thm} \nopagebreak}{{\hfill$\Box$}\end{thm}}

\newtheorem{prop}{Proposition}[section]
\newenvironment{proposition}{\begin{prop}}{{\hfill$\Box$}\end{prop}}

\newtheorem{corol}{Corollary}[section]
\newenvironment{corollary}{\begin{corol} \nopagebreak}{{\hfill$\Box$}\end{corol}}

\newtheorem{conjec}{Conjecture}[section]

\newsavebox{\savepar}




\newcommand{\squishlist}{
  \begin{list}{$\bullet$}
   {
     \setlength{\itemsep}{0pt}
     \setlength{\parsep}{0pt}
     \setlength{\topsep}{0pt}
     \setlength{\partopsep}{0pt}
     \setlength{\leftmargin}{1.5em}
     \setlength{\labelwidth}{1em}
     \setlength{\labelsep}{0.5em} } }
\newcommand{\squishend}{
   \end{list}  }


\newcommand{\reminder}[1]{{\bf [[[***** #1*****]]] }}
\newcommand{\nop}[1]{}                       

 \SetAlgoSkip{4cm}

\begin{document}

\title
{Equivalence of SQL Queries \\ In Presence of Embedded Dependencies
}





\numberofauthors{2} 

\author{
\alignauthor Rada Chirkova
\\
	\affaddr{Department of Computer Science} \\
	\affaddr{NC State University, Raleigh, NC 27695, USA} \\
	\email{chirkova@csc.ncsu.edu}
\alignauthor Michael R. Genesereth\\
	\affaddr{Department of Computer Science} \\
	\affaddr{Stanford University, Stanford, CA 94305, USA} \\
	\tiny{\email{genesereth@stanford.edu}}
}

\maketitle

{\mysize

\begin{abstract}
{\mysize

 We consider the problem of finding equivalent minimal-size reformulations of SQL queries in presence of embedded dependencies~\cite{AbiteboulHV95}. Our focus is on select-project-join (SPJ) queries with equality comparisons, also known as safe conjunctive {\em (CQ)}  queries, 
 possibly with grouping and aggregation.  For SPJ queries, the semantics of the SQL standard treat query answers as {\em multisets} (a.k.a. {\em bags}), whereas the stored relations may be treated either as sets, which is called {\em bag-set semantics} for query evaluation, or as bags, which is called {\em bag semantics}. (Under {\em set semantics,} both query answers and stored relations are treated as sets.)

In the context of the above Query-Reformulation Problem, we develop a comprehensive framework for equivalence of  CQ queries under bag and bag-set semantics in presence of embedded dependencies, and make a number of conceptual and technical contributions. 
Specifically, we develop equivalence tests for CQ queries in presence of arbitrary sets of embedded dependencies under bag and bag-set semantics, under the condition that  chase~\cite{DeutschPods08} under  set semantics {\em (set-chase)} 
on the inputs terminates. We also present   
equivalence tests for {\em aggregate} CQ queries 
in presence of embedded dependencies. 
We use our equivalence tests to develop sound and complete (whenever set-chase on the inputs terminates) algorithms for solving instances of the Query-Reformulation Problem with CQ queries under each of bag and bag-set semantics, as well as for instances of the problem with aggregate queries. 
%

Some of our results are of independent interest. In particular, it is known that constraints that force some relations to be sets on all instances of a given database schema arise naturally in the context of  sound  (i.e., correct) chase~\cite{DeutschDiss} under bag semantics. We develop a formal framework for defining such constraints as 
embedded dependencies, 
	provided that {\it row (tuple) IDs,} commonly used in commercial database-management systems, are defined for the respective relations. 


We also extend the condition of~\cite{VardiBagsPods93} for bag equivalence of CQ queries, to those cases where some relations are set valued in all instances of the given schema. 
	Our proof of this nontrivial result 
	includes reasoning involving bag (non)containment. In particular, we provide an original proof (adapted to our context) of the result of~\cite{VardiBagsPods93} that CQ query $Q_1$ is bag contained in CQ query $Q_2$ only if, for each predicate used in $Q_1$, $Q_2$ has at least as many subgoals with this predicate as $Q_1$ does. 



	


Our contributions are clearly applicable beyond the Query-Reformulation Problem considered in this paper. Specifically, the results of this paper can be used in developing algorithms for rewriting CQ queries and queries in more expressive languages (e.g., including grouping and aggregation, or   arithmetic comparisons) using views in presence of embedded dependencies, under bag or bag-set semantics for query evaluation. 



\nop{
Given a relational query $Q$ and a set of dependencies $\Sigma$, a chase step $Q \Rightarrow^{\sigma} Q'$ produces query $Q'$ by applying to $Q$ some dependency $\sigma \in \Sigma$. A chase step  $Q \Rightarrow^{\sigma} Q'$ or, more generally, a chase sequence $Q \Rightarrow^{\sigma_1} \ldots \Rightarrow^{\sigma_k}  Q'$, with $k \geq 1$, is {\it sound} if it preserves equivalence between $Q$ and $Q'$ in presence of $\Sigma$. That is, chase is sound only when $Q$ and $Q'$ are equivalent on all databases that satisfy the dependencies $\Sigma$. 

{\it Embedded dependencies} are a practically important class of dependencies, which includes all usual integrity constraints, such as keys, foreign keys, inclusion, join, and multivalued dependencies. 
Chase is known to be sound for conjunctive (CQ) queries in presence of embedded dependencies under the well-understood {\em set semantics} for query evaluation, that is, when both the stored relations in the database and query answers are treated as sets.  
Set semantics are a variation on the practically important bag and bag-set semantics for query evaluation. Under {\it bag semantics,} both the stored relations and query answers are treated as multisets {\em (bags);} this is the default semantics for query evaluation in the standard for the query language SQL  for relational database systems. {\em Bag-set semantics} is a restriction of bag semantics to set-valued stored relations; this semantics applies to SQL query evaluation under the  best practices of defining the primary key for each stored relation.   %
Set semantics can be enforced by using the {\tt DISTINCT} keyword in SQL queries posed on set-valued stored relations. 


In this paper we present a comprehensive study of the problem of soundness of chase for conjunctive (CQ) queries under bag and bag-set semantics. 
Specifically, we formulate sufficient and necessary conditions for chase soundness in presence of embedded dependencies under bag and bag-set semantics. 
Further, we provide sound and complete {\it dependency-free} tests for query equivalence in presence of embedded dependencies,  for the following query classes and  semantics for query evaluation. (1) CQ queries under  bag and bag-set semantics, (2) CQ queries with grouping and aggregation, and (3)  CQ queries, possibly with aggregation, and  their view-based rewritings (under bag and bag-set semantics for the aggregation-free case). 

The well-known Chase and Backchase (C\&B) algorithm of~\cite{DeutschPT06} generates minimal-size reformulations $Q'$ of the input CQ query $Q$,  such that each $Q'$ is equivalent to $Q$ in presence of the input dependencies $\Sigma$ and under set semantics for query evaluation. 
C\&B is sound and complete whenever chase of the input query terminates in finite time. 
Our results  suggest sound and complete modifications of  C\&B for the settings of (1) bag and bag-set semantics for CQ queries, and of (2) aggregate CQ queries. We also show that C\&B has a sound modification that applies to CQ queries with arithmetic comparisons. Straightforward variations on the modified C\&B algorithms apply to the problem of generating minimal-size equivalent view-based query rewritings, for all of the above settings. 

\reminder{Also talk about CQAC extensions? Also algorithm for finding view-based rewritings for CQ and aggregate queries?}

\reminder{Also tight complexity for sound chase under bag and bag-set semantics, whenever chase terminates}

In the course of our investigation we establish several results of independent value/interest. In particular, we generalize the well-known necessary condition~\cite{VardiBagsPods93} for bag equivalence of two CQ queries, see Theorem~\ref{cv-updated-thm}. We also propose a dependency-based approach to ensuring that certain relations be set valued in all, generally bag-valued, database instances, see Section~\ref{making-chase-sound-section}. Our approach involves commonly used ``row (or tuple) IDs'' in functional dependencies. \reminder{Move Appendix~\ref{appendix-a} to section in main text???} 

\mbox{}

Applications and future work: This would help solve the problem of reformulation for XQueries with bag semantics on XML data. Such queries can be explicitly written using the keyword {\tt unordered}, see~\cite{DeutschDiss} for a discussion. 
} 
} 

This text contains corrections to Sections 2.4 and 4 of \cite{ChirkovaG09}. 

\end{abstract}

\nop{

\reminder{Structure of intro:
\begin{enumerate}
	\item intros *and* *related* *work* from: encyclop chase article + \cite{VardiBagsPods93} (that equivalence is more important than containment) + from my encyclop article on query containment (applications of query equivalence) + from DeutschPods08chaseRevisited
	\item contributions: 
	\item tie with title
	\item explain that thm 4.2 is between bag and bag-set semantics and that it is really viable for, e.g., mat views, where views may expected to be bags while all (original) stored relations may required to be sets
	\item explain nontriviality of proof of Thm 4.2
	\item explain $\Sigma_B^{max}$?
	\item say on impact: applications in using views etc
\end{enumerate}

}

\reminder{Explain that our results do not follow from the set-semantics chase-soundness results --- that is, enforcing $D^{(Q_n)} \models \Sigma$ all the way, in general, breaks soundness of chase under bag and bag-set semantics. On the contrary, $Q \equiv_{\Sigma,B/BS} Q'$ guarantees equivalence of $Q$ and $Q'$ on a (sometimes proper) superset of the set of all databases that satisfy the dependencies $\Sigma$. This generally larger set of databases is the set of databases that satisfy the subset $\Sigma_{relevant}$ of $\Sigma$.}

\reminder{The subset $\Sigma_{relevant}$ is, in general, query dependent (see application of tgd $\sigma_3$ to queries $Q_1$ and $Q_2$ in Example~\ref{real-he-motivating-example}), and also ``chase-stage'' dependent, see application of tgd $\sigma_1$ to query $Q_1$ in the same example (though {\it always} applicable to query $Q_1$, the tgd $\sigma_1$ would break bag-chase soundness if applied immediately, but would not break the chase soundness if applied ``eventually''). Note that chase under set semantics exhibits a different flavor of ``chase-stage'' dependency, where: a dependency always preserves chase soundness whenever the dependency is applicable, but sometimes one has to apply other dependencies in order to make this particular dependency applicable.}

\reminder{
While past work has some preliminary results 

Exponential-time lower bound on the size of equivalent rewritings => we do not consider filtering views

Say that for bag semantics, filtering views are impossible

Also apply to CQAC queries and views! --- we provide *sound* algorithms!

In this paper, we address these open problems, by providing a comprehensive treatment of the problem of rewriting queries equivalently using views and in presence of embedded dependencies, for the following settings. (1) CQ queries and views, under

and make a number of conceptual and technical contributions. ... We introduce and study ... We examine and point out that ... Our analysis also reveals that ... Our main technical result is ... Among those algorithms, the more sophisticated one is ... We also show that if ... 

Add to contributions: 
we undertake a systematic study of query rewriting under dependencies, for SQL select-project-join queries without and with dependencies, and make a number of conceptual and technical contributions. In particular, we advance the state of the art on using views to obtain equivalent query rewritings in presence of dependencies.

} 

\end{abstract}

} 

{\mysize

\vspace{-0.2cm}

\section{Introduction}

Query containment and equivalence were recognized fairly early as fundamental problems in  database query evaluation and optimization. The reason is, for  conjunctive queries ({\em CQ queries}) --- a broad class of frequently used queries, whose expressive power is equivalent to that of select-project-join queries in relational algebra --- query equivalence can be used as a tool in query optimization. 
Specifically, to find a more efficient {\it and} answer-preserving formulation of a given CQ query, it is enough to ``try all ways'' of arriving at a ``shorter'' query formulation, by removing query subgoals, in a process  called query minimization~\cite{ChandraM77}. A subgoal-removal step succeeds only if equivalence (via containment) of the ``original'' and ``shorter'' query formulations can be ensured. The equivalence test of~\cite{ChandraM77} for CQ queries is known to be NP complete, whereas equivalence of general relational queries is undecidable. 

In recent years, there has been renewed interest in the study of query containment and equivalence, because of their close relationship to the problem of answering queries using views \cite{LevyAquvSurvey}. 
In particular, the problem of rewriting relational queries equivalently using views 
has been the subject of extensive rigorous investigations. Please see~\cite{DeutschPT06,LevyAquvSurvey,ChenLiEncyclop,Ullman00} for discussions of the state of the art and of the numerous practical applications of the problem. A test for equivalence of a CQ query to its candidate CQ rewriting in terms of CQ views uses an equivalent transformation of the rewriting to its CQ {\em expansion,} which (informally speaking)
 replaces references to views in the rewriting by their definitions~\cite{LevyAquvSurvey,Ullman00}. Then the equivalence test succeeds if and only if the expansion of the rewriting is equivalent, via the equivalence test of~\cite{ChandraM77}, to the input query. 

Some of  the investigations discussed in~\cite{DeutschPT06,LevyAquvSurvey,ChenLiEncyclop,Ullman00} focused on view-based query rewriting in presence of integrity constraints (also called {\it dependencies}, see~\cite{AbiteboulHV95} for an overview). For a given query, accounting for the dependencies that hold on the database schema may increase the number of equivalent rewritings of the query using the given views. As a result, for a particular quality metric on the rewritings being generated, one may achieve better quality of the outputs of the rewriting generator, with obvious practical advantages. Similarly, accounting for the existing dependencies in reformulating queries in a query optimizer could result in a larger space of equivalent reformulations. For an illustration, please see Example~\ref{motivating-example} in this paper.  


In the settings of 
query reformulation and view-based query rewriting in presence of dependencies, Deutsch and colleagues have developed an algorithm, called Chase and Backchase {\it (C}\&{\it B}, see~\cite{DeutschPT06}) that, for a given CQ query, outputs equivalent minimal-size CQ reformulations or rewritings of the query. 
The technical restriction on the 
 algorithm is the requirement that the process of  ``chasing'' (see~\cite{AbiteboulHV95} for an overview) the input query under the available dependencies terminate in finite time. Intuitively, the point of the chase in C\&B is to use the available dependencies to derive a new query formulation, which can be used to check ``dependency-aware'' equivalence of the query to any candidate reformulation or rewriting by using any known {\it dependency-free} equivalence test (e.g., that of~\cite{ChandraM77} for CQ queries). 
Under the above restriction, the C\&B algorithm is sound and complete for CQ queries, views, and rewritings/re-\linebreak formulations in presence of {\it embedded dependencies}, which are known to be sufficiently expressive to specify all usual integrity constraints, such as keys, foreign keys, inclusion, join, and multivalued dependencies~\cite{DeutschPods08}. 

The above guarantees of C\&B hold under {\em set semantics} for query evaluation, where both the database (stor-\linebreak ed) relations and query answers are treated as sets. Query answering and rewriting in the set-semantics setting have been studied extensively in the database-theory literature. At the same time, the set semantics are {\it not} the default query-evaluation semantics in database systems in practice. Specifically, the expected semantics of query evaluation in the standard query language SQL~\cite{GarciaMolinaUW02} are  {\it bag-set semantics.} That is, whenever a query does not use the {\tt DISTINCT} keyword, then query answers are treated in the SQL standard as multisets (i.e., sets with duplicates, also called {\it bags}), 
whereas the database relations are assumed to be sets. 

Arguably, the default semantics of SQL are the {\em bag semantics,} where both query answers and stored relations are permitted to be bags. Indeed, by the SQL standard stored relations are bags, rather than sets, whenever the {\tt PRIMARY KEY} and {\tt UNIQUE} clauses (which arise from the best practices but are not required in the SQL standard) are not part of the {\tt CREATE TABLE} statement. 
Using bag semantics in evaluating SQL queries becomes imperative in presence of materialized views~\cite{LevyAquvSurvey}, where the definitions of some of the views may not have included the {\tt DISTINCT} keyword, even assuming that all the original stored relations are required to be sets. 

The problem of developing tests for equivalence of CQ queries under bag and bag-set semantics was solved by Chaudhuri and Vardi in~\cite{VardiBagsPods93}.  
The bag-set-semantics test of~\cite{VardiBagsPods93} is also used in testing equivalence of queries with grouping and aggregation~\cite{CohenNS99,NuttSS98}. 
At the same time, developing tests for equivalence of CQ queries under bag or bag-set semantics in presence of  embedded dependencies has been an open problem until now. To the best of our knowledge, the only efforts in this direction have been undertaken by Deutsch in~\cite{DeutschDiss} and by Cohen in~\cite{Cohen06}, please see Section~\ref{related-work-section} for a more detailed discussion. Neither effort has resulted in equivalence tests for queries in presence of arbitrary sets of embedded dependencies, which may serve as an indication that the problem of  developing tests for equivalence of CQ queries under bag or bag-set semantics in presence of  embedded dependencies is not trivial.

\nop{



The bag-set semantics have perhaps nonintuitive implications for the important class of SQL queries called aggregate queries. That is, even though the answer to an aggregate query is guaranteed to be a set, the evaluation sequence for the query includes a stage of building a {\it bag} of tuples from the (set-valued) database relations. These semantics are adopted in all  formal treatments of aggregate queries in the literature, see, e.g., \cite{SaraCohen06,SaraCohen07,CohenNS99}. The equivalence test of~\cite{CohenNS99,NuttSS98} for CQ queries with grouping and aggregation is by reduction to the equivalence of the (unaggregated) CQ cores of the input queries, please see  Section~\ref{aggr-prelims} for the terminology and details.

%


\reminder{Say that bag and bag-set semantics combined with dependencies are not trivial, see the efforts of~\cite{Cohen06,DeutschDiss} that have failed to provide the complete picture}

\mbox{}

As pointed out in~\cite{VardiBagsPods93}, it is difficult to apply the results of the research on query optimization
 
Note that the problems of query containment and query equivalence are equivalent for CQ queries under  the common setting of  {\it set semantics for query evaluation,} where both stored relations and query answers are interpreted as sets of tuples. Interestingly, the relationship between the problems of containment and equivalence is very different under {\it bag semantics for query evaluation,} where both stored relations and query answers are allowed to have duplicates. See Jayram and colleagues \cite{KolaitisPods06} for a discussion and references on containment and equivalence under bag semantics, for CQ queries as well as for more expressive classes of queries, including CQACs and queries with grouping and aggregation. Jayram and colleagues \cite{KolaitisPods06} also present original undecidability results for containment of CQ queries with inequalities under bag and bag-set semantics for query evaluation.







The problem of answering queries using views has emerged as a central problem in integrating information from heterogeneous sources, an area that has been the focus of concentrated research efforts for a number of years \cite{LevyAquvSurvey}. An information-integration system can be described logically by views that specify what queries the various information sources can answer. These views might be conjunctive queries or Datalog programs, for example. The ``database'' of predicates over which these views are defined is not a concrete database but rather a collection of ``global'' predicates whose actual values are determined by the sources, via the views. Information-integration systems provide a uniform query interface to a multitude of autonomous data sources, which may reside within an enterprise or on the World-Wide Web. Data-integration systems free the user from having to locate sources relevant to a query, interact with each one in isolation, and manually combine data from multiple sources. 


Given a user query $Q$, typically a conjunctive query, on the global predicates, an information-integration system determines whether it is possible to answer $Q$ by using the various views  in some combination. In addressing the problem of answering queries using views in the information-integration setting, query containment appears to be more fundamental than query equivalence. In fact, answering a query using only the answers to the views is considered ``good enough'' even in cases where equivalence does not hold (or cannot be demonstrated), provided that the view-based query rewriting can be shown to be contained in the query and is a maximal  (i.e., returning the maximal set of answers) rewriting of the query using the available views and a given rewriting language. (For the details and references on maximally contained rewritings see, e.g., \cite{AfratiLM06}.)





Besides its applications in information integration, the problem of answering queries using views is of special significance in other data-management applications. (Please see \cite{LevyAquvSurvey} for the details and references.)  For instance, in query optimization finding a rewriting of a query using a set of materialized views (i.e., the answers to the queries defining the views) can yield a more efficient query-execution plan, because part of the computation necessary to answer the query may have already been done while computing the materialized views. Such savings are especially significant in decision-support applications when the views and queries contain grouping and aggregation. 


In the context of database design, view definitions provide a mechanism for supporting the independence of  the logical and physical views of data. This independence enables the developers to modify the storage schema of the data (i.e., the physical view) without changing its logical schema, and to model more complex types of indices. Provided the storage schema is described  as a set of views over the logical schema, the problem of computing a query-execution plan involves figuring out how to use the view answers (i.e., the physical storage) to answer the query posed on the logical schema.


In the area of data-warehouse design the desideratum is to choose a set of views (and indexes on the views) to materialize in the warehouse. Similarly, in web-site design, the performance of a web site can be significantly improved by choosing a set of views to materialize. In both problems, the first step in determining the utility of a choice of views is to ensure that the views are sufficient for answering the queries expected to be posed over the data warehouse or the web site. The problem, again, translates into the view-rewriting problem.








The problem of query containment is also of special significance in artificial intelligence, where conjunctive queries, or similar formalisms such as description logic, are used in a number of applications. The design theory for such logics is reducible to containment and equivalence of conjunctive queries. Original results and a detailed discussion concerning an intimate connection between conjunctive-query containment in database theory and constraint satisfaction in artificial intelligence can be found in \cite{KolaitisV00}.
} 

\vspace{-0.2cm}

\paragraph{Our contributions} 

\noindent
 We consider the problem of finding equivalent minimal-size reformulations of SQL queries in presence of embedded dependencies, with a  focus on select-project-join queries with equality comparisons, also known as safe CQ queries, possibly with grouping and aggregation.  
To construct algorithms that would solve instances of this Query-Reformulation Problem (specified in Section~\ref{problem-stmt-sec}), we develop a comprehensive framework for equivalence of CQ queries under bag and bag-set semantics in presence of embedded dependencies, and make a number of conceptual and technical contributions. Specifically: 
\begin{itemize}
\vspace{-0.1cm}

	\item We formulate sufficient and necessary conditions for  correctness {\em (soundness)} of chase 
	for CQ queries and arbitrary sets of embedded dependencies under bag and bag-set semantics, see Section~\ref{new-sound-chase-sec}. 
\vspace{-0.2cm}

	\item It has been shown~\cite{DeutschDiss} that constraints that force some relations to be sets on all instances of a given database schema arise naturally in the context of sound chase under bag semantics. We develop a formal framework for defining such constraints as 
embedded dependencies, 
	provided that {\it row (tuple) IDs} (commonly used in commercial database-management systems) are defined for the respective relations. See  Section~\ref{new-sound-chase-sec} and Appendix~\ref{appendix-a}.  
\vspace{-0.2cm}

	\item We extend the condition of~\cite{VardiBagsPods93} for bag equivalence of CQ queries, to those cases where some relations are set valued in all instances of the given schema, see Section~\ref{new-sound-chase-sec}.   
	Our proof of this nontrivial result 
	includes reasoning involving bag (non)containment. In particular, we provide an original proof (adapted to our context) of the result of~\cite{VardiBagsPods93} that CQ query $Q_1$ is bag contained in CQ query $Q_2$ only if, for each predicate used in $Q_1$, $Q_2$ has at least as many subgoals with this predicate as $Q_1$ does. 
\vspace{-0.2cm}

	\item We show that the  result $Q_n$ of sound chase of a CQ query $Q$ using a finite set $\Sigma$ of embedded dependencies is unique under each of bag and bag-set semantics, whenever set-chase of $Q$ using $\Sigma$ terminates. We also provide a constructive characterization of the maximal subset of $\Sigma$ that is satisfied by the canonical database for $Q_n$. See Section~\ref{un-res-sec}. 

\vspace{-0.2cm}

	\item We provide equivalence tests for CQ queries in presence of embedded dependencies under bag and bag-set semantics, see Section~\ref{equiv-tests-he-subsection}. 
\vspace{-0.2cm}
	
	\item We present  equivalence tests for CQ queries {\it with grouping and aggregation} in presence of embedded dependencies, see Section~\ref{equiv-tests-third-subsection}. 
\vspace{-0.2cm}
	
	\item Finally, we develop sound and complete (whenever {\em set-}chase on the inputs terminates) algorithms for solving instances of the Query-Reformulation Problem with CQ queries under each of bag and bag-set semantics, as well as instances of the problem with aggregate queries, see Section~\ref{equiv-tests-second-subsection}. 
\vspace{-0.1cm}

\end{itemize}

Our contributions are clearly applicable beyond the Query-Reformulation Problem of Section~\ref{problem-stmt-sec}. Specifically, the results of this paper can be used in developing algorithms for rewriting CQ queries and queries in more expressive languages (e.g., including grouping and aggregation, or including arithmetic comparisons~\cite{Klug88}) using views in presence of embedded dependencies, under bag or bag-set semantics for query evaluation. Among other directions, our results could help solve the problem of reformulation for XQueries with bag semantics on XML data. Such queries can be explicitly written using the keyword {\tt unordered}, see~\cite{DeutschDiss} for a discussion.

\nop{

\reminder{The following is the *old* intro}


The problem of rewriting queries equivalently using views, \reminder{Must start by talking about query reformulation in the absence of views -- maybe take wording from intro to~\cite{VardiBagsPods93}. Then link non-view-based reformulations to view-based rewritings} in the context of relational database systems, has been the subject of extensive rigorous investigations for a number of years, see~\cite{DeutschPT06,LevyAquvSurvey,ChenLiEncyclop,Ullman00} for discussions of the state of the art and of the numerous practical applications of the problem. Some of  the investigations focused on view-based query rewriting in presence of integrity constraints (also called {\it dependencies}, see~\cite{AbiteboulHV95} for an overview). For a given query, accounting for the dependencies that hold on the database schema may increase the number of equivalent rewritings of the query using the given views. As a result, for a particular quality metric on the rewritings being generated, one may achieve better quality of the outputs of the rewriting generator, with obvious practical advantages. 


In the setting of 
query rewriting in presence of dependencies, Deutsch and colleagues have developed an algorithm, called Chase and Backchase {\it (C}\&{\it B}, see~\cite{DeutschPT06}) that, for a given conjunctive query {\em (CQ query),} outputs equivalent minimal-size conjunctive reformulations of the query. (See Section~\ref{problem-stmt-sec} for the definition of minimality under dependencies.) The technical restriction on the soundness and completeness of the C\&B algorithm is the requirement that the process of  ``chasing'' the input query under the available dependencies terminate in finite time. Intuitively, the point of the chase  (see~\cite{AbiteboulHV95} for an overview) in C\&B is to use the available dependencies to derive a new query formulation, which can be used to check ``dependency-aware'' equivalence of the query to any candidate rewriting by using any known {\it dependency-free} equivalence test (e.g., that of~\cite{ChandraM77} for CQ queries). 


In this paper we build on the view-based version of C\&B (see Section~\ref{c-and-b-section}), which produces rewritings of the input CQ query in terms of the given conjunctive views. 
Under the restriction of finite-time chase termination, the C\&B algorithm is sound and complete for CQ queries, views, and rewritings in presence of {\it embedded dependencies} (see~\cite{AbiteboulHV95} for an overview), under set semantics for query evaluation. Intuitively, the set-semantics setting treats both the database (stored) relations and the relations in query answers as sets.

Query answering and rewriting in the set-semantics setting have been studied extensively in the database-theory literature. At the same time, the set semantics are {\it not} the default query-evaluation semantics in database systems in practice. Specifically, the semantics of query evaluation in the standard relational query language SQL are the so-called {\it bag-set semantics,} where query answers are allowed to be {\it bags,} that is to have duplicate tuples, even though the database relations are still treated as sets. These default semantics of SQL have perhaps nonintuitive implications for the important class of SQL queries called aggregate queries. That is, even though the answer to an aggregate query is guaranteed to be a set, the evaluation sequence for the query includes a stage of building a {\it bag} of tuples from the (set-valued) database relations. (Please see Section~\ref{aggr-prelims} for the details.) These semantics are adopted in all  formal treatments of aggregate queries in the literature, see, e.g., \cite{SaraCohen06,SaraCohen07,CohenNS99}. 

\reminder{Justify bag semantics: SQL allows one to store relations as bags whenever the PRIMARY KEY and UNIQUE clauses are not part of the CREATE TABLE statement. Also bag-valued stored materialized views in our aggr templates}

\reminder{Say that bag and bag-set semantics combined with dependencies are not trivial, see the efforts of~\cite{Cohen06,DeutschDiss} that have failed to provide the complete picture}

\subsection{Our Contributions}

\reminder{The current abstract has a very good list of contributions}

First new contribution [10/25/08]: Unified framework that formalizes together query evaluation under set, bag, and bag-set semantics; the framework allows for easy establishing of relationships between the three (Proposition~\ref{b-bs-s-implic-prop} -- but that is due to~\cite{VardiPodsBags93}? -- check if they also cover the bag-set-to-set transition)? + new theorem (Theorem~\ref{cv-updated-thm})

Second new contrib [10/26/08]: if the {\it set-}semantics definition of $D^{(Q_n)} \models \Sigma$ on terminal chase result is also enforced under bag or bag-set semantics, then chase may yield $Q_n$ such that $Q_n$ is {\it not} equivalent to the original query $Q$ under $\Sigma$ (i.e., $Q_n \equiv_{\Sigma,B/BS} Q$ may {\it not} hold)

Third new contrib [10/28/08]: using tuple IDs for encoding set-valuedness of relations

We extend the existing dependency-free test [[[chase theorem of~\cite{AbiteboulHV95}]]] for dependency-based query equivalence to the cases of bag-set and bag semantics for query evaluation. Our extensions create the basis for developing algorithms that produce reformulations  and rewritings (i.e., view-based reformulations), in dependency-based problem inputs, of (1) CQ queries under bag-set and bag semantics, and of (2) SQL aggregate queries. We begin this work of developing such algorithms by extending C\&B to the above cases, while preserving its soundness and completeness guarantees. Our extensions of C\&B also result in sound algorithms for providing equivalent reformulations and rewritings for CQ queries and views with arithmetic comparisons~\cite{Klug88}. We also present an alternative (``unchase''-based) dependency-free test for dependency-based query equivalence, and use this test to create modifications of C\&B, which involve lower algorithmic runtime in some  practical cases. Our unchase-based modifications of C\&B are sound and complete, and their runtime is guaranteed not to exceed the runtime of (the original) C\&B on any applicable problem input..

{\bf Gist of the contributions:} We provide a comprehensive framework for solving the problem of providing equivalent minimal-size view-based rewritings of queries under embedded dependencies, which we henceforth refer to as the ``query-rewriting problem''. (See Section~\ref{prelim-section} for a formal specification of the problem.) In our framework,  the language of input queries and views, as well as the language of candidate rewritings, is CQ queries without or {\it with} aggregation. Specifically, the framework covers the cases of bag and bag-set semantics for evaluation of CQ queries, as well as the standard query-evaluation semantics for aggregate queries. Our proposed algorithms for solving the query-rewriting problem  are sound and complete for all problem instances in this framework, as long as query chase under the given dependencies terminates in finite time. (Specifically, these properties hold for each problem input involving a weakly acyclic set of dependencies~\reminder{whom to cite here?}. \reminder{***The proof is in the previous writeup, see aquv.tex***} ) In addition, all of our algorithms remain sound when we extend the languages of the queries, views, and rewritings to the language of CQ queries (possibly with aggregation) with inequality/nonequality comparisons. \reminder{****In the pseudocodes, need to explain that to preprocess CQAC queries, one removes the ACs (while adding the AC-involving variables into the query head) before running the corresponding CQ algorithms, and then one attaches the ACs to the outputs of the CQ algorithms. ***}

\begin{itemize}
	\item Sufficient and necessary conditions for applying dependency-free query-equivalence tests for equivalence of CQ queries under bag and bag-set semantics for query evaluation. In our results we build on the standard chase steps designed for chasing queries under set semantics, and show how these steps need to be modified to preserve soundness of chase  under bag and bag-set semantics.
	
	As a corollary of the above  conditions for testing equivalence of CQ queries under embedded dependencies, we obtain sufficient and necessary conditions for dependency-free testing of equivalence of CQ queries with aggregation. 
	
	\item  For the problems of finding all view-based equivalent rewritings of CQ queries, in presence of dependencies and under set, bag, and bag-set semantics, we provide two algorithm templates, and show how to instantiate each template to solve the problem of finding rewritings under each of the three semantics. All instantiations of both algorithm templates are sound and complete whenever chase in presence of the input dependencies terminates in finite time. The first template is an extension of the C\&B algorithm of~\cite{DeutschPT06} to the cases of bag and bag-set semantics. The second template is based on our unchase procedure, which renders unnecessary in the template, even in the {\it set-}semantics setting,  (i) the step of chasing rewriting expansions, and (ii) the step of doing dependency-free equivalence testing. Note that both steps are necessary in the C\&B approach. 
	
	The runtime complexity of all of the above algorithms does not exceed the complexity of the original C\&B algorithm of~\cite{DeutschPT06}. We also study special cases of low computational complexity. 
	
	\item We provide sound and complete algorithms for  finding all view-based equivalent rewritings of CQ queries  with aggregation $sum,$ $count,$ $min,$ and $max$ in presence of embedded dependencies.
	
	All the above algorithms remain sound for the cases of queries, both with or without aggregation, which are defined using inequality and nonequality arithmetic comparisons.

\end{itemize}

} 

\vspace{-0.3cm}

\section{Preliminaries}
\label{prelim-section}


\vspace{-0.2cm}

\subsection{The Basics} 
\label{basics-sec}

A database schema $\cal D$ is a finite set of relation symbols 
and their arities. A database (instance) $D$ over $\cal D$ has one finite relation for every 
relation symbol in  $\cal D$, of the same arity. A relation is, in general, {\it bag valued}; that is, it is a bag (also called {\it multiset}) of tuples. A bag can be thought of as a set of elements (the {\it core-set} of the bag) with multiplicities attached to each element. 
We say that a relation is {\it set valued} if its cardinality coincides with the cardinality of its core-set. 
A database instance is, in general, {\it bag valued.} We say that a (bag-valued) database instance is {\it set valued} if all its relations are set valued. 

A {\it conjunctive query (CQ query)} $Q$ over a schema $\cal D$ is an expression of the form $Q(\bar{X}) \ :- \ \phi(\bar{X}, \bar{Y}),$ where $\phi(\bar{X}, \bar{Y})$ is a nonempty conjunction of atomic formulas (i.e., relational atoms, also called {\it subgoals}) over $\cal D$. We follow the usual notation and separate the atoms 
in a query by commas. We call $Q(\bar{X})$ the {\it head} and $\phi(\bar{X}, \bar{Y})$  
the {\it body.} We use a notation such as $\bar{X}$ for a vector of $k$ variables and constants 
$X_1,\ldots,X_k$ (not necessarily distinct). 
Every variable in the head 
must appear in the body (i.e.,  $Q$ must be {\it safe}). The set 
of variables in $\bar{Y}$ is assumed to be existentially quantified.  

Given two conjunctions $\phi(\bar{U})$ and $\psi(\bar{V})$ of atomic formulas, 
a {\it homomorphism} from $\phi(\bar{U})$ to $\psi(\bar{V})$ is a mapping $h$ from the 
set of variables and constants in $\bar{U}$ to the set of variables and constants in $\bar{V}$ such that (1) $h(c) = c$ for each constant $c$, and (2) for every atom $r(U_1,\ldots,U_n)$ of $\phi$,  $r(h(U_1),\ldots,h(U_n))$ is in $\psi$. Given two CQ queries 
$Q_1(\bar{X}) \ :- \ \phi(\bar{X}, \bar{Y})$ and $Q_2(\bar{X}') \ :- \ \psi(\bar{X}', \bar{Y}'),$ a {\it containment mapping} from $Q_1$ to $Q_2$ is a homomorphism $h$ from $\phi(\bar{X}, \bar{Y})$ to $\psi(\bar{X}', \bar{Y}')$ such that $h(\bar{X}) = \bar{X}'$. 

For a conjunction $\phi(\bar{U})$ of atomic formulas, an {\it assignment} $\gamma$ for $\phi(\bar{U})$ is a  mapping of the variables of $\phi(\bar{U})$  to constants, and of the constants of $\phi(\bar{U})$ to themselves. 
We use a notation such as  $\gamma(\bar{X})$ to denote  tuple 
$(\gamma(X_1),\ldots,\gamma(X_k))$. Let relation $P_i$ in database $D$ correspond to predicate $p_i$. Then we say that {\it  atom $p_i(\bar{X})$ is satisfied  by 
assignment $\gamma$ w.r.t. database $D$} if there exists tuple $t \in P_i$ in $D$ such that $t = \gamma(\bar{X})$. Note that  the satisfying assignment 
 $\gamma$ is a homomorphism from $p_i(\bar{X})$ to the ground atom $p_i(\gamma(\bar{X}))$ representing tuple $t$ in $P_i$. 
Both the tuple-based definition of satisfaction and its homomorphism formulation are  naturally extended to define satisfaction of conjunctions of atoms. 


{\bf Query evaluation under set semantics.} For a CQ query  $Q(\bar{X}) \ :- \ \phi(\bar{X}, \bar{Y})$ and for a database $D$,  suppose that there exists an assignment $\gamma$ for the body $\phi(\bar{X}, \bar{Y})$ of $Q$, such that $\phi(\bar{X}, \bar{Y})$ is satisfied by $\gamma$ w.r.t. $D$. Then we say that {\it $Q$ returns a tuple $t = \gamma(\bar{X})$  on $D$.}  Further, the {\it answer $Q(D,S)$ to $Q$ on a set-valued database $D$ under set semantics for query evaluation} is the set of all tuples that $Q$ returns on $D$.


{\bf Query equivalence under set semantics.} Query $Q_1$ {\it is contained in query $Q_2$ under set semantics} ({\em set-contained,} denoted $Q_1 \sqsubseteq_S Q_2$) if $Q_1(D,S) \subseteq Q_2(D,S)$ for every set-valued database $D$.  Query $Q_1$ {\it is equivalent to query $Q_2$ under set semantics}  ({\em set-equivalent,} denoted $Q_1 \equiv_S Q_2$) if   $Q_1 \sqsubseteq_S Q_2$ and  $Q_2 \sqsubseteq_S Q_1$. A classical result~\cite{ChandraM77} states that a necessary and sufficient condition for the set-containment  $Q_1 \sqsubseteq_S Q_2,$ for CQ queries $Q_1$ and $Q_2$,   is the existence of a containment mapping from $Q_2$ to $Q_1.$ This result forms the basis for a sound and complete test for set-equivalence of CQ queries, by definition of set-equivalence.

{\bf Canonical database.} Every CQ query $Q$ can be regarded as a symbolic database $D^{(Q)}$. $D^{(Q)}$ is defined as the result of 
turning each subgoal $p_i(\ldots)$ of $Q$ into a tuple in the relation $P_i$ that corresponds to predicate $p_i$. The procedure is to keep each constant in the body of $Q$, and to replace consistently each variable in the body of $Q$ by a distinct constant different from all constants in  $Q$. The tuples that correspond to the resulting ground atoms are the only tuples in the {\it  canonical database} $D^{(Q)}$ for $Q$, which is unique up to isomorphism. 

\vspace{-0.1cm}

\subsection{Bag and Bag-Set Semantics}
\label{bag-bag-set-defs}

In this section we provide definitions for query evaluation under bag and bag-set semantics. Our definitions are consistent with the semantics of evaluating CQ queries in the SQL standard (see, e.g.,~\cite{GarciaMolinaUW02}), as well as with the corresponding definitions in~\cite{VardiBagsPods93,KolaitisPods06}. 

{\bf Query evaluation under bag-set semantics.} Consider a CQ query  $Q(\bar{X}) \ :- \ \phi(\bar{X}, \bar{Y})$. 
The {\it answer $Q(D,BS)$ to $Q$ on a set-valued database $D$ under bag-set semantics for query evaluation} is the bag of all tuples that $Q$ returns on $D$. That is, for each assignment $\gamma$ for the body $\phi(\bar{X}, \bar{Y})$ of $Q$, such that $\phi(\bar{X}, \bar{Y})$ is satisfied by $\gamma$ w.r.t. $D$, $\gamma$ contributes to the bag  $Q(D,BS)$ a {\it distinct}  tuple  $t = \gamma(\bar{X})$, such that $Q$ returns $t$ on $D$ w.r.t. $\gamma$. (I.e., whenever $Q$ returns $t_1$ on $D$ w.r.t. $\gamma_1$ and $Q$ returns a copy $t_2$ of $t_1$ on $D$ w.r.t. $\gamma_2 \neq \gamma_1$, then each of $t_1$ and $t_2$ is a separate element of the bag  $Q(D,BS)$.) 

{\bf Query evaluation under bag semantics.} For a CQ query  $Q$, the {\it answer $Q(D,B)$ to $Q$ on a bag-valued database $D$ under bag semantics for query evaluation} is a bag of tuples computed as 
follows. Suppose $Q$ is
\begin{tabbing}
$Q(\bar{X}) \ :- \ p_1(\bar{X}_1), p_2(\bar{X}_2),\ldots, p_n(\bar{X}_n).$  
\end{tabbing}
Consider the vector $p_1,\ldots,p_n$ of predicates (not necessarily distinct) occurring in the body of $Q$, and let  $P_1,\ldots,P_n$ be the vector of relations in $D$ such that each $p_i$ corresponds to relation $P_i$.  Whenever two subgoals $p_i(\ldots)$ and $p_j(\ldots)$ of $Q$, with $i \neq j$, have the same predicate, $P_i$ and $P_j$ refer to the same relation in $D$. 

Let $\gamma$ be an assignment for the body of $Q$, such that the body of $Q$ is satisfied by $\gamma$ w.r.t. $D$. Assignment $\gamma$ maps each subgoal $p_i(\bar{X}_i)$ of $Q$ into a tuple $t^{(i)}$ in relation $P_i$. For each $i \in \{ 1,\ldots,n \}$, let $m_i$ be the number of occurrences of tuple $t^{(i)}$ in the bag $P_i$. (I.e., $m_i > 0$ is the multiplicity associated with the (unique copy of) tuple $t^{(i)}$ in the core-set of $P_i$.) Then each distinct $\gamma$ contributes exactly $\Pi_{i = 1}^n m_i$ copies of tuple $t = \gamma(\bar{X})$ to  the bag $Q(D,B)$. (Recall that $\bar{X}$ is the vector of variables and constants in the head of $Q$.) Further, the bag $Q(D,B)$ has no other tuples.

\vspace{-0.1cm}

\subsection{Equivalence Tests for CQ Queries}
\label{bag-equiv-defs}

This subsection outlines equivalence tests for CQ\linebreak queries, for the cases of bag and bag-set semantics. The classical equivalence test~\cite{ChandraM77} for CQ queries for the case of set semantics is described in Section~\ref{basics-sec}. 

{\bf Query equivalence under bag and  bag-set semantics.} Query $Q_1$ {\it is equivalent to query $Q_2$ under bag semantics}  ({\em bag-equivalent,} denoted $Q_1 \equiv_B Q_2$) if for all bag-valued databases $D$ it holds that  $Q_1(D,B)$ and  $Q_2(D,B)$ are the same bags. Query $Q_1$ {\it is equivalent to query $Q_2$ under bag-set semantics}  ({\em bag-set-equivalent,} $Q_1 \equiv_{BS} Q_2$) if for all set-valued databases $D$ it holds that  $Q_1(D,BS)$ and  $Q_2(D,BS)$ are the same bags. 

\vspace{-0.1cm}

\begin{proposition}{\cite{VardiBagsPods93}}
\label{b-bs-s-implic-prop}
Given two CQ queries $Q_1$ and $Q_2$, $Q_1 \equiv_B Q_2$ implies  $Q_1 \equiv_{BS} Q_2$, and $Q_1 \equiv_{BS} Q_2$ implies  $Q_1 \equiv_{S} Q_2$.
\end{proposition}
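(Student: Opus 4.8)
The plan is to exploit the fact that the three semantics are nested: bag-set semantics is just the restriction of bag semantics to set-valued databases, while set semantics is obtained from bag-set semantics by discarding multiplicities in the answer. First I would record two observations that follow directly from the definitions in Section~\ref{bag-bag-set-defs}.

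\emph{Observation 1.} For every set-valued database $D$ and every CQ query $Q$, the bags $Q(D,B)$ and $Q(D,BS)$ coincide. Indeed, whether a given assignment $\gamma$ for the body of $Q$ satisfies that body w.r.t.\ $D$ depends only on the core-sets of the relations of $D$, not on their multiplicities, so the two semantics range over exactly the same collection of satisfying assignments. When $D$ is set-valued, each matched tuple $t^{(i)}$ has multiplicity $m_i = 1$, so the bag-semantics contribution $\Pi_{i=1}^n m_i = 1$ of each $\gamma$ equals the single copy of $\gamma(\bar{X})$ that $\gamma$ contributes under bag-set semantics. Hence the two bags are identical.

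\emph{Observation 2.} For every set-valued database $D$ and every CQ query $Q$, the set $Q(D,S)$ is exactly the core-set of the bag $Q(D,BS)$, since both consist of precisely the tuples $\gamma(\bar{X})$ ranging over the satisfying assignments $\gamma$, with set semantics retaining one copy of each distinct such tuple and bag-set semantics merely attaching multiplicities to the same underlying set.

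With these in hand, the first implication is immediate: every set-valued database is in particular a bag-valued database, so $Q_1 \equiv_B Q_2$ gives $Q_1(D,B) = Q_2(D,B)$ for all set-valued $D$; applying Observation~1 to each side yields $Q_1(D,BS) = Q_1(D,B) = Q_2(D,B) = Q_2(D,BS)$ for all set-valued $D$, which is precisely $Q_1 \equiv_{BS} Q_2$. For the second implication, $Q_1 \equiv_{BS} Q_2$ means $Q_1(D,BS)$ and $Q_2(D,BS)$ are equal bags for every set-valued $D$; equal bags have equal core-sets, so Observation~2 gives $Q_1(D,S) = Q_2(D,S)$ for all such $D$, i.e.\ $Q_1 \equiv_S Q_2$. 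There is no real obstacle here beyond stating these definitional observations cleanly; the only point that deserves care is the claim that the set of satisfying assignments is independent of the semantics, which is what lets me compare $Q(D,B)$ and $Q(D,BS)$ tuple-by-tuple on a set-valued $D$ and what guarantees that the underlying set of $Q(D,BS)$ is exactly $Q(D,S)$.
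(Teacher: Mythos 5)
Your proof is correct. Note that the paper itself does not prove this proposition at all: it imports it from~\cite{VardiBagsPods93}, and the route implicit in that citation (and in the paper's own toolkit) is syntactic, going through Theorem~\ref{cv-theorem} --- from $Q_1 \equiv_B Q_2$ one gets that $Q_1$ and $Q_2$ are isomorphic, hence their canonical representations are isomorphic, hence $Q_1 \equiv_{BS} Q_2$; and from $Q_1 \equiv_{BS} Q_2$ one gets isomorphic canonical representations, which yield containment mappings in both directions and thus $Q_1 \equiv_S Q_2$ via~\cite{ChandraM77}. Your argument is genuinely different: it is purely semantic, using only the evaluation definitions of Section~\ref{bag-bag-set-defs} --- that satisfaction of an assignment depends only on core-sets, that on a set-valued database all multiplicities $m_i$ equal $1$ so $Q(D,B) = Q(D,BS)$, and that $Q(D,S)$ is the core-set of $Q(D,BS)$. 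What this buys you is independence from the (nontrivial) Chaudhuri--Vardi characterizations: your proof would remain valid for any query language in which satisfaction of an assignment is determined by the core-sets of the stored relations, whereas the syntactic route is specific to CQ queries. What the syntactic route buys in exchange is that, once Theorem~\ref{cv-theorem} is available anyway (as it is in this paper), the implications fall out in one line each. Both arguments are sound; yours is the more elementary and more general one.
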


\vspace{-0.1cm}

For bag and bag-set semantics, the following conditions are known for CQ query equivalence. (Query $Q_c$ is a {\em canonical 
representation} of query $Q$ if $Q_c$ is the result of removing all
duplicate atoms from $Q$.)  
\vspace{-0.1cm}

\begin{theorem}{~\cite{VardiBagsPods93}}
\label{cv-theorem} Let $Q$ and $Q'$ be CQ queries. Then (1)
$Q \equiv_B Q'$ iff
 $Q$ and $Q'$ are isomorphic. (2) $Q \equiv_{BS} Q'$ iff $Q_c \equiv_B Q'_c$, where $Q_c$ and
$Q'_c$ are canonical representations of $Q$ and $Q',$
respectively.
\end{theorem}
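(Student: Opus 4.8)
The plan is to prove both parts by a single device: freeze a query into its canonical database and extract, from the \emph{multiplicity} of the frozen head tuple, enough combinatorial information about the containment mappings between the two queries to force an isomorphism. The two ``if'' directions are immediate. Isomorphic queries are identical up to renaming variables and reordering subgoals, so they return the same bag on every database; this gives $Q \equiv_B Q'$ in part~(1), and (together with the two simplifications described in the last paragraph) the easy direction of part~(2). So the work is entirely in the ``only if'' directions.

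For the forward direction of part~(1), suppose $Q \equiv_B Q'$. Let the subgoals of $Q$ freeze to the distinct ground tuples $t_1,\dots,t_m$, with $t_\ell$ arising from $r_\ell$ of the $n=\sum_\ell r_\ell$ subgoals, and write $\bar h$ for the frozen head tuple. First I would build the database that agrees with $D^{(Q)}$ but contains $x_\ell$ copies of each $t_\ell$, with the $x_\ell$ positive-integer parameters. By the bag-semantics definition of Section~\ref{bag-bag-set-defs}, the multiplicity of $\bar h$ in $Q(D^{(Q)},B)$ is then a polynomial
\[ P_Q(\bar x)=\sum_{\sigma}\ \prod_{i=1}^{n} x_{c_\sigma(i)}, \]
with one monomial per containment mapping (endomorphism) $\sigma$ of $Q$, where $c_\sigma(i)$ indexes the tuple that subgoal $i$ is sent to. Since a nonzero polynomial cannot vanish on all of $\mathbb{Z}_{>0}^{m}$, the equality of bags on every such database forces $P_Q=P_{Q'}$ as polynomials, where $P_{Q'}$ is the analogous sum over the containment mappings $Q'\to Q$ (these exist, and so does the mapping realizing $\bar h$, because $Q\equiv_B Q'$ implies $Q\equiv_S Q'$ by Proposition~\ref{b-bs-s-implic-prop} and the Chandra--Merlin criterion). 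The identity endomorphism contributes the monomial $\prod_\ell x_\ell^{\,r_\ell}$, the unique top-degree monomial in which every variable occurs; I would show its coefficient in $P_Q$ equals $|\mathrm{Aut}(Q)|>0$, so the same monomial must occur in $P_{Q'}$, forcing $Q'$ to have $n$ subgoals and some containment mapping $\gamma\colon Q'\to Q$ that sends the subgoals of $Q'$ onto $t_1,\dots,t_m$ with exactly the right multiplicities.

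The heart of the argument is the claim that such a mapping is an \emph{isomorphism}, and I expect this to be the main obstacle: set-equivalence only hands us \emph{homomorphisms} in both directions, which need not be injective, and it is only the finer multiplicity information, isolated through the extremal monomial, that rules out collapsing. I would prove the claim by a surjectivity count. Because $\gamma$ hits every frozen tuple and every variable of $Q$ supplies a constant to some frozen tuple, every variable of $Q$ lies in the image of $\gamma$, so $\gamma$ is onto $\mathrm{Vars}(Q)$; running the symmetric argument with the roles of $Q,Q'$ reversed (on $D^{(Q')}$) produces $\delta\colon Q\to Q'$ onto $\mathrm{Vars}(Q')$. Hence $|\mathrm{Vars}(Q)|=|\mathrm{Vars}(Q')|$ and both maps are bijective on variables; a predicate- and position-preserving map that is bijective on variables is then also bijective on the multisets of subgoals, i.e.\ an isomorphism, so $Q\cong Q'$.

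For part~(2) the plan is to reduce to part~(1) by two observations and then reuse the counting. First, deleting duplicate atoms never changes the set of satisfying assignments on a set-valued database, so $Q\equiv_{BS}Q_c$ and $Q'\equiv_{BS}Q'_c$, and it suffices to treat the duplicate-free queries $Q_c,Q'_c$. Second, on a set-valued database every multiplicity is $1$, so the bag-semantics product $\prod_i m_i$ equals $1$ and bag and bag-set evaluation coincide for duplicate-free queries; thus $Q_c\equiv_{BS}Q'_c$ says precisely that $Q_c$ and $Q'_c$ agree under bag semantics on all \emph{set-valued} databases. The remaining step is to lift this restricted agreement to $Q_c\cong Q'_c$ (equivalently, by part~(1), to $Q_c\equiv_B Q'_c$): here I would again freeze $Q_c$ and count containment mappings, but now vary the set-valued \emph{test database} (e.g.\ over blow-ups and disjoint unions of $D^{(Q_c)}$) to recover the same extremal, bijection-on-subgoals information as in part~(1), which isolates an isomorphism. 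Chaining the equivalences yields $Q\equiv_{BS}Q' \Leftrightarrow Q_c\equiv_{BS}Q'_c \Leftrightarrow Q_c\cong Q'_c \Leftrightarrow Q_c\equiv_B Q'_c$, as required; the delicate point, as in part~(1), is extracting injectivity (hence an isomorphism) rather than a mere homomorphism from the counting data.
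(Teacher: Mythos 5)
First, a point of reference: the paper never proves Theorem~\ref{cv-theorem} --- it is imported from~\cite{VardiBagsPods93} --- so your proposal can only be measured against the paper's closest relative, the counting argument of Appendix~\ref{set-bag-appendix} (Lemma~\ref{bag-stubborn-lemma}, used to prove Theorem~\ref{cv-updated-thm}). Your part~(1) is correct, and is in fact a sharper version of that technique: where the paper blows up the multiplicities of one relation by a single factor $m$ and wins by asymptotic domination ($m^{n_1}$ versus $C\,m^{n_2}$), you keep all tuple multiplicities $x_\ell$ as indeterminates, get an exact polynomial identity $P_Q=P_{Q'}$ from equality of the bags on all blow-ups, and read off the monomial $\prod_\ell x_\ell^{r_\ell}$, whose coefficient $|\mathrm{Aut}(Q)|\geq 1$ certifies a containment mapping $\gamma\colon Q'\to Q$ with the exact tuple profile; the surjectivity count then upgrades $\gamma$ to an isomorphism. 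One sentence is too strong as a general principle --- a variable-bijective homomorphism need \emph{not} be bijective on the multisets of subgoals (take $Q'$ with body $r(x,y)$ and $Q$ with body $r(x,y),r(y,x)$) --- but your $\gamma$ also carries the onto-with-correct-multiplicities property, which is exactly what closes that hole, so part~(1) stands.

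The genuine gap is the crux of part~(2). Under bag-set semantics the test databases must be set valued, so the tuple-multiplicity blow-up of part~(1) is unavailable, and on a set-valued database every satisfying assignment contributes exactly one output tuple: the head multiplicity no longer factors over subgoals, so no family of set-valued test databases produces monomials that record \emph{subgoal} images. Consequently you cannot ``recover the same extremal, bijection-on-subgoals information as in part~(1)'' --- that information is not expressible in this setting. What is available is the \emph{domain} blow-up: replace each frozen non-head constant $a_v$ of $D^{(Q_c)}$ by $x_v$ distinct copies and include every copy-combination of every frozen tuple; this database is set valued, and the number of assignments producing the frozen head is $\sum_{\tau}\prod_{w} x_{\tau(w)}$, summing over homomorphisms $\tau\colon Q'_c\to Q_c$ and multiplying over non-head \emph{variables} $w$, so the monomials record variable images instead. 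The extremal monomial $\prod_v x_v$ then yields a homomorphism $Q'_c\to Q_c$ that is bijective on variables, and freezing $Q'_c$ symmetrically yields one in the other direction; since $Q_c$ and $Q'_c$ are duplicate free, variable-bijectivity makes each map injective on atoms, the two injections force equal atom counts, hence each map is bijective on atoms, i.e., an isomorphism. This repaired route follows your overall plan, but it rests on a different extremal structure than the one your sketch promises, and without it the lifting step --- which you correctly identify as the delicate point --- is unproven.
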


\vspace{-0.3cm}

\vspace{-0.1cm}

\subsection{Dependencies and Chase}
\label{chase-prelims}



{\bf Embedded dependencies.} We consider dependencies $\sigma$ of the form 
\vspace{-0.1cm}
\begin{tabbing}
$\sigma: \phi(\bar{U},\bar{W}) \rightarrow \exists \bar{V} \ \psi(\bar{U},\bar{V})$  
\end{tabbing}
\vspace{-0.1cm}
where $\phi$ and $\psi$ are conjunctions of atoms, which may include equations. Such dependencies,  called {\it embedded dependencies,} are sufficiently expressive to specify all usual integrity constraints, such as keys, foreign keys, inclusion, and join dependencies~\cite{DeutschPods08}. 
If $\psi$ consists only 
of equations, then $\sigma$ is an {\it equality-generating dependency (egd).} 
If $\psi$ consists only of relational atoms, then $\sigma$ is a {\it tuple-generating 
dependency (tgd).} Every set $\Sigma$ of embedded dependencies is equivalent to a set of tgds and egds~\cite{AbiteboulHV95}. We write $D \models \Sigma$ if database $D$ satisfies all the dependencies in $\Sigma$. All sets $\Sigma$ we refer to are finite. 

{\bf Query containment and equivalence under dependencies.} We say that query $Q$ is {\it set-equivalent to} query $P$ {\it under a set of dependencies} $\Sigma$, denoted $Q \equiv_{\Sigma,S} P,$ if for every set-valued database $D$ such that $D \models \Sigma$ we have $Q(D,S) = P(D,S)$. The definition of {\it set containment under dependencies,} denoted $\sqsubseteq_{\Sigma,S}$, as well as
the definitions of {\it bag equivalence and bag-set equivalence under dependencies} (denoted by $\equiv_{\Sigma,B}$ and $\equiv_{\Sigma,BS}$ \ , respectively), are analogous modifications of the respective definitions for the dependency-free setting, see Sections~\ref{basics-sec} and~\ref{bag-equiv-defs}. 

{\bf Chase.} Assume a CQ query $Q(\bar{X}) \  :- \ \xi(\bar{X},\bar{Y})$ and a tgd $\sigma$  
of the form $\phi(\bar{U},\bar{W}) \rightarrow \exists \bar{V} \ \psi(\bar{U},\bar{V})$. Assume w.l.o.g. 
that $Q$ has none of the variables $\bar{V}$. The {\it chase of $Q$ with $\sigma$ is applicable} if there is a homomorphism $h$ from $\phi$ to $\xi$ 
and if, moreover, $h$ cannot be extended to a homomorphism $h'$ from  $\phi \wedge \psi$ to $\xi$.  In that case, a {\it chase step} of $Q$ with $\sigma$ and $h$ is a rewrite of $Q$ into $Q'(\bar{X}) \  :- \ \xi(\bar{X},\bar{Y}) \wedge \psi (h(\bar{U}),\bar{V})$. 

We now define a chase step with an egd. Assume a CQ query $Q$ as before and an egd $e$ of the form $\phi(\bar{U}) \rightarrow U_1 = U_2.$ The {\it chase of $Q$ with $e$ is applicable} if there is a homomorphism $h$ from $\phi$ to $\xi$ such that $h(U_1) \neq h(U_2)$ and at least one of $h(U_1)$ and $h(U_2)$ is a variable; assume w.l.o.g. that $h(U_1)$ is a variable. Then a {\it chase step}  
of $Q$ with $e$ and $h$ is a rewrite of $Q$ into a query that results from replacing all occurrences of $h(U_1)$ in $Q$ by $h(U_2)$. 

A {\it $\Sigma$-chase sequence} $C$ (or just {\it chase sequence,} if $\Sigma$ is clear from the context) 
is a sequence of CQ queries $Q_0, Q_1, \ldots$ such that every query $Q_{i+1}$ ($i \geq 0$) in $C$ is obtained from $Q_i$ by a chase step $Q_i \Rightarrow^{\sigma} Q_{i+1}$ using a dependency $\sigma \in \Sigma$. 
A chase sequence $Q = Q_0, Q_1, \ldots, Q_n$ is {\it terminating under set semantics} if $D^{(Q_n)} \models \Sigma$, 
where  $D^{(Q_n)}$ is the canonical database for $Q_n$. In 
this case we say that $(Q)_{\Sigma,S} = Q_n$ is the (terminal) {\it result} of the chase. Chase of CQ queries under set semantics is known to terminate in finite time for a class of embedded dependencies called {\it weakly acyclic dependencies}, see~\cite{FaginKMP05} and references therein. Under set semantics, all chase results for a given CQ query are equivalent in the absence of dependencies~\cite{DeutschPods08}. 


The following result is immediate from~\cite{AbiteboulHV95,DeutschDiss,DeutschPods08}.  
\vspace{-0.1cm}

\begin{theorem}
\label{chase-theorem}
Given CQ queries $Q_1$, $Q_2$ and set $\Sigma$ of embedded dependencies. Then 
	$Q_1 \equiv_{\Sigma,S} Q_2$ iff $(Q_1)_{\Sigma,S} \equiv_S (Q_2)_{\Sigma,S}$ in the absence of dependencies.
\end{theorem}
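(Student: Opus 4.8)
The plan is to prove the two directions separately, relying on three ingredients: (i) \emph{soundness} of the set-semantics chase, i.e., that every chase step $Q_i \Rightarrow^{\sigma} Q_{i+1}$ preserves set-equivalence under $\Sigma$, so that $Q \equiv_{\Sigma,S} (Q)_{\Sigma,S}$ (this is exactly the content imported from \cite{AbiteboulHV95,DeutschDiss,DeutschPods08}); (ii) the trivial monotonicity observation that the class of set-valued databases satisfying $\Sigma$ is a subclass of all set-valued databases, so that $\equiv_S$ implies $\equiv_{\Sigma,S}$; and (iii) the Chandra--Merlin characterization of set-containment via containment mappings, together with the canonical (frozen) database $D^{(Q)}$.

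For the ($\Leftarrow$) direction, I would assume $(Q_1)_{\Sigma,S} \equiv_S (Q_2)_{\Sigma,S}$. By (ii) this gives $(Q_1)_{\Sigma,S} \equiv_{\Sigma,S} (Q_2)_{\Sigma,S}$, and chaining with soundness (i) yields $Q_1 \equiv_{\Sigma,S} (Q_1)_{\Sigma,S} \equiv_{\Sigma,S} (Q_2)_{\Sigma,S} \equiv_{\Sigma,S} Q_2$, as desired. This direction is routine.

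The substantive direction is ($\Rightarrow$). Assume $Q_1 \equiv_{\Sigma,S} Q_2$; soundness (i) immediately gives $(Q_1)_{\Sigma,S} \equiv_{\Sigma,S} (Q_2)_{\Sigma,S}$, and the work is to upgrade this to set-equivalence $\equiv_S$ in the absence of dependencies. Write $Q_1' = (Q_1)_{\Sigma,S}$ and $Q_2' = (Q_2)_{\Sigma,S}$. The key point is that, because the chase sequences terminate, the canonical databases $D^{(Q_1')}$ and $D^{(Q_2')}$ both satisfy $\Sigma$ --- this is precisely the definition of a terminating chase result. I would then establish containment in each direction. Consider the frozen head tuple of $Q_1'$, produced by the canonical assignment that turns $Q_1'$ into $D^{(Q_1')}$; this tuple lies in $Q_1'(D^{(Q_1')},S)$. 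Since $D^{(Q_1')} \models \Sigma$, it is a legitimate witness database for containment under $\Sigma$, so from $Q_1' \sqsubseteq_{\Sigma,S} Q_2'$ the same frozen tuple also lies in $Q_2'(D^{(Q_1')},S)$. A satisfying assignment for $Q_2'$ on $D^{(Q_1')}$ producing that tuple unfreezes to a containment mapping from $Q_2'$ to $Q_1'$, whence $Q_1' \sqsubseteq_S Q_2'$ by Chandra--Merlin. A symmetric argument using $D^{(Q_2')} \models \Sigma$ gives $Q_2' \sqsubseteq_S Q_1'$, and therefore $Q_1' \equiv_S Q_2'$.

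The main obstacle is this frozen-database step: it hinges on the terminal chase result having a canonical database that satisfies $\Sigma$, so that it may serve as a legitimate test instance in the definition of $\sqsubseteq_{\Sigma,S}$; without termination of the chase such a witness need not exist, which is exactly why the hypothesis that set-chase on the inputs terminates is essential. For well-definedness one should also note that $(Q_i)_{\Sigma,S}$ is determined only up to $\equiv_S$; but since the statement is itself about $\equiv_S$ of the results, this causes no difficulty, being guaranteed by the cited fact that all set-chase results of a given CQ query are set-equivalent in the absence of dependencies.
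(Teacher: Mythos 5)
Your proof is correct. The paper gives no proof of its own for this statement---it is declared ``immediate from''~\cite{AbiteboulHV95,DeutschDiss,DeutschPods08}---and your reconstruction is exactly the standard argument underlying those citations: the easy direction from soundness of set-chase together with the observation that $\equiv_S$ implies $\equiv_{\Sigma,S}$, and the substantive direction from the fact that a terminal chase result's canonical database satisfies $\Sigma$ (which is precisely the paper's definition of a terminating chase sequence in Section 2.4), so that the frozen head tuple unfreezes into containment mappings in both directions via Chandra--Merlin. Your two flagged subtleties---that chase termination is what licenses using $D^{((Q_i)_{\Sigma,S})}$ as a legitimate witness in the definition of $\sqsubseteq_{\Sigma,S}$, and that the chase result being determined only up to $\equiv_S$ causes no well-definedness issue---are the right points to make explicit.
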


\vspace{-0.1cm}

\nop{

\subsection{Outline of C\&B}
\label{c-and-b-section}

In this subsection we provide a brief outline of the C\&B algorithm by Deutsch and colleagues~\cite{DeutschPT06}. See Appendix~\ref{c-and-b-appendix} for the details. 
Under set semantics for query evaluation and  given a CQ query $Q$, C\&B  outputs all equivalent  $\Sigma$-minimal conjunctive reformulations of $Q$ in presence of the given embedded dependencies $\Sigma$ (i.e., C\&B is {\it sound and complete}), whenever chasing $Q$ under $\Sigma$ terminates in finite time. 
The notion of $\Sigma$-minimality is defined as follows. (Intuitively, a reformulation $R$ of a query $Q$ 
is not $\Sigma$-minimal if at least one egd in $\Sigma$ is applicable to  $R$.)

\begin{definition}{Minimality under dependencies~\cite{DeutschPT06}}
A CQ query $Q$ is {\em $\Sigma$-minimal} if there are no queries $S_1$, $S_2$ where $S_1$ is obtained from $Q$ by replacing zero or more variables with other variables of $Q$, and $S_2$ by dropping at least one atom from $S_1$ such that $S_1$ and $S_2$ remain equivalent to $Q$ under $\Sigma$.
\end{definition}


C\&B proceeds in two phases. The first phase, called the {\it chase phase}, does chase of $Q$ using $\Sigma$ under set semantics, to obtain terminal chase result $(Q)_{\Sigma,S}$, called the {\it universal plan} $U$ for $Q$. 
In the second phase, the {\it backchase phase}, C\&B iterates over all queries $U'$ whose head is $head(U)$ and whose  body is not empty and is $body(U)$ with zero or more atoms dropped. 
C\&B outputs each $U'$ such that $(U')_{\Sigma,S} \equiv_{S} U$, that is, each $U'$ for which by Theorem~\ref{chase-theorem} it holds that $U' \equiv_{\Sigma,S} Q$. 

} 

\nop{
{\bf Superkeys and keys of relations. Notions that I need to define:

1. Superkey (for Example 4.1)

2. Fd (for explanation under Def 4.1)

3. Keys!

} 
} 


\subsection{Queries with Grouping and Aggregation}
\label{aggr-prelims}

We assume that the data we want to aggregate are real numbers,
 {\bf R}. If $S$ is a set, then ${\cal M}(S)$ denotes the set of finite bags over $S$. A {\it $k$-ary aggregate function} is a function
 $\alpha: \ {\cal M}({\bf R}^{k}) \ \rightarrow \ {\bf R}$ that maps bags 
 of $k$-tuples of real numbers to real numbers.
An {\it aggregate term} is an expression built up using an aggregate function over variables. Every
aggregate term with $k$ variables gives rise to a $k$-ary aggregate function in a natural way.

We use
 $\alpha(y)$ as an
abstract notation for a unary aggregate term, where  $y$ is the
variable in the term.
Aggregate queries that we
 consider have (unary or $0$-ary) aggregate functions $count$, $count(*)$, $sum$, $max$, and  $min$. Note that $count$ is over an argument, whereas $count(*)$
is the only function that we consider here that takes no argument.
(There is a distinction in SQL semantics between $count$ and $count(*)$.)
In the rest of the paper, we will not refer again to the distinction between $count$ and $count(*)$,
as our results carry over.


An {\it aggregate query}~\cite{CohenNS99,NuttSS98} is a conjunctive query augmented
by an aggregate term in its head. For a query with a $k$-ary aggregate function $\alpha$, the syntax is:
\vspace{-0.1cm}
\begin{equation}
\label{aggr-equation}
Q(\bar{S},\alpha(\bar{Y})) \leftarrow A(\bar{S}, \bar{Y}, \bar{Z}) \ .
\end{equation}
\vspace{-0.1cm}
\noindent
$A$ is a conjunction of atoms; 
 $\alpha(\bar{Y})$ is a $k$-ary aggregate term;
 $\bar{S}$ are the {\em
    grouping attributes} of $Q$;
 none of the variables in $\bar{Y}$ appears in $\bar{S}$.
Finally, $Q$ is {\it safe}: all variables in $\bar{S}$ and $\bar{Y}$ occur in $A$.
We consider queries with unary aggregate functions $sum$, $count$, $max$, and $min$. With each aggregate query $Q$ as in Equation (\ref{aggr-equation}), we associate its CQ {\it core} $\breve{Q}$:
$\breve{Q}(\bar{S},\bar{Y}) \leftarrow A(\bar{S}, \bar{Y}, \bar{Z}) .$ 

We define the semantics of an aggregate query as follows:
Let  $D$ be a set-valued database and  $Q$ an  aggregate query as in
Equation (\ref{aggr-equation}). When $Q$ is applied on  $D$ it  yields
a relation $Q({D})$ defined by the following three  steps:
First, we compute the bag ${\bf B} = \breve{Q}(D,BS)$ on $D$. We then form equivalence classes in ${\bf B}$: Two tuples belong to the same equivalence
class if they agree on the values of all the grouping arguments of $Q$. This is the {\em grouping} step.
The third step is {\em aggregation}; it associates with each equivalence
class a value that is the aggregate function computed on a bag that contains
all values of the input argument(s) of the aggregated attribute(s) in this class. For each class, it
returns one tuple, which contains the values of the grouping arguments of $Q$
and the computed aggregated value.

\nop{
The following are useful observations.
\begin{proposition}\label{aggr-set-observation}
\label{fd-lemma}
{\it Let $Q$ be an aggregate query with $\bar{X}$ the grouping tuple and $Y$
the aggregated
attribute. Then the following statements hold: (1) There is a functional dependency $\bar{X} \rightarrow Y$;}
(2) for any database the answer to $Q$ is a set; and (3) for any database the bag-valued projection of the answer to $Q$ on $\bar{X}$ is a set.
\end{proposition}

} 

In general,  
queries with different aggregate functions may be equivalent~\cite{CohenNS99}. We follow the approach of~\cite{CohenNS99,NuttSS98} by considering equivalence between queries with the same lists of head arguments, called {\em compatible queries}. 
\vspace{-0.5cm}
\begin{definition}{Equivalence of compatible aggregate queries~\cite{NuttSS98}}
For queries $Q(\bar{X},\alpha(\bar{Y})) \leftarrow A(\bar{S})$ and $Q'(\bar{X},\alpha(\bar{Y})) \leftarrow A'(\bar{S}')$,
$Q \equiv Q'$ if $Q({D}) = Q'({D})$ for every database $D$.
\end{definition} 
\vspace{-0.2cm}

We say that two compatible aggregate queries $Q$ and $Q'$ are {\em equivalent in presence of a set of dependencies} $\Sigma$, $Q \equiv_{\Sigma} Q'$, if  $Q({D}) = Q'({D})$ for every database $D \models \Sigma$. 
\nop{
\begin{theorem}\cite{CV93}
(1) Two conjunctive queries are bag equivalent iff they are
isomorphic.
 (2) Two conjunctive queries are bag-set equivalent iff they
are isomorphic after duplicate subgoals are removed.
\end{theorem}
} 


\vspace{-0.1cm}
\begin{theorem}{\cite{CohenNS99,NuttSS98}}
\label{equival-aggr-theorem}
(1) Equivalence of $sum$- and of $count$-queries can be
reduced to bag-set equivalence of their cores. (2) Equivalence of
$max$- and of $min$-queries can be reduced to set equivalence of
their cores. 
\end{theorem}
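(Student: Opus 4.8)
The plan is to prove each clause as an ``if and only if'': for $sum$- and $count$-queries, $Q \equiv Q'$ exactly when $\breve{Q} \equiv_{BS} \breve{Q}'$, and for $max$- and $min$-queries, $Q \equiv Q'$ exactly when $\breve{Q} \equiv_{S} \breve{Q}'$ (the cores being compatible, so they share the grouping list $\bar{S}$ and the aggregated list $\bar{Y}$). I would organize everything around the three-stage semantics of Equation~(\ref{aggr-equation}): form the bag $\mathbf{B} = \breve{Q}(D,BS)$, partition it into groups by $\bar{S}$, then apply the aggregate function per group. The two directions are soundness (equivalence of cores implies equivalence of the aggregate queries) and completeness (the converse, which I would prove by contraposition by building a distinguishing database).

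Soundness is the routine direction. If $\breve{Q} \equiv_{BS} \breve{Q}'$, then for every set-valued $D$ the bags $\breve{Q}(D,BS)$ and $\breve{Q}'(D,BS)$ are identical; since the grouping step depends only on this bag and on the shared attributes $\bar{S}$, and the aggregation step is a fixed function of each resulting sub-bag, the answers $Q(D)$ and $Q'(D)$ coincide, giving $Q \equiv Q'$. For $max$ and $min$ I would note that the value of the aggregate on a bag equals its value on the bag's core-set (duplicates are irrelevant to extrema); hence the answer depends only on the \emph{set} underlying $\breve{Q}(D,BS)$, which is exactly $\breve{Q}(D,S)$. Therefore $\breve{Q} \equiv_{S} \breve{Q}'$ already forces $Q(D)=Q'(D)$, which is why $max$ and $min$ need only set-equivalence rather than the stronger bag-set equivalence.

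Completeness is the crux, and the main obstacle is the explicit construction of a distinguishing database. For $sum$: assuming $\breve{Q} \not\equiv_{BS} \breve{Q}'$, fix a set-valued $D$ on which the two core-bags differ, say in the multiplicity of some tuple $(\bar{s}_0,\bar{y}_0)$. I would then re-interpret $D$ as a database $D^*$ that preserves exactly the set of satisfying assignments of both cores, while reassigning the values occupying the aggregated positions $\bar{Y}$ to distinct powers $B^0,B^1,B^2,\dots$ of a base $B$ chosen larger than every multiplicity in play; this is legitimate because $\bar{Y}$ is disjoint from $\bar{S}$, so the grouping is untouched. The per-group $sum$ then becomes a base-$B$ encoding $\sum_i m_i B^i$ of the multiset of contributing values, which is injective in the multiplicity vector $(m_i)$; hence group $\bar{s}_0$ receives different sums under $Q$ and $Q'$, so $Q(D^*)\neq Q'(D^*)$. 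The delicate point is the bookkeeping that keeps the homomorphism (satisfying-assignment) structure intact while forcing the $\bar{Y}$-coordinates to take the chosen generic values, for which I would work directly on the canonical database $D^{(\breve{Q})}$ with its variables frozen to these values. For $max$ and $min$ the analogous trick sets the aggregated coordinate of a witnessing tuple $(\bar{s}_0,\bar{y}_0) \in \breve{Q}(D,S)\setminus\breve{Q}'(D,S)$ to a value strictly exceeding (for $max$; strictly below, for $min$) all other aggregated values in its group, so that this extremal value appears in group $\bar{s}_0$ for $Q$ but not for $Q'$.

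Finally, $count$ needs separate handling because, unlike $sum$, it ignores the data values, so the value-injection trick does not apply directly. Here the $count$-answer for group $\bar{s}_0$ equals the number of satisfying assignments of the core with $\bar{S}\mapsto \bar{s}_0$, that is, a projected homomorphism count. I would show that equality of these counts on all set-valued databases forces the full multiplicity functions to agree --- this is exactly the counting content behind Theorem~\ref{cv-theorem}, whose characterization of $\equiv_{BS}$ as isomorphism of canonical representations is itself proved by counting satisfying assignments. Thus non-isomorphic canonical cores yield a database separating the per-group counts, completing the $count$ case. I expect this $count$ argument, together with the value-freezing construction for $sum$, to be the genuinely nontrivial steps; the $max$/$min$ and all soundness parts are comparatively mechanical.
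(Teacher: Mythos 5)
First, a point of reference: the paper does not prove this statement at all --- Theorem~\ref{equival-aggr-theorem} is imported verbatim from \cite{CohenNS99,NuttSS98} as a preliminary, so there is no in-paper proof to compare against and your proposal must stand on its own. On that basis, your soundness directions and your $sum$ and $max$/$min$ completeness arguments follow the standard route from that literature, but the execution needs one repair: you cannot ``reassign the values occupying the aggregated positions $\bar{Y}$'' of a database, because a constant is not attached to a position --- the same constant may also occupy join, grouping, or $\bar{Z}$ positions, so a local reassignment can create or destroy satisfying assignments. The correct move is genericity: apply an \emph{injective renaming} $\rho$ to the entire active domain of the distinguishing database $D$ (query answers are equivariant under such renamings, so every multiplicity is preserved exactly), choosing the images of the potential $y$-values to be distinct powers of a base exceeding all multiplicities (for $sum$), or making the witness value $\rho(y_0)$ dominate (for $max$, dually for $min$). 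With that fix, those three cases are correct; your fallback of ``freezing the canonical database'' is not needed and would anyway require re-deriving where the multiplicity discrepancy lives.

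The genuine gap is the $count$ case, and it is not a bookkeeping issue: the step you propose is false. Under the paper's definition of the core, which keeps $\bar{Y}$ in the head, equality of per-group counts on every database does \emph{not} force the full multiplicity functions to agree, so no amount of appeal to Theorem~\ref{cv-theorem} (which characterizes equality of full multiplicity functions, not of projected counts) can close it. Concretely, take $Q(s, count(y)) \leftarrow p(s,y,z)$ and $Q'(s, count(y)) \leftarrow p(s,z,y)$: on every set-valued database both return, for each group $s_0$, the number of $P$-tuples with first coordinate $s_0$, hence $Q \equiv Q'$; yet the cores $\breve{Q}(s,y) \leftarrow p(s,y,z)$ and $\breve{Q}'(s,y) \leftarrow p(s,z,y)$ are not bag-set equivalent, as $D = \{ p(1,2,3) \}$ gives $\{ (1,2) \}$ versus $\{ (1,3) \}$. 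What \cite{NuttSS98} actually proves is the statement for $count(*)$, whose core \emph{omits} the aggregated argument; there the reduction is essentially definitional, since the per-group count is literally the multiplicity of $\bar{s}$ in the bag-set answer of the $\bar{Y}$-less core. The paper's blanket remark that its results ``carry over'' between $count$ and $count(*)$ papers over exactly this mismatch, and your proof inherits it; to be correct, your $count$ argument must switch to the $\bar{Y}$-less core (at which point it becomes trivial), not attempt to recover bag-set equivalence of the cores as defined in Section~\ref{aggr-prelims}.
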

\vspace{-0.1cm}

\nop{
Finally, we also consider  {\it multiaggregate queries } for views.
A multiaggregate query is an aggregate query with more than one
aggregated attribute in the head. It can be viewed as a succinct
definition of two (or more) aggregate queries with the same body
and same grouping attributes. Most importantly, it can be used to
store the relation of two (or more) similar aggregate views in a
smaller space.

} 

%

\nop{

\subsection{Views and Rewritings}
\label{view-basics}

\reminder{I have not done final polishing of this subsection}

A {\em view} refers to a named query. A view $V$ defined on the schema of database $D$ is said to be {\em
materialized} in $D$ if the answer $V(D,X)$ to $V$ on $D$, computed  under the appropriate semantics $X$ for query evaluation, is stored in
the database. Given a database $D$ and for a set ${\cal V} = \{ V_1,\ldots,V_n \}$ of views, we denote by $D_{\cal V}$ the database $D$ with an 
added relation $V_i(D,X)$ for each $V_i \in {\cal V}$. The schema of the database $D_{\cal V}$ is the union of the schema $\cal D$ of $D$ with the set of relation symbols for $\cal V$.  

Given a set $\cal V$ of views defined on database schema $\cal D$, let $R$ be a query on the schema ${\cal D}_{\cal V}$, which is the set of relation symbols for the views in $\cal V$ and of their arities. (Thus, by not allowing in $R$ any predicates for the relations in $\cal D$, we consider only so-called {\it complete rewritings}.)  Given a query $Q$, $R$ is an {\em equivalent
rewriting of $Q$ under set semantics,} denoted $Q \equiv^{\cal V}_S R$, if for every set-valued database $D$, $R(D_{\cal V},S) = Q(D,S)$. The definitions for the bag semantics ($ \equiv^{\cal V}_B$) and for the bag-set semantics ($ \equiv^{\cal V}_{BS}$) are analogous. In the rest of the paper, unless otherwise noted, the term ``rewriting" means an ``equivalent rewriting" of a query using views, under the semantics specified either explicitly or by the context. 

For a CQ query $R$ defined on schema ${\cal D}_{\cal V}$ for a set $\cal V$ of CQ views, we obtain the {\em expansion} $R^{exp}$ of $R$ by replacing all view atoms in the body of $R$ by their definitions in terms of base relations. Theorem~\ref{cq-rewr-theorem} gives well-known necessary and sufficient conditions for CQ query-rewriting equivalence under each of the three semantics. These conditions follow directly from~\cite{LevyMSS95}, from Theorem~\ref{cv-theorem}, and from straightforward observations on the equivalence of CQ rewritings, in terms of CQ views, to their expansions, under each of the three semantics.
\begin{theorem}
\label{cq-rewr-theorem} For a CQ query $Q$ and a set of CQ views
$\cal V$, let $R$ be a CQ query $R$ defined on schema ${\cal D}_{\cal V}$. Then (1) $Q \equiv^{\cal V}_S R$ iff $Q \equiv_S R^{exp}$; (2) $Q \equiv^{\cal V}_B R$ iff $Q \equiv_B R^{exp}$; and (3) $Q \equiv^{\cal V}_{BS} R$ iff $Q \equiv_{BS} R^{exp}$.
\end{theorem}

} 

\nop{

\section{Remove This Section}

\reminder{Remove this section after borrowing from it the intuition for central rewritings}

\subsection{OLD: Equivalent rewritings and views}
\label{prelim-set-bag-section}


We now consider CQ queries with grouping and 
aggregation, and assume that all aggregate queries are
unnested and use aggregation functions {\tt COUNT}, {\tt SUM}, {\tt
MAX}, or {\tt MIN}. 

We consider only central rewritings~\cite{AfratiC05} of CQA queries.
Intuitively, these are rewritings where the unaggregated {\it core}
is a CQ query, and only one view contributes to computing the
aggregated query output. For instance, rewriting {\tt R} in
Example~\ref{intro-example} is a central rewriting of the query $Q$.
Central rewritings are a natural choice in many applications; for
instance, in the star-schema framework~\cite{ChaudhuriD97} the fact
table provides the aggregate view, and the dimension tables provide
the other views in the rewritings. Formally, central rewritings are
defined as follows: (1) The argument of aggregation in the head of
the rewriting comes from exactly one (central) view in the body of
the rewriting. (2) Aggregated outputs of noncentral views are not
used in the head of the rewriting. (3) We do not consider rewritings
whose join conditions involve output arguments of any participating
views. Equivalence tests for a CQA query $Q$ and a central rewriting
$R$ are~\cite{AfratiC05,CohenNS99} by reduction to (1) {\it bag-set}
equivalence of the unaggregated cores of $Q$ and $R^{exp}$ for {\tt
SUM} and {\tt COUNT} queries, and to (2) {\it set} equivalence of
the cores for {\tt MAX} and {\tt MIN} queries. In our analysis, we
chose central rewritings for their simplicity.
As there is a natural relationship between some classes of
rewritings in~\cite{AfratiC05} and in~\cite{CohenNS99}, our results
also apply to rewritings of~\cite{CohenNS99}.

} 

\vspace{-0.3cm}

\section{Problem Statement}
\label{problem-stmt-sec}

In this section we use the following notation: Let $X$ be the semantics  for query evaluation, with values $S$, $B$, and $BS$, for set, bag, or bag-set semantics, respectively. Let ${\cal L}_1$ and ${\cal L}_2$ 
be two query languages. 
Let $\Sigma$ be a finite set of dependencies on database schema $\cal D$. 

We use the notion of $\Sigma$-minimality~\cite{DeutschPT06}, defined as follows. (Intuitively, reformulation $R$ of query $Q$ 
is not $\Sigma$-minimal if at least one egd in $\Sigma$ is applicable to  $R$.)
\vspace{-0.2cm}

\begin{definition}{Minimality under dependencies~\cite{DeutschPT06}}
A CQ query $Q$ is {\em $\Sigma$-minimal} if there are no queries $S_1$, $S_2$ where $S_1$ is obtained from $Q$ by replacing zero or more variables with other variables of $Q$, and $S_2$ by dropping at least one atom from $S_1$ such that $S_1$ and $S_2$ remain equivalent to $Q$ under $\Sigma$.
\end{definition}
\vspace{-0.2cm}

We extend this definition to $\Sigma$-minimality of CQ\linebreak queries with grouping and aggregation, which is defined as $\Sigma$-minimality of the (unaggregated) core of the query, see Section~\ref{aggr-prelims} for the relevant definitions. 
\nop{

For the purposes of Section~\ref{}, we extend the definition of $\Sigma$-minimality (Section~\ref{c-and-b-section}) naturally to $\Sigma$-minimality of conjunctive queries with arithmetic comparisons (CQAC queries),  which is defined via $\Sigma$-minimality of the CQ part of the query (i.e., of the result of dropping all the inequality/nonequality ACs of the query). \reminder{Remove this preceding CQAC sentence?}
Similarly, we extend the definition to $\Sigma$-minimality of CQ(AC) queries with grouping and aggregation, which is defined via $\Sigma$-minimality of the CQ part of the (unaggregated) core of the query, see Section~\ref{aggr-prelim-section} for the relevant definitions. 

} 

A general statement of {\bf the Query-Reformulation Problem} that we consider in this paper is as follows: The {\it problem input} is $({\cal D}, X, Q, \Sigma, {\cal L}_2)$, where query $Q$ is defined on database schema $\cal D$ in language ${\cal L}_1$. A {\it solution} to the Query-Reformulation Problem, for a problem input $({\cal D}, X, Q, \Sigma, {\cal L}_2)$, is a query $Q'$ defined in language ${\cal L}_2$ on $\cal D$, such that $Q' \equiv_{\Sigma,X} Q$. 


In this paper we consider the Query-Reformulation Problem in presence of embedded dependencies, and focus on (1) the 
{\it CQ class} of the problem, where each of ${\cal L}_1$ and ${\cal L}_2$ is the language of CQ queries, and on (2) the {\it CQ-aggregate class} (see Section~\ref{equiv-tests-second-subsection}), where each of ${\cal L}_1$ and ${\cal L}_2$ is the language of CQ queries with grouping and aggregation, using aggregate functions $sum$, $max$, $min$, and $count$; we refer to this query language as {\it CQ-aggregate}. 
For both classes, we consider only {\it $\Sigma$-minimal solutions} of the Query-Reformulation Problem.

\nop{

\reminder{Remove the rest of this section? If I choose to keep the Query-Rewriting Problem part of this section, then the ``Views'' subsection of the Preliminaries must be before this section --- this is because the Query-Rewriting Problem part of this section refers to Theorem~\ref{cq-rewr-theorem}}

\paragraph{The Query-Rewriting Problem} A general statement of {\bf the Query-Rewriting  Problem} that we consider in this paper is as follows: The {\it problem input} is $({\cal D}, X, Q, {\cal V}, \Sigma, {\cal L}_2)$, where query $Q$ is defined on database schema $\cal D$ in language ${\cal L}_1$,  and each view $V$ in a finite and nonempty set of views ${\cal V}$  is defined on schema $\cal D$ in query language ${\cal L}_3$. A {\it solution} to the Query-Rewriting Problem, for a problem input $({\cal D}, X, Q, {\cal V}, \Sigma, {\cal L}_2)$, is a query $R$ defined in language ${\cal L}_2$ on database schema ${\cal D}_{\cal V}$, 
such that $R \equiv^{\cal V}_{\Sigma,X} Q$. 

In this paper we consider the Query-Rewriting Problem in presence of embedded dependencies. The classes of the problem that are the focus of this paper are the {\it CQ class,} the {\it CQ-aggr class,} and the {\it CQAC-aggr class,} which parallel our classes of interest of the Query-Reformulation Problem. For these problem classes, we use Theorem~\ref{cq-rewr-theorem} to reduce the equivalence $R \equiv^{\cal V}_{\Sigma,X} Q$ to the equivalence $R^{exp} \equiv_{\Sigma,X} Q$, where $R^{exp}$ is the expansion of $R$. 
For each of the three  classes of the Query-Rewriting Problem, we consider only those solutions $R$ that are {\it view-minimal.} 

\begin{definition}{View-minimal rewriting}
For a problem ${\cal P} = ({\cal D}, X, Q, {\cal V}, \Sigma, {\cal L}_2)$, suppose ${\cal L}_2$ is a (not necessarily proper) sublanguage of CQAC-aggr. Let $R$ be a solution for $\cal P$, that is, $R^{exp} \equiv_{\Sigma, X} Q$. Then  we say that $R$ is {\em view-minimal} if, for a query $R'$ that results from removing any relational subgoal of $R$, it is not true that $(R')^{exp} \equiv_{\Sigma, X} Q$. 
\end{definition}
Note that an expansion $R^{exp}$  of a view-minimal solution $R$ to an instance of the Query-Rewriting Problem is not necessarily $\Sigma$-minimal, see Appendix~\ref{nonsigma-min-appendix}.\footnote{At the same time, for the terminal result $R_{\Sigma}^{exp}$ of chasing the expansion  $R^{exp}$ of a view-minimal  $R$, $R_{\Sigma}^{exp}$ is guaranteed to be $\Sigma$-minimal for all classes of the Query-Rewriting Problem that we consider in this paper.}

} 

\vspace{-0.1cm}

\section{Sound Chase Under Bag and Bag-Set Semantics}
\label{new-sound-chase-sec}

In this section we 
 show that  under bag and bag-set semantics, it is incorrect to enforce the set-semantics condition of $D^{(Q_n)} \models \Sigma$ (Section~\ref{chase-prelims}) on the terminal chase result $Q_n$ of query $Q$ under dependencies $\Sigma$. The problem is that under this condition, chase may yield a result $Q_n$ that is {\it not} equivalent to the original query $Q$ in presence of $\Sigma$. That is, soundness of chase, understood as $Q_n \equiv_{\Sigma,B} Q$ or $Q_n \equiv_{\Sigma,BS} Q$, may {\it not} hold. We then formulate sufficient and necessary conditions for soundness of chase for CQ queries and embedded dependencies under bag and bag-set semantics. 

In this section we also show that constraints that force certain relations to be sets on all instances of a given database schema can be defined as egds, provided that {\it row (tuple) IDs} are defined for the respective relations. Finally, we extend the condition of Theorem~\ref{cv-theorem} for bag equivalence of CQ queries, to those cases where some relations are required to be set valued in all instances of the given schema. Such requirements can be defined as our set-enforcing egds.  

	
\subsection{Motivating Example}

	Let us conjecture that maybe an analog of Theorem~\ref{chase-theorem} (Section~\ref{chase-prelims}) holds for the case of bag semantics. (In this section we discuss in detail the case of  bag semantics only; analogous reasoning is valid for the case of bag-set semantics.) That is, maybe $Q_1 \equiv_{\Sigma,B} Q_2$ if and only if $(Q_1)_{\Sigma,S} \ \equiv_B (Q_2)_{\Sigma,S}$ in the absence of dependencies, 
	 for a given pair of CQ queries $Q_1$ and $Q_2$ and for a given set $\Sigma$ of embedded dependencies. (We obtain our conjecture by replacing the symbols $\equiv_{\Sigma,S}$ and $\equiv_{S}$ in Theorem~\ref{chase-theorem} by the {\it bag-}semantics versions of these symbols.)
	
	Now consider the C\&B algorithm by Deutsch and colleagues~\cite{DeutschPT06}. 
Under set semantics for query evaluation and  given a CQ query $Q$, C\&B  outputs all equivalent  $\Sigma$-minimal conjunctive reformulations of $Q$ in presence of the given embedded dependencies $\Sigma$ (i.e., C\&B is {\it sound and complete}), whenever chase of $Q$ under $\Sigma$ terminates in finite time. 
See Appendix~\ref{c-and-b-appendix} for the details on C\&B.

	If our conjecture is valid, then a straightforward modification of C\&B gives us a procedure for solving  
	 instances in the CQ class of the Query-Reformulation Problem  for  bag  semantics.\footnote{An analogous extension of C\&B would work for instances $({\cal D}, BS, Q, \Sigma, CQ)$, i.e., under bag-set semantics.}  The only difference between the original C\&B and its proposed modification would be the test for {\it bag}, rather than {\it set}, equivalence (see Theorem~\ref{cv-theorem}) between the universal plan $(Q)_{\Sigma,S}$ of C\&B  for the input query $Q$ and dependencies $\Sigma$, and the terminal result of chasing a candidate reformulation. (These terms are defined in Appendix~\ref{c-and-b-appendix}.) By extension from C\&B, our algorithm would be sound and complete for all problem instances where the universal plan for $Q$ could be computed in finite time. 
	
Unfortunately, this naive extension of C\&B would not be sound for bag semantics (or for bag-set semantics, in the version of C\&B using the bag-set equivalence test of Thm.~\ref{cv-theorem}). We highlight the problems in an example.
\nop{
\begin{example}
\label{motivating-example}
Let $\Sigma$ be a set of embedded dependencies $\sigma_1$ through $\sigma_5$, which are defined as follows on database schema $\{ P, R, S, T \}$.
\begin{tabbing}
$\sigma_1 : p(X,Y) \rightarrow s(X,Z).$ \\
$\sigma_2 : p(X,Y) \rightarrow r(X).$ \\
$\sigma_3 : p(X,Y) \rightarrow t(Y,Z).$ \\
$\sigma_4 : t(Y,Z) \wedge t(Y,W) \rightarrow Z = W.$ \\
$\sigma_5 : t$ is a set-valued relation.
\end{tabbing}

Consider CQ queries $Q_1$ through $Q_4$, defined as
\begin{tabbing}
$Q_1(X) \ :- \ p(X,Y), t(Y,W), r(X), s(X,Z).$ \\ 
$Q_2(X) \ :- \ p(X,Y), t(Y,W), r(X).$ \\
$Q_3(X) \ :- \ p(X,Y), t(Y,W).$ \\
$Q_4(X) \ :- \ p(X,Y).$ 
\end{tabbing}
(We disregard queries $Q_2$ and $Q_3$ for the moment.)

Our naive modification of C{\em \&}B would return a reformulation $Q_4$ of query $Q_1$. Indeed, each of $(Q_1)_{\Sigma,S}$ and $(Q_4)_{\Sigma,S}$ is isomorphic to $Q_1$, and thus by Theorem~\ref{cv-theorem} we have that 
$(Q_1)_{\Sigma,S} \ \equiv_B (Q_4)_{\Sigma,S}$. 

However, 
 even though $(Q_1)_{\Sigma,S} \ \equiv_B (Q_4)_{\Sigma,S}$, it is not true that $Q_1 \ \equiv_{\Sigma,B} Q_4$. The counterexample is a bag-valued database $D$, $D \models \Sigma$, with relations $P = \{\hspace{-0.1cm}\{ (1,2) \}\hspace{-0.1cm}\}$, $R = \{\hspace{-0.1cm}\{ (1) \}\hspace{-0.1cm}\}$, $S = \{\hspace{-0.1cm}\{ (1,3),$ $(1,4) \}\hspace{-0.1cm}\}$, and $T = \{\hspace{-0.1cm}\{ (2,5) \}\hspace{-0.1cm}\}$. 

On the database $D$, the answer to $Q_4$ under bag semantics is $Q_4(D,B) = \{\hspace{-0.1cm}\{ (1) \}\hspace{-0.1cm}\}$, while $Q_1(D,B) = (Q_1)_{\Sigma,S}(D,B) = (Q_4)_{\Sigma,S}(D,B)  = \{\hspace{-0.1cm}\{ (1), (1) \}\hspace{-0.1cm}\}$. From the fact that $Q_1(D,B)$ and $Q_4(D,B)$ are not the same {\em bags,} we conclude that $Q_1 \ \equiv_{\Sigma,B} \hspace{-0.85cm} / \hspace{0.75cm}  Q_4$.

The same database $D$ (which is {\em set} valued) would disprove $Q_1 \ \equiv_{\Sigma,BS} Q_4$ (i.e., equivalence under $\Sigma$ and {\em bag-set} semantics), even though it is true by Theorem~\ref{cv-theorem} that $(Q_1)_{\Sigma,S} \ \equiv_{BS} (Q_4)_{\Sigma,S}$.
\end{example}
} 
\begin{example}
\label{motivating-example}
On database schema ${\cal D} = \{ P, R, S,$ $T, U \}$, consider a set $\Sigma$ that includes four tgds:
\vspace{-0.1cm}

\begin{tabbing}
$\sigma_1: p(X,Y) \rightarrow s(X,Z) \wedge t(X,V,W)$ \\
$\sigma_2: p(X,Y) \rightarrow t(X,Y,W)$ \\
$\sigma_3: p(X,Y) \rightarrow r(X)$ \\
$\sigma_4: p(X,Y) \rightarrow u(X,Z) \wedge t(X,Y,W)$ 
\end{tabbing}
\vspace{-0.1cm}

Suppose $\Sigma$ also includes dependencies enforcing the following constraints: (1) Relations $S$ and $T$ (but not $R$ or $U$) are set valued in all instances of $\cal D$; call these constraints $\sigma_5$ and $\sigma_6$, respectively. (These dependencies are relevant to the bag-semantics case. Under bag-set or set semantics, all relations in all instances of $\cal D$ are set valued by definition.) Please see Section~\ref{making-chase-sound-section} for an approach to expressing such   constraints using egds. (2) The first attribute of $S$ is the key of $S$ (egd $\sigma_7$), and the first two attributes of $T$ are the key of $T$ (egd $\sigma_8$), see Appendix~\ref{key-app} for the definition of keys. 

Consider CQ queries $Q_1$ through $Q_4$, defined as
\vspace{-0.1cm}

\begin{tabbing}
$Q_1(X) \ :- \ p(X,Y), t(X,Y,W), s(X,Z), r(X), u(X,U).$ \\ 
$Q_2(X) \ :- \ p(X,Y), t(X,Y,W), s(X,Z), r(X).$ \\
$Q_3(X) \ :- \ p(X,Y), t(X,Y,W), s(X,Z).$ \\
$Q_4(X) \ :- \ p(X,Y).$ 
\end{tabbing}
\vspace{-0.1cm}
(We disregard queries $Q_2$ and $Q_3$ for the moment.) 

We can show  that $Q_1 \equiv_{\Sigma,S} Q_4$. Thus, $Q_1$ is a reformulation of $Q_4$ in presence of $\Sigma$ under set semantics. At the same time, by~\cite{ChandraM77}  $Q_1$ and $Q_4$ are not equivalent under set semantics in the absence of dependencies.

Our naive modification of C{\em \&}B would return a reformulation $Q_1$ of query $Q_4$. Indeed, each of $(Q_1)_{\Sigma,S}$ and $(Q_4)_{\Sigma,S}$ is isomorphic to $Q_1$, thus by Theorem~\ref{cv-theorem} we have that 
$(Q_1)_{\Sigma,S} \ \equiv_B (Q_4)_{\Sigma,S}$. 

However, 
 even though $(Q_1)_{\Sigma,S} \ \equiv_B (Q_4)_{\Sigma,S}$, it is not true that $Q_1 \ \equiv_{\Sigma,B} Q_4$. The counterexample is a bag-valued database $D$, $D \models \Sigma$, with relations $P = \{\hspace{-0.1cm}\{ (1,2) \}\hspace{-0.1cm}\}$, $R = \{\hspace{-0.1cm}\{ (1) \}\hspace{-0.1cm}\}$, $S = \{\hspace{-0.1cm}\{ (1,3) \}\hspace{-0.1cm}\}$, $T = \{\hspace{-0.1cm}\{ (1,2,4) \}\hspace{-0.1cm}\}$, and $U = \{\hspace{-0.1cm}\{ (1,5), (1,6) \}\hspace{-0.1cm}\}$. On the database $D$, the answer to $Q_4$ under bag semantics is $Q_4(D,B) = \{\hspace{-0.1cm}\{ (1) \}\hspace{-0.1cm}\}$, whereas $Q_1(D,B) = (Q_1)_{\Sigma,S}(D,B) = (Q_4)_{\Sigma,S}(D,B)  = \{\hspace{-0.1cm}\{ (1),$ $ (1) \}\hspace{-0.1cm}\}$. From the fact that $Q_1(D,B)$ and $Q_4(D,B)$ are not the same {\em bags,} we conclude that $Q_1 \ \equiv_{\Sigma,B} \hspace{-0.85cm} / \hspace{0.75cm}  Q_4$.

The same database $D$ (which is {\em set} valued) would disprove $Q_1 \ \equiv_{\Sigma,BS} Q_4$ (i.e., equivalence of $Q_1$ and $Q_4$ under $\Sigma$ and {\em bag-set} semantics), even though it is true by Theorem~\ref{cv-theorem} that $(Q_1)_{\Sigma,S} \ \equiv_{BS} (Q_4)_{\Sigma,S}$.
\end{example}

\subsection{Sound Chase Steps}
\label{making-chase-sound-section}

The problem highlighted in Example~\ref{motivating-example} is unsoundness of {\it set}-semantics chase when applied to query $Q_4$ under bag or bag-set semantics. To rectify this problem, that is to make chase sound under these semantics, we modify the definitions of chase steps.

Given a CQ query $Q$ and a set of embedded dependencies $\Sigma$, let $Q'$ be the result of applying to query $Q$ a dependency $\sigma \in \Sigma$. We say that {\it the chase step $Q \Rightarrow^{\sigma}_B Q'$  is sound under bag semantics}~\cite{DeutschDiss} {\it ($Q \Rightarrow^{\sigma}_{BS} Q'$ is sound under bag-set semantics, respectively)} if it holds that $Q \equiv_{\Sigma,B} Q'$ (that $Q \equiv_{\Sigma,BS} Q'$, respectively). By extension of the above definitions,  all chase steps under embedded dependencies are sound under {\em set} semantics. The definitions of sound chase steps are naturally extended to those of {\em sound chase sequences} under each semantics. We say that a {\em chase result $Q_n$ is sound w.r.t. $(Q,\Sigma)$ under bag semantics (under bag-set semantics, respectively)} whenever there exists a $\Sigma$-chase sequence $C$ that starts with the input query $Q$ and ends with $Q_n$, and such that all chase steps in $C$ are sound under bag semantics (under bag-set semantics, respectively). 

\subsubsection{Regularized Assignment-Fixing Tgds}
\label{regularized-sec}

Toward ensuring soundness of chase under bag and bag-set semantics, we will define key-based chase using tgds, see Section~\ref{key-based-chase-subsec}. For our definition we will need the technical notions of ``regularized tgds'' and ``assign-\linebreak ment-fixing tgds'', which we formally define and characterize in this subsection. 

\paragraph{Regularized tgds} 
Consider a tgd $\sigma: \phi(\bar{X},\bar{Y}) \rightarrow \exists \bar{Z} \ \psi(\bar{X},\bar{Z})$ whose right-hand side $\psi$ has at least two relational atoms. Let $\psi_a$ and $\psi_b$ be a partition of $\psi$ (where $\psi$ is viewed as set of relational atoms) in $\sigma$ into two disjoint nonempty sets, that is $\psi_a \neq \emptyset$, $\psi_b \neq \emptyset$,  $\psi_a \cap \psi_b = \emptyset$, and $\psi_a \cup \psi_b = \psi$. Let $\bar{A}$ be all the variables in $\psi_a$, and let $\bar{B}$ be all the variables in $\psi_b$. We call $\psi_a$ and $\psi_b$ a {\em nonshared partition of} $\psi$ in $\sigma$ whenever $\bar{A} \cap \bar{B} \subseteq \bar{X}$. (Recall that all the variables in $\bar{X}$ are universally quantified in $\sigma$.) In case where $\psi_a$ and $\psi_b$ are two disjoint nonempty sets such that $\psi_a \cup \psi_b = \psi$ and $\bar{A} \cap \bar{B} \cap \bar{Z} \neq \emptyset$, 
 we call  $\psi_a$ and $\psi_b$ a {\em shared partition of} $\psi$ in $\sigma$. 

\begin{definition}{Regularized tgd, regularized\\ set of embedded dependencies}
\label{regularized-def}
A tgd $\sigma: \phi \rightarrow \psi$ 
is a {\em regularized tgd} if there exists no nonshared partition of the set of relational atoms of $\psi$ into two disjoint nonempty sets.\footnote{Trivially, every tgd whose right-hand side has exactly one atom is a regularized tgd.}  
We say that a finite set $\Sigma$ of embedded  dependencies is a {\em regularized set of (embedded) dependencies} if each tgd in $\Sigma$ is regularized. 
\end{definition} 

Sets $\{ u(X,Z) \}$ and $\{ t(X,Y,W) \}$ comprise a nonshared partition of the right-hand side of tgd $\sigma_4$ in Example~\ref{motivating-example}; therefore, the tgd $\sigma_4$ is not regularized. For a tgd $\sigma_1$ in Example~\ref{unregularized-counterex-one}, where $\sigma_1: \ p(X,Y) \rightarrow \exists Z \ \exists W \ r(X,Z) \wedge s(Z,W)$, sets $\{ r(X,Z) \}$ and $\{ s(Z,W) \}$ comprise a shared partition of the right-hand side of $\sigma_1$, because an existential variable $Z$ of $\sigma_1$ is present in both elements of the partition. This tgd is regularized by Definition~\ref{regularized-def}. The set $\Sigma$ in Example~\ref{regul-partapply-ex} is a regularized set of dependencies. 

Consider a tgd $\sigma: \phi \rightarrow \psi$ that is not regularized by Definition~\ref{regularized-def}. The process of  {\em regularizing} $\sigma$ is the process of constructing from $\sigma$  a set $\Sigma_{\sigma} = \{ \sigma_1,\ldots,\sigma_k \}$ of tgds, where $k \geq 2$ and such that for each tgd $\sigma_i$ in $\Sigma_{\sigma}$, (i) the left-hand side of $\sigma_i$ is the left-hand side $\phi$ of $\sigma$; (ii)  the right-hand side of $\sigma_i$ is a nonempty set of atoms $\psi_i \subseteq \psi$ (recall that $\psi$ is the right-hand side of $\sigma$), with all the existential variables (of $\psi$) in $\psi_i$ marked as such in $\sigma_i$; (iii) $\sigma_i$ is regularized by Definition~\ref{regularized-def}; and (iv) $\cup_{i=1}^k \psi_i = \psi$ . It is easy to see that given a non-regularized tgd $\sigma$, the recursive algorithm of  finding nonshared partitions 
of the right-hand side of $\sigma$ (a) regularizes $\sigma$ correctly, (b) results in a unique set $\Sigma_{\sigma}$, and (c) has the complexity $O(m^2 \ log \ m)$, where $m$ is the number of relational atoms in the right-hand side of $\sigma$. (The idea of the algorithm is to (1) give a unique ID $id(a_{\psi})$ to each relational atom $a_{\psi}$ of $\psi$, to then (2) associate with each $id(a_{\psi})$ the set of all those variables of $a_{\psi}$ that are existentially quantified in $\sigma$, and to then (3) recursively sort all the ids, each time by one fixed variable in their associated variable lists, and to either start a new nonshared partition using the sorted list, or to add atoms to an existing nonshared partition, again using the sorted list.) We call $\Sigma_{\sigma}$ the {\em regularized set} of $\sigma$.  

Now given a finite set $\Sigma$ of arbitrary embedded egds and tgds, we {\em regularize} $\Sigma$  by regularizing each tgd in $\Sigma$ as described above. We say that $\Sigma'$ is a {\em regularized version of} $\Sigma$ if (i) for each egd $\sigma$ in $\Sigma$, $\Sigma'$ also has $\sigma$, (ii) for each tgd $\sigma$ in $\Sigma$, $\Sigma'$ has the regularized set of $\sigma$ and, finally, (iii) $\Sigma'$ has no other dependencies. For each $\Sigma$ as above, it is easy to see that $\Sigma'$ is regularized by Definition~\ref{regularized-def} and is unique. 

The following results, in Proposition~\ref{regulariz-thm}, are immediate from Definition~\ref{regularized-def} and from the constructions in this subsection. 

\begin{proposition}
\label{regulariz-thm}
For a finite set $\Sigma$ of embedded egds and tgds defined on schema $\cal D$, let $\Sigma'$ be the regularized version of $\Sigma$. Then 
\begin{itemize}
	\item For every bag-valued database $D$ with schema $\cal D$, $D \models \Sigma$ iff $D \models \Sigma'$; and
	\item For every CQ query $Q$ defined on $\cal D$, chase of $Q$ under set semantics in presence of $\Sigma$ terminates in finite time iff chase of $Q$ under set semantics in presence of $\Sigma'$ terminates in finite time, and  $(Q)_{\Sigma,S} \equiv_S (Q)_{\Sigma',S}$ provided both chase results exist. 
\end{itemize}
\vspace{-0.6cm}
\end{proposition}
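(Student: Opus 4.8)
The plan is to reduce both bullets to a single regularization split and then induct on the number of non-regularized tgds in $\Sigma$. The one split to analyze is this: if $\sigma: \phi \rightarrow \exists \bar{Z}\ \psi$ admits a nonshared partition $\psi = \psi_a \cup \psi_b$, let $\sigma_a: \phi \rightarrow \exists \bar{Z}_a\ \psi_a$ and $\sigma_b: \phi \rightarrow \exists \bar{Z}_b\ \psi_b$ be the two tgds obtained, where $\bar{Z}_a$ and $\bar{Z}_b$ are the existential variables occurring in $\psi_a$ and in $\psi_b$. Because the partition is nonshared, every variable common to $\psi_a$ and $\psi_b$ lies in $\bar{X}$ and is therefore universal, so $\bar{Z}_a$ and $\bar{Z}_b$ are disjoint; this disjointness is the property that drives the whole argument. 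Since the construction of Definition~\ref{regularized-def} is just the iterated application of such splits, and since after one split the remaining tgds still stand in the same regularized-version relationship, it suffices to prove both bullets for one split and check that the hypotheses are reproduced.

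For the first bullet I would argue purely in first-order logic. The key equivalence is $\exists \bar{Z}_a \bar{Z}_b\,(\psi_a \wedge \psi_b) \equiv (\exists \bar{Z}_a\ \psi_a) \wedge (\exists \bar{Z}_b\ \psi_b)$, which holds exactly because $\bar{Z}_a$ occurs only in $\psi_a$ and $\bar{Z}_b$ only in $\psi_b$; the nontrivial direction glues an assignment for $\bar{Z}_a$ and one for $\bar{Z}_b$ into one assignment, well defined because the two agree on the shared, universally quantified variables. Prefixing $\phi \rightarrow$ and the universal quantifiers shows $\sigma$ and $\{\sigma_a,\sigma_b\}$ are logically equivalent sentences, and induction yields that $\Sigma$ and $\Sigma'$ are logically equivalent theories. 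Finally, satisfaction of an embedded dependency depends only on which tuples are present, not on multiplicities, so for a bag-valued $D$ we have $D \models \Sigma$ iff the underlying set instance does; combined with the logical equivalence this gives $D \models \Sigma$ iff $D \models \Sigma'$ for every bag-valued $D$.

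For the equivalence part of the second bullet, the cleanest route is the universal-solution viewpoint rather than a step-by-step simulation. A terminating set-chase of $Q$ produces a canonical database $D^{((Q)_{\Sigma,S})}$ that is a universal solution for $D^{(Q)}$ with respect to $\Sigma$, with the distinguished variables $\bar{X}$ frozen as constants; likewise for $\Sigma'$. Since $\Sigma$ and $\Sigma'$ have the same models by the first bullet, ``solution with respect to $\Sigma$'' and ``solution with respect to $\Sigma'$'' name the same class, so both terminal results are universal solutions for the same frozen instance and are therefore homomorphically equivalent via homomorphisms fixing $\bar{X}$. By the Chandra--Merlin characterization~\cite{ChandraM77} this is precisely $(Q)_{\Sigma,S} \equiv_S (Q)_{\Sigma',S}$.

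The termination equivalence is the step I expect to be the main obstacle, because the restricted chase with the un-split $\sigma$ always inserts a fresh copy of the \emph{entire} $\psi$ (hence of $\psi_b$) even when $\psi_b$ is already satisfied for the triggering homomorphism, whereas $\sigma_b$ fires only when $\psi_b$ is genuinely violated; these redundant fresh nulls are what must be controlled. The engine is the applicability lemma implied by the nonshared condition: for any body $\xi$ and homomorphism $h$ from $\phi$ to $\xi$, $h$ extends to $\phi \wedge \psi$ iff it extends to both $\phi \wedge \psi_a$ and $\phi \wedge \psi_b$ (again gluing the disjoint existentials), so $\sigma$ is applicable via $h$ exactly when $\sigma_a$ or $\sigma_b$ is. One direction is clean: the weak-acyclicity dependency graph of $\Sigma'$ is a \emph{subgraph} of that of $\Sigma$ (each ordinary or special edge contributed by $\sigma_a$ or $\sigma_b$ already appears for $\sigma$, while $\sigma$ may contribute extra cross edges from a frontier variable of one part to an existential of the other), so weak acyclicity of $\Sigma$ transfers to $\Sigma'$ and, by~\cite{FaginKMP05}, set-chase termination transfers in the regime the paper relies on. For the general biconditional I would set up a bidirectional simulation: every firing of $\sigma_a$ or $\sigma_b$ via $h$ is mirrored by a firing of $\sigma$ via $h$ that adds a superset of its effect, and conversely a firing of $\sigma$ via $h$ is mirrored by firing whichever of $\sigma_a,\sigma_b$ are applicable, while maintaining mutual homomorphisms between the evolving instances so that an infinite sequence on one side forces one on the other. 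The genuinely delicate point here is that the redundant $\psi_b$-copies produced by $\sigma$ have no direct counterpart on the $\Sigma'$ side, so the simulation must exploit the freedom to reorder $\Sigma'$-steps (front-loading $\sigma_b$ before $\psi_b$ becomes satisfied); making this precise, and fixing the intended notion of termination, is where the real work lies.
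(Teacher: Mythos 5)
The paper gives you almost nothing to compare against: its entire ``proof'' is the sentence that Proposition~\ref{regulariz-thm} is immediate from Definition~\ref{regularized-def} and the surrounding constructions. Measured against that, most of your proposal is correct and is a genuine improvement. The single-split induction, the gluing equivalence $\exists \bar{Z}_a \bar{Z}_b \, (\psi_a \wedge \psi_b) \equiv (\exists \bar{Z}_a \, \psi_a) \wedge (\exists \bar{Z}_b \, \psi_b)$ (valid precisely because a nonshared partition forces all shared variables to be universal), the observation that satisfaction of an embedded dependency by a bag-valued database depends only on its underlying set, the universal-solution argument for $(Q)_{\Sigma,S} \equiv_S (Q)_{\Sigma',S}$ when both results exist, and the remark that the dependency graph of $\Sigma'$ is a subgraph of that of $\Sigma$ are all sound.

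However, the step you flag as ``where the real work lies'' is not merely hard: it fails, because the termination biconditional is false for the restricted chase as defined in Section~\ref{chase-prelims}. The failure mode is exactly the one you isolate --- a firing of the unsplit $\sigma$ must insert a fresh copy of the \emph{entire} right-hand side, including a part that is already satisfied for the trigger, and those redundant nulls can create \emph{new} triggers rather than only satisfying old ones. Concretely, let $\Sigma = \{ \sigma, \tau \}$ with $\sigma: \ r(X) \rightarrow \exists Z \exists W \ s(X,Z) \wedge t(W)$ and $\tau: \ t(W) \rightarrow r(W)$, and let $Q(X) \ :- \ r(X)$. The partition $\{ s(X,Z) \}, \{ t(W) \}$ is nonshared, so $\Sigma' = \{ \sigma_a, \sigma_b, \tau \}$ with $\sigma_a: r(X) \rightarrow \exists Z \ s(X,Z)$ and $\sigma_b: r(X) \rightarrow \exists W \ t(W)$. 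The $\Sigma'$-chase of $Q$ terminates: add $s(X,Z_1)$, $t(W_1)$, $r(W_1)$, $s(W_1,Z_2)$, after which every dependency in $\Sigma'$ holds, since $\sigma_b$ only asks for the existence of \emph{some} $t$-atom. The $\Sigma$-chase never terminates, and is in fact deterministic: $\sigma$ is applicable to $r(W_1)$ because $s(W_1,\cdot)$ is missing, but firing it also adds a fresh $t(W_2)$; then $\tau$ adds $r(W_2)$, then $\sigma$ must fire on $r(W_2)$, and so on, so no finite chase sequence ever reaches a canonical database satisfying $\Sigma$. Hence ``$\Sigma'$-chase terminates'' does not imply ``$\Sigma$-chase terminates''. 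In graph terms, $\Sigma'$ is weakly acyclic here while $\Sigma$ is not: the special edge $(r,1) \rightarrow^{*} (t,1)$ contributed by $\sigma$ closes a cycle with the edge $(t,1) \rightarrow (r,1)$ from $\tau$, and it is precisely one of the cross edges that splitting removes.

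So the plan cannot be completed as stated, but the defect lies in the proposition, not in your strategy, and your analysis is what makes it visible. What survives is the direction the paper actually needs downstream: by your subgraph observation, weak acyclicity of $\Sigma$ transfers to $\Sigma'$, so in the weakly acyclic regime both chases terminate and your universal-solution argument gives the stated equivalence of results. Your closing suspicion about ``fixing the intended notion of termination'' is also the right diagnosis: note that the terminal $\Sigma'$-result above satisfies $\Sigma$ as well (by your first bullet), so a finite universal solution exists that the restricted $\Sigma$-chase simply cannot find; for a chase variant whose termination depends only on the class of models, such as the core chase of~\cite{DeutschPods08}, the biconditional would indeed follow from the logical equivalence of $\Sigma$ and $\Sigma'$.
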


\paragraph{Assignment-fixing tgds} 
{\em In the remainder of the paper, whenever we refer to a set of embedded dependencies, we assume that we are discussing (or using) its {\em regularized} version.} We now define assignment-fixing tgds. The idea is to be able to determine easily which tgds ensure sound chase steps under each of bag and bag-set semantics. The intuition is as follows. Suppose chase with tgd $\sigma$ is applicable to a CQ query $Q$ as  defined in Section~\ref{chase-prelims} (i.e., assuming set semantics), but we are looking at the implications of applying the chase under bag or bag-set semantics rather than under set semantics. Suppose further that the right-hand side of $\sigma$ has existential variables. Then we would like to add subgoals to $Q$, that is to perform on $Q$ the chase step $Q \Rightarrow^{\sigma} Q'$, exactly in those cases where each consistent assignment to all body variables of $Q$, w.r.t. any (arbitrary) database $D$ that satisfies the input dependencies, can be extended to {\em one and only one} consistent assignment to all body variables of $Q'$ w.r.t. $D$. Otherwise $Q$ would not be equivalent to $Q'$ in presence of $\sigma$ and under the chosen query-evaluation semantics. It turns out that the characterization we are seeking is, in general, {\em query dependent}. (See Examples~\ref{unregularized-counterex-two} and~\ref{query-dep-ex}.)

We now formalize this intuition of prohibiting, in chase, ``incorrect'' multiplicity of the answer to the given query in presence of the given dependencies, under bag or bag-set semantics. Consider a CQ query $Q(\bar{A}) \ :- \ \zeta(\bar{A},\bar{B})$, 
 and a regularized tgd $\sigma: \phi(\bar{X},\bar{Y}) \rightarrow \exists \bar{Z} \ \psi(\bar{X},\bar{Z})$ that has at least one existential variable, that is $\bar{Z}$ is not empty. Suppose that chase of $Q$ with $\sigma$ is applicable, using a homomorphism $h$ from $\phi$ to $\zeta$. We come up with a substitution $\theta$ of all existential variables $\bar{Z}$ in the right-hand side $\psi$ of the tgd $\sigma$, such that $\theta$ replaces each variable in $\bar{Z}$ by a fresh variable that is not used in any capacity (i.e., neither universally nor existentially quantified) in $\sigma$ or in $\zeta$. (Observe that $\theta$ always exists.) 
 We use $h$ and $\theta$ to define for $Q$ and $\sigma$ an {\em associated test query} $Q^{\sigma,h,\theta}$: 
 

\begin{equation}
\label{theta-eq}
Q^{\sigma,h,\theta}(\bar{A}) \ :- \ \zeta(\bar{A},\bar{B}) \wedge \psi(h(\bar{X}),\bar{Z}) \wedge \psi(h(\bar{X}),\theta(\bar{Z})) \ . 
\end{equation}

Observe that for any pair $(\theta_1, \theta_2)$ of substitutions that satisfy the conditions on $\theta$ above, $Q^{\sigma,h,\theta_1}$ and $Q^{\sigma,h,\theta_2}$ are isomorphic. Hence $Q^{\sigma,h,\theta}$ is unique up to isomorphism w.r.t. $\theta$, and we choose one arbitrary $\theta$ for $Q^{\sigma,h,\theta}$ 
in the remainder of the paper.  

We now treat the case where 
$\sigma$ has no existential variables. In this case $\theta$ is trivially empty, $\theta = \emptyset$, and we define the associated test query $Q^{\sigma,h,\emptyset}$ for $Q$, $\sigma$, and $h$ as above as: 

\begin{equation}
\label{theta-noex-eq}
Q^{\sigma,h,\emptyset}(\bar{A}) :- \zeta(\bar{A},\bar{B}) \wedge \psi(h(\bar{X}),\bar{Z}) \ . 
\end{equation}

That is, $Q^{\sigma,h,\emptyset}$ is the result of applying to the query $Q$ a chase step using $\sigma$, as defined in Section~\ref{chase-prelims} in this paper. 

We stress again that Equation~\ref{theta-noex-eq} is defined only for those cases where $\sigma$ has no existential variables. However, Equation~\ref{theta-noex-eq} can be obtained from Equation~\ref{theta-eq} by setting $\theta = \emptyset$ and by removing duplicate subgoals from the body of the query in Equation~\ref{theta-eq}. Therefore, in what follows we adopt Equation~\ref{theta-eq} as the definition of the associated test query for $Q$ and $\sigma$ regardless of whether $\sigma$ has existential variables. 

\begin{definition}{Associated test query}
\label{asso-test-def}
Given a CQ query $Q$ and a regularized tgd $\sigma$ such that chase using $\sigma$ is applicable to $Q$ using homomorphism $h$,  the {\em associated test query for} $Q$, $\sigma$, {\em and} $h$ is as shown in Equation~\ref{theta-eq}. 
\end{definition}

We now define assignment-fixing tgds, which enable sound chase steps under each of bag and bag-set semantics, under an extra condition under bag semantics that all the subgoals being added in the chase step correspond to set-valued relations. We first ensure correctness of the definition of assignment-fixing tgds, by making a straightforward observation. 

\begin{proposition}
\label{chase-assoc-term-prop}
Given a CQ query $Q$ and a finite set $\Sigma$ of tgds and egds, and for a regularized tgd\footnote{Recall that we assume throughout the paper that $\Sigma$ is the regularized version of any given set of tgds and egds.}  $\sigma \in \Sigma$ such that chase using $\sigma$ applies to $Q$ with a homomorphism $h$. Then the terminal chase result $(Q^{\sigma,h,\theta})_{\Sigma,S}$ exists whenever $(Q)_{\Sigma,S}$ exists. 
\end{proposition}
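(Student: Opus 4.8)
The plan is to prove the implication directly: assuming the set-chase of $Q$ terminates (so $(Q)_{\Sigma,S}$ exists), I would exhibit a terminating set-chase sequence for $Q^{\sigma,h,\theta}$. First I would record the structure of the test query from Equation~\ref{theta-eq}: its body is $\zeta(\bar{A},\bar{B}) \wedge \psi(h(\bar{X}),\bar{Z}) \wedge \psi(h(\bar{X}),\theta(\bar{Z}))$, i.e.\ the body $\zeta$ of $Q$ together with two instantiations of the right-hand side of $\sigma$ at the same frontier $h(\bar{X})$. Since $h$ maps $\phi$ into $\zeta$, every variable and constant of $h(\bar{X})$ already occurs in $\zeta$, so the two added copies share only frontier symbols and are otherwise variable-disjoint (their existential parts $\bar{Z}$ and $\theta(\bar{Z})$ are distinct and fresh). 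I would also note that $\zeta(\bar{A},\bar{B}) \wedge \psi(h(\bar{X}),\bar{Z})$ is exactly the body produced by the single chase step $Q \Rightarrow^{\sigma} Q'$ using $\sigma$ and $h$; thus $Q^{\sigma,h,\theta}$ is $Q'$ with one additional frontier-sharing copy of $\psi$ adjoined.

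Next I would stratify the chase of $Q^{\sigma,h,\theta}$. Because chase steps only add atoms (for tgds) or merge terms (for egds), I would first run, on the $\zeta$-part, the same terminating sequence that witnesses the existence of $(Q)_{\Sigma,S}$; such steps remain applicable inside the larger body, and they reach a finite body whose canonical database models $\Sigma$, so in particular $\sigma$ is already satisfied at the match $h(\bar{X})$. The two adjoined copies $\psi(h(\bar{X}),\bar{Z})$ and $\psi(h(\bar{X}),\theta(\bar{Z}))$ then attach to this structure only through $h(\bar{X})$, and each behaves exactly like the right-hand-side instantiation introduced by one $\sigma$-step; consequently the chase of the descendants of each copy mirrors a sub-chase already performed (and terminated) inside the chase of $Q'$. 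Since $Q'$ appears in a chase sequence of $Q$, its chase terminates, and so does that of each copy.

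The main obstacle, and the one point I would argue carefully, is ruling out non-termination caused by \emph{cross-copy} firings: a tgd whose left-hand side matches atoms drawn partly from the descendants of one copy and partly from those of the other (glued through the frontier). I would show these are harmless because the two copies are isomorphic over the shared frontier $h(\bar{X}) \subseteq \zeta$, so any cross-copy match projects to a match lying within a single copy; hence it cannot trigger a cascade longer than one already generated by the terminating chase of $Q'$, and no new source of divergence is introduced. The cleanest way to close the argument is to invoke the termination criterion underlying the paper's standing assumption: terminating set-chase on the inputs corresponds to weak acyclicity of (the regularized) $\Sigma$~\cite{FaginKMP05,DeutschPods08}, under which every chase sequence starting from a finite query over ${\cal D}$ has bounded length. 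As $Q^{\sigma,h,\theta}$ is a finite CQ query over ${\cal D}$ with only finitely many atoms (at most $|Q|+2|\psi|$), its set-chase terminates, so $(Q^{\sigma,h,\theta})_{\Sigma,S}$ exists.
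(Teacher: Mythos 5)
Your overall strategy (view $Q^{\sigma,h,\theta}$ as $Q'$ plus a second frontier-sharing copy of $\psi$, and fold cross-copy activity back onto a single copy via the isomorphism over $h(\bar{X})$) is close in spirit to the paper's argument, which works by contradiction: it takes a hypothetical infinite chase of $Q^{\sigma,h,\theta}$ and simulates it on $Q'$ by composing each firing homomorphism with the map that is the identity on one copy and $\theta^{-1}$ on the other, concluding that $(Q)_{\Sigma,S}$ could not exist. However, your proof has a genuine gap at exactly the point you flag as the "main obstacle." The claim that a cross-copy firing "cannot trigger a cascade longer than one already generated by the terminating chase of $Q'$" is precisely what must be proved, and it is not immediate: a cross-copy firing introduces \emph{fresh} existential variables distinct from those produced by the corresponding within-copy firing, its descendants can again combine with atoms from either copy, and so the chase of $Q^{\sigma,h,\theta}$ genuinely produces more atoms than the chase of $Q'$. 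The question is whether it can produce infinitely more. The paper closes this by observing that the folding homomorphism cannot collapse infinitely many variables of the chase of $Q^{\sigma,h,\theta}$ into finitely many variables on the $Q'$ side, because embedded dependencies cannot express "generate a fresh value only if certain left-hand-side variables are distinct"; your sketch asserts the conclusion without an argument of this kind.

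Worse, the move you propose to "close the argument cleanly" is invalid. Termination of set-chase on the inputs does \emph{not} correspond to weak acyclicity of $\Sigma$: weak acyclicity is a \emph{sufficient} syntactic condition for termination of all chase sequences (as the paper itself states, citing~\cite{FaginKMP05}, and as its appendix emphasizes, chase termination is in general undecidable), and it is certainly not implied by the termination of the chase of the single query $Q$. Indeed, if $\Sigma$ were weakly acyclic, the proposition would be vacuous, since every chase — including that of $Q^{\sigma,h,\theta}$ — would terminate outright; the proposition is stated in relativized form ("$(Q^{\sigma,h,\theta})_{\Sigma,S}$ exists whenever $(Q)_{\Sigma,S}$ exists") precisely to cover sets $\Sigma$ that are not weakly acyclic but happen to admit a terminating chase of this particular $Q$. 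So the weak-acyclicity appeal cannot substitute for the missing cross-copy termination argument, and without it your proof does not go through.
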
 

\begin{proof}{(Sketch.)} 
Trivial for the case where $\sigma$ has no existential variables. For the remaining case, the proof is by contradiction. Suppose that $(Q^{\sigma,h,\theta})_{\Sigma,S}$ does not exist, that is, the body of $(Q^{\sigma,h,\theta})_{\Sigma,S}$ has an infinite number of relational subgoals, using an infinite number of variable names. We then show that the body of $(Q)_{\Sigma,S}$ also has an infinite number of relational subgoals (using an infinite number of variable names), and thus arrive at the desired contradiction. The procedure is to apply to $Q$ all the chase steps that are applicable to $Q^{\sigma,h,\theta}$. Specifically, for each chase step $S$ that applies on $Q^{\sigma,h,\theta}$ using a homomorphism $\mu$, we apply the same chase step to {\em the result $Q'$ of chase step on $Q$ using $\sigma$ and the $h$ of} $Q^{\sigma,h,\theta}$. In each $S$ we use the homomorphism that is a composition of $\mu$ with a homomorphism that results from putting together the identity mapping (on some of ths subgoals) and $\theta^{-1}$, for the $\theta$ used in defining $Q^{\sigma,h,\theta}$. (By definition, $\theta$ is injective and thus $\theta^{-1}$ exists.) 

Observe that this ``simulation'' on $Q'$ of the infinite chase on  $Q^{\sigma,h,\theta}$ cannot collapse the infinite number of variables in $(Q^{\sigma,h,\theta})_{\Sigma,S}$ into a finite number of variables (and thus into a {\em finite} number of subgoals) in the ``simulation result''. The reason is, the language of embedded dependencies cannot specify the instruction ``generate a new variable name, using the right-hand side of the tgd in question, {\em only} if some variable names are not the same in the left-hand side of the tgd in question''. Q.E.D. 
\end{proof}

We are finally ready to define assignment-fixing gds. 

\begin{definition}{Assignment-fixing tgd}
\label{assgn-fix-def}
Given a CQ query $Q$ and a finite set $\Sigma$ of tgds and egds such that $(Q)_{\Sigma,S}$ exists, 
let $\sigma \in \Sigma$ be a regularized tgd with existential variables $Z_1,\ldots,Z_k$, $k \geq 0$, 
such that chase of $Q$ with $\sigma$ is applicable, with associated test query $Q^{\sigma,h,\theta}$. 
Then $\sigma$ is {\em an assignment-fixing tgd w.r.t. $Q$ and} $h$ if $(Q^{\sigma,h,\theta})_{\Sigma,S}$ has at most one of $Z_i$ and $\theta(Z_i)$ for each $i \in \{ 1,\ldots,k \}$.  
Further, $\sigma$ is an {\em assignment-fixing tgd w.r.t.} $Q$ if $\sigma$ is an assignment-fixing tgd w.r.t. $Q$ and some homomorphism $h$. 
\end{definition}

\begin{proposition}
In the setting of Definition~\ref{assgn-fix-def}, whenever $\sigma$ is a full tgd (i.e., tgd without existential variables), then $\sigma$ is an assignment-fixing tgd w.r.t. all CQ queries $Q$ such that chase using $\sigma$ is applicable to $Q$ and such that $(Q)_{\Sigma}$ exists. 
\end{proposition}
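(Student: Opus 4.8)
The plan is to exploit the fact that the defining condition in Definition~\ref{assgn-fix-def} for an assignment-fixing tgd is a universal quantification over the index set $\{1,\ldots,k\}$ of the existential variables $Z_1,\ldots,Z_k$ of $\sigma$, and that this condition collapses to a vacuous one once $\sigma$ has no existential variables. Thus the whole argument amounts to unwinding the definitions in the case $k=0$, together with a single appeal to the termination transfer already established.

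First I would note that since $\sigma$ is a full tgd, it has no existential variables, i.e.\ $\bar{Z}$ is empty and $k=0$. By hypothesis, chase using $\sigma$ is applicable to $Q$, so there is a homomorphism $h$ from the left-hand side of $\sigma$ to the body of $Q$ witnessing applicability; fix any such $h$. With $k=0$ the substitution $\theta$ is trivially empty, and by the convention adopted just before Definition~\ref{asso-test-def} the associated test query $Q^{\sigma,h,\theta}$ coincides with $Q^{\sigma,h,\emptyset}$, that is, with the single-chase-step result of applying $\sigma$ to $Q$ as in Section~\ref{chase-prelims}.

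Next I would confirm that the object referenced in Definition~\ref{assgn-fix-def}, namely the terminal chase result $(Q^{\sigma,h,\theta})_{\Sigma,S}$, is well defined, since otherwise the assignment-fixing condition would be meaningless. Because the hypothesis guarantees that $(Q)_{\Sigma,S}$ exists, Proposition~\ref{chase-assoc-term-prop} yields that $(Q^{\sigma,h,\theta})_{\Sigma,S}$ exists as well. Finally I would invoke the condition itself: for $\sigma$ to be assignment-fixing w.r.t.\ $Q$ and $h$, Definition~\ref{assgn-fix-def} requires that $(Q^{\sigma,h,\theta})_{\Sigma,S}$ contain at most one of $Z_i$ and $\theta(Z_i)$ for each $i\in\{1,\ldots,k\}$. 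Since $k=0$, this is a statement over the empty index set and is therefore vacuously satisfied. Hence $\sigma$ is assignment-fixing w.r.t.\ $Q$ and $h$; and as $h$ ranges over the (nonempty, by applicability) set of witnessing homomorphisms, $\sigma$ is in particular assignment-fixing w.r.t.\ $Q$ in the sense of the second clause of Definition~\ref{assgn-fix-def}.

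The only step requiring any care — and hence the ``main obstacle,'' such as it is — is the well-definedness of $(Q^{\sigma,h,\theta})_{\Sigma,S}$, which is exactly what Proposition~\ref{chase-assoc-term-prop} supplies; everything else is the vacuous-truth observation that the quantifier in Definition~\ref{assgn-fix-def} ranges over an empty index set when $\sigma$ has no existential variables.
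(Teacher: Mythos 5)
Your proof is correct and takes essentially the same route as the paper, which states this proposition without an explicit proof precisely because it is immediate from Definition~\ref{assgn-fix-def}: when $\sigma$ has no existential variables the defining condition quantifies over an empty index set and holds vacuously. Your additional care in invoking Proposition~\ref{chase-assoc-term-prop} to ensure that $(Q^{\sigma,h,\theta})_{\Sigma,S}$ is well defined is exactly the right supporting step and is consistent with how the paper sets up the definition.
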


Consider two illustrations of the determination 
whether a tgd with existential variables is assignment fixing w.r.t. a given CQ query. Example~\ref{unregularized-counterex-one} is a positive example, in that it establishes a tgd as assignment fixing,  whereas Example~\ref{unregularized-counterex-two} is a negative example. 

\begin{example}
\label{unregularized-counterex-one}
On database schema ${\cal D} = \{ P,$ $R, S \}$, consider a regularized set of embedded dependencies $\Sigma = \{ \sigma_1, \sigma_2, \sigma_3 \}$, where $\sigma_1$ is a tgd,   
$$\sigma_1: \ p(X,Y) \rightarrow \exists Z \ \exists W \ r(X,Z) \wedge s(Z,W) ,$$
egd $\sigma_2$ 
establishes the first attribute of $R$ as its superkey, and, finally, egd $\sigma_3$ is as follows: 
\begin{tabbing}
He tab lallalalalalalalkakalallala \= lalalala \kill
$\sigma_3: \ r(X,Y) \wedge s(Y,T) \wedge r(X,Z) \wedge s(Z,W) \rightarrow T = W . $
\end{tabbing}

Let CQ query $Q$ be $Q(X) \ :- \ p(X,Y)$. Chase using $\sigma_1$ is applicable to $Q$, using homomorphism $h = \{  X \rightarrow X, Y \rightarrow Y \}$. For the query 
\begin{tabbing}
Hejmnnnnkkkkk\=jjjjjjjjjj\kill
$Q^{\sigma_1,h,\theta}(X) \ :- \ p(X,Y), r(X,Z), s(Z,W),$ \\
\> $r(X,Z_1), s(Z_1,W_1) \ .$ 
\end{tabbing}
constructed using $\theta = \{ Z \rightarrow Z_1, W \rightarrow W_1 \}$, we have
\begin{tabbing}
Hehehkkkkknnnnnnnkkkkk\=jjjjjjjjjj\kill
$(Q^{\sigma_1,h,\theta})_{\Sigma,S}(X) \ :- \ p(X,Y), r(X,Z), s(Z,W) \ .$ 
\end{tabbing}

Thus, $\sigma_1$ is an assignment-fixing tgd w.r.t. $Q$, because the body of $(Q^{\sigma_1,h,\theta})_{\Sigma,S}(X)$ has only one of $Z$ and $Z_1$ and only one of $W$ and $W_1$. 
\nop{
\mbox{}

\mbox{}

\mbox{}

Using substitution $\theta = \{ X \rightarrow X, Z \rightarrow Z_1, W \rightarrow W_1 \}$, we construct $\sigma^+_1$ for $\sigma_1$, as follows: 
\begin{tabbing}
He tab lallalalalalalkakalalalalalala \= lalalala \kill
$\sigma^+_1: \ r(X,Z) \wedge s(Z,W) \wedge r(X,Z_1) \wedge s(Z_1,W_1) \rightarrow $ \\
\> $Z = Z_1 \wedge W = W_1 . $
\end{tabbing}

Consider tableau $T$ that we construct to determine whether $\Sigma \models \sigma^+_1$: 

\mbox{}

\begin{tabular}{ccccc}
$T:$ & $B$ & $C$ & $D$ & $E$ \\
\hline
$r:$ & $X$ & $Z$ & & \\
$s:$ & & & $Z$ & $W$ \\
$r:$ & $X$ & $Z_1$ & & \\
$s:$ & & & $Z_1$ & $W_1$ \\
\hline \\
\end{tabular}

We chase $T$ using egd $\sigma_2$, to obtain tableau $T_1$:

\mbox{}

\begin{tabular}{ccccc}
$T_1:$ & $B$ & $C$ & $D$ & $E$ \\
\hline
$r:$ & $X$ & $Z$ & & \\
$s:$ & & & $Z$ & $W$ \\
$s:$ & & & $Z$ & $W_1$ \\
\hline \\
\end{tabular}

Then we chase $T_1$ using $\sigma_3$, to obtain the final chase result $T_2$:

\mbox{}

\begin{tabular}{ccccc}
$T_2:$ & $B$ & $C$ & $D$ & $E$ \\
\hline
$r:$ & $X$ & $Z$ & & \\
$s:$ & & & $Z$ & $W$ \\
\hline \\
\end{tabular}

$T_2$ is the final result of the chase of $T$ with the egds in $\Sigma$. Thus, implication $\Sigma \models \sigma_1^+$ does hold. 
} 
\end{example}

\nop{
{\em Note 1 on Example~\ref{unregularized-counterex-one}.} Removal of either of $\sigma_2$ or $\sigma_3$ from the given set $\Sigma$ would create (in chase) a chase-result tableau that is a counterexample to $\Sigma' \models \sigma_1^+$. (Here, $\Sigma'$ is the result of removing the egd in question from $\Sigma$.) 

{\em Note 2 on Example~\ref{unregularized-counterex-one}.} The first attribute of $S$ is {\em not} its superkey in the example, that is the functional dependency $s(X,Y) \wedge s(X,Z) \rightarrow Y = Z$ does {\em not} hold in presence of the dependencies $\{  \sigma_1, \sigma_2, \sigma_3 \}$ of Example~\ref{unregularized-counterex-one}. 
}

\begin{example}
\label{unregularized-counterex-two}
Using the database schema 
and dependency $\sigma_2$ of Example~\ref{unregularized-counterex-one}, we replace $\sigma_1$ of that example with a regularized tgd $\sigma_4$: 
$$\sigma_4: \ p(X,Y) \rightarrow \exists Z, W, T \ r(X,Z) \wedge s(Z,W) \wedge s(X,T) .$$

We also replace $\sigma_3$ of the example with egd $\sigma_5$, and add an egd $\sigma_6$: 
\begin{tabbing}
He tab lallalalalalalalkakalallala \= lalalala \kill
$\sigma_5: \ r(X,Z) \wedge s(Z,W) \wedge s(X,T) \rightarrow W = T . $ \\
$\sigma_6: \ p(X,Y) \wedge r(A,X) \wedge s(X,T) \rightarrow X = T . $ \\
\end{tabbing}

We denote by $\Sigma'$ the set of dependencies $\{ \sigma_2, \sigma_4, \sigma_5, \sigma_6 \}$. 

Consider again query $Q(X) \ :- \ p(X,Y)$. Chase using $\sigma_4$ is applicable to $Q$, using the identity homomorphism $h$. For the query 
\begin{tabbing}
Hejmnnnnkkkkk\=jjjjjjjjjj\kill
$Q^{\sigma_4,h,\theta}(X) \ :- \ p(X,Y), r(X,Z), s(Z,W), s(X,T)$ \\
\> $r(X,Z_1), s(Z_1,W_1), s(X,T_1) \ .$ 
\end{tabbing}
constructed using $\theta = \{ Z \rightarrow Z_1, W \rightarrow W_1, T \rightarrow T_1 \}$, we have
\begin{tabbing}
Hejmnnnnkkkkk\=jjjjjjjjjj\kill
$(Q^{\sigma_4,h,\theta})_{\Sigma',S}(X) \ :- \ p(X,Y), r(X,Z),$ \\
\> $s(Z,W), s(X,W), s(Z,W_1), s(X,W_1) \ .$ 
\end{tabbing}

Thus, $\sigma_4$ is {\em not} an assignment-fixing tgd w.r.t. $Q$ by definition, because the body of $(Q^{\sigma_4,h,\theta})_{\Sigma',S}(X)$ has both of $W$ and $W_1$. 
\nop{
We use substitution $\theta = \{ X \rightarrow X, Z \rightarrow Z_1, W \rightarrow W_1, T \rightarrow T_1 \}$ to construct $\sigma^+_4$ for $\sigma_4$: 
\begin{tabbing}
He ta\= b lallalalalallalalala \= lalalala \kill
$\sigma^+_4: \ r(X,Z) \wedge s(Z,W)  \wedge s(X,T) \wedge$ \\ 
\> $r(X,Z_1) \wedge s(Z_1,W_1)  \wedge s(X,T_1) \rightarrow $ \\
\> \> $Z = Z_1 \wedge W = W_1 \wedge T = T_1 . $
\end{tabbing}


Consider tableau $T'$ that we construct to determine whether $\Sigma' \models \sigma^+_4$: 

\mbox{}

\begin{tabular}{ccccc}
$T':$ & $B$ & $C$ & $D$ & $E$ \\
\hline
$r:$ & $X$ & $Z$ & & \\
$s:$ & & & $Z$ & $W$ \\
$s:$ & & & $X$ & $T$ \\
$r:$ & $X$ & $Z_1$ & & \\
$s:$ & & & $Z_1$ & $W_1$ \\
$s:$ & & & $X$ & $T_1$ \\
\hline \\
\end{tabular}

We chase $T'$ using the fd $\sigma_2$ on relation $R$, to obtain tableau $T'_1$:

\mbox{}

\begin{tabular}{ccccc}
$T'_1:$ & $B$ & $C$ & $D$ & $E$ \\
\hline
$r:$ & $X$ & $Z$ & & \\
$s:$ & & & $Z$ & $W$ \\
$s:$ & & & $X$ & $T$ \\
$s:$ & & & $Z$ & $W_1$ \\
$s:$ & & & $X$ & $T_1$ \\
\hline \\
\end{tabular}


We now chase $T'_1$ using $\sigma_5$ 
to obtain $T'_2$:

\mbox{}

\begin{tabular}{ccccc}
$T'_2:$ & $B$ & $C$ & $D$ & $E$ \\
\hline
$r:$ & $X$ & $Z$ & & \\
$s:$ & & & $Z$ & $W$ \\
$s:$ & & & $X$ & $W$ \\
$s:$ & & & $Z$ & $W_1$ \\
$s:$ & & & $X$ & $W_1$ \\
\hline \\
\end{tabular}

$T'_2$ is the final chase result, because no more egds can be applied in the tableau chase. Because we have not managed to equate $W$ with $W_1$ in the chase, we conclude that 
implication $\Sigma' \models \sigma_4^+$ does not hold. 
} 
\end{example}



\subsubsection{Motivation for Regularized Assignment-Fixing Tgds}

One may wonder whether the notions introduced in Section~\ref{regularized-sec} are justified. In this subsection we illustrate that whenever a non-regularized tgds or a tgd that is not assignment fixing is used in chase step $Q \Rightarrow^{\sigma} Q'$, then the chase result $Q'$ may be nonequivalent to $Q$ under bag or bag-set semantics. 

Examples~\ref{nonregul-notapply-ex} through \ref{regul-partapply-ex} establish the need for regularized tgds and for the (traditional) definition of the chase step for tgds, see Section~\ref{chase-prelims} in this paper. Example~\ref{regularized-notassgnfix-ex} shows an unsound chase step using a regularized tgd that is not assignment fixing w.r.t. the query. Finally, Example~\ref{regul-allapply-ex} demonstrates a sound chase step using a regularized assignment-fixing tgd, and illustrates how the notion of assignment-fixing tgds is strictly more general than that of key-based dependencies (see Definition~\ref{old-key-based-tgds-def}). 

\begin{example}
\label{nonregul-notapply-ex} 
Consider Example~\ref{motivating-example}, where tgd $\sigma_4$ is not key based in presence of the set $\Sigma$ of embedded dependencies in the example, by the definition of \cite{DeutschDiss}, see Definition~\ref{old-key-based-tgds-def}. For the reader convenience, we provide here the tgd $\sigma_4$ and query $Q_4$ of Example~\ref{motivating-example}. 
\begin{tabbing}
$\sigma_4: p(X,Y) \rightarrow u(X,Z) \wedge t(X,Y,W)$  \\
$Q_4(X) \ :- \ p(X,Y) .$
\end{tabbing}

Now consider the result of removing from $\Sigma$ the tgd $\sigma_2$ of  Example~\ref{motivating-example}; we denote by $\Sigma'$ the set $\Sigma' = \Sigma - \{ \sigma_2 \}$. In presence of $\Sigma'$, the tgd $\sigma_4$ is still not key based. However, if we refrain from applying $\sigma_4$ to $Q_4$ in chase under bag or bag-set semantics, then we will miss the rewriting $Q_3$ (of Example~\ref{motivating-example}) of $Q_4$. Indeed, by the results of this paper it holds that $Q_3 \equiv_{\Sigma',B} Q_4$ and that $Q_3 \equiv_{\Sigma',BS} Q_4$. 
\end{example}

Observe that tgd $\sigma_4$ in Example~\ref{nonregul-notapply-ex} is not regularized, see Definition~\ref{regularized-def}. We miss an equivalent rewriting of the input query $Q_4$ by refraining from applying the tgd. Consider now Example~\ref{nonregul-allapply-ex}, where we do apply the nonregularized tgd $\sigma_4$ in its entirety to the query $Q_4$.  However, instead of the query $Q_3$, which is equivalent to $Q_4$ in presence of $\Sigma'$ under each of bag and bag-set semantics, we obtain a formulation of $Q_4$ that is not equivalent to $Q_4$ (in presence of $\Sigma'$) under either semantics. 

\begin{example}
\label{nonregul-allapply-ex}
Consider the query $Q_4$ and set $\Sigma'$ of dependencies in Example~\ref{nonregul-notapply-ex}. We now attempt to find the rewriting $Q_3$ (of Example~\ref{motivating-example}) that we failed to obtain in Example~\ref{nonregul-notapply-ex}. 
\begin{tabbing}
$Q_3(X) \ :- \ p(X,Y), t(X,Y,W), s(X,Z).$ 
\end{tabbing}

To find the rewriting $Q_3$, specifically to obtain its $T$-subgoal,  we apply the nonregularized dependency $\sigma_4$ 
to the query $Q_4$. We denote by $Q'_4$ the result of the application: 
\begin{tabbing}
$Q'_4(X) \ :- \ p(X,Y), t(X,Y,W), u(X,Z).$ 
\end{tabbing}

Recall from Example~\ref{motivating-example} that in presence of $\Sigma'$, relation $U$ does not have superkeys other than the set of all its attributes. 
Using this information, we construct a database $D$ that is a counterexample to equivalence of $Q_4$ and $Q'_4$ in presence of $\Sigma'$ and under bag-set semantics. (Thus, by definition, $D$ is also a counterexample to the equivalence of the queries in presence of $\Sigma'$ and under bag semantics as well). 

Let $D = \{ P(1,2), T(1,2,3), U(1,4), U(1,5) \}$. (Observe that $D$ is a set-valued database and that $D \models \Sigma'$.) On database $D$, $Q_4(D,BS) = \{ \hspace{-0.25cm} \{ \ (1) \ \} \hspace{-0.25cm} \}$, whereas  $Q'_4(D,BS) = \{ \hspace{-0.25cm} \{ \ (1), (1) \ \} \hspace{-0.25cm} \}$. 
\end{example}

{\em Note 1 on Example~\ref{nonregul-allapply-ex}.} The problem with applying $\sigma_4$ to query $Q_4$ in the example is that $\sigma_4$ is not regularized. The regularized set for $\sigma_4$ is $\{ \sigma'_4, \sigma''_4 \}$, where 
\begin{tabbing}
$\sigma'_4: p(X,Y) \rightarrow t(X,Y,W)$  \\
$\sigma''_4: p(X,Y) \rightarrow u(X,Z)$ 
\end{tabbing}
Observe that tgd $\sigma'_4$ is assignment fixing in presence of (the egds in) $\Sigma'$ (of Example~\ref{nonregul-notapply-ex}), whereas $\sigma''_4$ is not. Thus, $\sigma''_4$ cannot be applied in sound chase of $Q_4$ using $\Sigma'$ under bag or bag-set semantics, by our main results of this section. Using the regularized version of $\Sigma'$ (this version also replaces $\sigma_1$ of Example~\ref{motivating-example} with its regularized set), we can perform sound chase $Q_4$ to obtain the above query $Q_3$, which is equivalent to $Q_4$ in presence of $\Sigma'$ under each of bag and bag-set semantics (with the usual restriction of set-valued relations in the case of bag semantics). 

We now examine the modified definition of chase, see Section 2.4 of \cite{ChirkovaG09}. Indeed, using that definition we obtain correctly the terminal chase results of the query $Q_4$ in Example~\ref{motivating-example}, even though not all input tgds are regularized, see Examples 4.1 and 5.1 of \cite{ChirkovaG09}. However, as we see in the next example, using the modified definition of chase does not result in sound chase (under bag or bag-set semantics) for all problem inputs. 

\begin{example}
\label{regul-partapply-ex}
Consider query $Q$ and set $\Sigma = \{ \nu_1,$ $\nu_2 \}$ of dependencies, where
\begin{tabbing}
$Q(X) \ :- \ p(X,Y), s(X,Z)$ \\
$\nu_1: p(X,Y) \rightarrow \exists Z \ s(X,Z) \wedge t(Z,Y)$  \\
$\nu_2: t(X,Y) \wedge t(Z,Y) \rightarrow X = Z$ 
\end{tabbing}

Observe that $\nu_1$ is a regularized tgd and is also assignment fixing, w.r.t. $Q$, by our definitions in this section. 
 We now apply modified chase as defined in Section 2.4 of \cite{ChirkovaG09} and obtain query $Q'$: 
\begin{tabbing}
$Q'(X) \ :- \ p(X,Y), s(X,Z), t(Z,Y) .$ 
\end{tabbing}

We show nonequivalence of $Q$ to $Q'$ in presence of $\Sigma$ under each of bag and bag-set semantics, by constructing a database $D$ that is a counterexample to either equivalence. 
Indeed, let $D = \{ P(1,2), S(1,1), S(1,3), T(3,2) \}$. (Observe that $D$ is a set-valued database and that $D \models \Sigma$.) On database $D$, $Q(D,BS) = \{ \hspace{-0.25cm} \{ \ (1), (1) \ \} \hspace{-0.25cm} \}$, whereas  $Q'(D,BS) = \{ \hspace{-0.25cm} \{ \ (1) \ \} \hspace{-0.25cm} \}$. 
\end{example}

{\em Note on Example~\ref{regul-partapply-ex}.} The application of $\nu_1$ to $Q$ in the example is sound by the (incorrect) definition of key-based chase steps in \cite{ChirkovaG09}. 
Still, the application of the regularized and assignment-fixing tgd $\nu_1$ using the {\em modified} definition of the chase step does result in unsound chase as shown in Example~\ref{regul-partapply-ex}. 

We now show an example of using a regularized but {\em not} assignment fixing tgd in a (traditional) chase step, see Section~\ref{chase-prelims} in this paper for the definition. 

\begin{example}
\label{regularized-notassgnfix-ex}
Recall the database schema ${\cal D} = \{ P, R, S \}$ 
and dependencies $\Sigma' = \{ \sigma_2,$ $\sigma_4, \sigma_5 \}$ of Example~\ref{unregularized-counterex-two}. 
\begin{tabbing} 
$\sigma_2: r(X,Y) \wedge r(X,Z) \rightarrow Y = Z \ .$ \\
$\sigma_4: p(X,Y) \rightarrow \exists Z, W, T \ r(X,Z) \wedge s(Z,W) \wedge s(X,T) \ .$ \\
$\sigma_5: r(X,Z) \wedge s(Z,W) \wedge s(X,T) \rightarrow W = T \ . $
\end{tabbing}

Recall that $\sigma_4$ is regularized but not assignment fixing w.r.t. query $Q(X) \ :- \ p(X,Y)$; see Example~\ref{unregularized-counterex-two} for the details. We apply the chase step using tgd $\sigma_4$ to $Q$, to obtain the result $Q''$: 
\begin{tabbing} 
$Q(X) \ :- \ p(X,Y) \ . $ \\
$Q''(X) \ :- \ p(X,Y),r(X,Z), s(Z,W), s(X,T) \ . $
\end{tabbing}

To construct a counterexample to equivalence of $Q$ and $Q''$ in presence of $\Sigma'$, under each of bag and bag-set semantics, we use the query $(Q^{\sigma_4,h,\theta})_{\Sigma',S}$ of Example~\ref{unregularized-counterex-two}. Specifically, we use as a counterexample the canonical database, call it $D$, of $(Q^{\sigma_4,h,\theta})_{\Sigma',S}$; we have that $D$ is set valued and that $D \models \Sigma'$ by definition of the query $(Q^{\sigma_4,h,\theta})_{\Sigma',S}$. 

Consider the database $D = \{ P(1,2), R(1,3), S(1,4),$ $S(1,5), S(3,4), S(3,5)  \}$. 
(Recall that the canonical database of a CQ query is isomorphic up to choice of constants.) We have that $Q(D,BS) = \{ \hspace{-0.25cm} \{ \ (1) \ \} \hspace{-0.25cm} \}$, whereas  $Q'(D,BS) = \{ \hspace{-0.25cm} \{ \ (1), (1), (1), (1) \ \} \hspace{-0.25cm} \}$.  
\nop{
Recall that $T'_2$ is a final chase result of the associated assignment-fixing egd $\sigma_4^+$ of $\sigma_4$. 

\mbox{}


\nop{
We use substitution $\theta = \{ Z \rightarrow Z_1, W \rightarrow W_1, T \rightarrow T_1 \}$ to construct $\sigma^+_4$ for $\sigma_4$: 
\begin{tabbing}
He ta\= b lallalalalallalalala \= lalalala \kill
$\sigma^+_4: \ r(X,Z) \wedge s(Z,W)  \wedge s(X,T) \wedge$ \\ 
\> $r(X,Z_1) \wedge s(Z_1,W_1)  \wedge s(X,T_1) \rightarrow $ \\
\> \> $Z = Z_1 \wedge W = W_1 \wedge T = T_1 . $
\end{tabbing}


Consider tableau $T'$ that we construct to determine whether $\Sigma' \models \sigma^+_4$: 

\mbox{}

\begin{tabular}{ccccc}
$T':$ & $B$ & $C$ & $D$ & $E$ \\
\hline
$r:$ & $X$ & $Z$ & & \\
$s:$ & & & $Z$ & $W$ \\
$s:$ & & & $X$ & $T$ \\
$r:$ & $X$ & $Z_1$ & & \\
$s:$ & & & $Z_1$ & $W_1$ \\
$s:$ & & & $X$ & $T_1$ \\
\hline \\
\end{tabular}

We chase $T'$ using the fd $\sigma_2$ on relation $R$, to obtain tableau $T'_1$:

\mbox{}

\begin{tabular}{ccccc}
$T'_1:$ & $B$ & $C$ & $D$ & $E$ \\
\hline
$r:$ & $X$ & $Z$ & & \\
$s:$ & & & $Z$ & $W$ \\
$s:$ & & & $X$ & $T$ \\
$s:$ & & & $Z$ & $W_1$ \\
$s:$ & & & $X$ & $T_1$ \\
\hline \\
\end{tabular}


We now chase $T'_1$ using $\sigma_5$ 
to obtain $T'_2$:

\mbox{}
} 

\begin{tabular}{ccccc}
$T'_2:$ & $B$ & $C$ & $D$ & $E$ \\
\hline
$r:$ & $X$ & $Z$ & & \\
$s:$ & & & $Z$ & $W$ \\
$s:$ & & & $X$ & $W$ \\
$s:$ & & & $Z$ & $W_1$ \\
$s:$ & & & $X$ & $W_1$ \\
\hline \\
\end{tabular}

We construct a set-valued database $D$ by (1) associating a distinct constant with each body variable of query $Q_4$, and by (2) replacing each variable in $T'_2$ with a distinct constant, in a way that is consistent with the assignments in (1). (We make sure that one copy in $T'$ of the right-hand side of $\sigma_4^+$ is the set of subgoals, with all the variables preserved, that the chase step using $\sigma_4$ adds to the body of $Q_4$. See Example~\ref{unregularized-counterex-two} for the tableau $T'$, whose chase using $\Sigma'$ results in the tableau $T'_2$.) 

One such database $D$ is $D = \{ P(1,2), R(1,3), S(3,4), S(1,4), S(3,5), S(1,5) \}$.  $D$ is set valued, and $D \models \Sigma'$ holds. 
} 
\end{example}

By our main results in this section, for the $Q$, $\Sigma$, and $\nu_1$ of Example~\ref{regul-partapply-ex}, the application of $\nu_1$ to $Q$ (using the traditional definition of chase steps using tgds, see Section~\ref{chase-prelims} in this paper)  is sound in presence of $\Sigma$ under each of bag and bag-set semantics (provided that for the case of bag semantics, both $S$ and $T$ are set-valued relations in all instances of $\{ P, S, T \}$). Example~\ref{regul-allapply-ex}  shows the chase step.


\begin{example}
\label{regul-allapply-ex}
Consider the query $Q$ and set $\Sigma = \{ \nu_1,$ $\nu_2 \}$ of dependencies of Example~\ref{regul-partapply-ex}. Recall that $\nu_1$ is a regularized tgd and is also assignment fixing w.r.t. $Q$ in presence of the egds of $\Sigma$. 
We now apply (traditional) chase as defined in Section~\ref{chase-prelims} in this paper, to obtain query $Q''$: 
\begin{tabbing}
$Q''(X) \ :- \ p(X,Y), s(X,Z), s(X,W), t(W,Y) .$ 
\end{tabbing}

The difference from our application of $\nu_1$ in Example~\ref{regul-partapply-ex} is that we now add a new $S$-subgoal in addition to a new $T$-subgoal. By the definition of chase steps using tgds, the second attribute of $S$ must be denoted by different variables in the two $S$-subgoals in query $Q''$. 
\end{example}

{\em Note on Example~\ref{regul-allapply-ex}.} Recall that $\nu_1$ in the example is assignment fixing w.r.t. the query, and thus by our results can be applied in sound chase under bag and bag-set semantics (provided that for the case of bag semantics, both $S$ and $T$ are set-valued relations in all instances of $\{ P, S, T \}$). At the same time, $\nu_1$ is not key-based by the definition of \cite{DeutschDiss}, see Definition~\ref{old-key-based-tgds-def} in this paper. The problem is with the $S$-atom of $\nu_1$, which is not key based  in presence of $\Sigma$ by Definition~\ref{old-key-based-tgds-def}. 

\nop{

\mbox{}

\mbox{}

\mbox{}

\begin{example}
\label{pods09-counterex}
On database schema ${\cal D} = \{ P, S, T \}$, consider tgd  
$$\sigma_1: \ p(X,Y) \rightarrow \exists Z \ s(X,Z) \wedge t(Z,Y)$$
and two egds: egd $\sigma_2$, which enforces that relation $T$ is set valued in all instances of database schema ${\cal D}$, and fd $\sigma_3$, which establishes the second attribute of $T$ as its superkey. We denote by $\Sigma$ the set of dependencies $\sigma_1$ through $\sigma_3$. 

Let $Q$ be a CQ query, as follows:
\begin{tabbing}
$Q(X) \ :- \ p(X,Y), s(X,Z) . $
\end{tabbing}

Consider a possible chase result $Q_1$ for $Q$ under $\Sigma$: 
\begin{tabbing}
$Q_1(X) \ :- \ p(X,Y), s(X,Z), s(X,W), t(W,Y) . $
\end{tabbing}

Under {\em set} semantics for query evaluation, $Q$ and $Q_1$ are equivalent in presence of $\Sigma$. At the same time, we can come up with a set-valued database, $D$, that witnesses {\em nonequivalence} of $Q$ and $Q_1$ in presence of $\Sigma$ under each of bag and bag-set semantics. Indeed, consider  $D = \{ P(1,2), S(1,3), S(1,4), S(1,5), T(3,6), T(4,7) \}$. Observe that $D \models \Sigma$.

At the same time, under {\em bag} or {\em bag-set} semantics $Q_1$ and $Q_2$ are {\em not} equivalent (in the absence of dependencies) when the relation $S$ is not required to be set valued in $\cal D$ and when the first argument of $S$ is not its superkey. 

Under {\em bag} or {\em bag-set} semantics, query $Q_1$ would be a result of sound chase of $Q$ using $\Sigma$ by our results of \cite{ChirkovaG09}. However, the following database $D'$ is a counterexample to the equivalence of $Q$ to $Q_1$ under either semantics: $D' = \{ P(1,2), S(1,3), S(1,4), T(3,2) \}$. (Observe that $D'$ is a set-valued database.) Indeed, $Q(D,B) = Q(D,BS) = \{ \hspace{-0.25cm} \{ \ (1), (1) \ \} \hspace{-0.25cm} \}$, whereas  $Q_1(D,B) = Q_1(D,BS) = \{ \hspace{-0.25cm} \{ \ (1) \ \} \hspace{-0.25cm} \}$. 

The problem here stems from the definition in  \cite{ChirkovaG09} of chase using a tgd. Specifically, the definition allows one not to add a query subgoal in a chase step if that query subgoal is already ``present'' in the query, as exemplified  by the $s$-subgoal in queries $Q$ and $Q_1$.  

\reminder{Need to prove that $Q$ and $Q_2$ are equivalent under $\Sigma' = \Sigma \cup \{ \sigma_4, \sigma_5 \}$, where egd $\sigma_4$ enforces that relation $S$ is set valued in all instances of database schema ${\cal D}$, and fd $\sigma_5$ establishes the first attribute of $S$ as its superkey. Observe that the tgd $\sigma_1$ is key-based, in the sense of \cite{DeutschDiss}, in $\Sigma'$ {\em but not in} $\Sigma$.}
\end{example}

\reminder{Looks like it is stated incorrectly in \cite{ChirkovaG09} that $\Sigma^{max}_{\Sigma,B or BS}$ is query dependent! It is only query dependent when we allow unnormalized dependencies, such as $\sigma_4$ in the examples of \cite{ChirkovaG09}.}
} 

\subsubsection{Assignment-Fixing Chase}
\label{key-based-chase-subsec}

We begin the exposition of the main results of this section by defining {\em assignment-fixing chase steps using tgds}. 
\nop{

, resulting in chase step $Q \Rightarrow^{\sigma}_X Q'$, where $X$ is  $B$ or $BS$. Let ${\bf S} = \{ s(p_{i1}),\ldots,s(p_{ik}) \}$, $k > 0,$ be the set of all new subgoals, with up to $k$ not necessarily distinct predicates $p_{i1},\ldots,p_{ik}$, that this chase step adds to $Q$, to result in $Q'$.  By definition of the chase step for tgds, there is a 1:1 correspondence between ${\bf S}$ and a nonempty subset $\psi ''$ of $\psi$ in $\sigma$, viz. $\psi '' = \{ p_{i1}(\bar{W}_{i1}),\ldots,p_{ik}(\bar{W}_{ik}) \}$, with $k = |{\bf S}| > 0$. 
} 
\vspace{-0.1cm}
\begin{definition}{Assignment-fixing chase step using tgd}
\label{key-based-tgds-def}
Let 
$\sigma$ 
be a regularized tgd in a finite set $\Sigma$ of embedded dependencies on schema $\cal D$. 
Consider a CQ query $Q$ defined on $\cal D$, such that $(Q)_{\Sigma,S}$ exists and such that $\sigma$ is applicable to $Q$. Then the chase step that applies $\sigma$ to $Q$ is an  {\em assignment-fixing chase step using} $\sigma$ whenever $\sigma$ is an assignment-fixing tgd w.r.t. $Q$. 
\end{definition}
\vspace{-0.1cm}

We now provide necessary and sufficient conditions for soundness of chase steps under bag semantics for query evaluation.

\vspace{-0.1cm}
\begin{theorem}
\label{bag-chase-sound-theorem}
Given a CQ query $Q$ and a set of embedded dependencies\footnote{Recall that we consider only finite  regularized sets of dependencies throughout this paper.} $\Sigma$ on schema $\cal D$. Under bag semantics, 
a chase step $Q \Rightarrow^{\sigma} Q'$ using $\sigma \in \Sigma$ is sound iff 
\begin{enumerate}
\vspace{-0.2cm}
	\item  $Q \Rightarrow^{\sigma} Q'$ is a (tgd) assignment-fixing chase step, and for each subgoal $s(p_{ij})$ that the chase step adds to $Q$, relation $P_{ij}$ is set valued on all databases satisfying $\Sigma$; or
\vspace{-0.2cm}
	\item In  $Q \Rightarrow^{\sigma} Q'$, $\sigma$ is an egd; in this case, duplicates of subgoal $s(p)$ in $Q'$ can be removed only if relation $P$ is set valued in all instances of $\cal D$.
\end{enumerate}
\vspace{-0.6cm}
\end{theorem}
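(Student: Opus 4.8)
The plan is to work directly from the bag-semantics evaluation rule of Section~\ref{bag-bag-set-defs}, reducing the desired equivalence $Q \equiv_{\Sigma,B} Q'$ to a bookkeeping statement about multiplicities of satisfying assignments on an arbitrary bag-valued database $D$ with $D \models \Sigma$. The key structural fact is that the head of $Q'$ equals that of $Q$, and that (in the tgd case) the body of $Q'$ is the body of $Q$ together with the fresh atoms of $\psi(h(\bar X),\bar Z)$. Hence every satisfying assignment $\gamma'$ of $Q'$ restricts to a satisfying assignment $\gamma$ of $Q$ with $\gamma'(\bar X) = \gamma(\bar X)$. Grouping the assignments of $Q'$ by this restriction, I would show that the number of copies of $\gamma(\bar X)$ contributed by $Q'$ equals $\left(\prod_i m_i\right)\cdot \sum_{\gamma' \text{ extends } \gamma}\prod_{\text{new}} m'_{\text{new}}$, where the $m_i$ are the multiplicities for the original subgoals (which depend only on $\gamma$) and the $m'_{\text{new}}$ are those of the newly added subgoals; meanwhile $Q$ contributes exactly $\prod_i m_i$ copies. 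Thus soundness of the step is \emph{equivalent} to the condition that, for every $D \models \Sigma$ and every satisfying $\gamma$, the inner sum equals $1$. Both directions then amount to characterizing when this sum is forced to be $1$.

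For the \textbf{if} direction with a (regularized) tgd, I would establish the two factors separately. First, since $\sigma$ is applicable to $Q$ via $h$ and $D \models \sigma$, the composition of $\gamma$ with $h$ satisfies $\phi$, so $D$ provides \emph{at least one} witness for $\bar Z$; hence each satisfying $\gamma$ has at least one extension. Second, the assignment-fixing property (Definition~\ref{assgn-fix-def}) forces this extension to be \emph{unique}: if some $\gamma$ had two extensions $\gamma_1',\gamma_2'$ on $D \models \Sigma$ differing on a coordinate $Z_i$, then $D$ is a model of the body of the associated test query $Q^{\sigma,h,\theta}$ under the valuation sending $\bar Z\mapsto\gamma_1'(\bar Z)$ and $\theta(\bar Z)\mapsto\gamma_2'(\bar Z)$; because set-semantics egd-chase is sound and records exactly the equalities entailed by $\Sigma$ (Theorem~\ref{chase-theorem}), the collapse of $Z_i$ and $\theta(Z_i)$ in $(Q^{\sigma,h,\theta})_{\Sigma,S}$ would force $\gamma_1'(Z_i)=\gamma_2'(Z_i)$, a contradiction. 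Finally, the hypothesis that every new subgoal refers to a set-valued relation gives $\prod_{\text{new}} m'_{\text{new}} = 1$ on each extension, so the inner sum is exactly $1$ and the bags coincide. The egd case (condition 2) is simpler: since $e \in \Sigma$ and $D \models e$, every satisfying assignment of $Q$ already equates the variables merged by the chase step, yielding a multiplicity-preserving bijection with the assignments of $Q'$ when duplicate subgoals are \emph{retained}; removing a duplicate on relation $P$ replaces the shared factor $m^2$ by $m$ for the relevant $P$-tuple, which preserves the count precisely when $m=1$, i.e., when $P$ is set valued in all instances of $\cal D$.

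For the \textbf{only if} direction I would argue by contraposition, exhibiting a database $D \models \Sigma$ on which the inner sum exceeds $1$. If some newly added subgoal refers to a relation $P$ that is \emph{not} set valued on all models of $\Sigma$, I would take a model of $\Sigma$ in which the witnessing $P$-tuple has multiplicity at least $2$ (exactly as relation $U$ is duplicated in Example~\ref{motivating-example}), making $\prod_{\text{new}} m'_{\text{new}} \ge 2$ and so $Q'(D,B) \neq Q(D,B)$. If instead $\sigma$ is not assignment-fixing w.r.t. $Q$ and $h$, then $(Q^{\sigma,h,\theta})_{\Sigma,S}$ retains both $Z_i$ and $\theta(Z_i)$ for some $i$; its canonical database $D$ satisfies $\Sigma$ (this is where Proposition~\ref{chase-assoc-term-prop}, guaranteeing the terminal result exists, and the definition of a terminating set-chase are used) and exhibits a satisfying assignment with two distinct extensions, again forcing $Q'(D,B)$ to strictly over-count. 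I expect the \emph{main obstacle} to be precisely these counterexample constructions: converting the syntactic, set-chase-based assignment-fixing criterion into an actual bag-valued model of $\Sigma$ that simultaneously realizes two distinct existential witnesses while controlling all other multiplicities, and — in the non-set-valued case — guaranteeing that the inflated database still satisfies $\Sigma$, which is exactly what the egd encoding of set-valuedness from Section~\ref{making-chase-sound-section} is designed to make consistent. Tightly coupling the syntactic condition of Definition~\ref{assgn-fix-def} with the semantic statement ``unique extension on every model of $\Sigma$'' is the technical heart of the argument and must be proved in both directions.
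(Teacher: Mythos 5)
Your proposal is correct and follows essentially the same route as the paper's own proof: in the ``if'' direction, a one-to-one correspondence between satisfying assignments of $Q$ and $Q'$ (existence of an extension from $D \models \Sigma$, uniqueness from the assignment-fixing property read semantically off $(Q^{\sigma,h,\theta})_{\Sigma,S}$, plus multiplicity $1$ from set-valuedness), and in the ``only if'' direction, the canonical database of $(Q^{\sigma,h,\theta})_{\Sigma,S}$ as the counterexample for non-assignment-fixing tgds and tuple duplication in a model of $\Sigma$ for non-set-valued relations, exactly as in the paper's proof sketch and Appendix~\ref{proof-sound-chase-steps-appendix}. The only minor divergence is the egd duplicate-removal rule, which you justify by a direct $m^2$-versus-$m$ multiplicity count, whereas the paper derives it from Theorem~\ref{cv-updated-thm}; both justifications are valid, and yours is arguably more self-contained for this sub-case.
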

\vspace{-0.1cm}

In Section~\ref{key-based-chase-subsec}, Example~\ref{regularized-notassgnfix-ex}  shows an unsound chase step using a regularized tgd that is not assignment fixing w.r.t. the query. Example~\ref{regul-allapply-ex} in Section~\ref{key-based-chase-subsec} demonstrates a sound (by Theorem~\ref{bag-chase-sound-theorem}) chase step using a regularized assignment-fixing tgd, {\em provided that} both $S$ and $T$ are set-valued relations in all instances of the database schema used in the example. Relaxing this set-valued requirement would result in an unsound chase step using the same tgd, as is easy to demonstrate using a counterexample bag-valued database.   


The requirement that certain stored relations be set valued arises naturally if one seeks soundness of bag-semantics chase, see~\cite{DeutschDiss}. We now show that constraints that force certain relations to be sets on all instances of a database schema can be formally defined as egds, provided that {\it row (tuple) IDs} are defined for the respective relations. In the common practice of using tuple IDs in database systems, each tuple in a (bag-valued) relation is assigned a unique tuple ID.  Then the {\it set-enforcing egd} on relation $P$ can be expressed as a functional dependency ({\em fd,} defined in Appendix~\ref{key-app}), which specifies that whenever two tuples of $P$ agree on everything except the tuple IDs, then the tuples must also agree on the tuple IDs. Please see Appendix~\ref{appendix-a} for  the details of our set-enforcing framework based on tuple IDs. 

We now discuss item 2 of Theorem~\ref{bag-chase-sound-theorem}. Given a database schema $\cal D$, suppose that for some of the relation symbols $\{ P_1,\ldots,P_k \} \subseteq \cal D$ it holds that the relation for each of $P_1,\ldots,P_k$ is required to be {\it set} valued in all instances $D$ over $\cal D$. 
For such scenarios, the bag-equivalence test of Theorem~\ref{cv-theorem} is no longer a necessary condition for bag equivalence of CQ queries. 

\begin{example}
\label{extend-necess-example}
By Theorem~\ref{cv-theorem}, query $Q_3$ of Example~\ref{motivating-example} is not bag equivalent to query $Q_5$:  
\begin{tabbing}
$Q_5(X) \ :- \ p(X,Y), t(X,Y,W), s(X,Z), s(X,Z).$ 
\end{tabbing}
Here, the only difference between $Q_3$ and $Q_5$ is the extra copy of subgoal $s(X,Z)$ in $Q_5$. 
At the same time,  $Q_3$ and $Q_5$ {\em are} bag equivalent on all bag-valued databases where relation $S$ is required to be a set. Please see Theorem~\ref{cv-updated-thm} and Appendix~\ref{set-bag-appendix} for the details. 
\end{example}

We now formulate the extended sufficient and necessary condition. Please see Appendix~\ref{set-bag-appendix} for the proof. 
\begin{theorem}
\label{cv-updated-thm}
Let $\{ P_1,\ldots,P_k \} \subseteq \cal D$ be the maximal set of relation symbols in schema $\cal D$ such that  the relation for each of $P_1,\ldots,P_k$ is required to be {\em set} valued in all instances $D$ over $\cal D$. Given CQ queries $Q_1$, $Q_2$ on $\cal D$, let query $Q'_1$ ($Q'_2$, respectively) be obtained  by removing from $Q_1$ (from $Q_2$, respectively) all duplicate subgoals whose predicates correspond to $P_1,\ldots,P_k$.  Then $Q_1 \equiv_B Q_2$ in the absence of all dependencies other than the set-enforcing dependencies on $P_1,\ldots,P_k$ of the schema $\cal D$ if and only if  $Q'_1$ and $Q'_2$ are isomorphic. 
\end{theorem}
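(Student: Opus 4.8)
The plan is to prove the two directions separately, in each case reducing the statement to the \emph{reduced} queries $Q'_1,Q'_2$ and then to a counting argument that blends the fully-bag and fully-set cases of Theorem~\ref{cv-theorem}. First I would record the basic computation behind both directions: on any database $D$ in which $P_1,\dots,P_k$ are set valued, a CQ query and the query obtained from it by deleting one duplicate subgoal whose predicate is among $P_1,\dots,P_k$ have identical bag answers. Indeed, the two queries admit exactly the same satisfying assignments $\gamma$ (a deleted subgoal is syntactically identical to a retained one, so it imposes no new constraint), and for each $\gamma$ the multiplicity contributions $\prod_i m_i$ differ only by the factors of the deleted set-valued subgoals, each of which equals $1$ because the corresponding relation is set valued. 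Hence $Q_j(D,B)=Q'_j(D,B)$ for $j=1,2$ on every database in the class. Combined with the triviality that isomorphic queries are bag equivalent on all databases, this settles the ``if'' direction, and it also reduces the ``only if'' direction to showing that $Q'_1\equiv_B Q'_2$ over the class forces $Q'_1\cong Q'_2$, where now $Q'_1,Q'_2$ have no two identical set-valued subgoals.

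Next I would pin down the subgoal counts. Using the bag-noncontainment result announced in the introduction (if one CQ query is bag contained in another then, predicate by predicate, the container has at least as many subgoals), equality of the two bag answers yields, for every predicate, the same number of subgoals in $Q'_1$ and in $Q'_2$. I would prove the instance needed here by a direct ``inflation'' argument: fixing one relation $P$, evaluate both queries on $D^{(Q'_1)}$ after giving each tuple of $P$ multiplicity $N$ and every other tuple multiplicity $1$; the multiplicity of the frozen head tuple is then a polynomial in $N$ whose degree is the number of $P$-subgoals, so $\equiv_B$ forces these degrees to agree. (For a set-valued $P$ the step is vacuous, the set-valued counts being controlled by the set-level argument below.) This gives equal per-predicate counts and in particular equal numbers $b_1=b_2$ of bag-valued subgoals.

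The core step upgrades bag equivalence to isomorphism by a generating-polynomial computation. Freeze the variables of $Q'_1$ to distinct constants to form $D^{(Q'_1)}$, assign each \emph{bag-valued} tuple its own formal multiplicity variable $x_\tau$, and keep each set-valued tuple at multiplicity $1$, as it must be. The multiplicity of the frozen head tuple $t_1$ in $Q'_2(D^{(Q'_1)},B)$ is then $\sum_{\gamma}\prod_{a}x_{\gamma(a)}$, summed over head-preserving homomorphisms $\gamma\colon Q'_2\to D^{(Q'_1)}$ with the product over the bag-valued subgoals $a$ of $Q'_2$; the analogous polynomial for $Q'_1$ contains, from the identity freezing, the monomial $M_0=\prod_{\tau}x_\tau^{d_\tau}$ in which every bag-valued tuple occurs with its full multiplicity $d_\tau$. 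Equality of the polynomials forces $M_0$ to occur in $Q'_2$'s polynomial, and since every monomial has total degree $b_2=b_1$, any homomorphism producing $M_0$ maps the bag-valued subgoals of $Q'_2$ bijectively onto those of $Q'_1$. Running the symmetric argument on $D^{(Q'_2)}$ and composing the two homomorphisms (whose compositions are endomorphisms that are bijective on subgoals, hence bijective on variables) shows the maps are mutually inverse on the bag-valued part. Finally, restricting to set-valued databases turns $\equiv_B$ into $\equiv_{BS}$, so Theorem~\ref{cv-theorem}(2) supplies an isomorphism of the canonical representations that matches the (already duplicate-free) set-valued subgoals of $Q'_1$ and $Q'_2$; reconciling this with the bag-valued bijection produces a single variable renaming witnessing $Q'_1\cong Q'_2$.

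The main obstacle, and the point where this statement genuinely exceeds both cases of Theorem~\ref{cv-theorem}, is the hybrid nature of the counting: set-valued relations cannot carry arbitrary multiplicities, so the generating polynomial is free only in the bag-valued tuple variables and the set-valued subgoals are invisible to its monomials. Their structure must therefore be recovered by a separate bag-set argument, and the real work is to verify that the bag-valued bijection coming from $M_0$ and the set-level isomorphism coming from Theorem~\ref{cv-theorem}(2) can be chosen \emph{compatibly}, i.e.\ that they glue into one homomorphism that is simultaneously a bijection on all subgoals and a consistent map on variables. I expect the cleanest way to close this gap is to phrase the whole argument around a single head-preserving homomorphism extracted from $M_0$ and to prove directly, using equality of all per-predicate subgoal counts together with the reverse homomorphism, that it is injective and surjective on variables.
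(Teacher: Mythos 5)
Your ``if'' direction and the reduction of the ``only if'' direction to the duplicate-free queries $Q'_1,Q'_2$ are correct, and they coincide with the paper's argument. From that point on, though, you take a genuinely different route. The paper argues by contrapositive and never constructs an isomorphism: it first disposes of the cases where set- or bag-set-equivalence fails (these transfer immediately, since the witnessing databases are set-valued and hence lie in the restricted class), and in the remaining case it builds a single distinguishing database by inflating a bag-valued relation $m$-fold and comparing growth rates ($m^{n_1}$ copies of one answer tuple for $Q_1$ against an upper bound of order $m^{n_2}$, $n_2<n_1$, on the entire answer of $Q_2$; Lemma~\ref{bag-stubborn-lemma}). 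You instead attempt the Chaudhuri--Vardi-style \emph{construction} of an isomorphism from the monomial $M_0$ of a generating polynomial. That is a legitimate strategy in principle, but the gap you flag at the end --- gluing the bag-level bijection to the set-level isomorphism --- is not a routine verification: it is provably impossible to close using only the facts your argument extracts.

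Concretely, let $r$ be bag-valued, let $s,g$ be set-enforced, let $\Phi$ denote the conjunction $s(U,V), s(V,U), s(U,W), g(W), g(U), g(V)$, and consider
\begin{tabbing}
$Q'_1(X) \ :- \ r(X,U), r(X,U), r(X,V), \Phi .$ \\
$Q'_2(X) \ :- \ r(X,U), r(X,V), r(X,V), \Phi .$
\end{tabbing}
Both queries have the same canonical representation $C$, and the pendant atom $s(U,W)$ blocks the swap $U\leftrightarrow V$, so the only head-preserving automorphism of $C$ is the identity; since the identity does not match the $r$-multiplicities, $Q'_1$ and $Q'_2$ are \emph{not} isomorphic. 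Nevertheless, every fact your proof extracts holds here: the per-predicate subgoal counts agree (three $r$-subgoals each); $Q'_1\equiv_{BS}Q'_2$ by Theorem~\ref{cv-theorem}(2), giving the set-level isomorphism; and the map $\gamma=\{X\to X,\ U\to V,\ V\to U,\ W\to U\}$ is a homomorphism $C\to C$ (note $s(U,W)\to s(V,U)$, $g(W)\to g(U)$, $g(U)\to g(V)$) that realizes your monomial $M_0$ in \emph{both} directions, i.e.\ it is a bag-bijective homomorphism $Q'_2\to Q'_1$ and also $Q'_1\to Q'_2$. Moreover $\gamma\circ\gamma$ fixes every bag-valued subgoal yet collapses $W$ into $V$, which refutes your parenthetical claim that an endomorphism bijective on subgoals is bijective on variables --- that inference is valid only in the pure-bag setting, where every variable occurs in a bag-valued atom. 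The theorem itself survives because the full generating polynomials differ: on the frozen canonical database, with indeterminates $x_1,x_2$ for the two $r$-tuples, the head-multiplicity polynomial of $Q'_1$ is $2x_1^2x_2+x_1x_2^2$ while that of $Q'_2$ is $x_1^2x_2+2x_1x_2^2$ --- identical monomial support, different \emph{coefficients}. So any completion of your approach must exploit equality of coefficients of the generating polynomials (the three homomorphisms $C\to C$ here are the identity, $\gamma$, and $W\mapsto V$, and it is their weighted count that detects the difference), or else abandon the isomorphism-construction route in favor of directly exhibiting a distinguishing database, as the paper's Lemma~\ref{bag-stubborn-lemma} does in its case. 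Existence of the $M_0$ monomial alone is too weak, so as written the proposal does not yield the theorem.
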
}

The correctness of the duplicate-removal rule of item 2 in Theorem~\ref{bag-chase-sound-theorem}  is immediate from Theorem~\ref{cv-updated-thm}.

We now spell out the necessary and sufficient conditions for soundness of chase steps under {\em bag-set} semantics for query evaluation. 

\vspace{-0.1cm}


\begin{theorem}
\label{bag-set-chase-sound-theorem}
Given a CQ query $Q$ and a set of embedded dependencies\footnote{Recall that we consider only finite  regularized sets of dependencies throughout this paper.}  $\Sigma$. Under bag-set semantics, 
a chase step $Q \Rightarrow^{\sigma} Q'$ using $\sigma \in \Sigma$ is sound iff 
\begin{enumerate}
\vspace{-0.2cm}
	\item  $Q \Rightarrow^{\sigma} Q'$ is a (tgd) assignment-fixing chase step; or 
\vspace{-0.2cm}
	\item In  $Q \Rightarrow^{\sigma} Q'$, $\sigma$ is an egd. 
\end{enumerate}
\vspace{-0.5cm}
\end{theorem}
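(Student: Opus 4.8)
The plan is to prove both directions for each of the two cases, exploiting the fact that under bag-set semantics every database is set-valued, so there is \emph{no} data-level duplication and all multiplicity in an answer comes from counting \emph{distinct satisfying assignments}. This is exactly why the bag-set statement drops the set-valuedness side conditions of Theorem~\ref{bag-chase-sound-theorem}. Concretely, for a set-valued $D \models \Sigma$, the multiplicity of a head tuple $t$ in $Q(D,BS)$ is the number of distinct assignments $\gamma$ for the body of $Q$ with $\gamma(\bar{X}) = t$ that satisfy the body w.r.t.\ $D$. Thus soundness $Q \equiv_{\Sigma,BS} Q'$ is equivalent to exhibiting, for every such $D$, a head-preserving bijection between the satisfying assignments of $Q$ and those of $Q'$.

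For the egd case (item~2) I would show this bijection always exists, so every egd step is sound. A chase step with $e:\phi \to U_1 = U_2$ rewrites $Q$ into $Q'$ by replacing $h(U_1)$ throughout by $h(U_2)$. Since $D \models e$ and $\gamma \circ h$ is a homomorphism from $\phi$ into $D$, every satisfying assignment $\gamma$ of $Q$ on $D$ already maps $h(U_1)$ and $h(U_2)$ to the same constant; hence $\gamma$ factors through the identification and yields a satisfying assignment of $Q'$ with the same head tuple, and this correspondence is a bijection. Any duplicate subgoals the identification may create in $Q'$ are immaterial under bag-set semantics, since by Theorem~\ref{cv-theorem}(2) duplicate removal preserves bag-set equivalence. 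This gives $Q \equiv_{\Sigma,BS} Q'$ unconditionally.

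For the tgd case (item~1) the crux is that soundness is equivalent to $\sigma$ being assignment-fixing w.r.t.\ $Q$. The step adds the fresh subgoals $\psi(h(\bar{X}),\bar{Z})$, so each satisfying assignment of $Q'$ restricts to a satisfying assignment $\gamma$ of $Q$ together with an extension to $\bar{Z}$; hence the multiplicity of $t$ in $Q'(D,BS)$ is the number of such $\gamma$ summed with the number of their $\bar{Z}$-extensions satisfying $\psi(h(\bar{X}),\bar{Z})$ in $D$. Because $D \models \sigma$, each $\gamma$ has at least one extension, so the bags coincide iff each $\gamma$ has \emph{exactly} one extension on every set-valued $D \models \Sigma$. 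For the ``if'' direction I would use the associated test query $Q^{\sigma,h,\theta}$: two distinct $\bar{Z}$-extensions of a single $\gamma$ are precisely a satisfying assignment of $Q^{\sigma,h,\theta}$ with $\delta(Z_i) \neq \delta(\theta(Z_i))$ for some $i$; since $(Q^{\sigma,h,\theta})_{\Sigma,S}$ retains at most one of $Z_i,\theta(Z_i)$, the set-chase has entailed $Z_i = \theta(Z_i)$ from $\Sigma$ (using the swap symmetry between the two copies of $\psi$ over the common frontier $h(\bar{X})$), ruling out such $\delta$ on any $D \models \Sigma$ and forcing uniqueness. The ``only if'' direction is the contrapositive: if $\sigma$ is not assignment-fixing, both $Z_i$ and $\theta(Z_i)$ survive for some $i$, and I would take the canonical database of $(Q^{\sigma,h,\theta})_{\Sigma,S}$, which is set-valued and satisfies $\Sigma$ (it is a terminal set-chase result, cf.\ Section~\ref{chase-prelims}), and exists by Proposition~\ref{chase-assoc-term-prop} whenever $(Q)_{\Sigma,S}$ does; on it a single satisfying $\gamma$ of $Q$ admits two genuinely distinct $\bar{Z}$-extensions, so $Q'$ produces strictly more copies of the corresponding head tuple than $Q$, breaking bag-set equivalence.

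I expect the main obstacle to be the ``only if'' construction: verifying that the canonical database forces two distinct extensions of \emph{one} assignment of $Q$, rather than two assignments already differing on body variables of $Q$, and that the head multiplicities therefore strictly differ (the degenerate collapses are exactly those excluded by Definition~\ref{assgn-fix-def} and by regularization). The worked instance behind Example~\ref{regularized-notassgnfix-ex}, where the surviving pair of existential variables yields four copies of the answer tuple against one for $Q$, is the template I would generalize.
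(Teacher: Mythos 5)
Your proposal is correct and takes essentially the same route as the paper's own proof: it reduces soundness to counting satisfying assignments, uses the associated test query $Q^{\sigma,h,\theta}$ and its set-chase result to show that an assignment-fixing tgd yields exactly one extension per satisfying assignment of $Q$ (the paper's Case (1), items (i)--(iii)), and uses the canonical database of $(Q^{\sigma,h,\theta})_{\Sigma,S}$ as the counterexample witnessing strictly larger head-tuple multiplicity when $\sigma$ is not assignment-fixing (the paper's Case (2)). Your explicit egd argument (every satisfying assignment already identifies $h(U_1)$ and $h(U_2)$ on a database satisfying the egd, so assignments factor bijectively through the substitution, with duplicate subgoals immaterial under bag-set semantics) fills in a case the paper's sketch leaves implicit, but it is the natural completion of the same strategy.
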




We use Examples~\ref{regularized-notassgnfix-ex} and~\ref{regul-allapply-ex} of Section~\ref{key-based-chase-subsec} to make here the same points as for Theorem~\ref{bag-chase-sound-theorem}. Observe that (unlike the case of bag semantics) the set-valuedness requirement is satisfied by definition of bag-set semantics. See Example~\ref{motivating-example} for query $Q_2$ that is obtained from $Q_4$ by using, among other sound chase steps, a chase step involving dependency $\sigma_3$. By Theorem~\ref{bag-chase-sound-theorem}, $\sigma_3$ may not be used in sound chase under {\em bag} semantics, because relation $S$ is not guaranteed to be set valued in all instances of the database schema of the example. 

\begin{proof}{(Theorems~\ref{bag-chase-sound-theorem} and~\ref{bag-set-chase-sound-theorem}, sketch.)}
We outline here the correctness proof for chase steps using tgds. Please see Appendix~\ref{proof-sound-chase-steps-appendix} for the details of disproving soundness of chase steps under bag semantics whenever chase (using even regularized and assignment-fixing tgds) adds query subgoals whose associated base relations are not set valued in all instances of the given database schema.

Consider a CQ query $Q$ and a set of dependencies $\Sigma$ defined on schema $\cal D$, such that $(Q)_{\Sigma}$ exists. Let $\sigma \in \Sigma$ be a regularized dependency such that chase using $\sigma$ applies to $Q$ (using a homomorphism $h$) and results in query $Q'$. (That is, $Q \Rightarrow^{\sigma} Q'$ is defined.) Further, suppose that for all subgoals that are in $Q'$ but not in $Q$, the respective base relations, call them collectively ${\cal S} \subseteq {\cal D}$, are set valued in all instances of the schema $\cal D$.

Case (1): Let $\sigma$ be an assignment-fixing tgd w.r.t. the query $Q$. We prove that on all instances $D$ of $\cal D$ such that $D \models \Sigma$ and such that at least the relations in $\cal S$ are set valued on $D$, it holds that $Q(D,B) = Q'(D,B)$ and that $Q(D,BS) = Q'(D,BS)$. (Thus, the chase step $Q \Rightarrow^{\sigma} Q'$ is sound under the conditions of Theorems~\ref{bag-chase-sound-theorem} and~\ref{bag-set-chase-sound-theorem}.) 

We fix an arbitrary database $D$ as described above. The idea of the proof  is to establish a 1:1 correspondence between all the assignments satisfied by $Q$ w.r.t. $D$ and all the assignments satisfied by $Q'$ w.r.t. $D$. As a result (and using the fact that the $\cal S$-part of the base relations in $D$ is guaranteed to be set valued), we obtain that for each tuple $t \in Q(D,B)$, such that the multiplicity of $t$ in $Q(D,B)$ is $m > 0$, the multiplicity of $t$ {\em in} $Q'(D,B)$ is also $m$. 

We establish the 1:1 correspondence as follows. 

\begin{itemize}
	\item[(i)] For each assignment $\mu'$ that satisfies $Q'$ w.r.t. $D$, there exists exactly one assignment $\mu$ that (a) satisfies $Q$ w.r.t. $D$, and that (b) coincides with $\mu'$ on the set of body variables of $Q$. (Recall that $\sigma$ is a tgd, and therefore the set of body variables of $Q'$ is a superset of the set of body variables of $Q$.)  
	\item[(ii)] For each assignment $\mu$ that satisfies $Q$ w.r.t. $D$, there exists at least one assignment $\mu'$ that (a) satisfies $Q'$ w.r.t. $D$, and that (b) coincides with $\mu$ on the set of body variables of $Q$. This is immediate from the fact that $D \models \Sigma$. 
	\item[(iii)] From the fact that $\sigma$ is assignment fixing w.r.t. $Q$, we obtain that for each $\mu$ as in (ii) there exists {\em at most} one corresponding $\mu'$ as in (ii). Indeed, suppose that for some such $\mu$ there exist at least two assignments $\mu'_1$ and $\mu'_2$ that satisfy the conditions of (ii). Then we show by obtaining the chase result $(Q^{\sigma,h,\theta})_{\Sigma,S}$, in Definition~\ref{assgn-fix-def}, that $\mu'_1$ and $\mu'_2$ must be identical on all databases satisfying $\Sigma$. 
\end{itemize}

The observation that $D$ is an arbitrary database satisfying the conditions above concludes the proof of $Q \equiv_{\Sigma,B} Q'$ in this case (1). Further, $Q \equiv_{\Sigma,BS} Q'$ is immediate from $Q \equiv_{\Sigma,B} Q'$. 

Case (2): Let $\sigma$ {\em not} be assignment fixing w.r.t. the query $Q$. We construct a set-valued database $D$ (with schema $\cal D$) such that $D \models \Sigma$ and such that $Q(D,BS) \neq Q'(D,BS)$. (As a result, neither of $Q \equiv_{\Sigma,B} Q'$ and $Q \equiv_{\Sigma,BS} Q'$ holds, and therefore the chase step $Q \Rightarrow^{\sigma} Q'$ is not sound in this case under bag or bag-set semantics.) 

As a counterexample database $D$ we use the canonical database of the query $(Q^{\sigma,h,\theta})_{\Sigma,S}$, see Definition~\ref{assgn-fix-def}. Example~\ref{regularized-notassgnfix-ex} illustrates the construction. 

Let $\nu$ be the satisfying (by definition of canonical databases and by definition of chase under set semantics) assignment to the head variables $\bar{X}$ of $Q^{\sigma,h,\theta}$ w.r.t. the database $D$. Observe that the vectors of head variables of all of $Q$, $Q'$, and $Q^{\sigma,h,\theta}$ are the same by definition of $Q^{\sigma,h,\theta}$. By definition of $Q^{\sigma,h,\theta}$, there exists an extension $\nu_Q$ of $\nu$ to all the body variables of $Q$ such that $\nu_Q$ satisfies $Q$ w.r.t. $D$, and there exists an extension $\nu'_{Q'}$ of $\nu$ to all the body variables of $Q'$ such that $\nu'_{Q'}$ satisfies $Q'$ w.r.t. $D$. 

We make the following observations about the answers to $Q$ and $Q'$ under bag-set semantics on the set-valued database $D$. 

\begin{itemize}
	\item[(i)] For each assignment $\mu'$ such that $\mu'|_{\bar{X}} = \nu$ and such that $\mu'$ satisfies $Q'$ w.r.t. $D$ (we have shown that there exists at least one such assignment $\mu'$), there exists exactly one assignment $\mu$ that (a) satisfies $Q$ w.r.t. $D$, and that (b) coincides with $\mu'$ on the set of body variables of $Q$. (See (i) under case (1) of the proof.) Observe that $\mu|_{\bar{X}} = \nu$ by definition of $\mu$. 
	\item[(ii)] For each assignment $\mu$ such that $\mu|_{\bar{X}} = \nu$ and such that $\mu$ satisfies $Q$ w.r.t. $D$  (we have shown that there exists at least one such assignment $\mu$), there exists at least one assignment $\mu'$ that (a) satisfies $Q'$ w.r.t. $D$, and that (b) coincides with $\mu$ on the set of body variables of $Q$. (See (ii) under case (1) of the proof.) Observe that $\mu'|_{\bar{X}} = \nu$ by definition of $\mu'$. 
	\item[(iii)] On our counterexample database $D$, there exists at least one $\mu$ with $\mu|_{\bar{X}} = \nu$ and such that $\mu$ is a satisfying assignment w.r.t. $Q$ and $D$, such that $\mu$ corresponds to {\em at least two distinct} satisfying assignments $\mu'_1$ and $\mu'_2$ w.r.t. $Q'$ and $D$, where each of $\mu'_1$ and $\mu'_2$  coincides with $\mu$ on all the body variables of $Q$. Indeed, we recall that $D$ is the canonical database of  $(Q^{\sigma,h,\theta})_{\Sigma,S}$. 
	If the distinct $\mu'_1$ and $\mu'_2$ as above did not exist, then chase of $Q^{\sigma,h,\theta}$ using $\Sigma$ under set semantics would lead to the ``elimination of the distinction between'' the groups of subgoals $\psi(h(\bar{X}),\bar{Z})$ and $\psi(h(\bar{X}),\theta(\bar{Z}))$ of $Q^{\sigma,h,\theta}$,  
see Equation~\ref{theta-eq} and Definition~\ref{assgn-fix-def},  in the terminal chase result of $Q^{\sigma,h,\theta}$ using $\Sigma$. But if $\psi(h(\bar{X}),\bar{Z})$ and $\psi(h(\bar{X}),\theta(\bar{Z}))$ collapse into the same group in $(Q^{\sigma,h,\theta})_{\Sigma,S}$, then $\sigma$ is an assignment-fixing tgd w.r.t. $Q$ by Definition~\ref{assgn-fix-def}, which is a contradiction with our assumption. 
\end{itemize}

We conclude that in Case (2), the multiplicity of the tuple $\nu(\bar{X})$ is strictly greater in $Q'(D,BS)$ than in $Q(D,BS)$ on our counterexample database $D$. Thus, $Q'(D,BS) \neq Q(D,BS)$. Q.E.D.  
\end{proof}

\section{Unique Result of Sound Chase}
\label{un-res-sec}

In this section we show that the  result of sound chase of CQ queries using arbitrary finite sets of embedded dependencies is unique under each of bag and bag-set semantics for query evaluation. 
Further, we provide an algorithm for constructing, for a given CQ query $Q$ and an arbitrary finite set of embedded depedencies $\Sigma$, the maximal subset $\Sigma^{max}_B(Q,\Sigma)$ of $\Sigma$ such that $D^{(Q_n)} \models \Sigma^{max}_B(Q,\Sigma)$, where $Q_n$ is the result of sound chase of $Q$ under bag semantics. We also outline a version of the algorithm that works for the case of bag-set semantics. 

\subsection{Why Not Key-Based Tgds?}
\label{why-not-key-based-sec}

We begin the discussion by examining the question of why the definition of assignment-fixing chase steps (Definition~\ref{key-based-tgds-def})  cannot be simplified. The intuition behind the notion of assignment-fixing chase steps is that of ensuring that in each assignment-fixing chase step $Q \Rightarrow^{\sigma}_B Q'$, 
using some tgd $\sigma \in \Sigma$, 
each tuple in the bag $Q(D,B)$ would have {\it the same} multiplicity in the bag $Q'(D,B)$, for each database $D \models \Sigma$, in presence of the requisite set-enforcing constraints (of Appendix~\ref{appendix-a}). 
The intuition is the same for bag-set-semantics. 
It appears that a simpler notion, that of {\em key-based tgds,} would suffice. In the definition that follows, we use the notation of Definition~\ref{key-based-tgds-def}. 

 \vspace{-0.1cm}
\begin{definition}{Key-based tgd}
\label{old-key-based-tgds-def}
Let 
$\sigma: \phi(\bar{X},\bar{Y}) \rightarrow \exists{\bar{Z}} \ \psi(\bar{Y},\bar{Z})$ be a tgd 
on database schema $\cal D$. Then $\sigma$ is a {\em key-based tgd} if, for each atom  $p(\bar{Y}'_j,\bar{Z}'_j)$ in $\psi$, 
$\bar{Y}'_j$ is a superkey of relation $P$ in $\cal D$ and, in addition, $P$ is set valued on all instances of $\cal D$. 
\end{definition}
\vspace{-0.1cm}

The notion of key-based tgds is equivalent to that of UWDs of~\cite{DeutschDiss}. 
Note that by Definition~\ref{key-based-tgds-def}, all chase steps using key-based tgds are assignment fixing. However, the class of assignment-fixing tgds (w.r.t. the given CQ query and set of dependencies) includes not just key-based tgds, as illustrated in Example~\ref{regul-allapply-ex}. In addition, unlike assignment-fixing chase steps specified in Definition~\ref{key-based-tgds-def},  a key-based tgd is defined independently of the queries being chased. Deutsch~\cite{DeutschDiss} showed that the result of sound chase of CQ queries under bag semantics is unique up to isomorphism, provided that {\it all} tgds in the given set of dependencies are key based. 

It turns out that the ``key-basedness'' constraints of Definition~\ref{old-key-based-tgds-def} on tgds are not necessary for soundness of chase under either of bag and bag-set semantics. Indeed, consider a modification of Example~\ref{unregularized-counterex-two}:

\begin{example}
\label{query-dep-ex}
In the setting of Example~\ref{unregularized-counterex-two}, we replace the query $Q$ by a query $Q'(X) \ :- \ p(X,Y), r(A,X)$, and keep the set $\Sigma'$ of dependencies of Example~\ref{unregularized-counterex-two}. We can show that tgd $\sigma_4 \in \Sigma'$ is assignment fixing w.r.t. $Q'$. Recall that $\sigma_4$ is {\em not} an assignment-fixing tgd w.r.t. the query $Q$ of  Example~\ref{unregularized-counterex-two}. 
\end{example}

\nop{

\vspace{-0.1cm}

\begin{example}
\label{real-he-motivating-example}
In the setting of Example~\ref{motivating-example}, consider chase of query $Q_4$ using $\Sigma$ under bag semantics. Note that dependency $\sigma_1$ is {\em not} a key-based tgd.

Consider 
chase sequences ${\bf C_1:}$ $Q_4 \Rightarrow^{\sigma_1}_B Q_4^{(1)}$, and  ${\bf C_2:}$  $Q_4 \Rightarrow^{\sigma_2}_B Q_4^{(2)} \Rightarrow^{\sigma_1}_B Q_4^{(3)}$. The queries appearing in ${\bf C_1}$ and ${\bf C_2}$ are as follows:
\vspace{-0.1cm}

\begin{tabbing}
$Q_4^{(1)}(X) \ :- \ p(X,Y), t(X,V,W), s(X,Z).$ \\
$Q_4^{(2)}(X) \ :- \ p(X,Y), t(X,Y,W).$ \\
$Q_4^{(3)}(X) \ :- \ p(X,Y), t(X,Y,W), s(X,Z).$
\end{tabbing}
\vspace{-0.1cm}
Observe that chase step  $Q_4 \Rightarrow^{\sigma_1}_B Q_4^{(1)}$ in  ${\bf C_1}$ is not sound by Theorem~\ref{bag-chase-sound-theorem}, because in the new subgoal $t(X,V,W)$ in $Q_4^{(1)}$, the first attribute alone is not a superkey. (See Appendix~\ref{real-motiv-appendix} for a counterexample to $Q_4 \equiv_{\Sigma,B} Q_4^{(1)}$.)  

At the same time, each of chase steps  $Q_4 \Rightarrow^{\sigma_2}_B Q_4^{(2)}$ and $Q_4^{(2)}  \Rightarrow^{\sigma_1}_B Q_4^{(3)}$ in  ${\bf C_2}$ is sound by Theorem~\ref{bag-chase-sound-theorem}. The reason chase step $Q_4^{(2)}  \Rightarrow^{\sigma_1}_B Q_4^{(3)}$ {\em using $\sigma_1$} is sound 
is that the application of $\sigma_1$ to $Q_4^{(2)}$ adds a new $s$-subgoal but not a new $t$-subgoal.

Note that  chase steps using $\sigma_4$  would never be sound in chasing $Q_4$, by Theorem~\ref{bag-chase-sound-theorem}. (Recall that relation $U$ is not required to be a set.) At the same time, $\sigma_4$ could be applied in chase of other queries, even though $\sigma_4$ is {\em not} a key-based tgd. Consider query $Q_6$ defined as
\begin{tabbing}
$Q_6(X) \ :- \ p(X,Y), u(X,Z).$ \\
$Q_6^{(1)}(X) \ :- \ p(X,Y), u(X,Z), t(X,Y,W).$
\end{tabbing}
Here, chase step ${\bf C_3:}$ $Q_6 \Rightarrow^{\sigma_4}_{B} Q_6^{(1)}$ is sound by Theorem~\ref{bag-chase-sound-theorem}, because the application of $\sigma_4$ to $Q_6$ adds a new $t$-subgoal but not a new $u$-subgoal.
\end{example}
\vspace{-0.1cm}

} 


\subsection{Uniqueness of Result of Sound Chase}

We now show that the result of sound chase of CQ queries using {\it arbitrary} sets of embedded dependencies\footnote{Cf. the result of~\cite{DeutschDiss} on  uniqueness of sound bag chase for key-based tgds only; see Section~\ref{why-not-key-based-sec} for the discussion.} is unique under bag and bag-set semantics, up to equivalence in the absence of dependencies (except for the set-enforcing dependencies under bag semantics). (Recall that throughout the paper we assume that all given sets of embedded dependencies are finite and regularized.) We give here a formulation of our result only for the case of bag semantics. The version of Theorem~\ref{uniqueness-theorem} for the case of bag-set semantics (formulated in Appendix~\ref{bag-set-uniqueness-appendix}) is straightforward. 
\begin{theorem}
\label{uniqueness-theorem}
Given a CQ query $Q$ and set $\Sigma$ of embedded dependencies on schema $\cal D$, such that there exists a  {\em set-}chase result $(Q)_{\Sigma,S}$ for $Q$ and $\Sigma$.  Then there exists a result $(Q)_{\Sigma,B}$ of  {\em sound} chase for $Q$ and $\Sigma$ under {\em bag} semantics, unique up to isomorphism after dropping duplicate subgoals that correspond to set-valued relations in $\cal D$.\footnote{See discussion of Theorem~\ref{cv-updated-thm} in Section~\ref{making-chase-sound-section}.} 
That is, for two sound-chase results $(Q)^{(1)}_{\Sigma,B}$ and $(Q)^{(2)}_{\Sigma,B}$ for $Q$ and $\Sigma$, $(Q)^{(1)}_{\Sigma,B} \equiv_B (Q)^{(2)}_{\Sigma,B}$ in the absence of all dependencies other than the set-enforcing dependencies on stored relations. 
\end{theorem}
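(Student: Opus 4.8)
The plan is to separate the claim into an easy part (uniqueness up to $\equiv_{\Sigma,B}$) and the genuinely hard part (strengthening this to bag-equivalence, hence isomorphism, in the absence of the non-set-enforcing dependencies). First I would record termination: by Theorem~\ref{bag-chase-sound-theorem} every sound bag-chase step is in particular a legitimate set-chase step (an assignment-fixing tgd step is an ordinary tgd step, and egd steps coincide), so any sound-chase sequence starting at $Q$ is a set-chase sequence of $Q$; since $(Q)_{\Sigma,S}$ exists, every such sequence terminates. By definition soundness means each step preserves $\equiv_{\Sigma,B}$, so any terminal sound-chase result $Q^{*}$ satisfies $Q^{*} \equiv_{\Sigma,B} Q$. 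Hence for two terminal results we get immediately $(Q)^{(1)}_{\Sigma,B} \equiv_{\Sigma,B} (Q)^{(2)}_{\Sigma,B}$. The whole difficulty is that the theorem asserts the \emph{stronger} conclusion $(Q)^{(1)}_{\Sigma,B} \equiv_{B} (Q)^{(2)}_{\Sigma,B}$ with the dependencies dropped, equivalently (by Theorem~\ref{cv-updated-thm}) that the two results are isomorphic after deleting duplicate subgoals over set-valued relations.

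To perform this upgrade I would first pin down the set-level picture: Proposition~\ref{b-bs-s-implic-prop} turns $\equiv_{\Sigma,B}$ into $\equiv_{\Sigma,S}$, and Theorem~\ref{chase-theorem} then gives $(Q^{(1)})_{\Sigma,S} \equiv_S (Q^{(2)})_{\Sigma,S} \equiv_S (Q)_{\Sigma,S}$, so both terminal results have a common set-chase completion and there are containment mappings between them in both directions. The remaining task is to control multiplicities. The plan is a counting argument in the spirit of the Chaudhuri--Vardi proof re-derived in this paper: take the canonical database $D$ of $(Q)^{(1)}_{\Sigma,B}$, with the set-enforced relations collapsed to sets, and extend it to a database $\hat{D}$ with $\hat{D}\models\Sigma$ by adding, over entirely fresh constants, witness tuples for exactly those tgds of $\Sigma$ whose steps sound chase declined (the non-assignment-fixing ones, or those adding non-set-valued subgoals). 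Because $\hat D \models \Sigma$ and $(Q)^{(1)}_{\Sigma,B} \equiv_{\Sigma,B} (Q)^{(2)}_{\Sigma,B}$, the two queries return the same bag on $\hat D$; if I can show that the multiplicity of the distinguished ``identity'' head tuple on $\hat D$ is governed purely by the subgoal structure exactly as in the dependency-free setting, then equal multiplicities force equal per-predicate subgoal counts (after set-valued de-duplication), and together with the two-way containment mappings this yields the desired isomorphism.

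The hard part will be the multiplicity preservation in the extension $D \mapsto \hat D$: I must argue that the fresh witness tuples added to satisfy the declined tgds create \emph{no} new satisfying assignment of $(Q)^{(i)}_{\Sigma,B}$ that lands on the distinguished output tuple, so that the count on $\hat D$ equals the count on $D$ and reflects only the genuine subgoals. This is exactly where the assignment-fixing characterization (Definition~\ref{assgn-fix-def} and the associated test query~(\ref{theta-eq})) does the work: it guarantees that a declined step would have introduced a second, genuinely distinct consistent extension of the body assignment, so forbidding it, and realizing the corresponding witnesses only over fresh values, keeps the identity assignment the unique one producing the distinguished tuple. I would also lean on the fact that egd steps are always sound, so both terminal results already satisfy all egds and no identification is pending. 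The remaining pieces are routine: the re-proved necessary condition that bag containment forces at least as many subgoals per predicate converts multiplicity equality into the structural count equality, and the bag-set version (Appendix~\ref{bag-set-uniqueness-appendix}) is obtained verbatim by restricting throughout to set-valued databases, where the set-valuedness side condition of Theorem~\ref{bag-chase-sound-theorem} holds automatically.
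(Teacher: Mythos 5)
Your opening move (termination of every sound-chase sequence plus $(Q)^{(1)}_{\Sigma,B} \equiv_{\Sigma,B} (Q)^{(2)}_{\Sigma,B}$) is fine and matches Proposition~\ref{chase-termination-prop}, but the upgrade from $\equiv_{\Sigma,B}$ to dependency-free bag equivalence is where your argument breaks. You extend the canonical database $D$ of $(Q)^{(1)}_{\Sigma,B}$ to a model $\hat{D}\models\Sigma$ by adding witnesses over fresh constants for the declined tgds and claim the multiplicity of the distinguished head tuple is unchanged. Two concrete problems. First, completing $D$ to a $\Sigma$-model is an iterative chase: the fresh witnesses can trigger further tgds and, worse, egds, and a fired egd can equate a fresh constant with an old one, destroying exactly the freshness your argument relies on; moreover, nothing prevents subgoals of $(Q)^{(2)}_{\Sigma,B}$ (whose structure differs from that of $(Q)^{(1)}_{\Sigma,B}$ precisely in the situation you must refute) from mapping onto the witness tuples while still producing the distinguished head tuple --- the assignment-fixing property (Definition~\ref{assgn-fix-def}) constrains the tgds that sound chase \emph{did} apply, and says nothing about assignments created by witnesses for the tgds it \emph{declined}. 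Second, even granting multiplicity preservation, equal multiplicity of the identity tuple on a single canonical database does not determine per-predicate subgoal counts: for $Q_a(X) \symif p(X,Y)$ and $Q_b(X) \symif p(X,Y), p(X,Z)$, both queries return the identity tuple with multiplicity one on the canonical database of either. The subgoal-counting condition of~\cite{VardiBagsPods93} that you invoke (cf.\ Lemma~\ref{bag-stubborn-lemma}) is established via a whole family of bag-valued, multi-copy databases, and it presupposes bag containment over all such databases (modulo set-enforcing constraints); to use it here you would need to extend \emph{every} member of that family to a $\Sigma$-model with controlled multiplicities --- which your proposal never does --- or else assume dependency-free equivalence outright, which is the conclusion, not the hypothesis.

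The paper sidesteps semantics entirely: its proof is structural, by contradiction with terminality. If the deduplicated results $(\bar{Q})^{(1)}_{\Sigma,B}$ and $(\bar{Q})^{(2)}_{\Sigma,B}$ are not isomorphic, pick the subgoals of one that have no counterpart under any injective homomorphism into the other; they cannot all be original subgoals of $Q$, so they were introduced by sound chase steps, and a suffix of that subsequence of sound steps still soundly applies to the other terminal result, adding images of the missing subgoals --- contradicting the assumption that it is a \emph{terminal} result of sound chase (subgoals eliminated by egds are handled symmetrically). This is an adaptation, to sound chase steps, of the set-chase uniqueness argument of~\cite{DeutschPods08}, and it delivers the isomorphism directly, with no counterexample-database construction and no counting.
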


By Theorem~\ref{bag-chase-sound-theorem}, sound bag chase adds or drops only those subgoals whose predicates correspond to relations required to be sets. Thus, it is natural to use the conditions of Theorem~\ref{cv-updated-thm}, rather than of Theorem~\ref{cv-theorem}, in characterizing bag equivalence of terminal chase results. 

To prove Theorem~\ref{uniqueness-theorem}, we make the following straightforward observation. 
\begin{proposition}
\label{chase-termination-prop}
Given CQ query $Q$ and embedded dependencies $\Sigma$ 
such that there exists a {\em set-}chase result $(Q)_{\Sigma,S}$. 
Then sound chase of $Q$ using $\Sigma$ terminates in finite time under each of bag and bag-set semantics. 
\end{proposition}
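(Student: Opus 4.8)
The plan is to reduce termination of sound chase under bag and bag-set semantics to the assumed termination of set-chase, by exhibiting every sound chase step as (essentially) an ordinary set-chase step. First I would unpack the characterizations in Theorems~\ref{bag-chase-sound-theorem} and~\ref{bag-set-chase-sound-theorem}: a sound chase step is either (a) an assignment-fixing tgd step, which by definition (Section~\ref{chase-prelims}) is a genuine set-chase step that applies $\sigma$ with homomorphism $h$ and appends $\psi(h(\bar{X}),\bar{Z})$ with fresh existentials; or (b) an egd step, which is identical to the set-chase egd step; or (under bag semantics only) (c) the removal of a duplicate subgoal of a set-valued relation. Only (c) is not itself a set-chase step, and it can only shrink the current query.

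Second, I would argue that no sound chase sequence can be infinite. Observe that between any two consecutive tgd steps only finitely many steps of types (b) and (c) can occur: an egd step strictly decreases the number of distinct variables of the (finite) current query, and a duplicate-removal step strictly decreases its number of atoms, so both measures are bounded below and can change only finitely often while no tgd step intervenes. Hence an infinite sound chase sequence would have to contain infinitely many assignment-fixing tgd steps. Now delete from such a hypothetical infinite sequence all duplicate-removal steps of type (c). The resulting sequence is a legal set-chase sequence: putting back a removed duplicate only reinstates an atom identical to one still present, which neither destroys any homomorphism from a dependency's premise nor creates a new extension witnessing satisfaction of a tgd, so the applicability of every subsequent step of type (a) or (b) is preserved. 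The pruned sequence is therefore an infinite set-chase sequence starting from $Q$, contradicting the hypothesis that the set-chase result $(Q)_{\Sigma,S}$ exists, i.e.\ that set-chase of $Q$ under $\Sigma$ terminates. The bag-set case is the special case with no steps of type (c).

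The step I expect to be the main obstacle is the last verification, namely that discarding the duplicate-removal steps yields a bona fide set-chase sequence. This rests on the fact, guaranteed by item~2 of Theorem~\ref{bag-chase-sound-theorem}, that duplicates are removed only for relations that are set valued on all instances, so the removed atoms are literally identical copies; identical copies are invisible to the homomorphism-based applicability test for both tgds and egds, which is precisely what makes the pruning safe. A secondary point to state carefully is the reading of the hypothesis: since we work throughout with (regularized) sets for which set-chase terminates, the existence of $(Q)_{\Sigma,S}$ is used in the branch-independent sense, so that exhibiting a single infinite set-chase sequence already yields the desired contradiction.
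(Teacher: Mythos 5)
Your proof is correct and takes essentially the same route as the paper: the paper's entire proof is the one-line remark that the result is immediate from Theorems~\ref{bag-chase-sound-theorem} and~\ref{bag-set-chase-sound-theorem}, i.e., that every sound chase step under bag or bag-set semantics is (up to removal of duplicate atoms over set-valued relations) an ordinary set-chase step, so termination transfers from the assumed set-chase. Your additional details --- the measure argument bounding runs of egd/duplicate-removal steps, the pruning argument that duplicates are invisible to the homomorphism-based applicability tests, and the explicit branch-independent reading of the hypothesis --- are precisely what the paper leaves implicit, not a different approach.
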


This result is immediate from Theorems~\ref{bag-chase-sound-theorem} and~\ref{bag-set-chase-sound-theorem}.

The rest of the proof of Theorem~\ref{uniqueness-theorem} is an adaptation, to {\it sound} chase steps, of the proof of the fact (see~\cite{DeutschPods08}) that all set-chase results (when defined) for a given CQ query are equivalent in the absence of dependencies. Please see Appendix~\ref{bag-set-uniqueness-appendix} for the details. 

We now establish the complexity of sound bag and bag-set chase under weakly acyclic dependencies~\cite{FaginKMP05}. Intuitively, weakly acyclic dependencies cannot generate an infinite number of new variables, hence set-chase under such dependencies terminates in finite time; please see Appendix~\ref{app-chase-compl} for the definition. All sets of dependencies in examples in this paper are weakly acyclic. \begin{theorem}
\label{complexity-thm}
Given a CQ query $Q$ and set $\Sigma$ of weakly acyclic embedded dependencies on schema $\cal D$. 
Then sound chase of $Q$ using $\Sigma$, under each of bag and bag-set semantics, terminates in time polynomial in the size of $Q$ and exponential in the size of $\Sigma$. 
\end{theorem}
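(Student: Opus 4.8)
The plan is to reduce the cost of sound bag and bag-set chase to that of ordinary set-chase, for which the weakly acyclic bound of~\cite{FaginKMP05} is available, and then to account separately for the per-step overhead of deciding which steps are sound. Termination itself is already given by Proposition~\ref{chase-termination-prop}, so the work lies entirely in the quantitative bound.

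First I would observe, using Theorems~\ref{bag-chase-sound-theorem} and~\ref{bag-set-chase-sound-theorem}, that every sound chase step $Q_i \Rightarrow^{\sigma} Q_{i+1}$ (under either semantics) is in particular a legal chase step under set semantics: an egd step coincides with the set-semantics egd step, and the optional duplicate removal of item~2 only deletes subgoals already present, hence is a no-op on the set-chase object; an assignment-fixing tgd step is exactly a set-semantics tgd step that merely happens to use an assignment-fixing tgd. Consequently any sound chase sequence $Q = Q_0, Q_1, \ldots$ is also a $\Sigma$-chase sequence under set semantics, so by~\cite{FaginKMP05} (weak acyclicity of $\Sigma$) its length is bounded by a polynomial in $|Q|$ whose degree can be taken exponential in $|\Sigma|$, and every intermediate $Q_i$ has size bounded in the same way. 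By Proposition~\ref{regulariz-thm} we may work throughout with the regularized version of $\Sigma$, whose size is polynomial in $|\Sigma|$, so neither the polynomial-in-$|Q|$ bound nor the exponential-in-$|\Sigma|$ claim is affected.

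Next I would bound the cost of a single step, i.e. of testing whether an applicable dependency yields a sound step and, if so, executing it. For a fixed $Q_i$ and a fixed $\sigma \in \Sigma$, enumerating the homomorphisms $h$ from the premise $\phi_\sigma$ into the body of $Q_i$ costs at most $|Q_i|^{|\phi_\sigma|} \le |Q_i|^{O(|\Sigma|)}$, polynomial in $|Q|$. For egd steps, the test of item~2 (set-valuedness of the affected relation) is a lookup in $\cal D$. For tgd steps, the soundness test is exactly deciding whether $\sigma$ is assignment-fixing w.r.t.\ $Q_i$ and $h$ in the sense of Definition~\ref{assgn-fix-def}: one forms the associated test query $Q_i^{\sigma,h,\theta}$ of Equation~\ref{theta-eq}, whose size is polynomial in $|Q_i|$ and hence in $|Q|$, computes its set-chase result $(Q_i^{\sigma,h,\theta})_{\Sigma,S}$, and checks that at most one of each pair $Z_j, \theta(Z_j)$ survives (under bag semantics one additionally verifies set-valuedness of the newly added relations). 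The crucial point is that this nested computation is again ordinary set-chase, so~\cite{FaginKMP05} applies directly and it terminates in time polynomial in $|Q_i^{\sigma,h,\theta}|$, i.e.\ polynomial in $|Q|$. Summing over the $|\Sigma|$ dependencies and the polynomially many homomorphisms, the work to find and perform one sound step is polynomial in $|Q|$.

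Finally I would multiply the two bounds: the number of sound steps is polynomial in $|Q|$ (degree exponential in $|\Sigma|$) and each step costs a further amount polynomial in $|Q|$ (degree exponential in $|\Sigma|$), so the total running time is polynomial in $|Q|$ with degree exponential in $|\Sigma|$, as claimed. The main obstacle, and the only genuinely non-routine point, is the per-step analysis: one must check that invoking a full set-chase inside each step does not compound super-polynomially. This is handled by noting that the test query $Q_i^{\sigma,h,\theta}$ is built from the current query alone (not from the entire history), that its size stays polynomial in $|Q|$, and that the length and size bounds of~\cite{FaginKMP05} depend only on $\Sigma$ and on the instance size; hence each nested chase is independently bounded and the two polynomial bounds simply multiply rather than recurse.
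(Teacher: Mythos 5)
Your upper-bound argument is correct, and it is in fact more self-contained than the paper's: the paper disposes of the upper bound in a single sentence (``immediate from Proposition~\ref{chase-termination-prop} and from the results in~\cite{AbiteboulHV95,DeutschPT06,FaginKMP05} for set semantics''), whereas you additionally verify the per-step overhead --- enumerating homomorphisms and, for tgd steps, running the nested set-chase of the associated test query $Q_i^{\sigma,h,\theta}$ of Equation~\ref{theta-eq} needed to decide the assignment-fixing property of Definition~\ref{assgn-fix-def}. Your key observations there are sound: every sound step under bag or bag-set semantics is a legal set-chase step (duplicate removal being a no-op at the set level), so the length and instance-size bounds for weakly acyclic chase apply to the sound chase sequence itself, and each soundness test is an \emph{independent} set-chase on an instance whose size is polynomial in $|Q|$, so the two polynomial bounds multiply rather than recurse.

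The genuine gap is that you have proved only half of what the paper proves. The paper reads Theorem~\ref{complexity-thm} as asserting not merely an upper bound but also that the exponential dependence on $|\Sigma|$ is unavoidable: immediately after the statement it announces a \emph{lower bound}, and the entire content of its proof (Appendix~\ref{app-chase-compl}, Examples~\ref{expon-example} and~\ref{cont-expon-example}) is the witnessing construction. Concretely, on the schema $\{P_1,\ldots,P_m\}$ with the single-subgoal query $Q(X,Y) \ \symif \ p_1(X,Y)$, one takes the quadratically many tgds $\sigma^{(1)}_{i,j}: p_i(X,Y) \rightarrow \exists Z \ p_j(Z,X)$ and $\sigma^{(2)}_{i,j}: p_i(X,Y) \rightarrow \exists W \ p_j(Y,W)$, and adds functional dependencies (including tuple-ID set-enforcing egds) that make every tgd key-based in the sense of Definition~\ref{old-key-based-tgds-def}, hence assignment-fixing and soundly applicable under both semantics; the resulting sound chase results $(Q)_{\Sigma',B}$ and $(Q)_{\Sigma',BS}$ then have a number of subgoals exponential in $m$, i.e., in $|\Sigma|$, and the family is weakly acyclic. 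Without this family your argument shows only that sound chase runs in time polynomial in $|Q|$ and at most exponential in $|\Sigma|$; it does not rule out a bound polynomial in both parameters, which is precisely what the paper's appendix is devoted to excluding. Adding that construction (or citing it) would make your proof a complete, and on the upper-bound side more detailed, version of the paper's.
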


The upper  bound is immediate from Proposition~\ref{chase-termination-prop} and from the results in~\cite{AbiteboulHV95,DeutschPT06,FaginKMP05} for  set semantics. For the lower bound, we exhibit an infinite family of pairs $(Q,\Sigma)$, where the size of each of $(Q)_{\Sigma,B}$ and $(Q)_{\Sigma,BS}$ is polynomial in the size of $Q$ and exponential in the size of $\Sigma$. Please see Appendix~\ref{app-chase-compl} for the details. 

\subsection{Satisfiable Dependencies Are Query Based}
\label{semantic-sec-now}

We now provide a constructive characterization of the result of sound chase under bag and bag-set semantics. This  characterization, formulated in Theorem~\ref{sigma-max-theorem} for  bag semantics, settles the problem of which dependencies $\Sigma'$ are satisfied by the canonical database $D^{(Q_n)}$ of $Q_n$. Here, $Q_n$ is the result  of sound chase of  CQ query $Q$ using embedded dependencies $\Sigma$. 
(We assume that {\em set} chase of $Q$ using $\Sigma$ terminates in finite time.) 

Given a CQ query $Q$ and a set of embedded dependencies $\Sigma$, consider  the canonical database $D^{(Q_n)}$ of the result $Q_n = (Q)_{\Sigma,B}$ of sound chase of  $Q$ using $\Sigma$ under bag semantics. 
Clearly, at least some sets  $\Sigma'$ such that $D^{(Q_n)} \models \Sigma'$ do not coincide with the original $\Sigma$.  (We refer here to the discussion in the beginning of Section~\ref{new-sound-chase-sec}.) For instance, in Example~\ref{motivating-example} the canonical database for query $Q_3$ does not satisfy dependency $\sigma_4$. Observe that $Q_3$ is the (unique, by Theorem~\ref{uniqueness-theorem}) result of sound chase of $Q_4$ using $\Sigma$ under bag semantics. 

At the same time, for each pair $(Q,\Sigma)$ there exists a unique maximal-size set $\Sigma^{max}_B(Q,\Sigma) \subseteq \Sigma$, such that  $D^{(Q_n)} \models \Sigma^{max}_B(Q,\Sigma)$. (Appendix~\ref{semant-appendix} has proof of Theorem~\ref{sigma-max-theorem} and the analogous result for bag-set semantics.) 


\vspace{-0.1cm}

\begin{theorem}{{\bf (Unique $\Sigma^{max}_B(Q,\Sigma) \subseteq \Sigma$)}}
\label{sigma-max-theorem}
Given a CQ query $Q$ and set $\Sigma$ of embedded dependencies, such that there exists a  {\em set-}chase result $(Q)_{\Sigma,S}$ for $Q$ and $\Sigma$. 
Let $Q_n$ be the result of  sound chase for $Q$ and $\Sigma$ under {\em bag} semantics, with canonical database $D^{(Q_n)}$.  Then there exists a unique subset $\Sigma^{max}_B(Q,\Sigma)$ of $\Sigma$, such that:
\begin{itemize}
\vspace{-0.1cm}

	\item $D^{(Q_n)} \models \Sigma^{max}_B(Q,\Sigma)$, and 
	\item for each proper superset $\Sigma'$ of $\Sigma^{max}_B(Q,\Sigma)$ such that $\Sigma' \subseteq \Sigma$, $D^{(Q_n)} \models \Sigma'$ does {\em not} hold. 
\end{itemize}
\vspace{-0.52cm}
\end{theorem}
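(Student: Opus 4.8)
The plan is to exhibit $\Sigma^{max}_B(Q,\Sigma)$ explicitly as the set of those \emph{individual} dependencies of $\Sigma$ that are satisfied by $D^{(Q_n)}$, and then to observe that maximality and uniqueness follow immediately from the fact that, by the definition of $D \models \Sigma'$ (Section~\ref{chase-prelims}), satisfaction of a \emph{set} of dependencies by a fixed database is just the conjunction of satisfaction of its \emph{members}. Concretely, I would define
$$\Sigma^{max}_B(Q,\Sigma) := \{\, \sigma \in \Sigma : D^{(Q_n)} \models \sigma \,\}.$$
Then $D^{(Q_n)} \models \Sigma^{max}_B(Q,\Sigma)$ holds by the definition of $\models$ on a set, which establishes the first bullet.

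For the second bullet and for uniqueness, I would take any subset $\Sigma' \subseteq \Sigma$ with $D^{(Q_n)} \models \Sigma'$. By the definition of $\models$ on a set, $D^{(Q_n)} \models \sigma$ for every $\sigma \in \Sigma'$, so $\Sigma' \subseteq \Sigma^{max}_B(Q,\Sigma)$. Thus $\Sigma^{max}_B(Q,\Sigma)$ contains every subset of $\Sigma$ that is satisfied by $D^{(Q_n)}$; in particular no proper superset $\Sigma' \supsetneq \Sigma^{max}_B(Q,\Sigma)$ with $\Sigma' \subseteq \Sigma$ can be satisfied by $D^{(Q_n)}$, since such a $\Sigma'$ would have to contain some $\sigma \in \Sigma \setminus \Sigma^{max}_B(Q,\Sigma)$, and that $\sigma$ is by construction not satisfied by $D^{(Q_n)}$. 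The same containment argument shows that any set enjoying both bullet properties must coincide with $\Sigma^{max}_B(Q,\Sigma)$, which gives uniqueness (and, being maximal under inclusion, maximal size a fortiori).

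The one nontrivial point, and the step I expect to require the most care, is \emph{well-definedness}: the membership test $D^{(Q_n)} \models \sigma$ must not depend on which sound-chase result $Q_n$ is chosen. By Theorem~\ref{uniqueness-theorem}, $Q_n$ is determined only up to isomorphism after dropping duplicate subgoals whose predicates correspond to set-valued relations of $\cal D$, so I would argue invariance in two parts. First, satisfaction of an embedded dependency (egd or tgd) is preserved under isomorphism of canonical databases, since by Section~\ref{chase-prelims} it is defined purely through the existence of homomorphisms. Second, embedded dependencies are evaluated under set semantics, so whether $D^{(Q_n)} \models \sigma$ holds depends only on the core-set of $D^{(Q_n)}$; dropping a \emph{duplicate} (literally repeated) subgoal of a set-valued relation removes a repeated tuple but leaves the core-set unchanged, and hence cannot change whether any $\sigma$ holds. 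Combining these, $D^{(Q_n)} \models \sigma$ is invariant across the equivalence class of $Q_n$ fixed by Theorem~\ref{uniqueness-theorem}, so $\Sigma^{max}_B(Q,\Sigma)$ is well defined. Finally, the analogous statement for bag-set semantics follows from the same argument, with $(Q)_{\Sigma,B}$ and Theorem~\ref{uniqueness-theorem} replaced by their bag-set counterparts; there all relations are set valued by definition, so the core-set invariance is automatic.
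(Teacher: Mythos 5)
Your proof is correct, but it takes a genuinely more elementary route than the paper's. You define $\Sigma^{max}_B(Q,\Sigma)$ directly as $\{\sigma \in \Sigma : D^{(Q_n)} \models \sigma\}$ and get existence, maximality, and uniqueness essentially for free from the fact that satisfaction of a \emph{set} of dependencies is member-wise (so the satisfied subsets of $\Sigma$ are closed under union and hence have a unique maximum); the only real work in your argument is well-definedness across the equivalence class of sound-chase results given by Theorem~\ref{uniqueness-theorem}, which you correctly dispose of via isomorphism-invariance of $\models$ and the observation that dropping duplicate subgoals does not change the canonical database. The paper instead proves the theorem in Appendix~\ref{semant-appendix} through a state-transition framework: each dependency is classified, relative to each query in a sound-chase sequence, as pre-applicable, soundly applicable, unsoundly applicable, or post-applicable, and Proposition~\ref{state-trans-prop} shows that at the terminal query $Q_n$ every dependency that is still applicable is \emph{unsoundly} applicable; the paper then concludes that $\Sigma^{max}_B(Q,\Sigma)$ is exactly $\Sigma$ minus the dependencies whose chase step on $Q_n$ is unsound under bag semantics. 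The two characterizations coincide because, by the definition of chase termination under set semantics, a dependency is applicable to $Q_n$ precisely when $D^{(Q_n)}$ violates it. What the paper's heavier machinery buys is the constructive, chase-based characterization asserted immediately after the theorem and needed for the correctness of algorithm {\sc Max-Bag-$\Sigma$-Subset} (Theorem~\ref{correct-max-bag-sigma-theorem}), which identifies $\Sigma^{max}_B(Q,\Sigma)$ by testing soundness of chase steps rather than by testing satisfaction directly; what your route buys is the insight that, as literally stated, the theorem is an abstract fact about any fixed database and finite dependency set, independent of chase considerations, with the bag-semantics content confined to the well-definedness step.
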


\vspace{-0.1cm}


It turns out that the set $\Sigma^{max}_{B}(Q,\Sigma)$ is the result of removing from $\Sigma$ exactly those tgds $\sigma$ such that the chase step $Q_n \Rightarrow^{\sigma}_B Q'$, with some CQ outcome $Q'$, is not sound under bag semantics. This claim is immediate from the observation that for each dependency $\sigma$ in $\Sigma$ such that $\sigma$ is applicable to $Q_n$, $\sigma$ is {\em unsoundly} applicable to $Q_n$. See Appendix~\ref{semant-appendix} for the details. We make the same observation about the unique set $\Sigma^{max}_{BS}(Q,\Sigma) \subseteq \Sigma$ such that $\Sigma^{max}_{BS}(Q,\Sigma)$ is the maximal set of dependencies satisfied by the canonical database of the result of sound chase of $Q$ using $\Sigma$ under {\em bag-set} semantics. 

Not surprisingly, each of $\Sigma^{max}_{B}(Q,\Sigma)$ and $\Sigma^{max}_{BS}(Q,\Sigma)$  is query dependent. Recall that in Example~\ref{motivating-example} the canonical database  
of the query $Q_3 = (Q_4)_{\Sigma,B}$ 
does not satisfy dependency $\sigma_4$ in the set $\Sigma$ given in the example. At the same time, it is easy to see that for query $Q(X) \ :- \ p(X,Y), u(X,Z),$ 
the canonical database  
of the query $(Q)_{\Sigma,B}$ 
{\it does} satisfy dependency $\sigma_4$ in the {\em same} set $\Sigma$. 

We now establish a relationship between $\Sigma^{max}_{B}(Q,\Sigma)$ and $\Sigma^{max}_{BS}(Q,\Sigma)$ for a fixed pair $(Q,\Sigma)$. This relationship is immediate from Theorems~\ref{bag-chase-sound-theorem}, \ref{bag-set-chase-sound-theorem}, \ref{sigma-max-theorem}, and~\ref{bag-set-sigma-max-theorem}.  
\vspace{-0.4cm}

\begin{proposition}
\label{dep-subset-prop}
For $(Q,\Sigma)$ satisfying conditions of Theorem~\ref{sigma-max-theorem}, $\Sigma^{max}_{B}(Q,\Sigma) \subseteq  \Sigma^{max}_{BS}(Q,\Sigma) \subseteq \Sigma$.  
\end{proposition}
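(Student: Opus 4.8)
The plan is to prove the two inclusions separately. The rightmost inclusion $\Sigma^{max}_{BS}(Q,\Sigma) \subseteq \Sigma$ is immediate: by Theorem~\ref{bag-set-sigma-max-theorem} (the bag-set analogue of Theorem~\ref{sigma-max-theorem}) the set $\Sigma^{max}_{BS}(Q,\Sigma)$ is by construction a subset of $\Sigma$. So the entire content lies in establishing $\Sigma^{max}_{B}(Q,\Sigma) \subseteq \Sigma^{max}_{BS}(Q,\Sigma)$.

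For that, I would first record the key comparison between the two soundness characterizations. Comparing Theorem~\ref{bag-chase-sound-theorem} with Theorem~\ref{bag-set-chase-sound-theorem}, the bag-set condition on a tgd step is exactly the assignment-fixing requirement, whereas the bag condition is that same requirement plus the extra demand that every subgoal added by the step correspond to a relation that is set valued on all databases satisfying $\Sigma$; egd steps are sound under both semantics. Hence every chase step that is sound under bag semantics is also sound under bag-set semantics. Equivalently, this follows from a $\Sigma$-relativized version of Proposition~\ref{b-bs-s-implic-prop}: a step with $Q \equiv_{\Sigma,B} Q'$ also satisfies $Q \equiv_{\Sigma,BS} Q'$, because every set-valued $D \models \Sigma$ is in particular a bag-valued $D \models \Sigma$ and $Q(D,BS) = Q(D,B)$ on such $D$. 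Consequently the sound bag chase sequence witnessing $Q_n = (Q)_{\Sigma,B}$ is itself a sound bag-set chase sequence, and by uniqueness of the bag-set result it extends, through further sound bag-set steps, to the bag-set terminal result $Q_m = (Q)_{\Sigma,BS}$.

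I would then invoke the constructive characterization stated just before the proposition: $\Sigma^{max}_{B}(Q,\Sigma)$ is $\Sigma$ with exactly the tgds removed that are (necessarily unsoundly) applicable to $Q_n$ under bag semantics, and symmetrically $\Sigma^{max}_{BS}(Q,\Sigma)$ is $\Sigma$ with exactly the tgds removed that are applicable to $Q_m$ under bag-set semantics; in both cases all egds survive, since egd steps are always sound and hence are exhausted at any terminal result. Thus it suffices to show that whenever a tgd $\sigma$ is satisfied by $D^{(Q_n)}$ (so $\sigma \in \Sigma^{max}_{B}(Q,\Sigma)$) it is also satisfied by $D^{(Q_m)}$ (so $\sigma \in \Sigma^{max}_{BS}(Q,\Sigma)$). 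I would argue the contrapositive: if $\sigma$ is violated at $Q_m$, trace back along the extending sound bag-set sequence from $Q_n$ to the first step after which $\sigma$ becomes applicable, and show the violating homomorphism already had a counterpart witnessing applicability of $\sigma$ on $D^{(Q_n)}$.

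The hard part will be exactly this last reconciliation of the two distinct terminal results $Q_n$ and $Q_m$. Because $Q_m$ is obtained from $Q_n$ by genuine assignment-fixing tgd steps that insert new subgoals, satisfaction of a tgd is not monotone along a chase in general, so one cannot simply appeal to $Q_n$ being contained in $Q_m$. The delicate point is to verify that the subgoals added between $Q_n$ and $Q_m$ — which by Theorem~\ref{bag-chase-sound-theorem} are precisely those over relations not guaranteed to be set valued — cannot create a new non-extensible homomorphism from the left-hand side of a $\sigma$ that was already satisfied at $Q_n$. I expect to attack this by exploiting the assignment-fixing property of each extending step together with the common set-chase result $(Q)_{\Sigma,S}$, which dominates both $Q_n$ and $Q_m$ and satisfies all of $\Sigma$; the uniqueness results (Theorem~\ref{uniqueness-theorem} and its bag-set analogue) are what let me treat $Q_n$ and $Q_m$ as canonical and compare them through this common dominating result.
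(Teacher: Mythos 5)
Your opening steps coincide with what the paper actually does: the inclusion $\Sigma^{max}_{BS}(Q,\Sigma) \subseteq \Sigma$ is by construction, and the paper's entire justification of the left inclusion is that it is ``immediate from'' Theorems~\ref{bag-chase-sound-theorem}, \ref{bag-set-chase-sound-theorem}, \ref{sigma-max-theorem}, and~\ref{bag-set-sigma-max-theorem} --- i.e., from the fact that every chase step sound under bag semantics is sound under bag-set semantics, combined with the characterization of each $\Sigma^{max}$ set as $\Sigma$ minus the dependencies unsoundly applicable to the corresponding terminal chase result. That comparison of soundness is only meaningful at a \emph{single} query, however, and you are right to refuse to conflate the two anchor points: $\Sigma^{max}_{B}(Q,\Sigma)$ is computed at $Q_n=(Q)_{\Sigma,B}$ while $\Sigma^{max}_{BS}(Q,\Sigma)$ is computed at $Q_m=(Q)_{\Sigma,BS}$, and these are in general different queries. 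So your reduction to ``$D^{(Q_n)}\models\sigma$ implies $D^{(Q_m)}\models\sigma$'' isolates exactly the content that the paper's one-sentence proof skips; your proposal then stops there with only a plan, so as a proof it is incomplete.

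The deeper problem is that this gap cannot be closed: the bridging claim your plan needs is false, and with it the proposition as stated. Take Example~\ref{motivating-example} and add one fresh binary relation symbol $N$ (no keys, not set enforced) together with the regularized tgd $\sigma_9:\ r(X)\rightarrow \exists Z\ n(X,Z)$. Under bag semantics, $\sigma_3$ is never soundly applicable ($R$ is not set valued), so the terminal bag result is still $Q_3$, which has no $r$-subgoal; hence $D^{(Q_3)}$ satisfies $\sigma_9$ vacuously and $\sigma_9\in\Sigma^{max}_{B}$. Under bag-set semantics, $\sigma_3$ \emph{is} sound (a full tgd is assignment fixing), so the terminal bag-set result is $Q_2$, which contains $r(X)$; now $\sigma_9$ is applicable to $Q_2$ but not assignment fixing (nothing in the set-chase of its associated test query equates the two fresh $n$-witnesses), so by Theorem~\ref{bag-set-chase-sound-theorem} it is only unsoundly applicable, it remains violated at the terminal result, and $\sigma_9\notin\Sigma^{max}_{BS}$. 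Thus a dependency that is merely pre-applicable (vacuously satisfied) at the bag terminal can be unsoundly applicable at the bag-set terminal: the subgoals added between $Q_n$ and $Q_m$ \emph{do} create new non-extensible homomorphisms, which is precisely what your ``hard part'' would have to rule out. Appealing to the common set-chase result $(Q)_{\Sigma,S}$ cannot help, since it satisfies all of $\Sigma$ and carries no information about which left-hand-side triggers are absent from $Q_n$ but present in $Q_m$. In short, your attempt exposes a genuine flaw rather than a provable lemma; the inclusion holds only under an additional hypothesis, e.g., that every dependency whose left-hand side maps into $(Q)_{\Sigma,BS}$ already maps into $(Q)_{\Sigma,B}$ --- which is what happens in Example~\ref{motivating-example}, where all tgds are triggered by $p(X,Y)$ alone, and which is why the paper's illustration does not reveal the problem.
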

\vspace{-0.2cm}

Query $Q_4$ and dependencies $\Sigma$ of Example~\ref{motivating-example} can be used to show that both subset relationships can be proper: $\Sigma^{max}_{B}(Q,\Sigma) \subset \Sigma^{max}_{BS}(Q,\Sigma) \subset \Sigma$.

We now outline algorithm {\sc Max-Bag-$\Sigma$-Subset}, which accepts as inputs a CQ query $Q$ and a finite set $\Sigma$ of embedded dependencies such that $(Q)_{\Sigma,S}$ exists. The algorithm constructs the set $\Sigma^{max}_B(Q,\Sigma)$ as specified in Theorem~\ref{sigma-max-theorem}. The counterpart of {\sc Max-Bag-$\Sigma$-Subset} for  bag-set semantics can be found in Appendix~\ref{semant-appendix}. 

\nop{

 of $\Sigma$, such that $D^{((Q)_{\Sigma,B})} \models \Sigma^{max}_B(Q,\Sigma)$, where $(Q)_{\Sigma,B}$ is the (unique) 
 result of sound chase for $Q$ and $\Sigma$ under {\em bag} semantics. 
 (The extensions of all the results and discussions in this section to the case of bag-set semantics are straightforward and are omitted.) \reminder{Explain that $\Sigma^{max}_B(Q,\Sigma)$ is, in general, query dependent; and is also chase-step dependent, but in a sense analogous to that of set semantics. That is, if a dependency is at first unsoundly applicable, it may become *soundly* applicable later.}
 } 
 
 \vspace{-0.1cm}

 \begin{algorithm}[htp]

\label{algo: example}

\caption{Max-Bag-$\Sigma$-Subset($Q,\Sigma$)}

\SetKwInOut{Input}{Input}

\SetKwInOut{Output}{Output} 

\SetKwInOut{Feature}{Feature}

\dontprintsemicolon

\Input{CQ query $Q$, set $\Sigma$ of embedded dependencies such that chase result $(Q)_{\Sigma,S}$ exists.}

\Output{$\Sigma^{max}_B(Q,\Sigma) \subseteq \Sigma$ specified in Theorem~\ref{sigma-max-theorem}.}

1. $(Q)_{\Sigma,B} \ := \ soundChase(B,Q,\Sigma); $ 
  
2. $\Sigma^{max}_B(Q,\Sigma) \ := \ \Sigma ;$
  
3. \For{each $\sigma$ in $\Sigma$}
{
4.       \If{$soundChaseStep(\sigma,B,(Q)_{\Sigma,B}) = false$}
      {
5.     $\Sigma^{max}_B(Q,\Sigma) \ := \ \Sigma^{max}_B(Q,\Sigma) - \{ \sigma \} ;$
      }

}

6.  {\bf return} $\Sigma^{max}_B(Q,\Sigma)$;

\end{algorithm}
\vspace{-0.1cm}

The algorithm begins (line 1 of the pseudocode) by computing the result  $(Q)_{\Sigma,B}$ of sound chase of $Q$ using $\Sigma$ under bag semantics ($B$). This result exists and is unique by Theorem~\ref{uniqueness-theorem}. Then the algorithm removes from the set $\Sigma$ all dependencies that are unsoundly applicable to $(Q)_{\Sigma,B}$, see lines 2-5 of the pseudocode. Procedure  $soundChaseStep(\sigma,B,(Q)_{\Sigma,B})$ (line 4) returns $true$ if and only if the bag-chase step using $\sigma$ on $(Q)_{\Sigma,B}$ is sound by Theorem~\ref{bag-chase-sound-theorem}. 

We obtain the following result by construction of algorithm {\sc Max-Bag-$\Sigma$-Subset}. 
\vspace{-0.2cm}

\begin{theorem}{{\bf (Correctness and complexity of  {\sc Max-Bag-$\Sigma$-Subset})}}
\label{correct-max-bag-sigma-theorem}
Given a CQ query $Q$ and set of embedded dependencies $\Sigma$, such that there exists a  {\em set-}chase result $(Q)_{\Sigma,S}$ for $Q$ and $\Sigma$. Then algorithm {\sc Max-Bag-$\Sigma$-Subset}  returns in finite time the set $\Sigma^{max}_B(Q,\Sigma)$ specified in Theorem~\ref{sigma-max-theorem}. If dependencies $\Sigma$ are weakly acyclic, then the  runtime of the algorithm is polynomial in the size of $Q$ and exponential in the size of $\Sigma$. 
\end{theorem}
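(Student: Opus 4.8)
The plan is to prove correctness ``by construction,'' reducing the statement to the characterization of $\Sigma^{max}_B(Q,\Sigma)$ already obtained in the discussion following Theorem~\ref{sigma-max-theorem}. First I would argue that the algorithm is well defined: line~1 invokes sound bag chase on $(Q,\Sigma)$, which terminates in finite time by Proposition~\ref{chase-termination-prop} (since $(Q)_{\Sigma,S}$ is assumed to exist) and returns a result $Q_n = (Q)_{\Sigma,B}$ that is unique up to the relevant equivalence by Theorem~\ref{uniqueness-theorem}. Hence the canonical database $D^{(Q_n)}$ referenced in Theorem~\ref{sigma-max-theorem} is determined, and the loop in lines~3--5 inspects each $\sigma\in\Sigma$ exactly once, so the algorithm clearly halts once line~1 has halted.

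The heart of the correctness argument is the equivalence that $soundChaseStep(\sigma,B,Q_n)$ returns $true$ if and only if $D^{(Q_n)}\models\sigma$. I would establish this in two moves. First, recall the standard fact that for a regularized dependency $\sigma$, the canonical database satisfies $\sigma$, i.e.\ $D^{(Q_n)}\models\sigma$, precisely when no set-chase step using $\sigma$ is applicable to $Q_n$ (every homomorphism from the left-hand side of $\sigma$ into the body of $Q_n$ already extends to the right-hand side). Second, I would use the fact that $Q_n$ is a terminal sound-chase result: no sound bag-chase step applies to $Q_n$, for otherwise sound chase would not have halted. Consequently, every $\sigma$ that is set-applicable to $Q_n$ must be \emph{unsoundly} applicable --- this is exactly the observation stated after Theorem~\ref{sigma-max-theorem}. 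Combining the two moves, $soundChaseStep$ returns $false$ exactly for the set-applicable $\sigma$, i.e.\ exactly for those $\sigma$ with $D^{(Q_n)}\not\models\sigma$, and returns $true$ otherwise. Since a database satisfies a set of dependencies iff it satisfies each one, the maximal subset of $\Sigma$ satisfied by $D^{(Q_n)}$ is $\{\sigma\in\Sigma: D^{(Q_n)}\models\sigma\}$, which by the equivalence is precisely the set the algorithm retains; by Theorem~\ref{sigma-max-theorem} this set is $\Sigma^{max}_B(Q,\Sigma)$. I would also note that no special handling of egds is needed: an egd is soundly applicable whenever applicable (Theorem~\ref{bag-chase-sound-theorem}, item~2), so no egd is set-applicable to the terminal $Q_n$, and every egd is retained --- consistent with $D^{(Q_n)}\models\Sigma^{max}_B(Q,\Sigma)$.

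For the complexity claim under weakly acyclic $\Sigma$, I would bound each line using the set-semantics bounds of Theorem~\ref{complexity-thm}. Line~1 computes $(Q)_{\Sigma,B}$ in time polynomial in $|Q|$ and exponential in $|\Sigma|$, and the size of $Q_n$ is bounded likewise. The loop runs $|\Sigma|$ times; each call to $soundChaseStep(\sigma,B,Q_n)$ must (i) test set-applicability of $\sigma$ to $Q_n$, a homomorphism search over the body of $Q_n$, and (ii) when applicable, test the assignment-fixing condition of Theorem~\ref{bag-chase-sound-theorem} by building the associated test query $Q_n^{\sigma,h,\theta}$ and computing its set-chase $(Q_n^{\sigma,h,\theta})_{\Sigma,S}$. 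The latter chase exists by Proposition~\ref{chase-assoc-term-prop} and is again bounded by Theorem~\ref{complexity-thm}. Multiplying the $|\Sigma|$ iterations by these per-iteration bounds keeps the total within polynomial in $|Q|$ and exponential in $|\Sigma|$. For finite-time termination in the general (non--weakly-acyclic but set-terminating) case it suffices that the loop is finite and each subroutine halts, which again follows from Propositions~\ref{chase-termination-prop} and~\ref{chase-assoc-term-prop}.

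The main obstacle, in my view, is not the complexity bookkeeping but pinning down the precise semantics of $soundChaseStep$ on dependencies that are \emph{not} set-applicable to $Q_n$, and checking that the ``terminal $\Rightarrow$ unsoundly applicable'' observation is airtight across both tgds and egds. I would make this rigorous by treating a non-applicable $\sigma$ as vacuously sound (so $soundChaseStep$ returns $true$ and the dependency is kept, matching $D^{(Q_n)}\models\sigma$), and by invoking the termination of sound chase to rule out any soundly-applicable residual step at $Q_n$. With these two points nailed down, correctness follows immediately from the characterization of $\Sigma^{max}_B(Q,\Sigma)$, and the runtime bound from the per-line estimates above.
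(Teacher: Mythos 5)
Your proof is correct and follows essentially the same route as the paper: the paper establishes this theorem ``by construction of the algorithm,'' relying on exactly the observation you use --- that every dependency applicable (in the set-chase sense) to the terminal sound-chase result $(Q)_{\Sigma,B}$ is \emph{unsoundly} applicable, so $soundChaseStep$ retains precisely the dependencies satisfied by the canonical database of $(Q)_{\Sigma,B}$ (the characterization of $\Sigma^{max}_B(Q,\Sigma)$ from Theorem~\ref{sigma-max-theorem}) --- together with Proposition~\ref{chase-termination-prop}, Theorem~\ref{uniqueness-theorem}, and the complexity bounds of Theorem~\ref{complexity-thm}. Your write-up merely makes explicit details the paper leaves implicit, namely the vacuously-sound treatment of non-applicable dependencies and the per-iteration cost of the assignment-fixing test via $(Q_n^{\sigma,h,\theta})_{\Sigma,S}$ and Proposition~\ref{chase-assoc-term-prop}.
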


\vspace{-0.5cm}

\nop{

\subsection{Semantics of Sound Chase Results}

Theorem~\ref{uniqueness-theorem} and its analog for the case of bag-set semantics are somewhat surprising in presence of Example~\ref{real-he-motivating-example}.  Indeed, by definition of applicability of embedded dependencies in chase under set semantics,  all dependencies applicable in chase yield sound chase steps. In contrast, under bag or bag-set semantics we need to distinguish between {\em sound} and {\em unsound} applicability of dependencies. As illustrated in  Example~\ref{real-he-motivating-example}, dependency $\sigma_1$ is only unsoundly applicable to query $Q_4$, but is soundly applicable to an intermediate result of chasing $Q_4$ under $\Sigma$. 

Observe that if we were using the notion of key-based dependencies (Definition~\ref{old-key-based-tgds-def}) instead of the notion of key-based chase steps (Definition~\ref{key-based-tgds-def}), then  in  Example~\ref{real-he-motivating-example} we would not be able to find {\it all} formulations $Q'_4$ of $Q_4$ that are bag-equivalent to $Q_4$ under the dependencies $\Sigma$ of Example~\ref{motivating-example}. 
\begin{example}
Consider  query 
\begin{tabbing}
$Q'_4(X) \ :- \ p(X,Y), t(X,Y,W)$. 
\end{tabbing}
which is the {\em terminal} result of sound chase of $Q_4$ using just the key-based tgds (i.e., just $\sigma_2$) and the egds in $\Sigma$. While $Q'_4 \equiv_{\Sigma,B} Q_4$ by the guarantee of soundness of bag chase under key-based tgds, it also holds that $Q'_4 \equiv_{\Sigma,B} Q_3$, whereas it is not true (cf. Theorem~\ref{uniqueness-theorem}) that $Q'_4 \equiv_B Q_3$  in the absence of $\Sigma$.
\end{example}

This observation leads us to a characterization of the semantics of the result of sound chase, for the cases of bag and bag-semantics for query evaluation. This characterization settles the problem of which set (or sets) of dependencies $\Sigma'$ is (are) satisfied by the canonical database $D^{(Q_n)}$ of the result $Q_n$ of chasing the input CQ query $Q$ in presence of embedded dependencies $\Sigma$, for the cases of bag and bag-set semantics. We are referring here to the discussion in the introduction to Section~\ref{new-sound-chase-sec}. (We assume throughout this subsection the existence of the chase result $Q_n$.) Clearly, at least some sets of dependencies $\Sigma'$ do not coincide with the original set $\Sigma$. (Recall the discussion in the beginning of Section~\ref{new-sound-chase-sec}.) For instance, in Example~\ref{motivating-example} it does not hold that the canonical database for query $Q_3$ satisfies dependency $\sigma_4$ of Example~\ref{motivating-example}. Recall that $Q_3$ is the result of sound chase of $Q_4$ using $\Sigma$ under bag semantics.

We now provide algorithm {\sc Max-Bag-$\Sigma$-Subset}, which accepts as inputs a CQ query $Q$ and a finite set $\Sigma$ of embedded dependencies such that there exists a chase result  for $Q$ and $\Sigma$ under set semantics. The algorithm constructs a {\em unique} maximal-size subset $\Sigma^{max}_B(Q)$ of $\Sigma$, such that $D^{((Q)_{\Sigma,B})} \models \Sigma^{max}_B(Q)$, where $(Q)_{\Sigma,B}$ is the (unique) 
 result of sound chase for $Q$ and $\Sigma$ under {\em bag} semantics. 
 (The extensions of all the results and discussions in this section to the case of bag-set semantics are straightforward and are omitted.) \reminder{Explain that $\Sigma^{max}_B(Q)$ is, in general, query dependent; and is also chase-step dependent, but in a sense analogous to that of set semantics. That is, if a dependency is at first unsoundly applicable, it may become *soundly* applicable later.}
 
 \begin{algorithm}[htp]

\label{algo: example}

\caption{Max-Bag-$\Sigma$-Subset}

\SetKwInOut{Input}{Input}

\SetKwInOut{Output}{Output} 

\SetKwInOut{Feature}{Feature}

\dontprintsemicolon

\Input{CQ query $Q$, set $\Sigma$ of embedded dependencies such that chase result $(Q)_{\Sigma,S}$ exists}

\Output{$\Sigma^{max}_B(Q) \subseteq \Sigma$ s. t. (1) $D^{((Q)_{\Sigma,B})} \models \Sigma^{max}_B(Q)$, and (2) $\forall \ \Sigma'$ such that $\Sigma^{max}_B(Q) \subset \Sigma' \subseteq \Sigma$,  $D^{((Q)_{\Sigma,B})} \models \hspace{-0.27cm} / \hspace{0.2cm} \Sigma'$}

1. $(Q)_{\Sigma,B} \ := \ soundChase(Q,\Sigma,B); $ 
  
2. $\Sigma^{max}_B(Q) \ := \ \Sigma ;$
  
3. \For{each $\sigma$ in $\Sigma$}
{
4.       \If{$soundChaseStep((Q)_{\Sigma,B},\sigma,B) = false$}
      {
5.     $\Sigma^{max}_B(Q) \ := \ \Sigma^{max}_B(Q) - \{ \sigma \} ;$
      }

}

6.  {\bf return} $\Sigma^{max}_B(Q)$;

\end{algorithm}

The algorithm begins (line 1 of the pseudocode) by computing the result  $(Q)_{\Sigma,B}$ of sound chase of $Q$ using $\Sigma$ under bag semantics. This result is unique by Theorem~\ref{uniqueness-theorem}. 

We obtain the following result by construction of algorithm {\sc Max-Bag-$\Sigma$-Subset}. 

\begin{theorem}{{\bf (Correctness of algorithm {\sc Max-Bag-$\Sigma$-Subset})}}
Given CQ query $Q$ and finite set of embedded dependencies $\Sigma$, such that there exists a  chase result $(Q)_{\Sigma,S}$ for $Q$ and $\Sigma$ under {\em set} semantics. For the result $(Q)_{\Sigma,B}$ of sound chase of $Q$ using $\Sigma$ under {\em bag} semantics, let $D^{((Q)_{\Sigma,B})}$ be the canonical database for $(Q)_{\Sigma,B}$. Then algorithm {\sc Max-Bag-$\Sigma$-Subset}($Q, \Sigma$) finds the  maximal subset $\Sigma^{max}_B(Q)$ of $\Sigma$ such that $D^{((Q)_{\Sigma,B})} \models \Sigma^{max}_B(Q)$. 
\end{theorem}

In proving this theorem, need to talk about lhs-applicable rhs, applicable, how unsoundly-applicable may become soundly-applicable but *not* vice versa

  \reminder{The meaning of $\Sigma^{max}_B(Q)$ is ``the maximal subset of $\Sigma$ such that for each dependency $\sigma$ in set $\Sigma - \Sigma^{max}_B(Q)$ (understood as set difference) $\sigma$ is applicable to $Q_n$ in an unsound way!''} 
 
 \reminder{Also note that if, for the bag-semantics case, we modify the definition of $D^{(Q_n)}$ to include {\it one duplicate tuple in each non-set-valued relation}, then for this modification $D_{bag}^{(Q_n)}$ of $D^{(Q_n)}$ it still holds that $D_{bag}^{(Q_n)} \models \Sigma^{max}_B(Q)$.
 }
 } 

\nop{

\section{Key-Based Tgds}

\reminder{Do I need this section at all? --- also see matching appendix}

We now discuss construction, from the given set of embedded dependencies, of sets of only those dependencies that ensure sound chase steps under bag or bag-set semantics. 
Given a CQ query $Q$ and a set of embedded dependencies $\Sigma_S$, let $\Sigma_B$ be a subset of $\Sigma_S$, such that chase steps using all dependencies in $\Sigma_B$  are sound under bag semantics. It is easy to see that there exists a unique set $\Sigma^{max}_B(Q)$, such that each $\Sigma_B$ is a subset of $\Sigma^{max}_B(Q)$. 
Similarly, let $\Sigma_{BS}$ be a subset of $\Sigma_S$, such that chase steps using all dependencies in $\Sigma_{BS}$  are sound under bag-set semantics. Then there exists a unique set $\Sigma^{max}_{BS}$, such that for each $\Sigma_{BS}$ 
it holds that $\Sigma_{BS} \subseteq \Sigma^{max}_{BS}$.

\begin{proposition}
\label{dep-subset-prop}
Given a CQ query $Q$ and a set of embedded dependencies $\Sigma_S$. Let $\Sigma^{max}_B$ ($\Sigma^{max}_{BS}$, respectively) be the maximal subset of $\Sigma_S$, such that chase steps using all dependencies in $\Sigma^{max}_B$ ($\Sigma^{max}_{BS}$, respectively)  are sound. Then $\Sigma^{max}_B \subseteq \Sigma^{max}_{BS} \subseteq \Sigma_S$.
\end{proposition}

The 
proof is immediate from our results on soundness of chase steps under each of the three semantics.

\begin{theorem}
\label{finite-chase-thm}
Given a CQ query $Q$ and a set of embedded dependencies $\Sigma$. Suppose sound chase of $Q$ under $\Sigma$ terminates in finite time under set semantics (bag-set semantics, respectively). Then sound chase of $Q$ under $\Sigma$ also terminates in finite time under bag-set semantics (bag semantics, respectively).
\end{theorem}

The proof of Theorem~\ref{finite-chase-thm} is straightforward from Proposition~\ref{dep-subset-prop} and from our results on soundness of chase steps under each of the three semantics.

} 

\section{$\Sigma$-Equivalence Tests for CQ and CQ-Aggregate Queries}


\nop{
\reminder{
Then theorems on dependency-free equivalence tests.

Then on extending C\&B templates to bag and bag-set semantics.

Then on aggregate queries: aggregate analog of Theorems~\ref{bag-chase-equiv-theorem} and~\ref{bag-set-chase-equiv-theorem}!

Then on CQAC queries? --- my sound extension?

Then on view-based versions???

Then on unchase???
}
} 

We begin this section by providing equivalence tests for CQ queries in presence of embedded dependencies under bag and bag-set semantics, see Section~\ref{equiv-tests-he-subsection}. These results allow us to develop:  (1) Equivalence tests for CQ queries {\it with grouping and aggregation} in presence of embedded dependencies, see Section~\ref{equiv-tests-third-subsection}, and 
(2) Sound and complete (whenever {\em set-}chase on the inputs terminates) algorithms for solving instances of the CQ class of the Query-Reformulation Problem under each of bag and bag-set semantics, as well as for the CQ-aggregate class of the problem, see Section~\ref{equiv-tests-second-subsection}. (Recall that throughout the paper we assume that all given sets of embedded dependencies are finite and regularized.) 


\vspace{-0.1cm}

\subsection{Equivalence Tests for CQ Queries}
\label{equiv-tests-he-subsection}

The main results of this section for CQ queries, Theorems~\ref{bag-chase-equiv-theorem} and~\ref{bag-set-chase-equiv-theorem}, are the analogs, for bag and bag-set semantics, of the dependency-free test of Theorem~\ref{chase-theorem} for equivalence of CQ queries under set semantics and under embedded dependencies. 

\vspace{-0.2cm}

\begin{theorem}
\label{bag-chase-equiv-theorem}
Given CQ queries $Q$ and $Q'$, and a set of embedded dependencies $\Sigma$ 
such that there exist {\em set-}chase results $(Q)_{\Sigma,S}$ for $Q$ and $(Q')_{\Sigma,S}$ for $Q'$.  Then 
$Q \equiv_{\Sigma,B} Q'$ if and only if $(Q)_{\Sigma,B} \equiv_{B} (Q')_{\Sigma,B}$ in the absence of all   dependencies other than the set-enforcing dependencies on stored relations.\footnote{See Theorem~\ref{cv-updated-thm} and discussion of Theorem~\ref{uniqueness-theorem}.}
\end{theorem}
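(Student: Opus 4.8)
The plan is to prove the ``iff'' by factoring both queries through their terminal sound-chase results, in exact parallel with the set-semantics statement of Theorem~\ref{chase-theorem}, and then isolating the single genuinely new ingredient: lifting bag-equivalence \emph{under} $\Sigma$ to bag-equivalence in the absence of dependencies. First I would record the preservation fact that a sound chase step preserves equivalence under $\Sigma$ \emph{by definition} (Section~\ref{making-chase-sound-section}): along any sound chase sequence, transitivity of $\equiv_{\Sigma,B}$ gives $Q \equiv_{\Sigma,B} (Q)_{\Sigma,B}$ and $Q' \equiv_{\Sigma,B} (Q')_{\Sigma,B}$, where existence of these terminal results follows from Proposition~\ref{chase-termination-prop}. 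This reduces the theorem to the following statement about the terminal results $Q_n := (Q)_{\Sigma,B}$ and $Q'_n := (Q')_{\Sigma,B}$: that $Q_n \equiv_{\Sigma,B} Q'_n$ holds if and only if $Q_n \equiv_B Q'_n$ in the absence of all dependencies other than the set-enforcing ones.

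One direction of this reduced statement is immediate. Every database $D \models \Sigma$ is in particular a database on which the set-enforced relations are sets, since the set-enforcing egds belong to $\Sigma$; hence equivalence on the full class of databases considered by Theorem~\ref{cv-updated-thm} specializes to equivalence on the subclass $\{D : D \models \Sigma\}$, i.e. to $\equiv_{\Sigma,B}$. Chaining this specialization with the two preservation equivalences above yields $Q \equiv_{\Sigma,B} Q'$ and settles the ``if'' part of the theorem.

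The main obstacle is the converse lifting, from $Q_n \equiv_{\Sigma,B} Q'_n$ to absolute $Q_n \equiv_B Q'_n$. Here the set-semantics argument does \emph{not} transfer verbatim: under set semantics one uses that the canonical database of a terminal chase result satisfies $\Sigma$, but the canonical database $D^{(Q_n)}$ of a terminal \emph{bag}-chase result satisfies only the maximal subset $\Sigma^{max}_B(Q,\Sigma) \subseteq \Sigma$ of Theorem~\ref{sigma-max-theorem}, so it cannot serve directly as a test database witnessing absolute equivalence. My plan is to first reduce the absolute statement to an isomorphism statement through Theorem~\ref{cv-updated-thm}: it suffices to show that $Q_n$ and $Q'_n$ agree up to isomorphism after deleting duplicate subgoals over set-valued relations. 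To extract this isomorphism I would test the hypothesis $Q_n \equiv_{\Sigma,B} Q'_n$ on a model of $\Sigma$ obtained by set-completing $D^{(Q_n)}$ (and symmetrically $D^{(Q'_n)}$), reading off mutual bag-containment $Q_n \sqsubseteq_{\Sigma,B} Q'_n$ and $Q'_n \sqsubseteq_{\Sigma,B} Q_n$ from the equality of bag answers. The delicate point, and where the real work lies, is controlling how the tuples introduced by the completion contribute to the multiplicities of the distinguished head tuple: I would use the assignment-fixing characterization of Theorem~\ref{bag-chase-sound-theorem} to track these contributions and match them across $Q_n$ and $Q'_n$, so that the bag-containments force the per-predicate subgoal-domination condition of the (extended) Chaudhuri--Vardi bag-noncontainment result underlying Theorem~\ref{cv-updated-thm}. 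Domination in both directions gives equal predicate counts, and the accompanying homomorphisms must then be bijective, yielding the required isomorphism modulo set-valued duplicates; Theorem~\ref{uniqueness-theorem} guarantees this conclusion is independent of the particular sound chase sequences chosen for $Q$ and $Q'$.
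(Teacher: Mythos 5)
Your reduction to the terminal sound-chase results $Q_n=(Q)_{\Sigma,B}$ and $Q'_n=(Q')_{\Sigma,B}$ (via chase soundness, transitivity, and Proposition~\ref{chase-termination-prop}) and your ``if'' direction coincide with the paper's treatment (Proposition~\ref{intro-sigma-prop} and Lemma~\ref{bag-chase-second-lemma}). For the hard ``only-if'' direction, however, you take a route the paper deliberately avoids, and that is exactly where your argument has a genuine gap. The paper never extracts an isomorphism between $Q_n$ and $Q'_n$: Lemma~\ref{bag-chase-first-lemma} argues database-by-database. Given any $D$ satisfying only the set-enforcing constraints, it classifies the dependencies of $\Sigma$ that $D$ violates using the state machinery behind Theorem~\ref{sigma-max-theorem}: dependencies irrelevant to the chase results do not matter, and the relevant ones are ``built into'' $Q_n$ and $Q'_n$, so the sub-database $D'$ consisting of the tuples hit by satisfying assignments of $Q_n$ and $Q'_n$ satisfies them; applying $Q_n \equiv_{\Sigma,B} Q'_n$ to $D'$ and noting that only tuples of $D'$ contribute to either answer gives agreement on $D$. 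No isomorphism and no counting argument are needed.

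Your plan instead requires proving that $Q_n$ and $Q'_n$ become isomorphic after deleting duplicate subgoals over set-valued relations, and only then invoking the If direction of Theorem~\ref{cv-updated-thm}. The gap is in producing that isomorphism. Your counting step can at best deliver equality of per-predicate subgoal counts, and even that needs a construction you do not spell out: one must blow up multiplicities of non-set relations over a fixed $\Sigma$-model and check that the scaled databases still satisfy $\Sigma$; the assignment-fixing notion of Theorem~\ref{bag-chase-sound-theorem} is a chase-soundness tool and does not perform this bookkeeping. More importantly, equal counts are far from isomorphism, and the ``accompanying homomorphisms'' that you assert ``must then be bijective'' are never constructed --- and need not exist a priori as maps between $Q_n$ and $Q'_n$. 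From $Q_n \equiv_{\Sigma,B} Q'_n$ you can elementarily get only $\Sigma$-relative set-equivalence, and by Theorem~\ref{chase-theorem} that yields containment mappings into the \emph{set}-chase results $(Q_n)_{\Sigma,S}$ and $(Q'_n)_{\Sigma,S}$, which in general contain subgoals absent from $Q_n$ and $Q'_n$ (precisely the subgoals that sound bag chase refuses to add). Likewise, on your set-completed canonical database the witnessing assignment for $Q'_n$ may use completion tuples, so it does not factor through $D^{(Q_n)}$. Filling this hole by appealing to Proposition~\ref{sigma-b-bs-s-implic-prop} or Theorem~\ref{bag-set-chase-equiv-theorem} would be circular, since their proofs rest on the theorems being proved. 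As it stands, the isomorphism-extraction step fails, and with it your only-if direction.
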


\vspace{-0.4cm}

\begin{theorem}
\label{bag-set-chase-equiv-theorem}
Given CQ queries $Q$ and $Q'$, and a set of embedded dependencies $\Sigma$ 
such that there exist {\em set}-chase results $(Q)_{\Sigma,S}$ for $Q$ and $(Q')_{\Sigma,S}$ for $Q'$.  Then 
$Q \equiv_{\Sigma,BS} Q'$ if and only if $(Q)_{\Sigma,BS} \equiv_{BS} (Q')_{\Sigma,BS}$ in the absence of dependencies.
\end{theorem}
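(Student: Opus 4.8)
The plan is to establish Theorem~\ref{bag-set-chase-equiv-theorem} as the bag-set analog of the set-semantics characterization in Theorem~\ref{chase-theorem}, using the soundness characterization of Theorem~\ref{bag-set-chase-sound-theorem} as the engine. First I would record the two facts that make the statement well posed: by Proposition~\ref{chase-termination-prop} the sound bag-set chase of $Q$ (and of $Q'$) terminates whenever the set-chase does, and by the bag-set analog of Theorem~\ref{uniqueness-theorem} the terminal result $(Q)_{\Sigma,BS}$ is unique up to $\equiv_{BS}$; hence both sides of the claimed equivalence are well defined. The central structural input is that every sound bag-set chase step preserves $\equiv_{\Sigma,BS}$ by definition of soundness, so a routine induction along the (finite) chase sequence gives $Q \equiv_{\Sigma,BS} (Q)_{\Sigma,BS}$ and $Q' \equiv_{\Sigma,BS} (Q')_{\Sigma,BS}$.

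The direction from right to left is immediate. Suppose $(Q)_{\Sigma,BS} \equiv_{BS} (Q')_{\Sigma,BS}$ with no dependencies assumed. Since every set-valued database satisfying $\Sigma$ is in particular a set-valued database, dependency-free bag-set equivalence specializes to $(Q)_{\Sigma,BS} \equiv_{\Sigma,BS} (Q')_{\Sigma,BS}$. Chaining this with the two soundness equivalences above and using transitivity of $\equiv_{\Sigma,BS}$ yields $Q \equiv_{\Sigma,BS} Q'$.

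For the forward direction I would again start from soundness: assuming $Q \equiv_{\Sigma,BS} Q'$, transitivity gives $(Q)_{\Sigma,BS} \equiv_{\Sigma,BS} (Q')_{\Sigma,BS}$, so the problem reduces to the following claim about two \emph{terminal} sound bag-set chase results $P$ and $P'$: if $P \equiv_{\Sigma,BS} P'$ then $P \equiv_{BS} P'$ with no dependencies. To prove the claim I would reduce dependency-free bag-set equivalence, via Theorem~\ref{cv-theorem}(2), to exhibiting an isomorphism between the canonical representations $P_c$ and $P'_c$ obtained by removing duplicate subgoals. I would obtain the two component containment mappings by evaluating $P$ and $P'$ under bag-set semantics on witness databases built from the frozen canonical databases of $P_c$ and $P'_c$; since such a frozen database need not satisfy all of $\Sigma$ (by the bag-set analog of Theorem~\ref{sigma-max-theorem} it is only guaranteed to satisfy the subset $\Sigma^{max}_{BS}(Q,\Sigma)$ of $\Sigma$), I would first set-chase it to a finite instance $\widehat{D} \models \Sigma$, finite because set-chase terminates, and then read off the multiplicities of the frozen head tuple in $P(\widehat{D},BS)$ and $P'(\widehat{D},BS)$, which coincide because $\widehat{D} \models \Sigma$ and $P \equiv_{\Sigma,BS} P'$.

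The hard part will be this last step, namely controlling multiplicities on $\widehat{D}$: the chase-generated tuples could in principle create new satisfying assignments and thereby inflate the head multiplicity, destroying the count that encodes the subgoal structure. Here I would use that $P$ and $P'$ are terminal, so that no sound (i.e., assignment-fixing) tgd step applies, together with the assignment-fixing analysis underlying Theorem~\ref{bag-set-chase-sound-theorem}, to argue that the tuples introduced by the completing set-chase contribute no extra assignments mapping the head to its frozen value; consequently the multiplicity of the frozen head in $P(\widehat{D},BS)$ equals the number of head-preserving self-mappings of $P_c$. Matching these counts in both directions, and invoking the necessary condition re-proved in this paper that bag-set containment forces the container to carry at least as many subgoals per predicate as the containee, upgrades the two set-level containment mappings into a subgoal-count-preserving pair, hence a bijection, yielding the desired isomorphism $P_c \cong P'_c$ and completing the proof.
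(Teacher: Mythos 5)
Your overall skeleton matches the paper's: well-posedness via Proposition~\ref{chase-termination-prop} and Theorem~\ref{bag-set-uniqueness-theorem}, the right-to-left direction by specializing dependency-free equivalence to $\Sigma$-databases and chaining through soundness of the chase (this is exactly the paper's Lemma~\ref{bag-chase-second-lemma} together with Proposition~\ref{intro-sigma-prop}), and the reduction of the left-to-right direction to the claim that $\Sigma$-equivalent \emph{terminal} sound-chase results are dependency-free bag-set equivalent. The gap is in how you prove that reduced claim. Your pivotal step asserts that, because $P$ is terminal, the tuples added when you set-chase the frozen canonical database $D^{(P_c)}$ up to a model $\widehat{D} \models \Sigma$ contribute no extra head-preserving assignments, so that the multiplicity of the frozen head tuple in $P(\widehat{D},BS)$ equals the number of head-preserving self-mappings of $P_c$. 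That is false. Take $\Sigma = \{ \sigma \}$ with $\sigma: \ p(X) \rightarrow \exists Z \ r(X,Z)$, and $P(X) \ :- \ p(X),\ p(Y),\ r(Y,W)$. Chase with $\sigma$ is applicable to $P$ via $X \mapsto X$, but since $\Sigma$ has no egds the associated test query is its own terminal chase result and retains both $Z$ and $\theta(Z)$, so $\sigma$ is not assignment-fixing w.r.t.\ $P$ (Definition~\ref{assgn-fix-def}); hence no sound step applies and $(P)_{\Sigma,BS} = P$ is terminal. Its canonical database is $\{ p(1),\ p(2),\ r(2,3) \}$; completing the set-chase adds $r(1,c)$ for a fresh constant $c$, giving $\widehat{D}$. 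On $\widehat{D}$ the frozen head $(1)$ has multiplicity $2$, because the assignment $Y \mapsto 1, W \mapsto c$ maps $r(Y,W)$ onto the chase-generated tuple, whereas $P_c = P$ admits only the identity as a head-preserving self-mapping. So the count you read off $\widehat{D}$ does not encode the subgoal structure of $P_c$, and the subsequent extraction of an isomorphism $P_c \cong P'_c$ --- and with it your entire forward direction --- collapses.

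The underlying misconception is that terminality under \emph{sound} chase controls the completing set-chase: it does not. Terminality only says that no assignment-fixing tgd applies to $P$; the dependencies that remain applicable are precisely the unsoundly applicable ones, and those are exactly what fires when you chase $D^{(P_c)}$ to a model of $\Sigma$, producing tuples that $P$'s own subgoals can map into. The paper avoids this trap entirely: in Lemma~\ref{bag-chase-first-lemma} (whose bag-set analog is what the proof of this theorem actually invokes, see Appendix~\ref{appendix-f}) one fixes an \emph{arbitrary} database $D$ satisfying only the set-enforcing constraints and shows directly that the two terminal results agree on it --- either $D$ violates only dependencies irrelevant to both chase results, in the sense of Theorem~\ref{sigma-max-theorem} and its bag-set version Theorem~\ref{bag-set-sigma-max-theorem}, or one restricts attention to the sub-instance formed by the tuples actually used in satisfying assignments of the two terminal queries, which does satisfy all relevant dependencies, so that the hypothesis $P \equiv_{\Sigma,BS} P'$ applies there and transfers back to $D$ because the discarded tuples contribute nothing to either answer. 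To repair your argument you would need to replace the canonical-database counting by a semantic, database-restriction argument of this kind.
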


\vspace{-0.2cm}




\nop{
\reminder{
NB! 11/19/08

{\it Note.} Recall that, for the result of sound chase $(Q)_{\Sigma,B}$ of query $Q$ using dependencies $\Sigma$, it is possible for the canonical database of   $(Q)_{\Sigma,B}$ {\em not} to satisfy some of the dependencies in $\Sigma$. (See Theorem~\ref{sigma-max-theorem}.) Thus, ...?

into section on dependency-free equivalence test, put analogy with set-semantics $\equiv_{\Sigma,S}$ (all pre-applicable $\sigma$'s are satisfied vacuously; thus we can change the set of pre-applicable $\sigma$'s while still preserving $\equiv_{\Sigma,S}$!!!)
}
} 

The proofs of Theorems~\ref{bag-chase-equiv-theorem} and~\ref{bag-set-chase-equiv-theorem} follow from Proposition~\ref{chase-termination-prop} 
and from Theorem~\ref{uniqueness-theorem} and its analog for bag-set semantics. See Appendix~\ref{appendix-f} for the details.




We now formulate Proposition~\ref{sigma-b-bs-s-implic-prop}, which is the dep-\linebreak endency-based version of Proposition~\ref{b-bs-s-implic-prop}. The proof of Proposition~\ref{sigma-b-bs-s-implic-prop} can be found in Appendix~\ref{dep-b-bs-implic-appendix}.


\vspace{-0.2cm}

\begin{proposition}
\label{sigma-b-bs-s-implic-prop}
For CQ queries $Q$ and $Q'$ and set of embedded dependencies $\Sigma$, such  that there exists the {\em set-}chase result for each of $Q$ and $Q'$ using $\Sigma$. Then  (1) $(Q)_{\Sigma,B} \equiv_{B} (Q')_{\Sigma,B}$, in the absence of all dependencies other than the set-enforcing constraints on  stored relations, implies $(Q)_{\Sigma,BS} \equiv_{BS} (Q')_{\Sigma,BS}$, and (2) $(Q)_{\Sigma,BS}$ $\equiv_{BS} (Q')_{\Sigma,BS}$ implies $(Q)_{\Sigma,S} \equiv_{S} (Q')_{\Sigma,S}$.
\end{proposition}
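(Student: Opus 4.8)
The plan is to reduce the statement, via the three equivalence-test results already established, to the dependency-relativized analog of Proposition~\ref{b-bs-s-implic-prop}, and then to dispatch the latter with the same semantic argument that underlies Proposition~\ref{b-bs-s-implic-prop} itself. Concretely, for part (1) I would first invoke Theorem~\ref{bag-chase-equiv-theorem} to rewrite the hypothesis $(Q)_{\Sigma,B} \equiv_{B} (Q')_{\Sigma,B}$ (modulo the set-enforcing constraints) as the equivalence $Q \equiv_{\Sigma,B} Q'$, and invoke Theorem~\ref{bag-set-chase-equiv-theorem} to rewrite the desired conclusion $(Q)_{\Sigma,BS} \equiv_{BS} (Q')_{\Sigma,BS}$ as $Q \equiv_{\Sigma,BS} Q'$. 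Since both theorems are stated as equivalences, these translations are lossless, and part (1) reduces to showing $Q \equiv_{\Sigma,B} Q' \imply Q \equiv_{\Sigma,BS} Q'$. Analogously, for part (2) I would apply Theorem~\ref{bag-set-chase-equiv-theorem} to the hypothesis and Theorem~\ref{chase-theorem} to the conclusion, reducing part (2) to $Q \equiv_{\Sigma,BS} Q' \imply Q \equiv_{\Sigma,S} Q'$.

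Next I would establish the two reduced implications directly from the definitions in Sections~\ref{bag-bag-set-defs} and~\ref{chase-prelims}. For the first, I observe that every set-valued database is a special bag-valued database in which all multiplicities equal one, so that on any set-valued $D$ the answers $Q(D,B)$ and $Q(D,BS)$ coincide as bags (each satisfying assignment contributes a single tuple under either reading, since each product $\Pi_i m_i$ equals $1$). Hence if $Q(D,B)=Q'(D,B)$ for every bag-valued $D \models \Sigma$, then in particular this holds for every set-valued $D \models \Sigma$, giving $Q(D,BS)=Q'(D,BS)$ on all such $D$, i.e.\ $Q \equiv_{\Sigma,BS} Q'$. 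For the second, I use that the set answer $Q(D,S)$ is exactly the core-set of the bag answer $Q(D,BS)$; since equal bags have equal core-sets, $Q(D,BS)=Q'(D,BS)$ on every set-valued $D \models \Sigma$ forces $Q(D,S)=Q'(D,S)$ on the same databases, i.e.\ $Q \equiv_{\Sigma,S} Q'$. This is precisely the $\Sigma$-restricted version of the argument behind Proposition~\ref{b-bs-s-implic-prop}.

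The proof is essentially bookkeeping once the equivalence tests are in hand; the only point I would treat with care is the bridging between statements about the chase results and statements about the original queries. In particular, the hypothesis of part (1) is phrased ``in the absence of all dependencies other than the set-enforcing constraints,'' which is exactly the qualification appearing in Theorem~\ref{bag-chase-equiv-theorem}; I would make sure the invocation matches this qualification verbatim so that the ``iff'' applies, and similarly that the dependency-free bag-set equivalence of Theorem~\ref{bag-set-chase-equiv-theorem} and the dependency-free set equivalence of Theorem~\ref{chase-theorem} are used with their stated side-conditions. Finally, because the hypothesis assumes the existence of the set-chase results $(Q)_{\Sigma,S}$ and $(Q')_{\Sigma,S}$, Proposition~\ref{chase-termination-prop} guarantees that all of $(Q)_{\Sigma,B}$, $(Q)_{\Sigma,BS}$, $(Q')_{\Sigma,B}$, and $(Q')_{\Sigma,BS}$ exist; thus every chase result referenced above is well-defined and the reductions are legitimate.
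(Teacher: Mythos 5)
Your proof is correct, but it takes a genuinely different route from the paper's. You reduce both parts, via the ``iff'' equivalence tests (Theorems~\ref{bag-chase-equiv-theorem}, \ref{bag-set-chase-equiv-theorem}, and~\ref{chase-theorem}), to implications between $\Sigma$-equivalences of the \emph{original} queries, namely $Q \equiv_{\Sigma,B} Q' \Rightarrow Q \equiv_{\Sigma,BS} Q'$ and $Q \equiv_{\Sigma,BS} Q' \Rightarrow Q \equiv_{\Sigma,S} Q'$, and then discharge these directly from the semantics: a set-valued database is a bag-valued database with unit multiplicities on which $Q(D,B)=Q(D,BS)$, and $Q(D,S)$ is the core-set of $Q(D,BS)$. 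The paper instead stays at the level of chase results: it applies the dependency-free Proposition~\ref{b-bs-s-implic-prop} to $(Q)_{\Sigma,B}$ and $(Q')_{\Sigma,B}$ viewed as queries, relativizes via Proposition~\ref{intro-sigma-prop}, applies the equivalence test once more to obtain equivalence of the \emph{double}-chase results $((Q)_{\Sigma,B})_{\Sigma,BS}$ and $((Q')_{\Sigma,B})_{\Sigma,BS}$, and then collapses these to $(Q)_{\Sigma,BS}$ and $(Q')_{\Sigma,BS}$ using Proposition~\ref{dep-subset-prop} together with the uniqueness theorems (Theorems~\ref{uniqueness-theorem} and~\ref{bag-set-uniqueness-theorem}) and transitivity. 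Your approach buys simplicity: it needs no double-chase bookkeeping, no uniqueness machinery beyond what is already packaged inside the ``iff'' theorems, and the semantic facts you re-derive are one-line observations (indeed, the paper itself quietly needs your multiplicity-one argument in its step from its Equation~(2) to its Equation~(3), since the hypothesis there is only relative to the set-enforcing constraints while Proposition~\ref{b-bs-s-implic-prop} is stated fully dependency-free). The paper's approach buys reuse: it never re-argues any semantics, treating the Chaudhuri--Vardi implications as a black box, and along the way it exhibits the chase-commuting fact $((Q)_{\Sigma,B})_{\Sigma,BS} \equiv_{BS} (Q)_{\Sigma,BS}$, which is of independent interest. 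Both proofs exploit the interchangeability of the chase-result formulation and the original-query formulation (Proposition~\ref{he-b-bs-s-implic-prop}), and both correctly invoke Proposition~\ref{chase-termination-prop} for existence of the sound-chase results; your handling of the ``in the absence of all dependencies other than the set-enforcing constraints'' qualification matches the statement of Theorem~\ref{bag-chase-equiv-theorem}, so the reductions are indeed lossless.
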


\vspace{-0.2cm}

Observe that queries $(Q)_{\Sigma,B}$, $(Q)_{\Sigma,BS}$, and $(Q)_{\Sigma,S}$ may be distinct queries for {\em the same} query $Q$ and set $\Sigma$. For an illustration, please see the chase results $Q_1$ through $Q_3$ of query $Q_4$ in Example~\ref{motivating-example}.   

A corollary of Proposition~\ref{sigma-b-bs-s-implic-prop} establishes a {\em set-cont-} {\em ainment} relationship between a CQ query and the results of its sound chase under a given set of embedded dependencies. Please see Appendix~\ref{dep-b-bs-implic-appendix} for a proof. 

\vspace{-0.2cm}

\begin{proposition}
\label{cor-dep-subset-prop}
For $(Q,\Sigma)$ that satisfy conditions of Thm.~\ref{sigma-max-theorem}, 
$(Q)_{\Sigma,S} \sqsubseteq_S (Q)_{\Sigma,BS} \sqsubseteq_S (Q)_{\Sigma,B} \sqsubseteq_S Q$. 
\end{proposition}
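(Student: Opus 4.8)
The plan is to prove the three set-containments separately, working right to left, and to reduce each one to the elementary observation that every individual (tgd or egd) chase step makes a query set-smaller. Recall from Section~\ref{basics-sec} that $Q' \sqsubseteq_S Q$ holds exactly when there is a containment mapping from $Q$ to $Q'$, i.e.\ a homomorphism of the body of $Q$ into the body of $Q'$ that fixes the head. A tgd chase step only appends subgoals, so the body of $Q$ is a subset of the body of $Q'$ and the identity is such a containment mapping; an egd chase step only equates two variables, so the substitution that performs the equating is a containment mapping from $Q$ onto $Q'$ (the collapsing of any duplicate subgoals is irrelevant under set semantics). Hence every chase step $Q \Rightarrow^{\sigma} Q'$ yields $Q' \sqsubseteq_S Q$, and by transitivity of $\sqsubseteq_S$ the terminal result of any chase sequence is set-contained in its starting query. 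Applied to the sound bag-chase sequence from $Q$ to $(Q)_{\Sigma,B}$, this immediately gives the rightmost containment $(Q)_{\Sigma,B} \sqsubseteq_S Q$.

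For the middle containment $(Q)_{\Sigma,BS} \sqsubseteq_S (Q)_{\Sigma,B}$, the key structural fact is that the class of sound chase steps is monotone in the semantics: comparing Theorems~\ref{bag-chase-sound-theorem} and~\ref{bag-set-chase-sound-theorem}, every step that is sound under bag semantics (an assignment-fixing tgd step adding only set-valued subgoals, or an egd step) is in particular sound under bag-set semantics, where the set-valuedness side conditions hold vacuously. Therefore the sound bag-chase sequence that produces $(Q)_{\Sigma,B}$ is itself a sound bag-set-chase sequence. It need not be terminal under bag-set semantics, so I would extend it by further sound bag-set-chase steps until a terminal bag-set result $\hat{Q}$ is reached (termination is guaranteed by Proposition~\ref{chase-termination-prop}, since a set-chase result exists). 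Each extra step keeps the query set-smaller, so $\hat{Q} \sqsubseteq_S (Q)_{\Sigma,B}$. By the bag-set analog of the uniqueness Theorem~\ref{uniqueness-theorem}, $\hat{Q} \equiv_{BS} (Q)_{\Sigma,BS}$, and by Proposition~\ref{b-bs-s-implic-prop} this implies $\hat{Q} \equiv_S (Q)_{\Sigma,BS}$. Combining, $(Q)_{\Sigma,BS} \equiv_S \hat{Q} \sqsubseteq_S (Q)_{\Sigma,B}$, i.e.\ $(Q)_{\Sigma,BS} \sqsubseteq_S (Q)_{\Sigma,B}$.

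The leftmost containment $(Q)_{\Sigma,S} \sqsubseteq_S (Q)_{\Sigma,BS}$ follows by the identical bootstrap one level up: every sound bag-set-chase step is a legitimate set-chase step (it is just a standard tgd or egd application), so the sequence producing $(Q)_{\Sigma,BS}$ is a set-chase sequence, which I would extend to a terminal set-chase result $\tilde{Q}$ with $\tilde{Q} \sqsubseteq_S (Q)_{\Sigma,BS}$. Since all set-chase results of $Q$ are set-equivalent in the absence of dependencies (from~\cite{DeutschPods08}), $\tilde{Q} \equiv_S (Q)_{\Sigma,S}$, whence $(Q)_{\Sigma,S} \sqsubseteq_S (Q)_{\Sigma,BS}$. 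Transitivity of $\sqsubseteq_S$ then yields the full chain $(Q)_{\Sigma,S} \sqsubseteq_S (Q)_{\Sigma,BS} \sqsubseteq_S (Q)_{\Sigma,B} \sqsubseteq_S Q$.

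The only delicate point, and the step I would write out most carefully, is the monotonicity of sound chase steps across semantics together with the ``extend to a terminal result'' maneuver: one must verify that appending additional sound steps of the coarser semantics is always legal and always terminates, and that the duplicate-subgoal bookkeeping in the bag-semantics egd rule of Theorem~\ref{bag-chase-sound-theorem} is invisible to $\sqsubseteq_S$. Everything else is routine composition and transitivity of containment mappings.
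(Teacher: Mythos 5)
Your proof is correct and follows essentially the same route as the paper's: each chase step yields set-containment via a containment mapping, transitivity propagates this along chase sequences, and the chain is closed by observing that sound bag-chase steps are sound bag-set-chase steps (and those are legal set-chase steps), then identifying the extended terminal results via the uniqueness theorems. The paper compresses your ``extend to a terminal result and apply uniqueness'' maneuver into a citation of Propositions~\ref{dep-subset-prop} and~\ref{sigma-b-bs-s-implic-prop}, so you have simply made explicit what the paper leaves implicit.
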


\vspace{-0.2cm}

Queries $Q_4$, $Q_3 = (Q_4)_{\Sigma,B}$, $Q_2 = (Q_4)_{\Sigma,BS}$, and $Q_1 = (Q_4)_{\Sigma,S}$ of Example~\ref{motivating-example} provide an illustration.

\nop{

Finally, we establish a {\em set-containment} relationship between query $Q$ and the results of sound chase of $Q$ under set, bag, and bag-set semantics, all using a fixed set of arbitrary embedded dependencies. 

Further, Theorem~\ref{set-containment-thm} establishes a {\em set-containment} relationship between query $Q$ and the results of sound chase of $Q$ under set, bag, and bag-set semantics, all using a fixed set of arbitrary embedded dependencies.

$(Q)_{\Sigma,S} \sqsubseteq (Q)_{\Sigma,BS} \sqsubseteq (Q)_{\Sigma,B} \sqsubseteq Q$. This is immediate from set containment on partial/terminal chase results and from the above $\Sigma^{max}_B(Q) \subseteq \Sigma^{max}_{BS}(Q) \subseteq \Sigma$

\begin{theorem}
\label{rel-semantics-theorem}
Given a CQ query $Q$ and a set $\Sigma$ of embedded dependencies, and assuming bag semantics for query evaluation. Denote by $Q'$ an arbitrary query such that $Q' \equiv_{\Sigma,B} Q$. Then  $Q_n \sqsubseteq_{S} Q'$ in the absence of dependencies. 
\end{theorem}
Note that it is possible for $Q'$ to refer to the input query $Q$, and thus it also holds that $Q_n \sqsubseteq_{S} Q$. 
The proof of Theorem~\ref{rel-semantics-theorem} is immediate from the definitions of chase steps and from the restriction of Proposition~\ref{b-bs-s-implic-prop} to equivalence on just databases that satisfy $\Sigma$. 
\reminder{Need here my proof that $\equiv_{\Sigma,B}$ implies $\equiv_{\Sigma,S}$!!! Also part of proof: Denote by $Q'$ an arbitrary query that occurs  in an arbitrary sound chase sequence $C$ under  $\Sigma$, such that $C$ starts with $Q$ and ends with the chase result $Q_n$ under bag semantics.\footnote{$Q_n$ is unique by Theorem~\ref{uniqueness-theorem}.} Then  $Q_n \sqsubseteq_{S} Q'$ in the absence of dependencies. Sequence of proof: From $Q' \equiv_{\Sigma,B} Q$ it follows that either $Q'$ is $Q$ or $Q'$ is an intermediate/terminal result of sound chase of $Q$ using $\Sigma$ under bag semantics. Therefore, by the soundness of chase it holds that $Q' \equiv_{\Sigma,B} Q_n$. Then I need my proof that $\equiv_{\Sigma,B}$ implies $\equiv_{\Sigma,S}$, to infer from $Q' \equiv_{\Sigma,B} Q_n$ that $Q' \equiv_{\Sigma,S} Q_n$. Then from $Q' \equiv_{\Sigma,S} Q_n$ I finally get $Q_n \sqsubseteq_{S} Q'$.}

} 

\vspace{-0.2cm}

\subsection{Equivalence Tests for Aggregate Queries}
\label{equiv-tests-third-subsection}

We now provide dependency-free tests for equivalence of CQ queries with grouping and aggregation  under embedded dependencies. The results of this subsection are immediate from Theorems~\ref{chase-theorem}, \ref{equival-aggr-theorem},  and~\ref{bag-set-chase-equiv-theorem}. 

\vspace{-0.3cm}

\begin{theorem}
\label{aggr-chase-equiv-theorem}
Given compatible aggregate queries $Q$ and $Q'$, and a set of embedded dependencies $\Sigma$ 
such that there exist {\em set-}chase results $(\breve{Q})_{\Sigma,S}$ for the core $\breve{Q}$ of $Q$ and $(\breve{Q}')_{\Sigma,S}$ for the core $\breve{Q}'$ of $Q'$.  Then 
(1) For $max$ or $min$ queries $Q$ and $Q'$, $Q \equiv_{\Sigma} Q'$ if and only if $(\breve{Q})_{\Sigma,S} \equiv_{S} (\breve{Q}')_{\Sigma,S}$ in the absence of   dependencies. 
(2) For $sum$ or $count$ queries $Q$ and $Q'$, $Q \equiv_{\Sigma} Q'$ if and only if $(\breve{Q})_{\Sigma,BS} \equiv_{BS} (\breve{Q}')_{\Sigma,BS}$ in the absence of   dependencies.
\end{theorem}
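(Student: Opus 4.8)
The plan is to prove each part by composing a $\Sigma$-relativized form of Theorem~\ref{equival-aggr-theorem} with the matching chase-based equivalence test. For part~(1), the $max$/$min$ case, I would establish the chain
\[ Q \equiv_{\Sigma} Q' \iff \breve{Q} \equiv_{\Sigma,S} \breve{Q}' \iff (\breve{Q})_{\Sigma,S} \equiv_{S} (\breve{Q}')_{\Sigma,S}, \]
whose second link is exactly Theorem~\ref{chase-theorem} applied to the cores $\breve{Q}$ and $\breve{Q}'$ (whose set-chase results exist by hypothesis). For part~(2), the $sum$/$count$ case, I would establish
\[ Q \equiv_{\Sigma} Q' \iff \breve{Q} \equiv_{\Sigma,BS} \breve{Q}' \iff (\breve{Q})_{\Sigma,BS} \equiv_{BS} (\breve{Q}')_{\Sigma,BS}, \]
whose second link is Theorem~\ref{bag-set-chase-equiv-theorem}, again invoking the hypothesis that the set-chase results of the cores exist. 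Hence the entire statement reduces to the first link of each chain, namely a version of Theorem~\ref{equival-aggr-theorem} relativized from the class of all (set-valued) databases to the subclass $\{ D : D \models \Sigma \}$.

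First I would dispose of the easy direction of each first link, which is purely pointwise and so relativizes verbatim. By the semantics of aggregate queries in Section~\ref{aggr-prelims}, $Q(D)$ is obtained by grouping and aggregating the bag $\breve{Q}(D,BS)$; since $max$ and $min$ of a bag depend only on its core-set whereas $sum$ and $count$ depend on the full bag, the hypothesis $\breve{Q}(D,S) = \breve{Q}'(D,S)$ (respectively $\breve{Q}(D,BS) = \breve{Q}'(D,BS)$) on every $D \models \Sigma$ immediately forces $Q(D) = Q'(D)$ on every such $D$. This yields the ``$\Leftarrow$'' direction of each first link.

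The hard direction, that $Q \equiv_{\Sigma} Q'$ holds only if the cores are $\Sigma$-equivalent, is where Theorem~\ref{equival-aggr-theorem} is genuinely used, and I expect the relativization to be the main obstacle. I would argue the contrapositive: from a witness $D_0 \models \Sigma$ to core non-equivalence (set non-equivalence for $max$/$min$, bag-set non-equivalence for $sum$/$count$, which exists by the definitions of $\equiv_{\Sigma,S}$ and $\equiv_{\Sigma,BS}$ whenever the cores are $\Sigma$-non-equivalent), I apply the value-assigning construction underlying Theorem~\ref{equival-aggr-theorem} to produce a database $D^{*}$ on which the aggregate answers differ. The point requiring care is that $D^{*}$ must still satisfy $\Sigma$; this holds because the construction leaves the relational skeleton of $D_0$ intact and only instantiates or pads the aggregated-argument values, and because forming disjoint copies with fresh values preserves satisfaction of embedded egds and tgds (no cross-copy joins are created, so every body homomorphism lands inside a single copy that already models $\Sigma$). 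Since the starting witness $D_0$ already lies in $\{ D : D \models \Sigma \}$, the construction stays within this class, completing the relativized reduction and hence closing both chains.
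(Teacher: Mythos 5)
Your two chains are precisely the paper's proof: the paper declares Theorem~\ref{aggr-chase-equiv-theorem} immediate from Theorems~\ref{chase-theorem}, \ref{equival-aggr-theorem}, and~\ref{bag-set-chase-equiv-theorem}, which is exactly your decomposition, and your handling of the ``$\Leftarrow$'' direction of the first link is correct (it is pointwise, since $Q(D)$ is a function of the bag $\breve{Q}(D,BS)$, and for $max$/$min$ only of its core-set). So the one step where you supply an argument the paper leaves implicit --- the ``$\Rightarrow$'' direction of the $\Sigma$-relativized Theorem~\ref{equival-aggr-theorem} --- is also where your proof has a genuine gap.

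The gap is the $\Sigma$-preservation claim. First, disjoint unions of models of $\Sigma$ need not be models of $\Sigma$: the body of a tgd or egd need not be connected, so a body homomorphism into a disjoint union can send different connected components into different copies --- no ``cross-copy joins'' are required. For instance, $D_1 = \{ p(1) \}$ and $D_2 = \{ r(2) \}$ each satisfy the tgd $p(X) \wedge r(Y) \rightarrow \exists Z \ s(X,Y,Z)$ vacuously, yet their disjoint union violates it. Second, ``padding'' aggregated-argument values over an unchanged relational skeleton can break an egd of $\Sigma$, e.g.\ a functional dependency whose right-hand side is the aggregated attribute. So your $D^{*}$ need not satisfy $\Sigma$, and the contrapositive stalls exactly at the point you identified as critical. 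The repair is to restrict the witness manipulation to \emph{injective renamings of the active domain}: a renaming is an isomorphism, and the (constant-free) egds and tgds of $\Sigma$ are invariant under isomorphisms, so $D^{*} \models \Sigma$ comes for free. For $max$/$min$, rename so that a value $y_0$ with $(\bar{s}_0, y_0) \in \breve{Q}(D_0,S) \setminus \breve{Q}'(D_0,S)$ becomes the strict maximum (resp.\ minimum) of the domain; the aggregate answers then differ on group $\bar{s}_0$. For $sum$, the difference of the two group-$\bar{s}_0$ sums is an integer linear form in the renamed values whose coefficient at $y_0$ is the nonzero difference of multiplicities, so a dominant choice of the image of $y_0$ makes it nonzero. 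For $count$, note that your scheme cannot work even in principle, since the count answer ignores the aggregated values entirely; but no construction is needed there: the count answer per group equals the multiplicity of that group in the bag projection of the core onto the grouping attributes (the core of the $count(*)$ form, which is the reading under which the count case of Theorem~\ref{equival-aggr-theorem} holds), so the original witness $D_0 \models \Sigma$ itself already separates the aggregate queries.
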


\subsection{Sound and Complete Reformulation of\\ CQ and CQ-Aggregate Queries}
\label{equiv-tests-second-subsection}


Theorems~\ref{bag-chase-equiv-theorem} and~\ref{bag-set-chase-equiv-theorem} allow us to extend the algorithm C\&B of~\cite{DeutschPT06} to (a) reformulation of CQ queries in presence of embedded dependencies under bag or bag-set semantics, and to (b) reformulation of CQ queries with grouping and aggregation in presence of embedded dependencies. Our proposed algorithm  {\sc Bag-C\&B} returns $\Sigma$-minimal reformulations $Q'$ of  CQ query $Q$ such that $Q' \equiv_{\Sigma,B} Q$ under the given embedded dependencies $\Sigma$. The only modifications to C\&B that are required to obtain {\sc Bag-C\&B} are (i)  to replace the set-chase procedure by the {\em sound bag-chase} procedure as defined in this paper, and (ii)  to replace the dependency-free equivalence test of Theorem~\ref{chase-theorem} by the test of Theorem~\ref{bag-chase-equiv-theorem}. The algorithm   {\sc Bag-Set-C\&B}  for the case of bag-set semantics is obtained in an analogous fashion. 

We have also developed algorithms that accept sets of embedded dependencies and CQ queries with grouping and aggregation:  {\sc Max-Min-C\&B} accepts CQ queries with aggregate function $max$ or $min$, and {\sc Sum-Count-C\&B} accepts  CQ queries with aggregate function $sum$ or $count$. {\sc Max-Min-C\&B} uses C\&B to obtain all $\Sigma$-minimal reformulations $Q' \equiv_{\Sigma,S} \breve{Q}$ of the core $\breve{Q}$ of the input query $Q$, and for each such query $Q'$ returns a query $Q''$ whose head is the head of $Q$ and whose body is the body of $Q'$. {\sc Sum-Count-C\&B} works analogously, except that it uses {\sc Bag-Set-C\&B} to produce queries $Q' \equiv_{\Sigma,BS} \breve{Q}$. By Theorem~\ref{aggr-chase-equiv-theorem}, for each output $Q''$ of {\sc Max-Min-C\&B} or of {\sc Sum-Count-C\&B}  it holds that $Q'' \equiv_{\Sigma} Q$ whenever set-chase of $Q$ using $\Sigma$ terminates.  

All our algorithms are sound and complete whenever {\em set-}semantics chase of $Q$ using $\Sigma$ terminates.  

\vspace{-0.1cm}

\begin{theorem}
\label{bag-c-and-b-theorem}
Given CQ query $Q$ and set $\Sigma$ of embedded dependencies such that {\em set} chase 
of $Q$ under $\Sigma$ terminates in finite time. Then {\sc Bag-C\&B} returns all $\Sigma$-minimal reformulations $Q'$ such that $Q' \equiv_{\Sigma,B} Q$. 
\end{theorem}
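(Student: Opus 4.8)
The plan is to prove termination, soundness, and completeness of {\sc Bag-C\&B} by porting the correctness argument of the set-semantics C\&B algorithm (Appendix~\ref{c-and-b-appendix}) to the bag setting: the chase phase now computes the bag universal plan $U := (Q)_{\Sigma,B}$ via sound bag-chase, and the backchase tests candidate sub-queries with the bag-equivalence criterion of Theorem~\ref{bag-chase-equiv-theorem} in place of the set test of Theorem~\ref{chase-theorem}. For \emph{termination}, I would invoke Proposition~\ref{chase-termination-prop}: since set-chase of $Q$ terminates, so does sound bag-chase, so $U$ exists and is unique up to bag-equivalence by Theorem~\ref{uniqueness-theorem}. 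The backchase examines only the finitely many sub-queries $U'$ formed by keeping $head(U)$ and dropping zero or more body atoms of $U$; each such $U'$ inherits set-chase termination, hence sound bag-chase of $U'$ terminates, and the test $(U')_{\Sigma,B} \equiv_B U$ is decidable by Theorem~\ref{cv-updated-thm}. Thus the algorithm halts.

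For \emph{soundness}, note that {\sc Bag-C\&B} emits a sub-query $U'$ only when it passes $(U')_{\Sigma,B} \equiv_B (Q)_{\Sigma,B}$ (in the absence of all dependencies other than the set-enforcing ones) and is minimal among the passing sub-queries. By Theorem~\ref{bag-chase-equiv-theorem} the test is equivalent to $U' \equiv_{\Sigma,B} Q$, and the minimal-atom-set selection coincides with $\Sigma$-minimality exactly as in the set case, so every output is a genuine $\Sigma$-minimal bag reformulation of $Q$.

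The substance is \emph{completeness}: given any $\Sigma$-minimal $Q'$ with $Q' \equiv_{\Sigma,B} Q$, I must exhibit an output isomorphic to $Q'$. The key structural lemma is that $Q'$ is isomorphic to one of the enumerated sub-queries of $U=(Q)_{\Sigma,B}$. To establish it, I would first apply Theorem~\ref{bag-chase-equiv-theorem} to get $(Q')_{\Sigma,B} \equiv_B (Q)_{\Sigma,B}$, whence by Theorem~\ref{cv-updated-thm} the two chase results are isomorphic after dropping duplicate subgoals over set-valued relations. Since sound bag-chase only appends relational atoms and applies egds/duplicate removal, the body of $Q'$ maps into the body of $(Q')_{\Sigma,B}$; and $\Sigma$-minimality of $Q'$ rules out (i) any egd chain that would identify two distinct variables of $Q'$ (a variable replacement preserving $\equiv_{\Sigma,B}$ contradicts minimality) and (ii) any duplicate set-valued subgoal inside $Q'$ (dropping it preserves $\equiv_{\Sigma,B}$ by Theorem~\ref{cv-updated-thm}, again contradicting minimality). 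Hence this map is injective, and composing it with the isomorphism $(Q')_{\Sigma,B} \cong (Q)_{\Sigma,B}$ realizes $Q'$ as a sub-query $U'$ of $U$ with the same head. Since $Q' \equiv_{\Sigma,B} Q$, this $U'$ passes the bag-equivalence test, and since $Q'$ is $\Sigma$-minimal, $U'$ survives the minimality filter; therefore {\sc Bag-C\&B} outputs $U' \cong Q'$.

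The main obstacle is precisely this structural lemma --- showing that every atom of a $\Sigma$-minimal bag reformulation already lives inside the bag universal plan and that the resulting embedding is injective. Here bag semantics is an asset rather than a liability: the rigidity of bag-equivalence (Theorem~\ref{cv-updated-thm}, which is isomorphism up to set-valued duplicates) replaces the delicate containment-mapping and retraction arguments required in the set-semantics proof. The only genuinely new bookkeeping is carrying the ``set-valued duplicate'' caveat of Theorem~\ref{cv-updated-thm} consistently through every isomorphism and every minimality appeal.
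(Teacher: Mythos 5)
Your high-level route is the same as the paper's: the paper disposes of Theorem~\ref{bag-c-and-b-theorem} in one sentence, appealing to the soundness and completeness of C\&B (Theorem~\ref{c-and-b-theorem}) together with Proposition~\ref{chase-termination-prop}, Theorem~\ref{uniqueness-theorem}, Theorem~\ref{bag-chase-equiv-theorem}, and Theorem~\ref{cv-updated-thm} --- exactly the ingredients you assemble. Your termination and soundness parts are correct (modulo a point the paper also glosses over: set-chase of every candidate subquery $U'$ must terminate for the test of Theorem~\ref{bag-chase-equiv-theorem} to apply to it).

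The genuine gap is step (i) of your completeness argument. You assert that $\Sigma$-minimality of $Q'$ forbids any egd chain that identifies two distinct variables of $Q'$, ``since a variable replacement preserving $\equiv_{\Sigma,B}$ contradicts minimality.'' That is not what the definition of $\Sigma$-minimality says: non-minimality requires a variable replacement $S_1$ \emph{and} a query $S_2$ obtained from $S_1$ by dropping at least one atom, with \emph{both} equivalent to $Q'$ under $\Sigma$; an identification by itself never witnesses non-minimality. Under set semantics the distinction is mostly harmless because an identification that creates a duplicate atom always licenses dropping the duplicate; under bag semantics it is fatal, because duplicates may be dropped soundly only over set-valued relations (Theorem~\ref{bag-chase-sound-theorem}, item 2), so the required $S_2$ need not exist. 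Concretely, let $\Sigma$ contain the single egd $r(A,B) \wedge r(A,C) \rightarrow B = C$ with $R$ \emph{not} set valued, and let $Q = Q'$ where $Q'(X) \ :- \ r(X,Y), r(X,Z), s(Y,Z)$. Sound bag-chase identifies $Y$ and $Z$ but must keep both $r$-atoms, so $U = (Q)_{\Sigma,B}$ has body $r(X,Y), r(X,Y), s(Y,Y)$. On the database where $R$ holds two copies of $(a,b)$ and $S = \{ (b,b) \}$, query $Q'$ returns $(a)$ with multiplicity $4$ while the query obtained by dropping either $r$-atom returns it with multiplicity $2$; checking the remaining replacement/drop combinations similarly shows $Q'$ \emph{is} $\Sigma$-minimal. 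Yet $Q'$ (three atoms over three variables) is not isomorphic to any subquery of $U$ (three atoms over two variables), so the backchase never emits it. Hence your structural lemma --- every $\Sigma$-minimal bag reformulation embeds isomorphically as a subquery of the bag universal plan --- is false as stated, and it cannot be rescued by the minimality appeal you make; your step (ii), by contrast, is a correct use of the definition (zero replacements plus a drop). This pinpoints a subtlety that the paper's one-sentence proof leaves unaddressed: a correct completeness claim must either be read modulo $\equiv_{\Sigma,B}$ (in which case the universal plan itself trivially witnesses it) or enlarge the backchase search space beyond atom subsets of $U$, and your argument establishes neither.
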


\vspace{-0.1cm}

The analogs of Theorem~\ref{bag-c-and-b-theorem} for (a) CQ queries under bag-set semantics, and for (b) aggregate CQ queries can be found in Appendix~\ref{dep-b-bs-implic-appendix}. All the theorems follow from the soundness and completeness of C\&B of~\cite{DeutschPT06} (see Appendix~\ref{c-and-b-appendix}) and from the results of this paper.





\nop{

\section{Aggregate Queries and Rewriting Templates}
\label{aggr-prelim-section}

\reminder{******* Give here defs of: unaggr core of aggr query; $R^{exp}$ (say that we denote by $R^{exp}$ the $R^{unfolding}$ that was defined in~\cite{CohenNS99})}

We now give a brief overview of the view-based rewriting templates for aggregate queries that we use in this paper; please see~\cite{AfratiRadaIcdt05} for the details. 

Such rewriting templates arise naturally in practical applications, as evidenced by, e.g., the discussion in~\cite{DeHaan05}. 

Note that some of our templates necessitate use of {\it bag-}valued views, hence justification of considering {\it bag} semantics for query evaluation in our work.

\begin{example}
\label{aggr-rewr-template-example}
Consider two aggregate queries, $max$ query $Q_1$ and $sum$ query $Q_2$, as well as views defined on stored relations 
\begin{tabbing}

\end{tabbing}
\end{example}

} 

\vspace{-0.1cm}

\section{Related Work}
\label{related-work-section}

Chandra and Merlin~\cite{ChandraM77} developed the NP-complete containment test of two CQ queries under set semantics. This test has been used in optimization of CQ queries, as well as in developing algorithms for rewriting queries (both equivalently and nonequivalently) using views. Please see~\cite{DeutschPT06,LevyAquvSurvey,ChenLiEncyclop,Ullman00} for discussions of the state of the art and of the numerous practical applications of query rewriting using views. 

The problem of developing tests for equivalence of CQ queries under bag and bag-set semantics was solved by Chaudhuri and Vardi in~\cite{VardiBagsPods93}.  
The results on {\em containment} tests for CQ queries under bag semantics have proved to be more elusive. Please see Jayram and colleagues \cite{KolaitisPods06} for original undecidability results for containment of CQ queries with inequalities under bag semantics. The authors point out that it is not  known whether the problem of bag containment for {\em CQ} queries is even decidable.  On the other hand, the problem of containment of CQ queries under bag-set semantics reduces to the problem of containment of aggregate queries with aggregate function {\tt count(*).} 
The latter problem is solvable using the methods proposed in~\cite{CohenNS03}. 

Studies of dependencies have been motivated by the goal of good database-schema design. 
See~\cite{AbiteboulHV95,DeutschPods08} for overviews and references on dependencies and chase. 
%
%
In~\cite{DeutschDiss}, Deutsch developed chase methods for bag-specific constraints (UWDs), 
and proved  completeness of the view-based version of the Chase and Backchase algorithm (C\&B, \cite{DeutschPT06}) for mixed semantics and for set and bag dependencies, in case where all given tuple-generat-\linebreak ing dependencies are UWDs. In contrast, the algorithm in~\cite{DuschkaG97} is complete in presence of just functional dependencies. Algorithms that are complete in the absence of dependencies are given in~\cite{LevyMSS95} for set semantics, in~\cite{ChaudhuriKPS95} for bag semantics, and in~\cite{GouCK06} for bag-set semantics. 
Finally, Cohen in~\cite{Cohen06} presented an equivalence test for CQ queries in presence of inclusion dependencies,\footnote{An {\em inclusion dependency} is a tgd with a single relational atom on each of the left-hand side and right-hand side.} for the cases of bag-set semantics and of the semantics where queries are evaluated on set-valued databases using both bag-valued and set-valued intermediate results. 

\nop{

In~\cite{AlonRadaVldbj02}, the authors show an exponential-time lower bound on the size of a cost-optimal view-based query rewriting, even for the case of CQ queries, views, and rewritings. This bound does not apply to the results of the current paper, in which we consider only view-minimal rewritings, see Section~\ref{problem-stmt-sec}.

\cite{DeutschDiss}: Introduced a new class of dependencies for bag semantics, called Unique Witness Dependencies (UWDs). The common practice of asserting key and referential-integrity constraints in bag-semantics schemas corresponds to UWDs. In~\cite{DeutschDiss}, Deutsch and colleagues develop techniques for bag semantics, bag-specific constraints (UWDs), and for handling bag queries over arbitrary mixes of bag and set schema elements and views. They prove {\it completeness for mixed semantics and for both set and bag dependencies.} (The algorithm in~\cite{DuschkaG97} is complete in presence of just functional dependencies. Algorithms that are complete in the absence of dependencies are given in~\cite{LevyMSS95} for set semantics, in~\cite{ChaudhuriKPS95} for bag semantics, and in~\cite{GouCK06} for bag-set semantics.) C\&B is defined on the ``path-conjunctive'' restriction of a language that uses ``dictionaries'' to express objsect/relational and object-oriented features; the dependencies are expressed in a logic that corresponds to the same language. The C\&B completeness theorems required constraints to be {\it full} (meaning \reminder{what?}), so that the chase is guaranteed to terminate. UWDs allow sound rewriting of queries with bag semantics in an analogous way to set semantics. When all tgds in $\Sigma$ are full UWDs, then the C\&B universal plan is unique, polynomial in the size of the input query $Q$, and contains all $\Sigma$-minimal reformulations that are equivalent to $Q$ in presence of $\Sigma$ under bag semantics. 

Chase under UWDs is sound under bag-semantics, as UWDs are a generalization of key-based tgds \reminder{Need to extend my def of key-based tgds to UWDs?} (Section~\ref{}) to tgds whose right-hand side has more than one conjunct. At the same time, we show that chase is sound under bag semantics in presence of tgds that are a generalization of UWDs; please see Example~\ref{real-he-motivating-example} for sound bag chase under dependencies $\sigma_1$ and $\sigma_3$, neither of which is an UWD.


At the same time,

\reminder{This paragraph: move to intro?} \cite{DeutschDiss}: In practice, schema elements are often sets, while views and queries are often bags, defined without using the {\tt distinct} keyword of SQL.

\section{Conclusions and Future Work}

Future work:
\begin{itemize}
	\item Algorithms for providing contained or containing, rather than equivalent, view-based query rewritings under dependencies. Studying the set-semantics case for CQ queries would be the first step in this direction.  While containment of CQ queries under {\it bag semantics} remains an open problem, it seems possible to solve the rewriting problem for other classes of queries, such as CQ queries with inequalities. For the case of containment of aggregate queries, one could perhaps build on the work by Cohen et al., see~\cite{CohenContainm05}.
	
	\item extend to unions of CQs etc, see~\cite{DeutschDiss}

	\item continue the study of CQAC case with dependencies: equivalence under bag-set semantics (building on~\cite{CohenNS99} and on COhen's work on *non*view-based aggr-query equivalence)
\end{itemize}



\section{Structure of the Paper}

\begin{enumerate}
	\item Introduction
	
	\item (a) Basic Preliminaries: CQs, def of equivalence set/bag/bag-set, equivalence-test theorems for the absence of dependencies and for set/bag/bag-set semantics
	
	(b) CQ: views, rewritings, equivalence modulo views

	(c) the standard chase steps for CQ queries, set semantics, and for embedded dependencies
	
	give here Theorem~\ref{chase-theorem}
	
	(d) our problem statements; note extensions to CQ-aggr and to CQAC(aggr) and give forward pointers to the sections
	
	\item C\&B from past work: core and view versions; blackbox ``chase module under semantics X'' and ``dependency-free-equivalence-test'' module
	
	We define here: (i)  {\sc C\&B} [CQ, semantics S, viewset ${\cal V} = \emptyset$], and

	(ii) {\sc C\&B} [CQ, semantics S, viewset ${\cal V} \neq \emptyset$]
	
	\item Title of section: Extending Chase beyond Set Semantics
	
	(a) Motivating Example
	
	counterexample to straightforward way (via testing $Q_{\Sigma,{\bf S}} \equiv_{{\bf B}} Q'_{\Sigma,{\bf S}}$ and similarly for $\equiv_{\bf BS}$)

	(b) Propositions for soundness/unsoundness of B/BS chase steps 
	
	(c) Dependency-free tests/theorems for (no views) $\equiv_{\Sigma,B}$, $\equiv_{\Sigma,BS}$ (analogs of Theorem~\ref{chase-theorem});
	
	begin by giving my enhanced bag-equiv theorem, in presence of set-valued stored relations
	
	(d) The C\&B Algorithm Template: Extending *core* C\&B to CQ {\bf sound and complete} B/BS 
	
	thus {\sc Algorithm Template C\&B} [CQ, semantics X, viewset ${\cal V} = \emptyset$], where $X$ is one of S, B, BS
	
	\item Title of Section: Unchase under Dependencies
	
	(a) Unchase tests for  $\equiv_{\Sigma,S}$, $\equiv_{\Sigma,B}$,  $\equiv_{\Sigma,BS}$ for CQ queries
	
	(b) using {\bf dyn-progr} core C\&B to produce generating sets of CQ queries
	
	(c) The U\&B Algorithm Template: Extending *core* C\&B to CQ {\bf sound and complete} S/B/BS with unchase 
	
	thus {\sc Algorithm Template {\it U}\&B} [CQ, semantics X, viewset ${\cal V} = \emptyset$]
	
	\item Title of Section: View-Based CQ Rewritings

	Extending *view* C\&B to CQ {\bf sound and complete} S/B/BS, both with chase and with unchase 
	
	thus {\sc Algorithm Template C\&B} [CQ, semantics X, viewset ${\cal V} \neq \emptyset$], and 
	
	{\sc Algorithm Template U\&B} [CQ, semantics X, viewset ${\cal V} \neq \emptyset$]
	
	\item Title of Section: Reformulating Aggregate Queries
	
	(a) preliminaries for aggr queries, equivalence in absence of views, 
		
	(b) Dependency-free tests/theorems for (no views) {\bf aggregate} $\equiv_{\Sigma,B}$, $\equiv_{\Sigma,BS}$ (analogs of Theorem~\ref{chase-theorem})

	(c) Say that it is straightforward to extend *core* C\&B to {\it aggr} {\bf sound and complete}, both with chase and with unchase; 
	
	essence of the extensions: provide all equivalent reformulations of the core CQ queries under the appropriate semantics, then put the (aggregate) heads of the queries back
	
	thus {\sc Algorithm Template C\&B} [CQaggr, viewset ${\cal V} = \emptyset$], and 
	
	{\sc Algorithm Template U\&B} [CQaggr, viewset ${\cal V} = \emptyset$]
	
	\item Title of Section: Rewriting Aggregate Queries
	
	(a) our ICDT-05 rewriting templates, expansions

	(b) {\bf Detailed} discussion of my algorithm for extending *view* C\&B to CQaggr {\bf sound and complete}, both with chase and with unchase 
	
	thus {\sc Algorithm Template C\&B} [CQaggr, viewset ${\cal V} \neq \emptyset$], and 
	
	{\sc Algorithm Template U\&B} [CQaggr, viewset ${\cal V} \neq \emptyset$]

	\item Title of Section: Sound Algorithms for the CQAC(aggr) Problems
	
	(a) prelims on CQAC: def of CQAC, set equivalence, tests for set containment \& equivalence
	
	(b) explain how (by CQAC equivalence tests) All the extensions are sound, cover CQAC *set* semantics, and also CQAC-aggr *max*/*min* queries! The extensions are sound because the last step of each algorithm is equivalence checking (via set containment by~\cite{Klug88}) for the produced CQACs with the input CQACs. (Also works the same way for CQAC cores of CQAC(max/min) aggr queries.)
	
	Interesting future work: extending CQAC equivalence tests to bag/bag-set semantics, perhaps the Cohen work on equivalence of CQAC-sum/count queries can be used as the first step.

\end{enumerate}

} 

\vspace{-0.3cm}

{\small 
\bibliographystyle{abbrv}
\bibliography{techreport062109}  
}

\appendix

\section{The C\&B Algorithm of  [10]}
\label{c-and-b-appendix}




In this section of the appendix we give an overview of the Chase and Backchase (C\&B) algorithm by Deutsch and colleagues, see~\cite{DeutschPT06} for the details. 
Under set semantics for query evaluation and  given a CQ query $Q$, C\&B  outputs all equivalent  $\Sigma$-minimal conjunctive reformulations of $Q$ in presence of the given embedded dependencies $\Sigma$, whenever chase of $Q$ under $\Sigma$ terminates in finite time. 


C\&B proceeds in two phases. The first phase of C\&B, its {\it chase phase}, does chase of $Q$ using $\Sigma$ under set semantics, to obtain terminal chase result $(Q)_{\Sigma,S}$. This output of  the chase phase is called the {\it universal plan} $U$ for $Q$. Note that by construction of $U$, $Q \equiv_{\Sigma,S} U$. 

The second phase of C\&B, its {\it backchase phase}, proceeds as follows:

\begin{enumerate}
	\item Iterate over all queries $U'$ whose head is $head(U)$ and whose  body is not empty and is $body(U)$ with zero or more atoms dropped. 
 
	\item Chase each $U'$ using $\Sigma$, to obtain terminal chase result $(U')_{\Sigma,S}$. 

	\item C\&B outputs each $U'$ such that for the terminal result $(U')_{\Sigma,S}$  of chasing the candidate reformulation $U'$ under $\Sigma$ (under set semantics), it holds that $(U')_{\Sigma,S} \equiv_{S} U$, that is, each $U'$ for which by Theorem~\ref{chase-theorem} it holds that $U' \equiv_{\Sigma,S} Q$. 

\end{enumerate}

\begin{theorem}{{\bf (C{\&}B is sound and complete)}}
\label{c-and-b-theorem} 
For an arbitrary instance of the Query-Reformulation Problem with a CQ query $Q$, set semantics for query evaluation, and a set of embedded dependencies $\Sigma$ such that chasing $Q$ under $\Sigma$ terminates in finite time, C{\em \&}B outputs all $\Sigma$-minimal conjunctive reformulations $Q'$ of $Q$  such that $Q' \equiv_{\Sigma,S} Q$.
\end{theorem}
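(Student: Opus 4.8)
The plan is to establish the two halves of the claim separately: \emph{soundness} (every query $U'$ emitted by C\&B satisfies $U' \equiv_{\Sigma,S} Q$) and \emph{completeness} (every $\Sigma$-minimal reformulation $Q'$ with $Q' \equiv_{\Sigma,S} Q$ is isomorphic to some emitted $U'$). First I would dispatch a finiteness preliminary: the chase phase terminates by hypothesis, producing a finite universal plan $U=(Q)_{\Sigma,S}$, so the backchase enumerates only finitely many subqueries $U'$; and under the standard termination guarantee (e.g. weak acyclicity of $\Sigma$, \cite{FaginKMP05}) the chase of each such $U'$ also terminates, so every acceptance test $(U')_{\Sigma,S}\equiv_S U$ is decidable via the Chandra--Merlin criterion \cite{ChandraM77}.

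For soundness, let $U'$ be any output. By the backchase acceptance rule $(U')_{\Sigma,S}\equiv_S U$, and since $U=(Q)_{\Sigma,S}$ this reads $(U')_{\Sigma,S}\equiv_S (Q)_{\Sigma,S}$. Theorem~\ref{chase-theorem} then yields $U'\equiv_{\Sigma,S} Q$ immediately, so every output is a genuine reformulation.

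For completeness, I would fix a $\Sigma$-minimal $Q'$ with $Q'\equiv_{\Sigma,S}Q$. By Theorem~\ref{chase-theorem}, $(Q')_{\Sigma,S}\equiv_S U$, so by Chandra--Merlin there is a containment mapping $\nu$ from $(Q')_{\Sigma,S}$ into $U$ (and one in the reverse direction). The original atoms of $Q'$ embed into its own chase via the chase construction (egd steps only merge variables, tgd steps only add atoms), giving a homomorphism $\iota\colon Q'\to (Q')_{\Sigma,S}$; I then set $U'$ to be the subquery of $U$ whose body is the image $(\nu\circ\iota)(\mathit{body}(Q'))$, with head $\mathit{head}(U)$. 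Then $U'$ is one of the subqueries enumerated in the backchase. Since $U'$ is obtained from $U$ by dropping atoms we always have $U\sqsubseteq_S U'$, while the containment mapping $\nu\circ\iota\colon Q'\to U'$ gives $U'\sqsubseteq_S Q'$; composing these dependency-free containments with $Q'\equiv_{\Sigma,S}U$ yields $U'\equiv_{\Sigma,S}U\equiv_{\Sigma,S}Q$, so $U'$ passes the test and is emitted.

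It remains --- and this is where I expect the main obstacle to lie --- to show that $U'$ is in fact \emph{isomorphic} to $Q'$, so that emitting $U'$ is tantamount to emitting $Q'$. Here $\Sigma$-minimality of $Q'$ must be invoked: I would argue that if the surjection $\nu\circ\iota\colon Q'\to U'$ identified two distinct variables or collapsed two atoms of $Q'$, then the induced renaming of $Q'$ followed by deletion of the now-redundant atoms would produce queries $S_1,S_2$ exactly as in the definition of $\Sigma$-minimality (\cite{DeutschPT06}) that remain $\equiv_{\Sigma,S}$ to $Q'$, contradicting minimality. The delicate point is that the codomain of $\nu\circ\iota$ lives over the variables of $U$ rather than of $Q'$, so the contradiction is not quite a statement about $Q'$ alone; I would resolve this by transporting the map back along the reverse containment mapping $U\to (Q')_{\Sigma,S}$ to exhibit a retraction of $Q'$ onto a proper sub-instance of itself, which is precisely the configuration the minimality definition forbids. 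Once injectivity is secured, $\nu\circ\iota$ is an isomorphism $Q'\cong U'$, finishing completeness; combined with soundness (and the routine pruning of the non-minimal passing subqueries) this gives the full statement, which coincides with the guarantee established for C\&B under set semantics in \cite{DeutschPT06}.
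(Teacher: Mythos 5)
You should know at the outset that the paper never actually proves this theorem: Appendix~\ref{c-and-b-appendix} states it and disposes of it with the single sentence that the proof is ``by construction of C\&B,'' deferring entirely to \cite{DeutschPT06}. So your argument has to stand on its own, and most of it does. The soundness half is correct and is exactly what ``by construction'' means: the acceptance test is $(U')_{\Sigma,S} \equiv_S U = (Q)_{\Sigma,S}$, and Theorem~\ref{chase-theorem} converts this into $U' \equiv_{\Sigma,S} Q$. The first part of your completeness half is also correct: with $\iota$ the chase homomorphism from $Q'$ into $(Q')_{\Sigma,S}$ and $\nu$ the containment mapping witnessing $U \sqsubseteq_S (Q')_{\Sigma,S}$, the image $U'$ of $Q'$ under $\nu\circ\iota$ is one of the enumerated subqueries, and the sandwich $U \sqsubseteq_S U' \sqsubseteq_S Q'$ together with $Q' \equiv_{\Sigma,S} U$ shows that $U'$ passes the backchase test. (A minor slack, which the paper's statement shares: to run the test you need the chase of each candidate $U'$ to terminate, which does not follow from termination of the chase of $Q$ alone; your appeal to weak acyclicity silently strengthens the hypothesis.)

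The genuine gap is the final injectivity step. Your dichotomy works in the atom-collapse case: if $\nu\circ\iota$ merges two atoms of $Q'$, then $S_1 = \theta(Q')$ (the induced renaming, which now carries a duplicated atom) and $S_2$ ($S_1$ with the duplicate dropped) are both $\Sigma$-equivalent to $Q'$ by the sandwich, and this is precisely the pair forbidden by the definition of $\Sigma$-minimality in Section~\ref{problem-stmt-sec}. But when $\nu\circ\iota$ merges \emph{variables without merging any atoms}, that definition is not violated: it requires $S_2$ to drop at least one atom from $S_1$, and a pure variable identification creates no droppable atom. Your proposed repair --- transporting back along $U \rightarrow (Q')_{\Sigma,S}$ to exhibit a retraction onto a proper sub-instance of $Q'$ --- cannot close this case, because the image then has exactly as many atoms as $Q'$ (there is no proper sub-instance), and the transported map lands in the chase of $Q'$, not in $Q'$ itself. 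Moreover the case is real, and arises already because egd chase steps merge variables. Take $\Sigma = \{ r(A,B) \rightarrow A = B \}$ and $Q'(X) \ :- \ r(X,Y), s(Y)$, with $Q = Q'$. Every renaming of $Q'$ is either $Q'$ itself or $r(X,X), s(X)$, and dropping any atom from either yields a query that is unsafe or not $\Sigma$-equivalent to $Q'$; so $Q'$ \emph{is} $\Sigma$-minimal under the paper's definition. Yet $U = (Q)_{\Sigma,S}$ is $Q_1(X) \ :- \ r(X,X), s(X)$, the map $\nu\circ\iota$ sends $Q'$ onto all of $U$ (already $\iota$ identifies $Y$ with $X$), and no subquery of $U$ is isomorphic to $Q'$ --- so C\&B never emits $Q'$. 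Hence the isomorphism $Q' \cong U'$ you need is false under the literal definition; the completeness claim is only salvageable if $\Sigma$-minimality is read, as the C\&B literature intends, to also exclude proper $\Sigma$-equivalence-preserving variable identifications. Under that strengthened reading your two-case argument does close (any non-injectivity immediately yields a forbidden collapse), but as written the last step of your proof cannot be completed.
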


The proof of Theorem~\ref{c-and-b-theorem} 
is by construction of C\&B. 

\section{Keys of Relations}
\label{key-app}


This section of the appendix  provides basic definitions for the standard notion of a key of a relation~\cite{GarciaMolinaUW02}. 

\subsection{Attributes and Relations}
\label{attr-app}

Let $\cal U$ be a countably infinite set of {\it attributes.} The {\it universe} $U$ is a finite subset of $\cal U$. A {\it relation schema} $R$ {\it of arity} $k$ is a subset of $U$ of cardinality $k$. A {\it database schema} (or, simply, {\it schema}) $\cal D$ over $U$ is a finite set of relation schemas $\{ R_1,\ldots,R_t \}$ with union $U$, of  arities $k_1,\ldots,k_t$, respectively.

Each attribute $A \in {\cal U}$ has an associated set of {\it values} $\Delta(A)$, called $A$'s {\it domain.} The {\it domain} is the set of values $\Delta = \bigcup_A \Delta(A)$. Let $\cal D$ be a schema over $U$, $R \in {\cal D}$ a relation schema and $X$ a subset of $U$. An {\it X-tuple} $t$ is a mapping from $X$ into $\Delta$, such that each attribute $A \in X$ is mapped to an element of $\Delta(A)$. A (generally bag-valued) {\it relation} $r$ over $R$  is a finite collection of $R$-tuples. 
A {\it database (instance)} $D$ of $\cal D$ is a set of relations, with one relation for each relation schema of $\cal D$.

\subsection{Functional Dependencies and Keys}

Consider a database schema $\cal D$ with $n$-ary relation symbol $P$ such that $n > 1$.  
A {\it functional dependency (fd)}  on relation $P$ in $\cal D$  
is an egd of the form $p(\bar{X},Y,\bar{Z}) \wedge p(\bar{X},Y',\bar{Z}') \rightarrow Y = Y'$, such that predicate $p$ corresponds to relation $P$. 
Here, $Y$ and $Y'$ must be in the same position in the respective atoms, meaning the following. 
Let $Y$ be the $i$th argument of atom $p(\bar{X},Y,\bar{Z})$, for some $ 1 \leq i \leq n$. Then $Y'$ is the $i$th argument of atom $p(\bar{X},Y',\bar{Z}')$. Similarly, we require each element of the vector $\bar{X}$ to be in the same position in each of $p(\bar{X},Y,\bar{Z})$ and $p(\bar{X},Y',\bar{Z}')$.

\begin{definition}{Implied functional dependency}
Let $\sigma$ be an fd on relation $R$, and let $\Sigma$ be a set of fds on $R$. Then $\sigma$ {\em is a functional dependency implied by} $\Sigma$ if $\sigma$ holds on all instances of relation $R$ that satisfy $\Sigma$.
\end{definition}

Standard textbooks (see, e.g.,~\cite{GarciaMolinaUW02}) describe algorithms for solving the problem of finding all fds implied by a given set of dependencies on the schema of a relation. 

Let ${\bf K} = \{ A_{i1},\ldots,A_{ip} \}$ be a nonempty proper subset of the set of attributes of $n$-ary relation $R(A_1,\ldots,A_n)$, with $n > 1$. That is, $1 \leq p < n$  and $A_{ij} \in \{ A_1,\ldots,A_n \}$ for each $j \in \{ 1,\ldots,p \}.$ In the definitions that follow, we will use the following notation: Let $\sigma({\bf K} | A_i)$, for some $i \in \{ 1,\ldots,n \}$ such that  $A_i \notin {\bf K}$, 
denote an fd that equates the values of attribute $A_i$ of $R$  whenever the two $r$-atoms in the left-hand side of $\sigma({\bf K} | A_i)$ agree on the values of all and only attributes in ${\bf K}$. For example, if the schema of $R$ is $R(A,B,C,D)$ and ${\bf K} = \{ A,C \},$ then $\sigma(A,C | B)$ is defined as 
\begin{tabbing}
$\sigma(A,C | B): r(A,B_1,C,D_1) \wedge r(A,B_2,C,D_2) \rightarrow B_1 = B_2.$
\end{tabbing}

\begin{definition}{Superkey of relation}
${\bf K}$ is a {\em superkey of relation} $R$ if for each attribute $A$ in the set $\{ A_1,\ldots,$ $A_n \} - {\bf K},$ it holds that fd $\sigma({\bf K} | A)$ is implied by the set $\Sigma$ of fds on $R$.
\end{definition}

The set of all attributes of $R$ is also a superkey of $R$. 

\begin{definition}{Key of relation}
${\bf K}$ is a {\em key of relation} $R$ if (1) ${\bf K}$ is a superkey of $R$, and (2) for each nonempty proper subset ${\bf K'}$ of ${\bf K},$ ${\bf K'}$ is not a superkey of $R$.
\end{definition}

\section{Tuple IDs for Relations}
\label{appendix-a}



In this section of the appendix  we present a solution to the problem of ensuring, under {\it bag} semantics, that certain base relations are {\it sets} in all database instances. To this end, we provide here a formal framework for {\it tuple IDs}, which are unique tuple identifiers commonly used in implementations of real-life database-management systems~\cite{GarciaMolinaUW02}. 
Our approach to ensuring that some relations are always set valued is to use functional dependencies (Appendix~\ref{key-app}) to force certain relations to be set valued, by restricting tuples with the same ``contents'' (that is, all values with the exception of the tuple ID) to have the same tuple ID.


Assume bag semantics for query evaluation and consider relation symbol $R_i$ in database schema $\cal D$. (Section~\ref{attr-app} has the relevant definitions.) 
We follow the approach taken in implementations of real-life database-management systems~\cite{GarciaMolinaUW02} by incrementing the arity of $R_i$. As a result, the arity of each relation $R_i$ becomes $k_i + 1$ instead of the original $k_i$ as defined in Section~\ref{prelim-section}.\footnote{We emulate the standard implementation practice that tuple IDs be invisible to the users of the database system; that is, in our approach the user assumes that the arity of each relation $R_i$ is still $k_i$.} 

Let ${\cal D}'$ be the schema resulting from such arity modification in $\cal D$ for each relation $R_i$. 
By $D'$ we denote instances of ${\cal D}'$. In the schema of $R_i$ in ${\cal D}'$, let the last attribute of $R_i$ be the attribute for the tuple ID. The values of all tuple IDs are required to be distinct in all instances of ${\cal D}'$, which is formally specified as follows.

\begin{definition}{Tuple ID.}
\label{tuple-id-def}
For a relation symbol $R_i$ of arity $k_i + 1$ in database schema ${\cal D}'$, let queries $Q^{R_i}_{tid}$ and   $Q^{R_i}_{vals}$ be as follows:
\begin{tabbing}
$Q^{R_i}_{tid}(X_{k_i + 1}) \ :- \ R_i(X_1,\ldots,X_{k_i},X_{k_i + 1}).$ \\
$Q^{R_i}_{vals}(X_1,\ldots,X_{k_i}) \ :- \ R_i(X_1,\ldots,X_{k_i},X_{k_i + 1}).$
\end{tabbing} 
Then  the $(k_i + 1)$st attribute of $R_i$ in ${\cal D}'$ is {\em the tuple ID for} $R_i$ if in all instances $D'$ of ${\cal D}'$, the following relationship holds between the relations $Q^{R_i}_{tid}(D',B)$ and $Q^{R_i}_{vals}(D',B)$: 
\begin{tabbing}
$| coreSet(Q^{R_i}_{tid}(D',B)) | = | Q^{R_i}_{vals}(D',B) | .$
\end{tabbing} 

Here, $coreSet({\bf B})$ denotes the core-set of bag ${\bf B}$,   
and $|{\bf B}|$ denotes cardinality of ${\bf B}$. 
\end{definition} 


We now study the relationship between instances $D'$ of ${\cal D}'$ and instances $D$ of $\cal D$. Suppose that for relation $R_i$ of arity $k_i+1$ in ${\cal D}'$, the last attribute of $R_i$ is the tuple ID of $R_i$. By definition of tuple IDs, for each instance $D$ of $\cal D$, relation $R_i$ in $D$ can be obtained from some instance $D'$ of ${\cal D}'$, by evaluating query $Q^{R_i}_{vals}$ under bag semantics on $R_i$ in $D'$: 
\begin{displaymath}
Q^{R_i}_{vals}(X_1,\ldots,X_{k_i}) \ :- \ R_i(X_1,\ldots,X_{k_i},X_{k_i + 1}).
\end{displaymath} 

Now suppose that in (the original) schema $\cal D$, a relation with symbol $R_i$ and arity $k_i$ is required to be set valued in all instances of $\cal D$. We enforce this requirement by the functional dependency 
\begin{tabbing}
$\sigma^{R_i}_{tid}: \ R_i(X_1,\ldots,X_{k_i},X_{k_i + 1}) \wedge$ \\
$ \ \ \ \ \ \ \ \  R_i(X_1,\ldots,X_{k_i},Y_{k_i + 1}) \rightarrow X_{k_i + 1} = Y_{k_i + 1} $
\end{tabbing} 
on $R_i$ in schema ${\cal D}'$. This functional dependency enforces the same tuple ID for each pair of tuples that agree on the values of all other attributes of $R_i$. In conjunction with Definition~\ref{tuple-id-def}, which ensures uniqueness of each tuple ID within each instance of ${\cal D}'$, $\sigma^{R_i}_{tid}$ enforces that the answer to query $Q^{R_i}_{vals}$ (i.e., $R_i$ in schema $\cal D$) be set valued when computed under bag semantics.

In the context of Example~\ref{motivating-example}, in presence of tuple IDs we could formally define dependency $\sigma_6$ as an egd:
\begin{tabbing}
$\sigma_6: \ t(X,Y,Z,U) \wedge t(X,Y,Z,W) \rightarrow U = W.$
\end{tabbing} 
Here, the fourth attribute of relation $T$ is the tuple-ID attribute.

\section{Proof of Theorem 4.2}
\label{set-bag-appendix}


This section of the appendix provides a proof of Theorem~\ref{cv-updated-thm}. We first supply the details of Example~\ref{extend-necess-example}.  

\begin{example}
\label{ext-extend-necess-example}
To show that query $Q_3$ of Example~\ref{motivating-example} is not bag equivalent to query $Q_5$, 
\begin{tabbing}
$Q_3(X) \ :- \ p(X,Y), t(X,Y,W), s(X,Z).$ \\
$Q_5(X) \ :- \ p(X,Y), t(X,Y,W), s(X,Z), s(X,Z).$ 
\end{tabbing}
we construct a bag-valued database $D$, with the following relations:  $P = \{\hspace{-0.1cm}\{ (1,2) \}\hspace{-0.1cm}\}$, $R = \emptyset$, $S = \{\hspace{-0.1cm}\{ (1,3), (1,3) \}\hspace{-0.1cm}\}$,  $T = \{\hspace{-0.1cm}\{ (1,2,5) \}\hspace{-0.1cm}\}$, and $U = \emptyset$. 
On this database $D$, the answer to $Q_3$ is $Q_3(D,B) = \{\hspace{-0.1cm}\{ (1), (1) \}\hspace{-0.1cm}\}$, whereas $Q_5(D,B) = \{\hspace{-0.1cm}\{ (1), (1), (1), (1) \}\hspace{-0.1cm}\}$, by rules of bag semantics. From the fact that $Q_3(D,B)$ and $Q_5(D,B)$ are not the same bags, we conclude that 
bag equivalence $Q_3 \equiv_B Q_5$ does not hold. 

At the same time, by Theorem~\ref{cv-updated-thm} it holds that $Q_3$ and $Q_5$ are bag equivalent on all databases where relation $S$ is required to be a set. 
\end{example}

We now prove Theorem~\ref{cv-updated-thm}. The If part of the proof is straightforward. For the Only-If part, we argue that the only way for $Q_1$ and $Q_2$ to be bag equivalent under the set-enforcing constraints of database schema $\cal D$ is for $Q_1$ and $Q_2$ to satisfy the conditions of Lemma~\ref{bag-stubborn-lemma}. The proof of Lemma~\ref{bag-stubborn-lemma} completes the proof of  Theorem~\ref{cv-updated-thm}, by showing by contrapositive that bag equivalence of  $Q_1$ and $Q_2$ under the set-enforcing constraints of database schema $\cal D$ has to entail isomorphism of the queries $Q'_1$ and $Q'_2$ defined in the statement of Theorem~\ref{cv-updated-thm}. 

\begin{proof}{(Theorem~\ref{cv-updated-thm})}

{\bf If.} 
Let  database schema $\cal D$ have a relation symbol $P$, such that  the relation for $P$ is  set valued in all (bag-valued) instances $D$ over $\cal D$. (Appendix~\ref{appendix-a} 
provides an approach to enforcing this set-valuedness constraint using functional dependencies that involve tuple IDs.) Consider an arbitrary CQ query  $Q_1$  that  has a subgoal with predicate $p$ corresponding to relation $P$; w.l.o.g. let the subgoal be $p(\bar{W})$. Let $Q_2$ be a CQ query obtained by adding to the body of $Q_1$ a duplicate of $p(\bar{W})$. 

We argue that for $Q_1$ and $Q_2$ as described above, it holds that $Q_1 \equiv_B Q_2$ under the set-enforcing dependencies of the schema $\cal D$. (The claim of the If direction of the theorem is immediate from this observation.) 
Indeed, consider an arbitrary instance $D$ of database schema $\cal D$, such that $D$ satisfies the set-enforcing dependencies of the schema $\cal D$. From the definition of bag semantics for query evaluation it follows that each assignment satisfying the body of $Q_1$ w.r.t. $D$ is also a satisfying assignment for the body of $Q_2$ w.r.t. $D$, and vice versa. Further, each such satisfying assignment $\gamma$ maps $p(\bar{W})$, in the body of $Q_1$, into a {\it single} tuple $t$ in relation $P$ in $D$, and similarly  $\gamma$ maps {\it both copies} of $p(\bar{W})$, in the body of $Q_2$, into the same single tuple $t$, due to relation $P$ being set valued in the database $D$. It follows that each such satisfying assignment $\gamma$ contributes to each of $Q_1(D,B)$ and $Q_2(D,B)$ {\it the same} number of tuples under bag semantics for query evaluation. The claim of the If direction of Theorem~\ref{cv-updated-thm} is immediate from the above observation. 

\mbox{}

{\bf Only-If.} The proof is by contrapositive. For two CQ queries $Q_1$ and $Q_2$, let $Q_1 \equiv_B Q_2$ hold in the absence of all dependencies other than the set-enforcing dependencies of the schema $\cal D$. Consider queries $Q'_1$ and $Q'_2$ defined in the statement of Theorem~\ref{cv-updated-thm}. We assume that $Q'_1$ and $Q'_2$ are not isomorphic, and obtain from this assumption that $Q_1$ and   $Q_2$ are not bag equivalent on at least one database that satisfies the set-enforcing dependencies of schema $\cal D$, in contradiction with what we are given. 

W.l.o.g., let $s$ be a subgoal of query $Q'_1$ such that either $Q'_2$ has no subgoals with the predicate of $s$, or $Q'_2$ has fewer (but still a positive number of) subgoals with the predicate of $s$ than $Q'_1$  does. Consider first the case where $Q'_2$ has no subgoals with the predicate of $s$; it follows from the construction of queries $Q'_1$ and $Q'_2$ that $Q_2$ does not have subgoals with the predicate of $s$ either, whereas $Q_1$ has at least one occurrence of subgoal with the predicate of $s$. 
%
%
Observe that in this case, {\em set} equivalence between $Q_1$ and $Q_2$ does not hold by the results of~\cite{ChandraM77}. From the result of~\cite{VardiBagsPods93} (see Proposition~\ref{b-bs-s-implic-prop} in this current paper) that bag equivalence implies set equivalence, it follows immediately that bag equivalence of $Q_1$ and $Q_2$ cannot hold either, {\em in presence} of the set-enforcing dependencies in the schema $\cal D$. (This follows from the fact that $Q_2 \sqsubseteq_S \hspace{-0.5cm / \hspace{0.5cm}} Q_1$ implies existence of a {\em set-valued} database on which $Q_2$ under {\em set} semantics produces a tuple $t$, such that $t$ is {\em not} in the {\em set-}semantics answer to $Q_1$ on the same database.) Thus, we have arrived at a contradiction with our assumption that $Q_1 \equiv_B Q_2$ on all databases satisfying the set-enforcing dependencies of the schema $\cal D$. 

We now consider the remaining case concerning the number in $Q'_2$ of subgoals with the predicate of $s$, that is the case where $Q'_2$ has fewer (but still a positive number of) subgoals with the predicate of $s$ than $Q'_1$  does. Suppose first that there is no {\em bag-set} equivalence between $Q_1$ and $Q_2$. That is, by Theorem~\ref{cv-theorem} we assume that the canonical representations of $Q'_1$ and of $Q'_2$ (which are the same as the canonical representations of $Q_1$ and of $Q_2$, respectively) are not isomorphic. Then similarly to the previous case considered in this proof, from Proposition~\ref{b-bs-s-implic-prop} we obtain immediately the contradiction to $Q_1 \equiv_B Q_2$ under the set-enforcing dependencies of schema $\cal D$. (Similarly to the case above, $Q_1 \equiv_{BS} Q_2$ would have to be violated on a {\em set-}valued database, therefore the set-enforcing dependencies of the schema $\cal D$ would be satisfied in that $Q_1 \equiv_{B} Q_2$ would be violated on the {\em same} database.) 

Thus, for the rest of this proof we assume that (1) $Q_1 \equiv_{BS} Q_2$, and (2) $Q'_1 \equiv_{BS} Q'_2$ (from $Q_1 \equiv_{BS} Q_2$ and by construction of $Q'_1$ and $Q'_2$). That is, for both pairs of queries the canonical representations are isomorphic. 
Under these restrictions, the only way $Q'_1$ and $Q'_2$ can be nonisomorphic is the case where $Q'_1$ (w.l.o.g.) has more subgoals (than $Q'_2$) whose predicate corresponds to a relation, say $R$, that is {\em not} required to be a set in all instances of schema $\cal D$. (Indeed, if $Q_1$ and $Q_2$ have this number-of-subgoals discrepancy for a predicate whose relation {\em is} required to be a set in all instances of $\cal D$, then $Q'_1$ and $Q'_2$ must have the same number of such subgoals by $Q_1 \equiv_{BS} Q_2$ and by construction of $Q'_1$ and $Q'_2$.) Note that in this case, relation symbol $R$ must belong to ${\cal D} - \{ P_1,\ldots,P_k \}$ (``$-$'' is set difference), and thus the subset relationship  $\{ P_1,\ldots,P_k \} \subseteq {\cal D}$ is proper in this case, that is $\{ P_1,\ldots,P_k \} \subset {\cal D}$. Recall that $\{ P_1,\ldots,P_k \}$ is the maximal subset of $\cal D$ such that all symbols in $\{ P_1,\ldots,P_k \}$ correspond to relations required to be set valued in all instances of $\cal D$. 

We finish the proof of Theorem~\ref{cv-updated-thm} by proving Lemma \ref{bag-stubborn-lemma}, which constructs a database $D$ satisfying the set-enforcing dependencies of schema $\cal D$. By construction, database $D$ is a counterexample to $Q_1 \equiv_B Q_2$ (on databases satisfying the set-enforcing dependencies of schema $\cal D$), whenever  $Q'_1$ has more subgoals (than $Q'_2$) whose predicate corresponds to a relation that is not required to be a set in all instances of schema $\cal D$. 
\end{proof}

\begin{lemma}
\label{bag-stubborn-lemma}
Let $\cal D$, $\{ P_1,\ldots,P_k \} \subset {\cal D}$, $Q_1$, $Q_2$, $Q'_1$, and $Q'_2$ be as specified in Theorem~\ref{cv-updated-thm}, and let $Q_1 \equiv_{BS} Q_2$. Let $R$ be a relation symbol in the set ${\cal D} - \{ P_1,\ldots,P_k \}$; that is, relation $R$ is not required to be a set in all instances of $\cal D$. Suppose that $Q'_1$ has strictly more subgoals whose predicate corresponds to $R$ than $Q'_2$ does. Then there exists an instance $D$ of $\cal D$ such that all of relations  $P_1,\ldots,P_k$ are set valued in $D$, and such that $Q_1(D,B)$ is not the same bag as $Q_2(D,B)$. 
\end{lemma}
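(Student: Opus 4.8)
The plan is to adapt the Chaudhuri--Vardi counting argument, but to place formal multiplicities only on the non-set-valued relation $R$ (keeping every other tuple at multiplicity $1$), so that the counterexample instance $D$ automatically keeps $P_1,\ldots,P_k$ set valued as the lemma demands. First I would exploit the hypothesis $Q_1 \equiv_{BS} Q_2$: by Theorem~\ref{cv-theorem} the canonical representations $Q_{1,c}$ and $Q_{2,c}$ are isomorphic via a variable renaming $h$ that maps the head of $Q_2$ to the head of $Q_1$. Freezing $Q_{1,c}$ (replacing each distinct variable by a distinct constant) produces a \emph{set-valued} canonical database $D_0$; let $\mathit{freeze}_1$ be the associated canonical homomorphism. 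Then $\mathit{freeze}_1$ is a satisfying assignment of $Q_1$ sending the head to a tuple $t_0$, and $\mathit{freeze}_1 \circ h$ is a satisfying assignment of $Q_2$ sending the head to the \emph{same} $t_0$.

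Next I would attach a formal multiplicity $x_t$ to each tuple $t$ of relation $R$ in $D_0$, leaving all other tuples at multiplicity $1$, and track the multiplicity of the single output tuple $t_0$. By the definition of bag semantics (Section~\ref{bag-bag-set-defs}), a satisfying assignment $\gamma$ contributes factor $1$ for every non-$R$ subgoal and factor $x_{\gamma(s)}$ for every $R$-subgoal $s$, so this multiplicity equals
\[
M_i(\vec{x}) \;=\; \sum_{\gamma}\ \prod_{s} x_{\gamma(s)},
\]
where $\gamma$ ranges over the satisfying assignments of $Q_i$ with $\gamma(\mathrm{head}) = t_0$ and $s$ ranges over the $R$-subgoals of $Q_i$. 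Writing $r_i$ for the number of $R$-subgoals of $Q_i$ (equivalently of $Q'_i$, since forming $Q'_i$ removes duplicates only for $P_1,\ldots,P_k$ and hence deletes no $R$-subgoal), every monomial of $M_i$ has total degree exactly $r_i$; thus $M_i$ is a homogeneous polynomial of degree $r_i$ with nonnegative integer coefficients.

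The key step is then purely algebraic. The coefficients of $M_1$ are nonnegative, and $M_1$ is not the zero polynomial, because $\mathit{freeze}_1$ is itself one of the summed assignments and so contributes a monomial with positive coefficient. Since $r_1 > r_2$ by hypothesis, $M_1$ and $M_2$ are homogeneous of different degrees with $M_1 \neq 0$, so $M_1 \not\equiv M_2$ as polynomials. Concretely, setting $x_t = N$ for every $R$-tuple yields $M_1 = c_1 N^{r_1}$ and $M_2 = c_2 N^{r_2}$ with $c_1 > 0$, whence $M_1 \neq M_2$ for all sufficiently large $N$. Taking $D$ to be $D_0$ with each $R$-tuple given multiplicity $N$ produces an instance of $\mathcal{D}$ in which $P_1,\ldots,P_k$ are still set valued (their multiplicities were never altered), yet $t_0$ occurs a different number of times in $Q_1(D,B)$ than in $Q_2(D,B)$, so these are distinct bags, as required.

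The main obstacle—and the point meriting the most care—is justifying that the two multiplicity counts really are \emph{homogeneous} of degrees $r_1$ and $r_2$ and that the leading count $c_1$ cannot be cancelled by other assignments. Both are secured by the same two facts: every satisfying assignment contributes a monomial of degree exactly equal to the (multiset) number of $R$-subgoals, and all contributions carry nonnegative coefficients, so no cancellation can occur and $M_1$ genuinely attains degree $r_1$. I would also record the minor verification that $\mathit{freeze}_1 \circ h$ maps all of $Q_2$'s atoms (including duplicated ones) into $D_0$, which follows because $h$ maps atoms of $Q_{2,c}$ to atoms of $Q_{1,c}$ and $\mathit{freeze}_1$ maps the latter to tuples of $D_0$.
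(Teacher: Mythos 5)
Your proof is correct, and it builds essentially the same counterexample database as the paper: the canonical database of the common canonical representation, with every $R$-tuple replicated a large number of times and all other relations left untouched (which is exactly why $P_1,\ldots,P_k$ stay set valued). Where you genuinely depart from the paper is in how the two answers are then separated. The paper compares the multiplicity of one distinguished tuple $t^*$ in $Q_1(D,B)$ --- lower-bounded by $m^{n_1}$ --- against an explicit, rather crude upper bound of the form $n_1^{2n_2}\cdot n_4^{\,n_3-n_2}\cdot m^{n_2}$ on the \emph{total cardinality} of $Q_2(D,B)$, with a separate case when $n_3=n_2$, and then chooses $m$ to beat that bound. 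You instead track the multiplicity of the \emph{same} tuple $t_0$ in both answers and observe that, as a function of the formal multiplicities of the $R$-tuples, each count is a homogeneous polynomial of degree exactly $r_i$ (the number of $R$-subgoal occurrences of $Q_i$) with nonnegative integer coefficients, the first one nonzero; evaluating at $x_t = N$ for all $t$ gives exactly $c_1 N^{r_1}$ versus $c_2 N^{r_2}$, which differ once $N > c_2$ since $c_1 \geq 1$ and $r_1 > r_2$. Your framing buys exactness (equalities rather than over-counts), eliminates the paper's combinatorial constants and its case split, and isolates the real mechanism --- degree separation with no possible cancellation; what the paper's cruder version buys is only an explicitly computable threshold for $m$ in terms of syntactic parameters of the queries. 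Two minor remarks: the verification you flag about $\mathit{freeze}_1 \circ h$ is not actually needed, since the argument only requires $c_1 \geq 1$ (witnessed by $\mathit{freeze}_1$ itself) and $c_2 \geq 0$; and your observation that $r_i$ coincides for $Q_i$ and $Q'_i$ because forming $Q'_i$ never deletes $R$-subgoals is the same bookkeeping point the paper makes at the start of its proof.
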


By the above characterization, database $D$ is a counterexample to queries $Q_1$ and $Q_2$ being bag equivalent on all instances of $\cal D$ that satisfy the set-enforcing restrictions of schema $\cal D$. 

The intuition for the proof of Lemma~\ref{bag-stubborn-lemma} is as follows. Let query $Q_1$ have $n_1 > 1$ subgoals whose predicate corresponds to relation $R$, such that $R$ is not required to be set valued in instances of schema $\cal D$. (Part of the proof is to show that by the properties of this relation symbol $R$ and by construction of $Q'_1$ from $Q_1$, it holds that $Q_1$ and $Q'_1$ have exactly the same number of subgoals whose predicate corresponds to $R$. We make the same observation about $Q_2$ and $Q'_2$.) Further, let query $Q_2$ have a positive number (by proof of Theorem~\ref{cv-updated-thm}) $n_2 < n_1$ of subgoals whose predicate corresponds to $R$. We build a database $D$ on which $Q_1$ produces at least $m^{(n_1)}$ copies of some  (distinct) tuple $t^*$, with the positive integer value of $m$ to be determined. We then ``let'' $Q_2$ have as many satisfying assignments for the body of $Q_2$ w.r.t. this database $D$ as possible. That is, we assume the best case for $Q_2$ of producing as many tuples on database $D$ as possible. We then show that if the value of $m$ is chosen in a certain way, then the number $m^{(n_1)}$ of copies of tuple $t^*$ in the bag $Q_1(D,B)$ is greater than the maximal (i.e., best-case) number $N$ of {\em all} tuples (counting all duplicate tuples as separate tuples) that can be contributed by $Q_2$ to the bag $Q_2(D,B)$. The reason that we can make such a choice of the value of $m$ is that this maximal number $N$ grows asymptotically as $m^{(n_2)}$, with $0 < n_2 < n_1$, whereas the number  of copies of tuple $t^*$ in the bag $Q_1(D,B)$ is $m^{(n_1)}$.

\begin{proof}{(Lemma~\ref{bag-stubborn-lemma})}
%
Let $n_1$ be the number of subgoals in $Q'_1$ whose predicate corresponds to $R$, and let $n_2$ be the number of subgoals in $Q'_2$ whose predicate corresponds to $R$; $n_2 > 0$ by $Q_1 \equiv_{BS} Q_2$. By our assumption, $n_1 \geq n_2 + 1$.   By construction of $Q'_1$, $Q_1$ has the same number $n_1$ of subgoals whose predicate corresponds to $R$ as $Q'_1$ does; we make the same observation about the relationship between the number $n_2$ of subgoals in $Q_2$ whose predicate corresponds to $R$ and the (same) number $n_2$ of subgoals of the same type in $Q'_2$. (See proof of Theorem~\ref{cv-updated-thm} for the details of the argument.)

Let $D'$ be the (set-valued by definition, see Section~\ref{basics-sec}) canonical database for the canonical representation of $Q'_1$. (From the proof of Theorem~\ref{cv-updated-thm}, we have that $Q_1$, $Q'_1$,  $Q_2$, and $Q'_2$ all have the same canonical representation.) We construct from $D'$ our counterexample database $D$ as follows.

1. For each relation symbol $S$ in ${\cal D} - \{ R \}$, the relation $S$ in $D$ is the same as the relation $S$ in $D'$. By construction of $D'$, all the relations in $\{ P_1,\ldots,P_k \}$ are {\em set valued} in database $D$. Thus, database $D$ satisfies the set-enforcing restrictions of the schema $\cal D$. 

2. We build relation $R$ in $D$ by ``putting together'' $m> 0$ copies of relation $R$ in $D'$, with the value of $m$ to be determined shortly. That is, for each tuple $t$ such that $t$ is in the {\em set-valued} relation $R$ in $D'$, relation $R$ in $D$ has $m$ copies of tuple $t$; further, $R$ in $D$ has no other tuples. 


By definition of bag semantics for query evaluation, see Section~\ref{bag-bag-set-defs}, the bag $Q_1(D,B)$ has at least $m^{(n_1)}$ copies of some individual tuple. Indeed, consider the assignment mapping $\gamma$ from  $Q_1$ to $D$ such that $\gamma$ was used to {\em generate} the canonical database $D'$ of the canonical representation of $Q_1$. (See Section~\ref{basics-sec} for the description of the process of construction of a canonical database for a CQ query.) Observe that $\gamma$ is a satisfying assignment for the body of $Q_1$ w.r.t. database $D$. The assignment $\gamma$ maps  each of the $R$-subgoals of $Q_1$ to at least $m$ tuples of $R$, by construction of relation $R$ in $D$, and $\gamma$ maps each non-$R$ subgoal (if any) of $Q_1$ to exactly one tuple. Thus, for the tuple $t^* = \gamma(\bar{X}) \in Q_1(D,B)$, where $Q_1(\bar{X})$ is the head of the query $Q_1$, the multiplicity of $t^*$ in $Q_1(D,B)$ is at least $m^{(n_1)}$. (The ``at least'' part comes from the possibility that extra copies of the tuple $t^*$ could be contributed to the bag $Q_1(D,B)$ by one or more satisfying assignments $\gamma'$ for the body of $Q_1$ w.r.t. database $D$, such that each such $\gamma'$ is {\em not} identical to the assignment $\gamma$.)

At the same time, we show that the total size of the bag $Q_2(D,B)$ cannot exceed  
\begin{equation}
\label{qtwo-tuples-eq}
n_1^{(2n_2)} \times n_4^{(n_3-n_2)} \times m^{(n_2)}
\end{equation}
tuples, in case the total number $n_3$ of subgoals of query $Q_2$ is greater than $n_2$; $n_4$ is the number of  subgoals of $Q_1$ whose (subgoals') predicate does not correspond to relation symbol $R$.  (By $Q_1 \equiv_{BS} Q_2$ we have that $n_4 > 0$ whenever $n_3 > n_2$.) In this case, we set the value $m^*$ of $m$ to 
\begin{equation}
\label{qtwo-tuples-third-eq}
m^* \ := \ 1 + n_1^{(2n_2)} \times n_4^{(n_3-n_2)}  .
\end{equation}
It follows that 
\begin{equation}
\label{qtwo-tuples-fourth-eq}
(m^*)^{(n_1-n_2)} > n_1^{(2n_2)} \times n_4^{(n_3-n_2)} .
\end{equation}
That is (recall that $0 < n_2 < n_1$), 
\begin{equation}
\label{qtwo-tuples-fifth-eq}
(m^*)^{(n_1)} > n_1^{(2n_2)} \times n_4^{(n_3-n_2)} \times (m^*)^{(n_2)} .
\end{equation}
We conclude that on the database $D$ where the value of $m$ is fixed at $m^*$, the number of copies of tuple $t^*$ in the bag $Q_1(D,B)$ exceeds the number of {\em all} tuples in the bag $Q_2(D,B)$. Therefore, the bag $Q_1(D,B)$ is not the same bag as   $Q_2(D,B)$. 

(In case the total number $n_3$ of subgoals of query $Q_2$ is equal to $n_2$, we show that the bag $Q_2(D,B)$ cannot have more than 
\begin{equation}
\label{qtwo-tuples-second-eq}
n_1^{(2n_2)} \times m^{(n_2)}
\end{equation}
tuples. In this case, we set the value $m^*$ of $m$ to 
\begin{equation}
\label{qtwo-tuples-thirds-eq}
m \ := \ 1 + n_1^{(2n_2)}  .
\end{equation}
It follows that at this value $m^*$ of $m$, we have that 
\begin{equation}
\label{qtwo-tuples-fourths-eq}
(m^*)^{(n_1-n_2)} > n_1^{(2n_2)} .
\end{equation}
That is (recall that $0 < n_2 < n_1$), 
\begin{equation}
\label{qtwo-tuples-fifths-eq}
(m^*)^{(n_1)} > n_1^{(2n_2)} \times (m^*)^{(n_2)} .
\end{equation}
We conclude that on the database $D$ where the value of $m$ is fixed at $m^*$, the number of copies of tuple $t^*$ in the bag $Q_1(D,B)$ exceeds the number of {\em all} tuples in the bag $Q_2(D,B)$. Therefore, the bag $Q_1(D,B)$ is not the same bag as   $Q_2(D,B)$. The proof of this case is straightforward from the proof, see below, of Equation~\ref{qtwo-tuples-eq} for the case where the total number $n_3$ of subgoals of query $Q_2$ is greater than $n_2$.)


We now explain why  the bag $Q_2(D,B)$ cannot be of greater cardinality than the number of tuples specified in Equation~\ref{qtwo-tuples-eq}, in the case where the total number $n_3$ of subgoals of query $Q_2$ is greater than $n_2$. 
The idea of the proof is to ``let'' $Q_2$ have as many satisfying assignments for the body of $Q_2$ w.r.t. database $D$ as possible. That is, we assume the best case for $Q_2$ of producing as many tuples on database $D$ as possible. We take the following specific steps in building the upper bound: 
\begin{enumerate}

	\item We assume the best case for $Q_2$ of the number of satisfying assignments, w.r.t. database $D$, 
for the $n_3-n_2$ subgoals of $Q_2$ whose (subgoals') predicates do not correspond to $R$. The maximal number of such assignments cannot exceed 
\begin{equation}
\label{thru-eq}
n_4^{(n_3-n_2)} . 
\end{equation}
That is, the best case for $Q_2$ is to assume that all of the $n_3-n_2$ subgoals of $Q_2$ have {\em the same} predicate, say predicate $s$ corresponding to the relation symbol $S$, where $S$ may or may not be one of the relation symbols $P_1,\ldots,P_k$ specified in the formulation of this Lemma. We also assume that the $n_4 > 0$ non-$R$ subgoals of $Q_1$ also have the same predicate $s$. Database $D$ has at most $n_4$ tuples in relation $S$ (by construction of canonical databases). We assume the best case for $Q_2$ that each of the $n_3-n_2$ subgoals of $Q_2$ can map independently into each of the (at most) $n_4$ tuples, hence the formula of Equation~\ref{thru-eq}.   

	\item For each of the above $n_4^{(n_3-n_2)}$  assignments, $Q_2$ may have at most  
\begin{equation}
\label{thru-two-eq}
n_1^{(n_2)}  
\end{equation}
satisfying assignments, w.r.t. database $D$, for all the $n_2$ subgoals of $Q_2$ whose predicate corresponds to the relation symbol $R$. The computations are similar to those that we used in explaining Equation~\ref{thru-eq}. 

	\item For each of the $n_1^{(n_2)}$  satisfying assignments, w.r.t. database $D$, for all the $n_2$ subgoals of $Q_2$ whose predicate corresponds to the relation symbol $R$, $Q_2$ can produce on database $D$ at most 
\begin{equation}
\label{thru-three-eq}
(n_1 \times m)^{(n_2)}  
\end{equation}
tuples. We obtain the formula of Equation~\ref{thru-three-eq} by assuming that the evaluation of $Q_2$ admits a Cartesian product of $n_2$ copies of the relation $R$, where relation $R$ has at most $n_1 \times m$ tuples on $D$. 

	\item We combine Equations~\ref{thru-eq}, \ref{thru-two-eq}, and \ref{thru-three-eq}, to obtain that the total number of satisfying assignments for the body of $Q_2$ w.r.t. database $D$ cannot exceed
	\begin{equation}
\label{thru-four-eq}
n_1^{(n_2)} \times n_4^{(n_3-n_2)}  
\end{equation}
(satisfying assignments); and that, further, for each one of these assignments $Q_2$ produces on database $D$ at most 
	\begin{equation}
\label{thru-five-eq}
(n_1 \times m)^{(n_2)} 
\end{equation}
tuples (where each duplicate is counted separately) in the bag $Q_2(D,B)$. (Recall that all relations except $R$ are set valued in database $D$.) We conclude that the total number of tuples (including duplicates) that query $Q_2$ produces on database $D$ is at most
	\begin{equation}
\label{thru-six-eq}
(n_1)^{2(n_2)} \times n_4^{(n_3-n_2)}  \times m^{(n_2)}
\end{equation}
tuples. Equation~\ref{thru-six-eq} gives us an upper bound on the size of the bag $Q_2(D,B)$. Q.E.D.

\end{enumerate}
\vspace{-0.5cm}
\end{proof}


Consider an illustration to the proof of Lemma~\ref{bag-stubborn-lemma}. 

\nop{

We begin constructing database $D$ by forming a conjunction of all the subgoals of $Q'_1$ with all the subgoals of $Q'_2$, and by then dropping {\em all} duplicate subgoals from the conjunction. Let the result be a conjunction of subgoals $\phi$. Build the canonical database $D'$ corresponding to $\phi$. (By definition of canonical databases, they are built using conjunctions of subgoals only, that is using query bodies as opposed to full query definitions, which include query heads.) Let $\{ t_1,\ldots,t_m \}$ be the contents of relation $R$ in database $D'$, and let tuple $t^* \in \{ t_1,\ldots,t_m \}$ be the tuple contributed by the above subgoal $p$ to the database $D'$. W.l.o.g., let $t^*$ be $t_1$ in $R$ in $D'$. From $D'$, build a new database $D$, where the only difference from $D'$ is that relation $R$ in $D$ has {\em two} copies of tuple $t_1$, {\em one} copy of each other tuple of $R$ in $D'$,  and no other tuples.  That is, $R$ in $D$ is exactly the bag $\{\{ t_1, t_1, t_2,\ldots,t_m \}\}$. Observe that database $D$ satisfies all the set-enforcing dependencies of database schema $\cal D$. Example~\ref{next-ext-extend-necess-example} illustrates the construction of database $D$. 

It is easy to verify that the bag-semantics answer $Q_1(D,B)$ to query $Q_1$ on $D$ has at least one more tuple than the bag-semantics answer $Q_2(D,B)$ to query $Q_2$ on $D$. (See Example~\ref{next-ext-extend-necess-example} for an illustration.) Thus, $D$ is a counterexample database to  $Q_1 \equiv_B Q_2$ holding in the absence of all dependencies other than the set-enforcing dependencies of the schema $\cal D$, which concludes our proof. 
\end{proof}

} 

\begin{example}
\label{next-ext-extend-necess-example}
Let CQ queries $Q_7$ and $Q_8$ be defined as 
\begin{tabbing}
$Q_7(X) \ :- \ p(X,Y), r(X), r(X).$ \\
$Q_8(X) \ :- \ p(X,Y), r(X).$ 
\end{tabbing}
in the setting of Example~\ref{motivating-example}. 
To illustrate the proof of Lemma~\ref{bag-stubborn-lemma}, we construct a counterexample database to the claim that $Q_7$ and $Q_8$ are bag equivalent on all databases that satisfy just the set-enforcing dependencies of Example~\ref{motivating-example}. We use the fact that query $Q_7$ has two copies of subgoal $r(X)$, whereas $Q_8$ has just one copy of that subgoal.  (Recall that relation $R$ is not required to be a set on all instances of the database schema $\cal D$ of Example~\ref{motivating-example}.)  

Queries $Q_7$ and $Q_8$, as well as the database schema $\cal D$ of Example~\ref{motivating-example} together with its set-enforcing constraints, satisfy all the conditions of Lemma~\ref{bag-stubborn-lemma}. Observe that query $Q'_7$ (defined in the statement of Theorem~\ref{cv-updated-thm}) is isomorphic to $Q_7$, because relation $R$ is not required to be a set. Similarly, query $Q'_8$ is isomorphic to $Q_8$. Further, the canonical representation of each of $Q_7$, $Q_8$, $Q'_7$, and $Q'_8$ is isomorphic to query $Q_8$. 

Consider query $Q'_8$ and its canonical database $D'$, with $P = \{\hspace{-0.1cm}\{ (1,2) \}\hspace{-0.1cm}\}$ and $R = \{\hspace{-0.1cm}\{ (1) \}\hspace{-0.1cm}\}$. From $D'$, we construct a bag-valued database $D$, with relations $P = \{\hspace{-0.1cm}\{ (1,2) \}\hspace{-0.1cm}\}$ (same as $P$ in $D'$) and with $m > 0$ copies of tuple $(1)$ in relation $R$. That is, $R = \{\hspace{-0.1cm}\{ (1),\ldots, (1) \}\hspace{-0.1cm}\}$, with cardinality $m$ of bag $R$ in $D$. 
Let relations $S$, $T$, $U$  be empty sets in $D$. 
Then $D$ satisfies all the set-enforcing dependencies of Example~\ref{motivating-example}. 

Now using the notation of the proof of Lemma~\ref{bag-stubborn-lemma}, we have $n_1 = 2$. Here, $n_1$ is the number of subgoals of $Q'_7$ -- and thus also of $Q_7$ -- whose predicate corresponds to $R$. At the same time, $n_2 = 1 < n_1$,  where $n_2$ is the number of  subgoals of $Q'_8$ -- and thus also of $Q_8$ -- whose predicate corresponds to $R$. Further, the total number $n_3$ of subgoals of $Q_8$ is $n_3 = 2$, and the number $n_4$ of non-$R$ subgoals of $Q_7$ is $n_4 = 1$. 

It is easy to verify that the bag $Q_7(D,B)$ has $m^{(n_1)} = m^2$ copies of tuple $(1)$. At the same time, by the argument justifying Equation~\ref{qtwo-tuples-eq} in the proof of Lemma~\ref{bag-stubborn-lemma}, the total number of tuples (where each duplicate is counted separately) in the bag $Q_8(D,B)$ cannot exceed 
\begin{displaymath}
n_1^{(2n_2)} \times n_4^{(n_3-n_2)} \times m^{(n_2)} = 2^2 \times 1^{(2-1)} \times m^1 = 4m
\end{displaymath}
tuples. It is easy to see that for any value $m^*$ of $m$ such that $m^* > 4$, the number of copies of tuple $(1)$ in the bag $Q_7(D,B)$ is always going to be greater than the cardinality of the bag $Q_8(D,B)$.  

In fact, the upper bound of  Equation~\ref{qtwo-tuples-eq} is not tight for this example, as can be observed from the facts that 
\begin{itemize}
	\item the total number of copies of tuple $(1)$ in bag $Q_8(D,B)$ is $m$, and
	\item the core-set of the bag $Q_8(D,B)$ has no tuples other than $(1)$; therefore, the cardinality of the bag $Q_8(D,B)$ is $m$ as well. 
\end{itemize}
\nop{

\mbox{}

On this database $D$, the answer to $Q_3$ is $Q_3(D,B) = \{\hspace{-0.1cm}\{ (1) \}\hspace{-0.1cm}\}$, whereas $Q_7(D,B) = \{\hspace{-0.1cm}\{ (1), (1) \}\hspace{-0.1cm}\}$, by rules of bag semantics. From the fact that $Q_3(D,B)$ and $Q_7(D,B)$ are not the same bags, we conclude that 
bag equivalence $Q_3 \equiv_B Q_7$ on databases that satisfy the set-enforcing dependencies of Example~\ref{motivating-example} does not hold. 
} 
\vspace{-0.3cm}
\end{example}

%



\section{Proofs of the Theorems on\\ Sound Chase Steps}
\label{proof-sound-chase-steps-appendix}


We provide here representative parts of proofs for Theorems~\ref{bag-chase-sound-theorem} and~\ref{bag-set-chase-sound-theorem}. The idea of the complete proofs is to show, for an arbitrary embedded dependency, one of the following two things: 

(1) Either using the dependency results in sound chase steps, under the appropriate semantics, for all CQ queries, in case the format of the dependency  is described in the applicable theorem (i.e., either Theorem~\ref{bag-chase-sound-theorem} or Theorem~\ref{bag-set-chase-sound-theorem}).  Please see Proposition~\ref{sound-key-chase-prop}  in Section~\ref{prop-app-he} for an example of such a claim. 

(2) Or using the dependency results in unsound chase, in case the format of the dependency  is not  described in the theorem for the respective query-evaluation semantics (i.e., either Theorem~\ref{bag-chase-sound-theorem} or Theorem~\ref{bag-set-chase-sound-theorem}). Please see Propositions~\ref{one-unsound-key-chase-prop} and~\ref{two-unsound-key-chase-prop}  in Section~\ref{prop-app-he} for examples of such claims. 

All the remaining proofs for  Theorems~\ref{bag-chase-sound-theorem} and~\ref{bag-set-chase-sound-theorem} are analogous to the proofs of Propositions~\ref{sound-key-chase-prop} through~\ref{two-unsound-key-chase-prop}.

\subsection{Bag Projection}
\label{bag-proj-app}

This subsection of the appendix defines bag projection. We use the definition in the proof of Proposition~\ref{sound-key-chase-prop} in Section~\ref{prop-app-he}. 

Given positive integers $m$, $k$ and $i(1),$ $\ldots,i(k)$, such that for  each $j \in \{ 1,\ldots, k \}$ it holds that $1 \leq i(j) \leq m$. Then for  an $m$-tuple $t = (a_1,\ldots,a_m)$, we say that a $k$-tuple $t' = (a_{i(1)},\ldots,a_{i(k)})$  
is a {\it projection of $t$ on attributes in positions} $i(1),\ldots,i(k)$, denoted 
\vspace{-0.2cm}
\begin{tabbing}
$t' = $ $t[i(1),\ldots,i(k)]$.
\end{tabbing}
\vspace{-0.2cm}
Further, for the $m$, $k$ and $i(1),$ $\ldots,i(k)$ as above and for an $m$-ary relation $P$, 
a bag of tuples $B$ is a {\em bag projection of $P$ on attributes in  positions} $i(1),\ldots,i(k)$, denoted $B = \pi^{bag}_{i(1),\ldots,i(k)}(P)$, if each tuple $t \in P$ contributes to $B$ a separate tuple  $t' = t[i(1),\ldots,i(k)]$, and if $B$ has no other tuples. $B$ can be interpreted as  the answer $Q(D,B)$ on database $\{ P \}$ to query 
\vspace{-0.2cm}
\begin{tabbing}
$Q(X_{i(1)},\ldots,X_{i(k)}) \ :- \ p(X_1,\ldots,X_m)$, 
\end{tabbing}
\vspace{-0.2cm}
where the predicate $p$ corresponds to relation $P$.

\subsection{The Proofs}
\label{prop-app-he}

\begin{proposition}
\label{sound-key-chase-prop}
Given a CQ query $Q$ and a set of embedded dependencies $\Sigma$. Under bag semantics for query evaluation, a chase step $Q \Rightarrow^{\sigma}_B Q'$ using tgd $\sigma \in \Sigma$ is sound if  $Q \Rightarrow^{\sigma}_B Q'$ 
is (tgd) key-based, and for each subgoal $s(p_{ij})$ that the chase step adds to $Q$, relation $P_{ij}$ is set valued on all databases satisfying $\Sigma$. 
%
\end{proposition}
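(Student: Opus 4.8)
The plan is to fix an arbitrary bag-valued database $D$ with $D \models \Sigma$ and show directly that $Q(D,B)$ and $Q'(D,B)$ are the same bag; since $D$ is arbitrary this gives $Q \equiv_{\Sigma,B} Q'$, which by definition is exactly soundness of the step. Write $\sigma: \phi(\bar{X},\bar{Y}) \rightarrow \exists \bar{Z}\, \psi(\bar{Y},\bar{Z})$ and let $h$ be the homomorphism witnessing applicability, so that the body of $Q'$ is the body of $Q$ together with the new atoms $\psi(h(\bar{Y}),\bar{Z})$ for fresh $\bar{Z}$. The core of the argument is a multiplicity-preserving bijection between the satisfying assignments of $Q$ w.r.t. $D$ and those of $Q'$ w.r.t. $D$.

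First I would set up the correspondence in the easy direction: because the body of $Q'$ contains the body of $Q$ as a subconjunction, the restriction $\mu := \mu'|_{\mathrm{body}(Q)}$ of any satisfying assignment $\mu'$ of $Q'$ is a satisfying assignment of $Q$, and it agrees with $\mu'$ on the head variables $\bar{X}$ of $Q$ (which all lie in $\mathrm{body}(Q)$ by safety). Conversely, given a satisfying assignment $\mu$ of $Q$, the map $\mu \circ h$ satisfies $\phi$ w.r.t. $D$, and since $D \models \sigma$ this assignment extends to one satisfying $\psi$; pulling this extension back along the freshness of $\bar{Z}$ yields at least one $\mu'$ extending $\mu$ and satisfying $\mathrm{body}(Q')$.

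The hard part will be uniqueness of that extension, and this is exactly where the key-based hypothesis enters. For each new atom $p(\bar{Y}'_j,\bar{Z}'_j)$ of $\psi$, Definition~\ref{old-key-based-tgds-def} says that $\bar{Y}'_j$ is a superkey of the corresponding relation $P_{ij}$, and by hypothesis $P_{ij}$ is set valued on $D$ (as $D \models \Sigma$). Since $\mu$ already fixes the image of the universal variables $\bar{Y}'_j$ through $h$, the superkey property together with set-valuedness of $P_{ij}$ on $D$ forces the matching tuple of $P_{ij}$ — and hence the values assigned to the existential variables occurring in $\bar{Z}'_j$ — to be unique. Ranging over all atoms of $\psi$ determines every variable of $\bar{Z}$, so the extension $\mu'$ is unique; this yields the desired bijection $\mu \leftrightarrow \mu'$ with $\mu'(\bar{X}) = \mu(\bar{X})$.

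Finally I would check multiplicities. By the bag-semantics definition in Section~\ref{bag-bag-set-defs}, a satisfying assignment contributes $\Pi_{i} m_i$ copies of the head tuple, where $m_i$ is the multiplicity in $D$ of the tuple the $i$-th subgoal maps to. Every subgoal of $Q$ appears in $Q'$ with the same image under the bijection, and each newly added subgoal maps into a relation $P_{ij}$ that is set valued on $D$, so it contributes a factor $m_i = 1$. Hence the product for $\mu'$ in $Q'$ equals the product for $\mu$ in $Q$, and summing over the bijection the two bags assign identical multiplicities to every output tuple $t = \mu(\bar{X})$. This establishes $Q(D,B) = Q'(D,B)$, and since $D$ was an arbitrary model of $\Sigma$, we conclude $Q \equiv_{\Sigma,B} Q'$, i.e., the chase step is sound under bag semantics.
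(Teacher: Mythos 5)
Your proof is correct and follows essentially the same route as the paper's: both fix an arbitrary $D \models \Sigma$ and establish a multiplicity-preserving 1:1 correspondence between the satisfying assignments of $Q$ and of $Q'$, using the superkey condition to make the extension to the new atoms unique and set-valuedness of the added relations to ensure each new subgoal contributes a multiplicity factor of $1$. The only presentational difference is that the paper packages the correspondence through bag projections of the ``body relations'' (and a left-linear query-plan view), treating one added subgoal at a time, whereas you handle all added atoms simultaneously via the $\Pi_i m_i$ multiplicity formula; the mathematical content is the same.
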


\begin{proof}
Let $\sigma$ 
be of the form 
\vspace{-0.1cm}
\begin{tabbing}
$\sigma: \phi(\bar{X},\bar{Y}) \rightarrow \exists \bar{Z}$ $p_1(\bar{Y}_1,\bar{Z}_1) \wedge \ldots \wedge p_n(\bar{Y}_n,\bar{Z}_n)$, 
\end{tabbing}
\vspace{-0.1cm}
with $n > 0$. Here, the set of variables in each $\bar{Y}_i$, $i \in \{ 1,\ldots,n \},$ is the maximal subset, in the set of variables in $\bar{Y}_i \bigcup \bar{Z}_i$, of the set of variables in $\bar{Y}$. (We abuse the notation by treating $\bar{Y}_i \bigcup \bar{Z}_i$ as a set of variables and constants.) We show that the chase step $Q \Rightarrow^{\sigma}_B Q'$ using $\sigma$ is sound whenever for all $i \in \{ 1,\ldots,n \}$ such that $p_i(\bar{Y}_i,\bar{Z}_i)$ corresponds to a subgoal in $Q'$ that is not a subgoal of $Q$, it holds that (1)  $\bar{Y}_i$ is a superset of the key of relation symbol $P_i$ in $\cal D$, and (2) $P_i$ is set valued in all databases with schema $\cal D$. 

By our assumption that $\sigma$ is applicable to $Q$, (1) there exists a mapping $\mu$ from a (not necessarily proper) superset $\xi$ of $\phi$ to a subset of subgoals of $Q$. By the same assumption, (2) there does not exist a mapping $\mu'$ such that $\mu'$ is an extension of $\mu$ and such that $\mu'(\psi)$ is also a subset of subgoals of $Q$. Here, $\psi$ is the right-hand side of the tgd $\sigma$. 

Consider a mapping $\nu$ from $\phi$ to the body of $Q$, such that $\nu$ agrees with $\mu$ on all the variables in $\xi$ (note that  all of $\bar{Y}$ are in $\xi$), and such that $\nu$ maps the subset of  variables  $\bar{Z}$ in $\psi - \xi$ (here, ``$\psi - \xi$'' is read as set difference between sets of conjuncts $\psi$ and $\xi$) into distinct fresh variables. 
By definition of chase step for tgds, $\nu(\psi)$ adds at least one subgoal  to $Q$, which results in query $Q'$. Let one such new subgoal $S$ be the result of applying $\nu$ to atom $p_i(\bar{Y}_i,\bar{Z}_i)$ in the $\psi$ part of $\sigma$, for some $i \in \{ 1,\ldots,n \}.$ 

Consider  an arbitrary database $D$ with schema $\cal D$, such that $D$ satisfies the dependencies $\Sigma$. To finalize our proof, it remains to show that on $D$, the following two relations are the same as bags: $Q(D,B)$ and $Q''(D,B)$, where $Q''$ results from adding the subgoal $S$ to the body of $Q$. Here, each of $Q(D,B)$ and $Q''(D,B)$ is to be computed under bag semantics for query evaluation. 

Let $bQ(D,B)$ be the relation, on $D$, for the body of $Q(D,B)$, and let $bQ''(D,B)$ be the relation, on $D$, for the body of $Q''(D,B)$. Note that if $bQ(D,B)$ and $bQ''(D,B)$ are the same bags modulo the columns of $bQ(D,B)$, then $Q(D,B)$ and $Q''(D,B)$ are the same bags as well. (Recall that the heads of $Q$ and $Q''$ are the same by definition of $Q''$.) When we say ``$bQ(D,B)$ and $bQ''(D,B)$ are the same bags modulo the columns of $bQ(D,B)$'', the meaning is as follows: If we do bag projection on $bQ''(D,B)$ on just the columns of $bQ(D,B)$, then we will obtain precisely $bQ(D,B)$. (Please see Appendix~\ref{bag-proj-app} for the definition of bag projection.)

We now show that $bQ(D,B)$ and $bQ''(D,B)$ are the same bags modulo the columns of $bQ(D,B)$, which finalizes our proof. The case where $bQ(D,B)$ is empty is trivial, thus we assume for the remainder of the proof that $bQ(D,B)$ is not an empty bag. Consider an assignment mapping $\lambda$ that was used to obtain a tuple $t$ in bag $bQ(D,B)$. By definition of (tgd) key-based chase step for $\sigma$, there is exactly one way (up to duplicates of stored tuples) to extend $\lambda$, to obtain a (distinct) tuple $t' \in P_i$, such that $t'$ ``matches'' $t$ according to the join conditions between the body of $Q$ and the new subgoal $S$ in $Q''$.\footnote{That is, the extension of $\lambda$ is a satisfying assignment for the body of $Q''$ w.r.t database $D$. In this and other proofs, we can  use ``procedural'' evaluation of queries under each of bag and bag-set semantics. The correctness of this usage stems from the fact that our definitions for query evaluation under bag and bag-set semantics, see Section~\ref{bag-bag-set-defs}, are consistent with the operational semantics of evaluating CQ queries in the SQL standard, as shown in~\cite{VardiBagsPods93}.} Further, from the fact that the relation $P_i$ is a set on $D$, we obtain that $t'$ is a unique tuple (i.e., it has no duplicates in $P_i$) that ``matches'' $t$ in the above sense. As a result, each single tuple in $bQ(D,B)$ corresponds, for the purposes of computing $bQ''(D,B)$ from $bQ(D,B)$, to exactly one tuple in $P_i$. 

Observe that the above procedure for computing\linebreak $bQ''(D,B)$ from $bQ(D,B)$ corresponds to a valid plan for computing $bQ''(D,B)$ from only the stored relations in $D$. (This plan is a left-linear plan, such that $P_i$ is the right input of the top join-operator node in the tree. For the basics on query-evaluation plans, please see~\cite{GarciaMolinaUW02}.) We conclude that $bQ(D,B)$ and $bQ''(D,B)$ are the same bags modulo the columns of $bQ(D,B)$.
\end{proof}

\begin{proposition}
\label{one-unsound-key-chase-prop}
Given a CQ query $Q$ and a set of embedded dependencies $\Sigma$. Consider a key-based chase step $Q \Rightarrow^{\sigma}_B Q'$ using tgd $\sigma \in \Sigma$, 
\begin{tabbing}
$\sigma: \phi(\bar{X},\bar{Y}) \rightarrow \exists \bar{Z} \ \psi(\bar{Y},\bar{Z})$. 
\end{tabbing}
Suppose that at least one relation $P_i$ used in $\psi$ is not set valued. Further, suppose that in the chase step $Q \Rightarrow^{\sigma}_B Q'$ using $\sigma,$ $Q'$ is obtained by adding to the body of $Q$ a new $P_i$-subgoal $s(P_i)$ (possibly alongside other subgoals).\footnote{I.e., in the chase step $Q \Rightarrow^{\sigma}_B Q'$, applying $\sigma$ to $Q$ may generate other new subgoals besides the $P_i$-subgoal.} Then under bag semantics for query evaluation, the chase step $Q \Rightarrow^{\sigma}_B Q'$ using $\sigma$ is not sound.
\end{proposition}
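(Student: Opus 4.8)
The plan is to disprove soundness directly, by exhibiting a single bag-valued database $D$ with $D \models \Sigma$ on which the two queries return different bags, i.e. $Q(D,B) \neq Q'(D,B)$. The guiding intuition is exactly the one that motivates the set-valuedness hypothesis of Theorem~\ref{bag-chase-sound-theorem}: because $P_i$ is not required to be a set on instances of $\cal D$, we are free to place a duplicate of the very tuple that witnesses the new subgoal $s(P_i)$, and under bag semantics this duplicate is counted multiplicatively, inflating the multiplicity of some answer tuple of $Q'$ beyond its multiplicity in $Q$.

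First I would fix a set-valued witness. Let $D_1$ be the canonical database $D^{((Q')_{\Sigma,S})}$ of the terminal set-chase result of $Q'$ (which exists in the standing setting of this paper, where set-chase of the inputs terminates); then $D_1$ is set valued, $D_1 \models \Sigma$, and the canonical assignment $\gamma^{*}$ satisfies $\mathit{body}(Q')$ w.r.t. $D_1$. Since the chase step only adds subgoals, $\mathit{body}(Q) \subseteq \mathit{body}(Q')$, so the restriction of $\gamma^{*}$ to $\mathit{body}(Q)$ satisfies $Q$ as well. Write $t^{*} = \gamma^{*}(s(P_i))$ for the $P_i$-tuple matched by the new subgoal, and let $D$ be obtained from $D_1$ by inserting a second copy of $t^{*}$ into $P_i$, leaving every other relation unchanged. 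The key point is that satisfaction of embedded dependencies (tgds and egds) by a bag-valued database depends only on its core-set and not on multiplicities; since $D$ and $D_1$ have the same core-set, $D \models \Sigma$.

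Next I would carry out the multiplicity bookkeeping using the product rule of Section~\ref{bag-bag-set-defs}. Because $D \models \Sigma$ and $\sigma \in \Sigma$, every assignment $\gamma$ satisfying $\mathit{body}(Q)$ w.r.t. $D$ extends to an assignment $\gamma'$ satisfying $\mathit{body}(Q')$ (the witness for $\psi$ exists, since $\gamma$ composed with the chase homomorphism maps $\phi$ into $D$); and because $\sigma$ is key based (Definition~\ref{old-key-based-tgds-def}), the key attributes of each added atom are filled by the already-bound variables $\bar{Y}$, so the key constraints implied by $\Sigma$ force the non-key contents uniquely, whence this extension is unique as an assignment. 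Consequently, for each such $\gamma$ the contribution of $\gamma'$ to $Q'(D,B)$ equals the contribution of $\gamma$ to $Q(D,B)$ multiplied by $\prod_{g}\,\mathrm{mult}_{D}(\gamma'(g))$ taken over the newly added subgoals $g$, a product of factors each $\geq 1$. For the canonical assignment $\gamma^{*}$ this extra product is at least $\mathrm{mult}_{D}(t^{*}) = 2$, whereas for every other $\gamma$ it is at least $1$; summing over all $\gamma$ with $\gamma(\bar{X}) = \gamma^{*}(\bar{X})$ therefore yields a multiplicity of $t_0 := \gamma^{*}(\bar{X})$ in $Q'(D,B)$ that is strictly greater than its multiplicity in $Q(D,B)$. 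Thus $Q(D,B) \neq Q'(D,B)$ and the step is not sound.

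I expect the main obstacle to be the justification that the added subgoal contributes a genuinely independent multiplicative factor: specifically, establishing that every satisfying assignment of $Q$ extends to exactly one satisfying assignment of $Q'$ (existence from $D \models \sigma$, essential uniqueness from the key-basedness of $\sigma$ together with the key constraints in $\Sigma$), and confirming that duplicating $t^{*}$ does not decrease any competing contribution. A secondary care point is the base database: the argument only needs some set-valued $D_1 \models \Sigma$ matching $Q'$, which is supplied by the canonical database of $(Q')_{\Sigma,S}$ under the paper's standing finiteness assumption; the remaining steps (core-set invariance of dependency satisfaction, and the term-by-term comparison) are then routine.
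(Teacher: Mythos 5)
Your proof is correct and takes essentially the same route as the paper's: both construct the counterexample by duplicating the tuple witnessing the new $P_i$-subgoal in a canonical database (legal precisely because $P_i$ is not forced to be set valued), and both use key-basedness to get the one-to-one correspondence of satisfying assignments that makes $Q'(D,B)$ strictly larger than $Q(D,B)$. The only deviation is that you base the construction on the canonical database of $(Q')_{\Sigma,S}$ rather than of $Q'$ itself, a harmless refinement that actually makes the claim $D \models \Sigma$ more robust than in the paper's sketch, and your explicit product-rule bookkeeping fills in what the paper delegates to the proof of its soundness proposition.
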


\begin{proof}
Let $\sigma$ 
be of the form 
\begin{tabbing}
$\sigma: \phi(\bar{X},\bar{Y}) \rightarrow \exists \bar{Z} \ p_1(\bar{Y}_1,\bar{Z}_1) \wedge \ldots \wedge p_n(\bar{Y}_n,\bar{Z}_n)$, 
\end{tabbing}
with $n > 0$. Here, for each $j \in \{ 1,\ldots,n \},$ $\bar{Y}_j$  is the maximal subset of $\bar{Y}$ in the set $\bar{Y}_j \bigcup \bar{Z}_j$, please see proof of Proposition~\ref{sound-key-chase-prop} for the notation. In addition, the relation $P_i$ for $p_i(\bar{Y}_i,\bar{Z}_i)$ is not a set-valued relation for at least one $i \in \{ 1,\ldots,n \}$. Given this assumption on $P_i$, the proof of the claim of Proposition~\ref{one-unsound-key-chase-prop} is by providing a bag-valued database $D$, such that $D$ satisfies $\Sigma$ and such that $Q(D,B)$ and $Q'(D,B)$ are not the same bags. 

We build the database $D$ as follows. Let $D'$ be the canonical database for query $Q'$. We obtain $D$ by adding to $D'$ a single duplicate of the tuple for the subgoal $s(P_i)$ of $Q'$. We now follow the reasoning in the proof of Proposition~\ref{sound-key-chase-prop}, to observe that the bag $Q'(D,B)$ has {\it at least one more tuple than} the bag $Q(D,B)$, due to the fact that the two identical tuples of relation $P_i$ add to $Q'(D,B)$ an extra copy of at least one tuple in $Q(D,B)$. This observation concludes the proof. 
\end{proof}

We now provide an illustration that shows the main points of the proof of Proposition~\ref{one-unsound-key-chase-prop}. 
\begin{example}
Consider a set $\Sigma = \{ \sigma_1, \sigma_2 \}$ of embedded dependencies, where
\begin{tabbing}
$\sigma_1: p(X,Y) \wedge p(X,Z) \rightarrow Y = Z.$ \\
$\sigma_2: r(X,Y) \rightarrow p(X,Y).$
\end{tabbing}
Observe that chase steps using $\sigma_2$  are (tgd) key-based in presence of the egd $\sigma_1$, and that $\Sigma$ does not include dependencies that would restrict the relation $P$, in the right-hand side of $\sigma_2$, to be set valued.

Consider a CQ query $Q$ defined as
\begin{tabbing}
$Q(A) \ :- \ r(A,B).$
\end{tabbing}
Applying $\sigma_2$ to the query $Q$, in chase step $Q \Rightarrow^{\sigma_2}_B Q'$, results in query $Q'$ defined as
\begin{tabbing}
$Q'(A) \ :- \ r(A,B), p(A,B).$
\end{tabbing}

We now illustrate the construction of the database $D$ in the proof of Proposition~\ref{one-unsound-key-chase-prop}.  First, the canonical database $D'$ of $Q'$ has relations $R = \{\hspace{-0.1cm}\{ (a,b) \}\hspace{-0.1cm}\}$ and $P = \{\hspace{-0.1cm}\{ (a,b) \}\hspace{-0.1cm}\}$. $D$ is constructed from $D'$ by adding to relation $P$ a duplicate of the tuple $(a,b)$, that is $D$ has relations $R = \{\hspace{-0.1cm}\{ (a,b) \}\hspace{-0.1cm}\}$ and $P = \{\hspace{-0.1cm}\{ (a,b), (a,b) \}\hspace{-0.1cm}\}$. 
Note that database $D$ is bag valued and satisfies all the dependencies in $\Sigma$.

Now by the bag semantics for query evaluation, $Q(D,B)$ $=  \{\hspace{-0.1cm}\{ (a) \}\hspace{-0.1cm}\}$, while $Q'(D,B) =  \{\hspace{-0.1cm}\{ (a), (a) \}\hspace{-0.1cm}\}$. Thus, database $D$ is a counterexample to $Q \equiv_{\Sigma,B} Q'$, which proves that the chase step $Q \Rightarrow^{\sigma_2}_B Q'$ using $\sigma_2$ is not sound. 
\end{example}

\begin{proposition}
\label{two-unsound-key-chase-prop}
Given a CQ query $Q$ and a set of embedded dependencies $\Sigma$. Let $\sigma \in \Sigma$ be a tgd, 
\begin{tabbing}
$\sigma: \phi(\bar{X},\bar{Y}) \rightarrow \exists \bar{Z} \ p_1(\bar{Y}_1,\bar{Z}_1) \wedge \ldots \wedge p_n(\bar{Y}_n,\bar{Z}_n)$,
\end{tabbing}
with $n > 0$.\footnote{For the notation, please see proof of Proposition~\ref{sound-key-chase-prop}.}  Consider a chase step $Q \Rightarrow^{\sigma}_{BS} Q'$ using $\sigma,$ such that $Q'$ is obtained by adding to the body of $Q$ a new $P_i$-subgoal $s(P_i)$ (possibly alongside other subgoals), where $s(P_i)$ corresponds to conjunct  $p_i(\bar{Y}_i,\bar{Z}_i)$ in the consequent $\psi$ of $\sigma$. Suppose that $\bar{Y}_i$ is not a superkey of $P_i$. 
Then under bag-set semantics, the chase step $Q \Rightarrow^{\sigma}_{BS} Q'$ using $\sigma$ is not sound.
\end{proposition}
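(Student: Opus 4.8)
The plan is to disprove soundness directly, by exhibiting a single set-valued database $D$ with $D \models \Sigma$ on which $Q(D,BS)$ and $Q'(D,BS)$ differ as bags; by the definition of a sound bag-set chase step this refutes $Q \equiv_{\Sigma,BS} Q'$. First I would fix the homomorphism $h$ from $\phi$ to the body of $Q$ witnessing applicability of $\sigma$, and recall that the step adds the subgoal $s(P_i)=p_i(h(\bar{Y}_i),\bar{Z}'_i)$ (possibly together with other atoms coming from $\psi$), where the $\bar{Y}_i$-positions carry the fixed images $h(\bar{Y}_i)$ and the $\bar{Z}_i$-positions carry fresh variables $\bar{Z}'_i$ not occurring in $Q$. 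Since $\bar{Z}'_i$ is disjoint from the head and $Q'$ has the same head as $Q$, any freedom in how $s(P_i)$ is matched will inflate multiplicities without changing the head value.

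For the counterexample database I would take $D$ to be the canonical database of the terminal set-chase result $(Q^{\sigma,h,\theta})_{\Sigma,S}$ of the associated test query of Definition~\ref{asso-test-def}, whose body (Equation~\ref{theta-eq}) contains two copies of the consequent $\psi$ that share all universally-quantified images but use disjoint fresh witnesses for the existential variables $\bar{Z}$. By the set-chase termination condition $D^{(Q_n)} \models \Sigma$, this $D$ is set-valued and satisfies $\Sigma$. The crucial step is to show that $D$ retains two \emph{distinct} tuples of $P_i$ agreeing on the $\bar{Y}_i$-positions: the two copies supply the atoms $p_i(h(\bar{Y}_i),\bar{Z}_i)$ and $p_i(h(\bar{Y}_i),\theta(\bar{Z}_i))$, which coincide on $\bar{Y}_i$ but carry distinct witnesses on $\bar{Z}_i$; because $\bar{Y}_i$ is \emph{not} a superkey of $P_i$, the functional dependency that would force $\bar{Z}_i=\theta(\bar{Z}_i)$ is not implied by $\Sigma$, so the set-chase cannot identify these two atoms and both survive in $D$.

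It then remains to compare multiplicities. Using the correspondence established in the proof sketch of Theorems~\ref{bag-chase-sound-theorem} and~\ref{bag-set-chase-sound-theorem} (each satisfying assignment of $Q'$ restricts to a unique satisfying assignment of $Q$, and each satisfying assignment of $Q$ extends to at least one of $Q'$), the multiplicity of any head tuple $t$ in $Q'(D,BS)$ equals the number of satisfying assignments of $Q'$ producing $t$, which is at least its multiplicity in $Q(D,BS)$. To obtain a strict inequality, let $\nu$ be the generating assignment of $D$ restricted to the head variables, with body extension $\nu_Q$ satisfying $Q$ w.r.t.\ $D$; mapping the fresh positions of $s(P_i)$ onto each of the two surviving $P_i$-tuples in turn gives two distinct extensions $\nu'_1\neq\nu'_2$ of $\nu_Q$ that satisfy $Q'$ w.r.t.\ $D$, and since $\bar{Z}'_i$ is disjoint from the head, both produce $t=\nu(\bar{X})$. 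Hence $t$ has strictly greater multiplicity in $Q'(D,BS)$ than in $Q(D,BS)$, so the two bags differ and the step is unsound under bag-set semantics.

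The main obstacle is the second paragraph: converting the syntactic hypothesis ``$\bar{Y}_i$ is not a superkey of $P_i$'' into the semantic guarantee that the two copies of the $p_i$-atom are genuinely kept apart by the \emph{full} set $\Sigma$ of embedded dependencies during the set-chase, so that $D$ really does carry two distinct $P_i$-tuples agreeing on $\bar{Y}_i$. Once this non-collapse is established, the multiplicity count and the appeal to $D \models \Sigma$ are routine. I note that the same set-valued $D$ also witnesses bag non-equivalence, since bag and bag-set answers coincide on set-valued databases; this is consistent with Proposition~\ref{b-bs-s-implic-prop}, although only the bag-set statement is needed here.
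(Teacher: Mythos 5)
You take a genuinely different route from the paper's proof, and that route is where the trouble lies. The paper never touches the associated test query: it builds the canonical database $D'$ of $Q'$ itself, splits $\bar{Z}_i$ as $(\bar{Z}'_i,\bar{Z}''_i)$ so that $\bar{Y}_i \cup \bar{Z}'_i$ is a \emph{minimal} superkey of $P_i$ containing $\bar{Y}_i$, and adds to $P_i$ one extra tuple agreeing with the tuple of $s(P_i)$ everywhere except in some $\bar{Z}'_i$-position; minimality of that superkey is what underwrites its claim that $D \models \Sigma$, and the two $P_i$-tuples are then argued to create the multiplicity gap. Your proposal instead imports Case (2) of the proof of Theorems~\ref{bag-chase-sound-theorem} and~\ref{bag-set-chase-sound-theorem}: take $D$ to be the canonical database of $(Q^{\sigma,h,\theta})_{\Sigma,S}$ and argue that the two copies of $\psi$ survive the set chase.

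The gap is exactly the step you flag as ``the main obstacle,'' and it cannot be closed: ``$\bar{Y}_i$ is not a superkey of $P_i$'' does \emph{not} imply that the set chase of $Q^{\sigma,h,\theta}$ keeps $p_i(h(\bar{Y}_i),\bar{Z}_i)$ and $p_i(h(\bar{Y}_i),\theta(\bar{Z}_i))$ apart. Being a superkey is defined only through the fds on $P_i$, but the chase uses \emph{all} of $\Sigma$, and the two copies can be identified by egds on other relations whose atoms are added in the same chase step (the proposition explicitly allows $s(P_i)$ to be added ``alongside other subgoals''). The paper's own Examples~\ref{regul-partapply-ex} and~\ref{regul-allapply-ex} give a concrete instance: for $Q(X) \ :- \ p(X,Y), s(X,Z)$ with $\nu_1: p(X,Y) \rightarrow \exists Z \ (s(X,Z) \wedge t(Z,Y))$ and $\nu_2: t(X,Y) \wedge t(Z,Y) \rightarrow X = Z$, the first attribute of $S$ is not a superkey of $S$, so the proposition's hypotheses hold for the new $S$-subgoal; yet chasing the test query with $\nu_2$ identifies the two existential witnesses, $\nu_1$ is assignment-fixing w.r.t.\ $Q$ (Definition~\ref{assgn-fix-def}), and by Theorem~\ref{bag-set-chase-sound-theorem} the chase step is \emph{sound} under bag-set semantics. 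On your database $D$ the two extensions $\nu'_1$ and $\nu'_2$ coincide, no multiplicity gap appears, and indeed no counterexample database exists in that instance. What your argument actually proves is ``not assignment-fixing w.r.t.\ $Q$ implies unsound,'' i.e., Case (2) of the main theorems; the stated hypothesis ``$\bar{Y}_i$ is not a superkey'' is strictly weaker, which is the entire point of Section~\ref{why-not-key-based-sec}. Note also that this is not a defect you could have engineered around: on the same example the paper's own construction fails too (the extra $S$-tuple $(x,w')$ yields no new satisfying assignment of $Q'$, because the co-added subgoal $t(W,Y)$ pins $W$ to the unique $T$-tuple), so the proposition as literally stated conflicts with Theorem~\ref{bag-set-chase-sound-theorem} and should be read as implicitly assuming that the key fds of $P_i$ are the only dependencies capable of fixing the new subgoal's existential positions.
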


(Proposition~\ref{two-unsound-key-chase-prop} is formulated for the case of bag-set semantics, which allows us to show the flavor of the proofs that are required to establish Theorem~\ref{bag-set-chase-sound-theorem}.)

\begin{proof}{(Proposition~\ref{two-unsound-key-chase-prop})}
Given the assumption that for the conjunct $p_i(\bar{Y}_i,\bar{Z}_i)$ used in $\psi$, it holds that $\bar{Y}_i$ is not a superkey of $P_i$, the proof of the claim of Proposition~\ref{two-unsound-key-chase-prop} is by providing a {\em set}-valued database $D$, such that $D$ satisfies $\Sigma$ and such that $Q(D,BS)$ and $Q'(D,BS)$ are not the same bags. 

Fix $i$ such that $\bar{Y}_i$ in $p_i(\bar{Y}_i,\bar{Z}_i)$ is not a superkey of $P_i$ and such that $p_i(\bar{Y}_i,\bar{Z}_i)$ corresponds to a subgoal in $Q'$ that (subgoal) is not in $Q$, in chase step $Q \Rightarrow^{\sigma}_{BS} Q'$.  We begin the construction of the database $D$ by building  the canonical database $D'$ for query $Q'$. We obtain $D$ by adding to $D'$ a single extra {\em (nonduplicate)} tuple for the subgoal $s(P_i)$ of $Q'$, as follows. 

Without loss of generality, let $p_i(\bar{Y}_i,\bar{Z}_i)$ be of the form $p_i(\bar{Y}_i,\bar{Z}'_i,\bar{Z}''_i)$, where $\bar{Z}'_i$ is not empty, $\bar{Y}_i \bigcup \bar{Z}'_i$ is a superkey of  $P_i$, and no proper subset of $\bar{Y}_i \bigcup \bar{Z}'_i$ is a superkey of $P_i$. Now suppose $\nu$ was the mapping  used to generate $s(P_i)$ from $p_i(\bar{Y}_i,\bar{Z}'_i,\bar{Z}''_i)$ in the chase step $Q \Rightarrow^{\sigma}_{BS} Q'$. (See proof of Proposition~\ref{sound-key-chase-prop} for the details on $\nu$.) Then $s(P_i)$ is of the form $p_i(\bar{A},\bar{C},\bar{E})$, where $\bar{A}$ ($\bar{C}$, $\bar{E}$, respectively) is the image of $\bar{Y}_i$ (of $\bar{Z}'_i$, of $\bar{Z}''_i$, respectively)  under $\nu$.  By construction of the canonical database $D'$ of $Q'$, the tuple for $s(P_i)$ in relation $P_i$ in $D'$ is $(\bar{a},\bar{c},\bar{e})$. We construct the database $D$ from $D'$ by adding to $P_i$ of $D'$ a tuple $(\bar{a},\bar{c}',\bar{e})$, such that at least one constant in $\bar{c}'$ is not equal to the same-position constant in $\bar{c}$. By construction, database $D$ is set valued and satisfies the dependencies $\Sigma$. (Recall that no proper subset of $\bar{Y}_i \bigcup \bar{Z}'_i$ in $p_i(\bar{Y}_i,\bar{Z}'_i,\bar{Z}''_i)$    is a superkey of $P_i$.) 

We now follow the reasoning in the proof of Proposition~\ref{sound-key-chase-prop}, to observe that the bag $Q'(D,BS)$ has {\it at least one more tuple than} the bag $Q(D,BS)$. The reason is, tuples $(\bar{a},\bar{c},\bar{e})$ and $(\bar{a},\bar{c}',\bar{e})$ in relation $P_i$ add to $Q'(D,BS)$ an extra copy of at least one tuple in $Q(D,BS)$. This observation concludes the proof. 
\end{proof}

We now provide an illustration that shows the main points of the proof of Proposition~\ref{two-unsound-key-chase-prop}. 
\begin{example}
Consider a set $\Sigma = \{ \sigma \}$ of embedded dependencies, where
\begin{tabbing}
$\sigma: r(X,Y) \rightarrow p(X,Z).$
\end{tabbing}

Given a CQ query $Q$,
\begin{tabbing}
$Q(A) \ :- \ r(A,B).$
\end{tabbing}
applying $\sigma$ to the query $Q$ results in query $Q'$, 
\begin{tabbing}
$Q'(A) \ :- \ r(A,B), p(A,C).$
\end{tabbing}
Observe that chase step $Q \Rightarrow^{\sigma}_{BS} Q'$ using   $\sigma$ is not key-based, as the set of all attributes of $P$ is the only key of $P$.

We now illustrate the construction of the database $D$ in the proof of Proposition~\ref{two-unsound-key-chase-prop}.  First, the  canonical database $D'$ of $Q'$ has relations $R = \{ (a,b) \}$ and $P = \{ (a,c) \}$. $D$ is constructed from $D'$ by adding to relation $P$ a new tuple $(a,d)$, that is $D$ has relations $R = \{ (a,b) \}$ and $P = \{ (a,c), (a,d) \}$. 
Note that database $D$ is set valued and satisfies the dependency $\sigma$.

Now by the bag-set semantics for query evaluation, $Q(D,BS) =  \{\hspace{-0.1cm}\{ (a) \}\hspace{-0.1cm}\}$, whereas $Q'(D,BS) =  \{\hspace{-0.1cm}\{ (a), (a) \}\hspace{-0.1cm}\}$. Thus, database $D$ is a counterexample to $Q \equiv_{\{ \sigma \},BS} Q'$, which proves that the chase step $Q \Rightarrow^{\sigma}_{BS} Q'$ using $\sigma$ is not sound. 
\end{example}

\section{Counterexample Database for Example 5.1}
\label{real-motiv-appendix}


This section of the appendix provides the counterexample database for Example~\ref{real-he-motivating-example}. Database $D$ is a counterexample to  soundness of chase step $Q_4 \Rightarrow^{\sigma_1}_B Q_4^{(1)}$. In $D$, let the relations be as follows: $P = \{\hspace{-0.1cm}\{ (1,2) \}\hspace{-0.1cm}\}$, $R = \emptyset$, $S = \{\hspace{-0.1cm}\{ (1,3) \}\hspace{-0.1cm}\}$, $T = \{\hspace{-0.1cm}\{ (1,4,5),$ $(1,6,7) \}\hspace{-0.1cm}\}$, and $U = \emptyset$. Note that $D \models \Sigma$, for the set of dependencies $\Sigma$ in Example~\ref{real-he-motivating-example}. 

On this database $D$, the answer to $Q_4$ is $Q_4(D,B) = \{\hspace{-0.1cm}\{ (1) \}\hspace{-0.1cm}\}$, whereas $Q_4^{(1)}(D,B) = \{\hspace{-0.1cm}\{ (1), (1) \}\hspace{-0.1cm}\}$, by rules of bag semantics. From the fact that $Q_4(D,B)$ and $Q_4^{(1)}(D,B)$ are not the same {\em bags,} we conclude that the chase step $Q_4 \Rightarrow^{\sigma_1}_B Q_4^{(1)}$ is not sound under bag semantics.

\section{Uniqueness Theorems for Chase Results}
\label{bag-set-uniqueness-appendix}

We begin this section of the appendix by formulating the version of Theorem~\ref{uniqueness-theorem} for the case of bag-set semantics. 
\begin{theorem}
\label{bag-set-uniqueness-theorem}
Given a CQ query $Q$ and a set $\Sigma$ of embedded dependencies on database schema $\cal D$, such that there exists a  chase result $(Q)_{\Sigma,S}$ for $Q$ and $\Sigma$ under {\em set} semantics.  Then there exists a result $(Q)_{\Sigma,BS}$ of {\em sound} chase  for $Q$ and $\Sigma$ under {\em bag-set} semantics, unique up to isomorphism of its canonical representation.\footnote{See Theorem~\ref{cv-theorem} in Section~\ref{bag-equiv-defs}.} 
That is, for two sound-chase results $(Q)^{(1)}_{\Sigma,BS}$ and $(Q)^{(2)}_{\Sigma,BS}$ for $Q$ and $\Sigma$, $(Q)^{(1)}_{\Sigma,BS} \equiv_{BS} (Q)^{(2)}_{\Sigma,BS}$ in the absence of dependencies. 
\end{theorem}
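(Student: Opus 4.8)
The plan is to mirror the proof of Theorem~\ref{uniqueness-theorem} for bag semantics, substituting bag-set semantics throughout, and to dispatch existence and termination first. Since a set-chase result $(Q)_{\Sigma,S}$ exists by hypothesis, Proposition~\ref{chase-termination-prop} guarantees that sound chase of $Q$ under $\Sigma$ terminates under bag-set semantics, so at least one terminal sound-chase result $(Q)_{\Sigma,BS}$ exists. Moreover, by the definition of sound chase steps and by Theorem~\ref{bag-set-chase-sound-theorem}, every step of a sound bag-set chase sequence preserves $\equiv_{\Sigma,BS}$; hence any terminal sound result $Q_n$ satisfies $Q \equiv_{\Sigma,BS} Q_n$, and consequently any two terminal results $(Q)^{(1)}_{\Sigma,BS}$ and $(Q)^{(2)}_{\Sigma,BS}$ satisfy $(Q)^{(1)}_{\Sigma,BS} \equiv_{\Sigma,BS} (Q)^{(2)}_{\Sigma,BS}$.

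The essential task is to upgrade this equivalence-under-$\Sigma$ to equivalence in the \emph{absence} of dependencies, i.e.\ (by Theorem~\ref{cv-theorem}) to isomorphism of the canonical representations of the two results. The cleanest route I would take is to reduce to the already-proved bag case. Under bag-set semantics every stored relation is set valued by definition, so Theorem~\ref{bag-set-chase-sound-theorem} coincides with Theorem~\ref{bag-chase-sound-theorem} specialized to the schema ${\cal D}^+$ obtained from $\cal D$ by imposing, on \emph{every} relation symbol, the set-enforcing egds of Appendix~\ref{appendix-a}. A sound bag-set chase sequence of $Q$ under $\Sigma$ then corresponds (modulo the invisible tuple-ID bookkeeping of Appendix~\ref{appendix-a}) to a sound bag chase sequence of $Q$ under $\Sigma^+ = \Sigma \cup E$, where $E$ collects the set-enforcing egds for all relations. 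Two facts make this correspondence faithful and are worth isolating as lemmas: the egds in $E$ constrain only tuple-ID attributes, so they introduce no new nulls and do not touch the content attributes inhabited by the existential variables $Z_i,\theta(Z_i)$ of Definition~\ref{assgn-fix-def} --- hence they neither disturb the assignment-fixing status of any tgd nor lengthen a terminating set-chase, so $(Q)_{\Sigma^+,S}$ exists and the hypothesis of Theorem~\ref{uniqueness-theorem} holds on ${\cal D}^+$. Applying Theorem~\ref{uniqueness-theorem} on ${\cal D}^+$ yields uniqueness up to isomorphism after dropping duplicate subgoals of set-valued relations; but on ${\cal D}^+$ \emph{all} relations are set valued, so this is isomorphism after dropping \emph{all} duplicate subgoals, which by Theorem~\ref{cv-updated-thm} (taken with $\{P_1,\dots,P_k\} = {\cal D}$) and Theorem~\ref{cv-theorem} is precisely bag-set equivalence. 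This delivers $(Q)^{(1)}_{\Sigma,BS} \equiv_{BS} (Q)^{(2)}_{\Sigma,BS}$.

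As an alternative that directly matches the remark in the main text that the bag proof is an adaptation of the set-chase confluence argument of~\cite{DeutschPods08}, I would instead argue uniqueness by confluence: show that divergent single sound bag-set steps $P \Rightarrow^{\sigma_1}_{BS} P_1$ and $P \Rightarrow^{\sigma_2}_{BS} P_2$ can always be reconverged by further sound bag-set steps to queries with isomorphic canonical representations (treating the egd--egd, egd--tgd, and assignment-fixing-tgd--tgd cases separately, using Theorem~\ref{bag-set-chase-sound-theorem} to certify that the catch-up steps are themselves sound), and then invoke termination (Proposition~\ref{chase-termination-prop}) to conclude, by the standard confluence-from-termination reasoning, that the normal form is unique up to $\equiv_{BS}$.

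The hard part will be exactly the upgrade from $\equiv_{\Sigma,BS}$ to $\equiv_{BS}$. Unlike set chase, a terminal sound bag-set result need not have a canonical database satisfying all of $\Sigma$ --- only the query-dependent subset $\Sigma^{max}_{BS}(Q,\Sigma)$ of Theorem~\ref{bag-set-sigma-max-theorem} --- so the classical ``both results are universal models of $\Sigma$, hence homomorphically equivalent'' argument does not transfer verbatim, and one cannot stop at the set-level conclusion obtainable from Proposition~\ref{sigma-b-bs-s-implic-prop}: bag-set equivalence demands an exact isomorphism of canonical representations rather than mere homomorphic (set) equivalence. Threading the set-valuedness bookkeeping through the reduction, so that the distinction between $\equiv_{\Sigma,BS}$ and the plain $\equiv_{BS}$ of the conclusion is rigorously controlled, is where the real content of this ``straightforward'' adaptation lies.
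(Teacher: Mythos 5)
Your proposal is correct, but your primary route is genuinely different from the paper's. The paper proves this theorem by directly adapting its proof of Theorem~\ref{uniqueness-theorem} --- a contrapositive, confluence-style argument (itself an adaptation of the set-chase uniqueness proof of~\cite{DeutschPods08}) showing that any subgoals present in one terminal sound-chase result but absent from the other could still be added to the other by further sound chase steps, contradicting terminality; your ``alternative'' confluence route is essentially this same argument. Your primary route instead derives the bag-set theorem as a \emph{corollary} of the bag theorem: bag-set semantics is exactly bag semantics over a schema in which every relation is set valued, the soundness criteria of Theorems~\ref{bag-chase-sound-theorem} and~\ref{bag-set-chase-sound-theorem} then coincide, and the bag-uniqueness conclusion (isomorphism after dropping duplicate subgoals over set-valued relations --- here, \emph{all} duplicate subgoals) becomes isomorphism of canonical representations, i.e., $\equiv_{BS}$ by Theorems~\ref{cv-theorem} and~\ref{cv-updated-thm}. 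That reduction is valid and buys you the bag-set case with no new chase-level combinatorics, whereas the paper's adaptation is self-contained and independent of how set-valuedness is formalized. One simplification would make your route airtight: skip the tuple-ID encoding entirely. Theorems~\ref{bag-chase-sound-theorem}, \ref{cv-updated-thm}, and~\ref{uniqueness-theorem} treat ``$P$ is set valued in all instances of $\cal D$'' as an abstract schema property rather than as a particular egd over an extended schema, so you may take the maximal set-valued family $\{P_1,\ldots,P_k\}$ to be all of $\cal D$, leave $Q$, $\Sigma$, and the schema untouched, and invoke Theorem~\ref{uniqueness-theorem} verbatim (its hypothesis, existence of $(Q)_{\Sigma,S}$, is exactly your hypothesis). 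This eliminates the two bookkeeping lemmas your write-up asserts but does not prove --- preservation of assignment-fixing status and of set-chase termination under $\Sigma^+ = \Sigma \cup E$ --- which are the only soft spots in an otherwise sound argument.
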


We now provide a proof for Theorem~\ref{uniqueness-theorem}. An adaptation of the proof to the statement of Theorem~\ref{bag-set-uniqueness-theorem} is straightforward. 
\begin{proof}{(Theorem~\ref{uniqueness-theorem})}
We first establish that, by the definition of soundness of the chase result $(Q)^{(1)}_{\Sigma,B}$, there exists a chase sequence $C_1$ using $\Sigma$, such that $C_1$ starts with $Q$ and ends with $(Q)^{(1)}_{\Sigma,B}$, and such that all chase steps in $C_1$ are sound under bag semantics. Similarly, we establish that there exists a  chase sequence $C_2$ using $\Sigma$, such that $C_2$ starts with $Q$ and ends with $(Q)^{(2)}_{\Sigma,B}$, and such that all chase steps in $C_2$ are sound under bag semantics. 

The proof of Theorem~\ref{uniqueness-theorem} is by contrapositive. Assume, toward contradiction, that $(Q)^{(1)}_{\Sigma,B}$ and $(Q)^{(2)}_{\Sigma,B}$ are not isomorphic after removal of duplicate subgoals that correspond to set-valued relations in the database schema $\cal D$. Let us denote by $(\bar{Q})^{(1)}_{\Sigma,B}$ the result of removing such ``set-valued'' duplicate subgoals from $(Q)^{(1)}_{\Sigma,B}$, and let us use the analogous notation $(\bar{Q})^{(2)}_{\Sigma,B}$ for $(Q)^{(2)}_{\Sigma,B}$. 

Suppose, w.l.o.g., that  $(\bar{Q})^{(1)}_{\Sigma,B}$ has a nonempty set of subgoals $p_1(\bar{X}_1),\ldots,p_m(\bar{X}_m)$ such that this set of subgoals does not have a counterpart in the image of any 
injective homomorphism from $(\bar{Q})^{(1)}_{\Sigma,B}$ to $(\bar{Q})^{(2)}_{\Sigma,B}$. It is clear that $p_1(\bar{X}_1),\ldots,p_m(\bar{X}_m)$ cannot be a subset of all the subgoals in the body of the original query $Q$. (By definition of sound chase steps, no chase steps using embedded dependencies ever remove original query subgoals.) Then, from the sound chase sequence $C_1$, we can form a sequence $C'_1$ of sound chase steps that (i) uses a {\em subsequence} of the sequence of dependencies applied in $C_1$, and (ii) starts with $Q$ and ends with adding all the subgoals in $p_1(\bar{X}_1),\ldots,p_m(\bar{X}_m)$ to $Q$. By definition of sound chase, there must exist a nonempty suffix subsequence $C''_1$ of $C'_1$ such that all chase steps in $C''_1$ apply to the chase result $(Q)^{(2)}_{\Sigma,B}$ of chase sequence $C_2$, and such that applying the respective (to $C''_1$) dependencies in $\Sigma$ to $(Q)^{(2)}_{\Sigma,B}$  would result in adding to $(Q)^{(2)}_{\Sigma,B}$ a set of subgoals that would be an image of $p_1(\bar{X}_1),\ldots,p_m(\bar{X}_m)$ in some injective homomorphism from $(\bar{Q})^{(1)}_{\Sigma,B}$ to $(\bar{Q})^{(2)}_{\Sigma,B}$. We thus arrive at a contradiction with the condition of Theorem~\ref{uniqueness-theorem}, which states that $(Q)^{(2)}_{\Sigma,B}$ is a (terminal) result of sound chase for $Q$ using $\Sigma$ under bag semantics. (That is, the contradiction is with the assumption that no sound chase steps of the form $(Q)^{(2)}_{\Sigma,B} \Rightarrow^{\sigma}_B Q'$ are possible, where $\sigma \in \Sigma$.)

The case where some of $p_1(\bar{X}_1),\ldots,p_m(\bar{X}_m)$ were eliminated in $(Q)^{(2)}_{\Sigma,B}$ by use of one or more egds in $\Sigma$ is  analogous to the above tgd case, except that the contradiction in the case of egds is with our assumption that $(Q)^{(1)}_{\Sigma,B}$ is a (terminal) result of sound chase for $Q$ using $\Sigma$ under bag semantics. That is, those same egds can be applied to $(Q)^{(1)}_{\Sigma,B}$, hence $(Q)^{(1)}_{\Sigma,B}$ is not a result of sound chase under bag semantics. 
\end{proof}

\section{Complexity of Sound Chase}
\label{app-chase-compl}

In this section of the appendix, we establish for Theorem~\ref{complexity-thm} the lower bound on the complexity of sound chase under each of bag and bag-set semantics, using sets of weakly acyclic dependencies. 

\subsection{Weakly Acyclic Dependencies}

We provide here the definition and discussion of~\cite{DeutschPT06} for weakly acyclic dependencies. 

The chase-termination property under set semantics is in general undecidable for CQ queries 
and dependencies given by tgds and egds. However, the notion of {\it weak acyclicity} of a set of dependencies is sufficient to guarantee that any chase sequence terminates. This is the least restrictive sufficient termination condition that has been generally studied  in the literature (but see~\cite{DeutschPods08} for a generalization). The weak acyclicity condition appears to hold in all practical scenarios. 

\vspace{-0.2cm}

\begin{definition}{Weakly acyclic set of dependencies} 
Let $\Sigma$ be a set of tgds over a fixed schema. Construct a directed graph, called 
the {\em dependency graph,} as follows: (1) there is a node for every 
pair $(R, A)$, with $R$ a relation symbol of the schema and $A$ an 
attribute of $R$; call such pair $(R, A)$ a {\em position;} (2) add edges 
as follows: for every tgd $\phi(\bar{X}) \rightarrow \exists \bar{Y} \ \psi(\bar{X},\bar{Y})$ in $\Sigma$ and for every $X$ in $\bar{X}$ that occurs in $\psi$: 

For every occurrence of $X$ in $\phi$ in position $(R, A_i )$: 
\begin{itemize}
	\item[(a)] for every occurrence of $X$ in $\psi$ in position $(S, B_j )$, add 
an edge $(R, A_i ) \rightarrow (S, B_j)$; 
	\item[(b)] in addition, for every existentially quantified variable 
$Y$ and for every occurrence of $Y$ in $\psi$ in position $(T , C_k )$, 
add a special edge $(R, A_i ) \rightarrow^{*}  (T , C_k )$. 
\end{itemize}

Note that there may be two edges in the same direction between two nodes, if exactly one of the two edges is special. Then $\Sigma$ is {\em weakly acyclic} if the dependency graph has no cycle going 
through a special edge. We say that a set of tgds and egds is 
{\em weakly acyclic} if the set of all its tgds is weakly acyclic. 
\end{definition}

\vspace{-0.4cm}

\begin{theorem}{\cite{DeutschT03,FaginKMP05}}
If $\Sigma$ is a weakly acyclic set of tgds and egds, then the chase with $\Sigma$ of any CQ query $Q$ under set semantics terminates in finite time. 
\end{theorem}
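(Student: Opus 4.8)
The plan is to reduce termination of the query chase to termination of the instance chase on the canonical database, and then to bound the number of distinct values that can appear in each position of the schema over the course of any chase sequence. Chasing the CQ query $Q$ amounts to chasing its canonical database $D^{(Q)}$, with each existential query variable frozen to a distinct labeled null; egds only equate values and tgds only add atoms, so it suffices to show that only finitely many distinct values (constants and nulls) can ever be produced. Once the active domain is finite, the number of possible atoms over the fixed schema $\cal D$ is finite; every applicable tgd step strictly enlarges the atom set, every egd step strictly decreases the number of distinct values, and hence no infinite sequence of chase steps is possible.

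The key device is the rank assigned to each position. First I would define, for each position $(R,A)$, its rank as the maximum number of special edges occurring on any path of the dependency graph that ends at $(R,A)$. Because $\Sigma$ is weakly acyclic, no cycle of the dependency graph passes through a special edge, so every such path carries a bounded number of special edges and the rank is a well-defined finite natural number; let $r^{*}$ be the maximum rank over the finitely many positions.

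Next I would prove, by induction on the rank $r$, that there is a finite bound (in fact polynomial in the size of $D^{(Q)}$, with $\cal D$ and $\Sigma$ fixed) on the number of distinct values that can ever occupy a position of rank at most $r$. For the base case, a position of rank $0$ receives no value along any special edge; tracing the edge definitions, this means no freshly invented existential null can propagate into it, so its values come only from the finite active domain of $D^{(Q)}$. For the inductive step, a fresh null is created only when a tgd $\phi(\bar{X},\bar{Y}) \rightarrow \exists \bar{Z} \ \psi(\bar{X},\bar{Z})$ fires, and each such null lands in a position $(T,C)$ reachable by a special edge from the body positions of the frontier variables; by the definition of special edges, these source positions all have rank strictly smaller than that of $(T,C)$. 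Hence the identity of each fresh null is a function of the tuple of frontier values, all of which sit in positions of rank $<r$; by the induction hypothesis there are only finitely many such tuples, so only finitely many distinct fresh nulls can appear at rank $r$. Combining the values already present with the new nulls yields the desired bound at rank $r$.

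Finally, summing the bounds over $r = 0,\dots,r^{*}$ shows the total number of distinct values appearing anywhere in the chase is finite, which by the first paragraph forces termination. The step I expect to be the main obstacle is the inductive bound on fresh nulls: one must argue carefully that the value invented for an existential variable is determined by the frontier image, so that distinct nulls correspond to distinct frontier tuples, and that the special-edge structure guarantees the \emph{strict} rank decrease from target positions to frontier positions. This is precisely where the no-cycle-through-a-special-edge hypothesis is used, and getting the rank bookkeeping right—obtaining the strict inequality that makes the induction close, rather than merely a finite rank—is the delicate point.
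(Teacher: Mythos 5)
The paper itself offers no proof of this theorem: it is imported verbatim, with citations, from \cite{DeutschT03,FaginKMP05}, and the appendix only restates it before discussing complexity. Your argument is, in essence, a correct reconstruction of the standard proof in those cited sources (rank of a position $=$ maximum number of special edges on any incoming path, finiteness of ranks from weak acyclicity, then induction on rank to bound the values ever created). Three small points deserve tightening. First, in the inductive step you must also account for values that enter a rank-$\le r$ position by \emph{copying} along regular edges, not only for fresh nulls: the needed observation is that a regular edge from $(R,A)$ to $(S,B)$ forces $\mathrm{rank}(R,A)\le\mathrm{rank}(S,B)$, so copying can never inject a value from a strictly higher-rank position into a lower-rank one; hence every value in a rank-$\le r$ position is either original or a null freshly created at some position of rank $\le r$, which closes the induction. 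Second, your claim that each fresh null is ``a function of the frontier tuple'' rests on the applicability condition of the (non-oblivious) chase as defined in Section 2.4 of the paper: once a tgd has fired for a given image of its frontier variables, the head is satisfied for that image and the tgd can never fire again with the same frontier tuple; this is what makes the count of fresh nulls at rank $r$ proportional to the number of frontier tuples over rank-$<r$ values. Third, the concluding bookkeeping needs one more sentence: egd steps can shrink the atom set, so one should first bound the number of egd steps (each permanently destroys a value, and only finitely many values are ever created), and only then argue that the remaining tail of pure tgd steps is finite because each strictly enlarges the atom set over a finite universe of possible atoms. With these repairs your proof is exactly the argument of Fagin et al.\ that the paper relies on.
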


{\bf The complexity of the chase.} For a fixed database schema and set $\Sigma$ 
of dependencies, if $\Sigma$ is weakly acyclic then under set semantics any chase sequence 
terminates in polynomial time in the size of the query being 
chased (as shown in \cite{DeutschT03,FaginKMP05}). The fixed-size assumption about 
schemas and dependencies is often justified in practice, where 
one is usually interested in repeatedly reformulating incoming queries for the same setting with schemas and dependencies. Nonetheless, the degree of the polynomial depends on the size 
of the dependencies and care is needed to implement the chase 
efficiently. Successive implementations have shown that in practical situations the chase is eminently usable~\cite{DeutschPT06}. 

\mbox{}

{\bf The complexity of reformulation under set semantics (in C\&B).} Assume that under set semantics the chase of any query with $\Sigma$ terminates in polynomial time (for fixed database schema). Then checking whether a CQ query $Q$ admits a reformulation is NP-complete in the size of $Q$. Checking whether a given query $Q'$  
is a $\Sigma$-minimal reformulation of $Q$ is NP-complete in the sizes of $Q$ and $Q'$. For arbitrary sets of dependencies (for which the chase may not even terminate), the above problems are undecidable.

\subsection{The Lower Complexity Bound}

We now establish for Theorem~\ref{complexity-thm} the lower bound on the complexity of sound chase using weakly acyclic dependencies under each of bag and bag-set semantics, as follows.

\begin{example}
\label{expon-example}
On a database schema ${\cal D} =$\linebreak $\{ P_1, \ P_2, \ \ldots, \ P_m \}$ where each relation symbol has arity 2, consider a query $Q$ with a single subgoal $p_1$: 
\vspace{-0.1cm}
\begin{tabbing}
$Q(X,Y) \ :- \ p_1(X,Y)$. 
\end{tabbing}
\vspace{-0.1cm}

Suppose the database schema $\cal D$ satisfies a set $\Sigma$ of tgds of the following form:

\vspace{-0.1cm}
\begin{tabbing}
$\sigma^{(1)}_{i,j}: \ p_i(X,Y) \rightarrow \ \exists Z \ p_{j}(Z,X)$ \\
$\sigma^{(2)}_{i,j}: \ p_i(X,Y) \rightarrow \ \exists W \ p_{j}(Y,W)$ 
\end{tabbing}
\vspace{-0.1cm}

$\Sigma$ has one tgd $\sigma^{(1)}_{i,j}$ and one tgd $\sigma^{(2)}_{i,j}$ for each pair $(i,j)$, where $i \ \in \ \{ 1,\ldots,m-1 \}$ and $j \ \in \ \{ i+1,\ldots,m \}$. Thus, the number of dependencies in $\Sigma$ is quadratic in $m$. 

We show one partial chase result (under set semantics) of the query $Q$ under dependencies $\Sigma$, for $m \geq 2$:
\vspace{-0.1cm}
\begin{tabbing}
This is my \= kill mark \kill
$Q'(X,Y) \ :- \ p_1(X,Y), \ p_2(Z_1,X), \ p_2(Y,Z_2)$.
\end{tabbing}
\vspace{-0.1cm}

$Q'$ is the result of applying to $Q$ tgds $\sigma^{(1)}_{1,2}$ and $\sigma^{(2)}_{1,2}$. Observe that $Q'$ has a self-join of the relation $P_2$.
\end{example}

For the terminal result $(Q)_{\Sigma,S}$ of chase of the query $Q$ using the tgds $\Sigma$ under set semantics in Example~\ref{expon-example}, we can show that the size of $(Q)_{\Sigma,S}$   is exponential in the size of $Q$ and $\Sigma$. Specifically, the size of $(Q)_{\Sigma,S}$ is exponential in the size $m$ of the database schema $\cal D$. Intuitively, just as $Q'$ has two subgoals for predicate $p_2$, the query $(Q)_{\Sigma,S}$ has two subgoals for $p_2$, four subgoals for $p_3$, and so on.  

\begin{example}
\label{cont-expon-example}
We continue Example~\ref{expon-example}. We build a set $\Sigma'$ of dependencies from the set $\Sigma$ of Example~\ref{expon-example} by adding $3m$ functional dependencies (fds): For each $i \in \{ 1,\ldots,m \}$, we add the following three fds for the relation $P_i$ in $\cal D$:
\vspace{-0.1cm}
\begin{tabbing}
$\sigma^{(1)}_{i}: \ p_i(X,Y) \ \wedge \ p_i(X,Z) \rightarrow \ Y = Z$ \\
$\sigma^{(2)}_{i}: \ p_i(Y,X) \ \wedge \ p_i(Z,X) \rightarrow \ Y = Z$ \\
$\sigma^{(3)}_{i}: \ p_i(X,Y,Z_1) \ \wedge \ p_i(X,Y,Z_2) \rightarrow \ Z_1 = Z_2$ 
\end{tabbing}
\vspace{-0.1cm}

That is, in all databases that satisfy the first two fds for $i$ in $\Sigma'$, the core-set of $P_i$ does not have repeated values of either attribute. The third fd for $P_i$ guarantees that relation $P_i$ is {\em set} valued in all instances of the database schema $\cal D$. Here, the {\em third} attribute of $P_i$ is its {\em tuple-id} attribute. Please see Appendix~\ref{appendix-a} for the details on using egds for enforcing set-valuedness of relations in all  instances of a given database schema. 

Note that the addition of these fds transforms the tgds $\Sigma$ of Example~\ref{expon-example} into key-based tgds $\Sigma'$ (see Definition~\ref{old-key-based-tgds-def}). Thus, 
 for the terminal result $(Q)_{\Sigma',B}$ of sound chase of the query $Q$ under the dependencies $\Sigma'$ under bag semantics, the size of $(Q)_{\Sigma',B}$  is exponential in the size of $Q$ and $\Sigma'$. The same relationship holds under bag-set semantics between the size of $(Q)_{\Sigma',BS}$ and the sizes of $Q$ and $\Sigma$.  
\end{example}

By the results of Section~\ref{new-sound-chase-sec}, chase of CQ query $Q$ under key-based tgds $\Sigma$ results in a query that is equivalent to $Q$ under $\Sigma$ under each of bag and bag-set semantics for query evaluation. Observing that the dependencies $\Sigma'$ of Example~\ref{cont-expon-example} are weakly acyclic (and, in fact, strictly acyclic), completes the construction of the infinite family of pairs $(Q,\Sigma')$, one pair for each natural-number value of $m$, such that the size of each of $(Q)_{\Sigma,B}$ and $(Q)_{\Sigma,BS}$ (both constructed using sound  chase) is polynomial in the size of $Q$ and exponential in size of $\Sigma$. 

\section{Satisfiable Dependencies Are\\ Query Based}
\label{semant-appendix}

In this section of the appendix we provide Theorem~\ref{bag-set-sigma-max-theorem}, which is the analog of Theorem~\ref{sigma-max-theorem} for the case of bag-set semantics. We  then supply a proof of Theorem~\ref{sigma-max-theorem}; the proof of Theorem~\ref{bag-set-sigma-max-theorem} is similar. 
Finally, we outline the counterpart of algorithm {\sc Max-Bag-$\Sigma$-Subset} (of Section~\ref{semantic-sec-now}) for the case of bag-set semantics. 

\vspace{-0.1cm}

\begin{theorem}{{\bf (Unique $\Sigma^{max}_{BS}(Q,\Sigma) \subseteq \Sigma$)}}
\label{bag-set-sigma-max-theorem}
Given a CQ query $Q$ and set $\Sigma$ of embedded dependencies, such that there exists a  {\em set-}chase result $(Q)_{\Sigma,S}$ for $Q$ and $\Sigma$. 
Let $Q_n$ be the result of  sound chase for $Q$ and $\Sigma$ under {\em bag-set} semantics, with canonical database $D^{(Q_n)}$.  Then there exists a unique subset $\Sigma^{max}_{BS}(Q,\Sigma)$ of $\Sigma$, such that:
\begin{itemize}
	\item $D^{(Q_n)} \models \Sigma^{max}_{BS}(Q,\Sigma)$, and 
	\item for each proper superset $\Sigma'$ of $\Sigma^{max}_{BS}(Q,\Sigma)$ such that $\Sigma' \subseteq \Sigma$, $D^{(Q_n)} \models \Sigma'$ does {\em not} hold. 
\end{itemize}
\vspace{-0.52cm}
\end{theorem}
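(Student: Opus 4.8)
The plan is to reduce the statement to two facts: that the canonical database $D^{(Q_n)}$ is well-defined, and that satisfaction of a \emph{set} of embedded dependencies by a fixed database is merely the conjunction of the individual satisfactions. First I would note that a sound-chase result $Q_n$ exists, since by the bag-set analog of Proposition~\ref{chase-termination-prop} sound chase of $Q$ using $\Sigma$ under bag-set semantics terminates whenever $(Q)_{\Sigma,S}$ exists; and by Theorem~\ref{bag-set-uniqueness-theorem} this $Q_n$ is unique up to $\equiv_{BS}$, i.e.\ up to isomorphism of its canonical representation. I would then argue that $D^{(Q_n)}$ is well-defined up to isomorphism: the canonical-database construction sends literally identical subgoals to the same ground tuple, so on the (set-valued) database $D^{(Q_n)}$ duplicate subgoals collapse, and hence $D^{(Q_n)}$ depends only on the canonical representation of $Q_n$. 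Since satisfaction of any embedded dependency is invariant under database isomorphism, the collection $\{\sigma \in \Sigma : D^{(Q_n)} \models \sigma\}$ does not depend on the chosen representative $Q_n$.

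Next I would simply \emph{define} $\Sigma^{max}_{BS}(Q,\Sigma) := \{\sigma \in \Sigma : D^{(Q_n)} \models \sigma\}$ and verify the two bullets. Because a database satisfies a set of dependencies exactly when it satisfies each member, the first bullet $D^{(Q_n)} \models \Sigma^{max}_{BS}(Q,\Sigma)$ holds by construction. For maximality and uniqueness, observe that any $\Sigma' \subseteq \Sigma$ with $D^{(Q_n)} \models \Sigma'$ forces $\Sigma' \subseteq \Sigma^{max}_{BS}(Q,\Sigma)$, since every member of $\Sigma'$ is then individually satisfied by $D^{(Q_n)}$ and so lies in $\Sigma^{max}_{BS}(Q,\Sigma)$. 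In particular, no proper superset $\Sigma'$ of $\Sigma^{max}_{BS}(Q,\Sigma)$ contained in $\Sigma$ can be satisfied, because such $\Sigma'$ must contain some $\sigma$ with $D^{(Q_n)} \not\models \sigma$; this gives the second bullet. Uniqueness is then immediate: any subset of $\Sigma$ with both properties coincides with $\Sigma^{max}_{BS}(Q,\Sigma)$.

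This reduces the literal statement to the well-definedness of $D^{(Q_n)}$, which is the only genuinely nontrivial ingredient and rests entirely on Theorem~\ref{bag-set-uniqueness-theorem}. The remaining substance is the constructive characterization claimed right after the theorem, and I expect that to be where the real work lies: I would show $\Sigma^{max}_{BS}(Q,\Sigma) = \Sigma \setminus \{\sigma : \sigma \text{ is (unsoundly) applicable to } Q_n\}$. The key lemma is the standard correspondence that, for an embedded dependency $\sigma$, chase with $\sigma$ is applicable to $Q_n$ in the sense of Section~\ref{chase-prelims} if and only if $D^{(Q_n)} \not\models \sigma$ — a homomorphism from the premise of $\sigma$ into the body of $Q_n$ that cannot be extended across the conclusion is exactly a violating witness on the canonical database. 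Since $Q_n$ is a \emph{terminal} sound-chase result, no sound chase step applies to it, so every $\sigma$ still applicable to $Q_n$ is only unsoundly applicable, i.e.\ excluded by Theorem~\ref{bag-set-chase-sound-theorem}.

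The main obstacle is therefore not existence or uniqueness but verifying this applicability/violation correspondence carefully for regularized tgds, whose conclusions may have been split during regularization, and confirming that ``no further sound step exists'' coincides exactly with the non-membership condition $D^{(Q_n)} \not\models \sigma$. I would treat tgds and egds separately, mirroring the argument for the bag-semantics case in the proof of Theorem~\ref{sigma-max-theorem}, since the only point at which bag and bag-set semantics diverge here is the set-valuedness side condition in Theorem~\ref{bag-set-chase-sound-theorem}, which is vacuous under bag-set semantics.
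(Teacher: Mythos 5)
Your proof is correct and follows essentially the same route as the paper: the paper proves the bag-semantics version (Theorem~\ref{sigma-max-theorem}) --- declaring the bag-set case analogous --- by observing, via its state-transition machinery (Proposition~\ref{state-trans-prop}), that every dependency still applicable to the terminal sound-chase result $Q_n$ is only \emph{unsoundly} applicable, so that $\Sigma^{max}_{BS}(Q,\Sigma)$ is exactly $\Sigma$ minus those dependencies; your terminality-plus-(applicability~$\Leftrightarrow$~violation of $D^{(Q_n)}$) argument is the same insight without the state formalism. Your explicit handling of the well-definedness of $D^{(Q_n)}$ (despite $Q_n$ being unique only up to $\equiv_{BS}$) and of the trivial existence/uniqueness of the maximal satisfied subset fills in steps the paper leaves implicit.
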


\vspace{-0.1cm}

We now turn to the proof of Theorem~\ref{sigma-max-theorem}. We first observe that the process of sound chase of a CQ query using a set $\Sigma$ of  embedded dependencies under bag semantics can be modeled as state transitions for $\Sigma$, with certain conditions on the final state, which corresponds to obtaining the result of the chase. The termination conditions are formalized in Proposition~\ref{state-trans-prop}; we first set up the terminology required to formulate Proposition~\ref{state-trans-prop}. 

Suppose we are given a CQ query $Q$ and a finite set $\Sigma$ of embedded dependencies, such that there exists a  {\em set-}chase result $(Q)_{\Sigma,S}$ for $Q$ and $\Sigma$. Consider an arbitrary chase sequence ${\bf C} = Q_0, Q_1, \ldots$, such that (i) $Q_0 = Q$, and (ii) every query $Q_{i+1}$ ($i \geq 0$) in ${\bf C}$ is obtained from $Q_i$ by a sound chase step $Q_i \Rightarrow^{\sigma}_B Q_{i+1}$ using a dependency $\sigma \in \Sigma$. By Proposition~\ref{chase-termination-prop}, the chase sequence ${\bf C}$ is finite, that is,  ${\bf C} = Q_0, Q_1, \ldots, Q_n$, such that $n \in {\bf N} \cup \{ 0 \}$ and such that query $Q_n = (Q)_{\Sigma,B}$. Moreover, by Theorem~\ref{uniqueness-theorem} we have that the query $Q_n$ is bag-equivalent in the absence of dependencies\footnote{Other than the set-enforcing dependencies on stored relations, see Theorem~\ref{uniqueness-theorem}.} to the terminal queries in all sound-chase sequences for $Q$ and $\Sigma$ under bag semantics.  

Given a chase sequence ${\bf C}$ as defined above, with chase result $Q_n = (Q)_{\Sigma,B}$, we assign a unique ID to each subgoal of $Q_n$. We then ``propagate the IDs back'' to all the queries in ${\bf C}$, so that the enumeration of the subgoals is consistent across all the elements of ${\bf C}$. If extra subgoals are encountered in non-terminal elements of ${\bf C}$, we assign unique IDs to those subgoals as well. (The only case when a query $Q_i$, $i < n$, in ${\bf C}$ could have an extra subgoal compared to $Q_n$ is when the procedure of dropping duplicate subgoals has been applied to either $Q_i$ or its successors in ${\bf C}$. See Theorems~\ref{cv-theorem}, \ref{bag-chase-sound-theorem}, and~\ref{cv-updated-thm}.) In what follows, we refer to the $j$th subgoal of query $Q_i$ as $s^{(i)}_{j}$. 

Fix an arbitrary $i \in \{ 0,\ldots,n \}$, and consider query $Q_i$ in the chase sequence ${\bf C}$. Given an arbitrary dependency $\sigma \in \Sigma$, of the form $\sigma: \phi(\bar{U},\bar{W}) \rightarrow \exists \bar{V} \ \psi(\bar{U},\bar{V})$, we define the state of $\sigma$ w.r.t. $Q_i$ in ${\bf C}$ as follows:
\begin{itemize}
	\item Dependency $\sigma$ is {\em pre-applicable} to $Q_i$ if the chase of none of $Q_0, Q_1, \ldots, Q_i$ with $\sigma$ is applicable; that is, for each $j \in \{ 0,\ldots,i \},$ there does not exist a homomorphism from the left-hand side $\phi$ of $\sigma$ to the body of the query $Q_j$. 

	\item Dependency $\sigma$ is {\em soundly applicable} to set of subgoals $S = \{ s^{(i)}_{j1},\ldots,s^{(i)}_{jk} \}$ of query $Q_i$, for some $k > 0$, if there exists  a proper subset $\theta$, of size $k' \geq k$, of $\phi \wedge \psi$ (of $\sigma$), with the following properties:
	
	\begin{itemize} 
	
		\item $\theta$ is a superset of $\phi$; 
		
		\item there exists a homomorphism $h$ from $\theta$ to exactly the subgoals $s^{(i)}_{j1},\ldots,s^{(i)}_{jk}$ of query $Q_i$, such that $h$ cannot be extended to a homomorphism from 
$\phi \wedge \psi$ to the body of the query $Q_i$ (see Section~\ref{chase-prelims} for further details on this definition); and 
		
		\item chase step $Q_i \Rightarrow^{\sigma}_B Q'$, where $Q'$ is a CQ query, is sound; that is, $Q' \equiv_{\Sigma,B} Q_i$. 
	 
	 \end{itemize}
	

	\item Dependency $\sigma$ is {\em unsoundly applicable} to set of subgoals $S = \{ s^{(i)}_{j1},\ldots,s^{(i)}_{jk} \}$ of query $Q_i$, for some $k > 0$,  if there exists  a proper subset $\theta$, of size $k' \geq k$,  of $\phi \wedge \psi$ (of $\sigma$), with the following properties:
	
	\begin{itemize} 
	
		\item $\theta$ is a superset of $\phi$; 
		
		\item there exists a homomorphism $h$ from $\theta$ to exactly the subgoals $s^{(i)}_{j1},\ldots,s^{(i)}_{jk}$ of  query $Q_i$, such that $h$ cannot be extended to a homomorphism from 
$\phi \wedge \psi$ to the body of the query $Q_i$ (see Section~\ref{chase-prelims} for further details on this definition); and 
		
		\item chase step $Q_i \Rightarrow^{\sigma}_B Q'$, where $Q'$ is a CQ query, is {\em unsound;} that is, $Q' \equiv_{\Sigma,B} Q_i$ does not hold. 
	 
	 \end{itemize}

	\item Finally, dependency $\sigma$ is {\em post-applicable} to $Q_i$ (assuming $i > 0$) if (a) $\sigma$ is neither soundly applicable nor unsoundly applicable to $Q_i$, and (b) there exists a $j \in \{ 0,\ldots,i-1 \}$ such that $\sigma$ has been used in a sound chase step $Q_j \Rightarrow^{\sigma}_B Q_{j+1}$. 
	Observe that in this case, by definition of (sound) chase steps there exists a homomorphism from the conjunction of the left-hand side $\phi$ of $\sigma$ with the right-hand side $\psi$ of $\sigma$  to the body of the query $Q_i$.

\end{itemize}

In the above definition of the state of $\sigma \in \Sigma$ w.r.t. $Q_i$ in ${\bf C}$, the only difference between the states ``soundly applicable'' and ``unsoundly applicable'' is the soundness property of the chase step in question. Specifically, in the state ``$\sigma$ is soundly applicable to $Q_i$'' the chase step $Q_i \Rightarrow^{\sigma}_B Q'$ is sound under bag semantics, whereas in the state ``$\sigma$ is unsoundly applicable to $Q_i$'', the chase step $Q_i \Rightarrow^{\sigma}_B Q'$ is unsound.  

We now define the {\em state of the set of embedded dependencies $\Sigma$ w.r.t. $Q_i$ in} ${\bf C}$, as a total mapping $s_i^{\bf C}$ from $\Sigma$ to the set of the four above states (pre-applicable, post-applicable, soundly-applicable, and unsoundly-applicable), where the state $s_i^{\bf C}(\sigma)$ of each $\sigma \in \Sigma$ w.r.t. $Q_i$ in ${\bf C}$ is as follows:
\begin{itemize}
	\item $s_i^{\bf C}(\sigma)$ = ``soundly-applicable'' if and only if there exists a set $S$ of subgoals of $Q_i$ such that  $\sigma$ is soundly applicable to $S$ in $Q_i$; 
	\item $s_i^{\bf C}(\sigma)$ = ``unsoundly-applicable'' if and only if there exists {\em no} subset $S$ of subgoals of $Q_i$ such that  $\sigma$ is soundly applicable to $S$ in $Q_i$, {\em and} there exists a set $S'$ of subgoals of $Q_i$ such that  $\sigma$ is unsoundly applicable to $S'$ in $Q_i$; 
	\item $s_i^{\bf C}(\sigma)$ = ``post-applicable'' if $\sigma$ and $Q_i$ satisfy the conditions (a) and (b) of post-applicability, see above; and  
	\item $s_i^{\bf C}(\sigma)$ = ``pre-applicable'' if $\sigma$ and $Q_i$ satisfy the conditions of pre-applicability, see above.
\end{itemize} 

We now establish straightforward facts about the states of $\Sigma$ w.r.t. particular queries in the sound-chase sequence ${\bf C} = Q_0,\ldots,Q_n$, where $Q_n$ is the result of the sound chase of $Q$ using $\Sigma$ under bag semantics. The proofs of all the claims in Proposition~\ref{state-trans-prop} are immediate from the definitions in this section of the appendix and from the definitions of chase steps, see Section~\ref{chase-prelims}.  

\begin{proposition}
\label{state-trans-prop}
For a CQ query $Q$ and a set of embedded dependencies $\Sigma$ such that there exists a set-chase result $(Q)_{\Sigma,S}$ for $Q$ and $\Sigma$. Let ${\bf C} = Q_0,\ldots,Q_n$ be a sound-chase sequence for $Q$ and $\Sigma$ under bag semantics. In ${\bf C}$, $Q_0 = Q$, and $Q_n$ is the result $(Q)_{\Sigma,B}$ of the sound chase of $Q$ using $\Sigma$ under bag semantics. Then the following holds about the states of $\Sigma$ w.r.t. queries in ${\bf C}$. 
\begin{enumerate}
	\item Suppose that in the state $s_0^{\bf C}$ of $\Sigma$ w.r.t. query $Q_0$ in chase sequence ${\bf C}$, for all $\sigma \in \Sigma$ it holds that $s_0^{\bf C}(\sigma)$ is either  ``pre-applicable'' or ``unsoundly-applicable''. Then ${\bf C} = Q_0$. That is, $Q$ is isomorphic to $(Q)_{\Sigma,B}$. 
	\item Consider the state $s_n^{\bf C}$ of $\Sigma$ w.r.t. query $Q_n$ in chase sequence ${\bf C}$. 
Then for all $\sigma \in \Sigma$ it must hold that $s_n^{\bf C}(\sigma)$ is one of  ``pre-applicable'', ``post-applicable'', and ``unsoundly-applicable''. 
	\item For an arbitrary $i \in \{ 0,\ldots,n-1 \}$ (assuming $n > 0$), consider the state $s_i^{\bf C}$ of $\Sigma$ w.r.t. query $Q_i$ and  the state $s_{i+1}^{\bf C}$ of $\Sigma$ w.r.t. query $Q_{i+1}$. Then 
	\begin{enumerate}
		\item there must exist a $\sigma^* \in \Sigma$ such that $s_i^{\bf C}(\sigma^*)$ is ``soundly applicable'', and 
		
		\item for each $\sigma \in (\Sigma - \{ \sigma^* \})$, 
		$s_i^{\bf C}(\sigma) = s_{i+1}^{\bf C}(\sigma)$. 
	
	\end{enumerate}
	
\end{enumerate}
\vspace{-0.5cm}
\end{proposition}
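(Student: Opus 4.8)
The plan is to reduce all three claims to a single pivot observation about the fixed terminating sound-chase sequence ${\bf C} = Q_0,\ldots,Q_n$: a sound chase step can be performed out of $Q_i$ if and only if some $\sigma \in \Sigma$ has state $s_i^{\bf C}(\sigma) = {}$``soundly-applicable''. This is immediate by unwinding the definitions, since the existence of a step $Q_i \Rightarrow^{\sigma}_B Q'$ is, by the definition of a sound chase step, exactly the existence of a subgoal set of $Q_i$ to which $\sigma$ is soundly applicable, which is in turn the condition defining that value of $s_i^{\bf C}$. First I would record this equivalence as a small lemma, so that the remaining arguments can be phrased purely in terms of whether the value ``soundly-applicable'' occurs in a given state.

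Claims~1 and~2 then follow from terminality of ${\bf C}$. For Claim~1 the hypothesis is precisely that no $\sigma$ has value ``soundly-applicable'' at $Q_0$; by the pivot lemma no sound step leaves $Q_0$, so the sound-chase sequence cannot be extended, forcing $n=0$ and $Q_n = Q_0 = Q$, and Theorem~\ref{uniqueness-theorem} upgrades this to $Q \cong (Q)_{\Sigma,B}$. For Claim~2, $Q_n = (Q)_{\Sigma,B}$ is by construction the terminal element, so again no sound step applies to it and the pivot lemma forbids the value ``soundly-applicable'' at $Q_n$; since $s_n^{\bf C}$ is total, every $\sigma$ must land in one of the three remaining states.

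For Claim~3, part~(a) is direct: $Q_{i+1}$ arises from $Q_i$ by a sound step $Q_i \Rightarrow^{\sigma^*}_B Q_{i+1}$, and by the definition of such a step $\sigma^*$ is soundly applicable to some subgoal set of $Q_i$, i.e. $s_i^{\bf C}(\sigma^*) = {}$``soundly-applicable''. Part~(b) is where the work lies, and I would argue it by a four-way case analysis on $s_i^{\bf C}(\sigma)$ for a fixed $\sigma \neq \sigma^*$, tracking the effect on $\sigma$'s witnessing conditions of the single modification the step makes to the body of $Q_i$ (adjoining the atoms $\psi(h(\bar{X}),\bar{Z})$ when $\sigma^*$ is an assignment-fixing tgd, or identifying two variables and possibly deleting a set-valued duplicate when $\sigma^*$ is an egd). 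The easy directions are that homomorphisms already witnessing applicability are never destroyed, because subgoals only accumulate and an egd identification can be composed into an existing witness, and that the ``post-applicable'' bookkeeping condition~(b), which refers only to earlier elements of ${\bf C}$, is untouched by a later step.

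The main obstacle is ruling out that the step using $\sigma^*$ changes the state of some other $\sigma$ by \emph{creating} new applicability. Subgoal accumulation must be shown not to move $\sigma$ out of ``soundly-'' or ``unsoundly-applicable'', and the genuinely delicate transition is a $\sigma$ that was ``pre-applicable'' at $Q_i$ acquiring a left-hand-side match among the newly adjoined atoms. My strategy for this crux is to exploit that the step is assignment-fixing: Definition~\ref{assgn-fix-def} pins down exactly which and how many atoms $\sigma^*$ contributes, and Theorems~\ref{bag-chase-sound-theorem} and~\ref{bag-set-chase-sound-theorem} decide soundness of any hypothetical follow-up step through the associated test query rather than through the ambient body, so that the homomorphism and non-extendability tests of Section~\ref{chase-prelims} governing $\sigma$'s state can be evaluated from data independent of the order in which $\sigma^*$ was fired. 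Carrying this case bookkeeping out cleanly — rather than the individual set-theoretic facts, which are routine — is the real effort, and it is the lemma that ultimately underwrites the well-definedness of the query-dependent set $\Sigma^{max}_B(Q,\Sigma)$ in Theorem~\ref{sigma-max-theorem}.
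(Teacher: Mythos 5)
Your handling of claims~1, 2, and~3(a) is correct and is essentially what the paper itself does: the paper disposes of the whole of Proposition~\ref{state-trans-prop} with a one-sentence appeal to the definitions, and your ``pivot lemma'' (a sound step leaves $Q_i$ iff some $\sigma$ has state ``soundly-applicable'') is exactly that unwinding. The problem is claim~3(b). You correctly isolate it as the only part with content, but your proposal stops precisely there: the four-way ``case bookkeeping,'' which you yourself concede is ``the real effort,'' is never carried out, and the crux you name --- ruling out that the step $Q_i \Rightarrow^{\sigma^*}_B Q_{i+1}$ creates applicability for a $\sigma$ that was pre-applicable at $Q_i$ --- \emph{is} the entire assertion of 3(b). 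A proof whose single nontrivial component is identified but deferred is a gap, not a proof.

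Worse, the deferred step would fail if you attempted it, by any strategy. A sound chase step can make a previously pre-applicable dependency applicable; such cascading is intrinsic to chase, and assignment-fixing (Definition~\ref{assgn-fix-def}) is powerless against it, because it constrains only the multiplicity of existential witnesses, not the creation of new left-hand-side homomorphisms into the atoms the step adjoins. Concretely, let $\Sigma$ contain the full tgds $\sigma^* : p(X) \rightarrow r(X)$ and $\sigma : r(X) \rightarrow s(X)$, with $R$ and $S$ set-valued: both steps are sound under bag semantics by Theorem~\ref{bag-chase-sound-theorem}, yet $\sigma$ moves from ``pre-applicable'' at $Q_0$ to ``soundly-applicable'' at $Q_1$, so $s_0^{\bf C}(\sigma) \neq s_1^{\bf C}(\sigma)$ with $\sigma \neq \sigma^*$. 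The paper's own Examples~\ref{expon-example} and~\ref{cont-expon-example} behave the same way: applying $\sigma^{(1)}_{1,2}$ adds the $p_2$-atom that turns the previously pre-applicable $\sigma^{(1)}_{2,3}$ into a soundly applicable one. So claim~3(b), read literally, is contradicted by sound chase sequences occurring elsewhere in the paper; the paper's ``immediate from the definitions'' never confronts this, and no elaboration of your assignment-fixing argument can close the gap, since the statement being proved is false in general. What does survive --- and what the proof of Theorem~\ref{sigma-max-theorem} actually uses --- is claim~2, which your terminality argument establishes correctly.
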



We are now ready to prove Theorem~\ref{sigma-max-theorem}. 

\begin{proof}{(Theorem~\ref{sigma-max-theorem})}
Consider a fixed pair $(Q,\Sigma)$ satisfying the conditions of Theorem~\ref{sigma-max-theorem}, and let $Q_n$ be the result of  sound chase for $Q$ and $\Sigma$ under bag semantics, with canonical database $D^{(Q_n)}$.  We show that  the set $\Sigma^{max}_{B}(Q,\Sigma)$ is the result of removing from $\Sigma$ exactly those tgds $\sigma$ such that the chase step $Q_n \Rightarrow^{\sigma}_B Q'$, with some CQ query $Q'$ being the outcome of the chase step, is unsound under bag semantics. This claim is, in fact, immediate from Proposition~\ref{state-trans-prop}, in which it is shown that, for each dependency $\sigma$ in $\Sigma$ such that $\sigma$ is applicable to $Q_n$, $\sigma$ is {\em unsoundly} applicable to $Q_n$. 
\end{proof}

Finally, we outline the counterpart of algorithm {\sc Max-Bag-$\Sigma$-Subset} (of Section~\ref{semantic-sec-now}) for the case of bag-set semantics. 

 \begin{algorithm}[htp]

\label{algo: example}

\caption{Max-Bag-Set-$\Sigma$-Subset($Q,\Sigma$)}

\SetKwInOut{Input}{Input}

\SetKwInOut{Output}{Output} 

\SetKwInOut{Feature}{Feature}

\dontprintsemicolon

\Input{CQ query $Q$, set $\Sigma$ of embedded dependencies such that chase result $(Q)_{\Sigma,S}$ exists}

\Output{$\Sigma^{max}_{BS}(Q,\Sigma) \subseteq \Sigma$ s. t. \\ (1) $D^{((Q)_{\Sigma,BS})} \models \Sigma^{max}_{BS}(Q,\Sigma)$, and \\ (2) $\forall \ \Sigma'$ such that $\Sigma^{max}_{BS}(Q,\Sigma) \subset \Sigma' \subseteq \Sigma$, \\  $D^{((Q)_{\Sigma,BS})} \models \hspace{-0.27cm} / \hspace{0.2cm} \Sigma'$}

1. $(Q)_{\Sigma,BS} \ := \ soundChase(BS,Q,\Sigma); $ 
  
2. $\Sigma^{max}_{BS}(Q,\Sigma) \ := \ \Sigma ;$
  
3. \For{each $\sigma$ in $\Sigma$}
{
4.       \If{$soundChaseStep(\sigma,BS,(Q)_{\Sigma,BS}) = false$}
      {
5.     $\Sigma^{max}_{BS}(Q,\Sigma) \ := \ \Sigma^{max}_{BS}(Q,\Sigma) - \{ \sigma \} ;$
      }

}

6.  {\bf return} $\Sigma^{max}_{BS}(Q,\Sigma)$;

\end{algorithm}

The correctness and complexity results for {\sc Max-Bag-Set-$\Sigma$-Subset} are the same as their counterparts for algorithm {\sc Max-Bag-$\Sigma$-Subset}, see Theorem~\ref{correct-max-bag-sigma-theorem} and  Section~\ref{semantic-sec-now} for the details.

\section{Proofs of $\Sigma$-Equivalence-Tests \\ for CQ Queries}
\label{appendix-f}







\nop{

\begin{proof}{(Proposition~\ref{weakly-acyclic-prop})}
Proof: Immediate from (1) the guarantees of finite-time chase termination under {\it set semantics}, for sets of weakly acyclic embedded dependencies, and from (2) the soundness of chase steps for bag and bag-set semantics. 
(I.e., the proof is that whenever the input set of dependencies is weakly acyclic by the set-semantics definition, then our restrictions on soundness of chase for bag or bag-set semantics can only remove ``violator'' dependencies from this set, thus the resulting set remains weakly acyclic by definition. A way to look at it is that for sets of embedded dependencies that are weakly acyclic and guarantee finite-time chase termination under bag and bag-set semantics, the space of such sets of dependencies is a subspace of weakly acyclic sets of embedded dependencies.)
\end{proof}

\mbox{}

{\sc Theorem 3.1}
{\it Given embedded dependencies $\Sigma$ such that chase is sound and is guaranteed to terminate under bag semantics, and given CQ queries $Q$, $Q'$. Then  $Q \equiv_{\Sigma,B} Q'$ if and only if $Q_{\Sigma} \equiv_{B} Q'_{\Sigma}$ in the absence of dependencies.}

\mbox{}

{\sc Theorem 3.2}
{\it Given embedded dependencies $\Sigma$ such that chase is sound and is guaranteed to terminate under bag-set semantics, and given CQ queries $Q$, $Q'$. Then  $Q \equiv_{\Sigma,BS} Q'$ if and only if $Q_{\Sigma} \equiv_{BS} Q'_{\Sigma}$ in the absence of dependencies.}

\mbox{}

As discussed above (see Proposition~\ref{weakly-acyclic-prop}), whenever $\Sigma$ is a weakly acyclic set of dependencies, then chase under $\Sigma$ is guaranteed to terminate in finite time, under each of bag and bag-set semantics. 

} 

To prove Theorems~\ref{bag-chase-equiv-theorem} and Theorem~\ref{bag-set-chase-equiv-theorem}, we first make a straightforward observation, as follows.
\begin{proposition}
\label{intro-sigma-prop}
Given two queries $Q$ and $Q'$ and a set of embedded dependencies $\Sigma$. Let $X$ be one of $B$, $BS$, $S$, which stand for bag, bag-set, and set semantics, respectively. Then $Q \equiv_X Q'$ implies $Q \equiv_{\Sigma,X} Q'$. 
\end{proposition}

The proof of Proposition~\ref{intro-sigma-prop} is straightforward from the definitions of query equivalence in presence and in the absence of  dependencies.

The proof of Theorem~\ref{bag-chase-equiv-theorem} is immediate from Propositions~\ref{chase-termination-prop} and~\ref{intro-sigma-prop}, from Theorem~\ref{uniqueness-theorem},  and from Lemmas~\ref{bag-chase-first-lemma} and~\ref{bag-chase-second-lemma}. Similarly, the proof of Theorem~\ref{bag-set-chase-equiv-theorem} is immediate from Propositions~\ref{chase-termination-prop} and~\ref{intro-sigma-prop}, from the analog of Theorem~\ref{uniqueness-theorem} for bag-set semantics (see Theorem~\ref{bag-set-uniqueness-theorem}),  and from straightforward analogs of Lemmas~\ref{bag-chase-first-lemma} and~\ref{bag-chase-second-lemma} for the case of bag-set semantics for query evaluation.


\begin{lemma}
\label{bag-chase-first-lemma}
Given CQ queries $Q$ and $Q'$, and given a set of embedded dependencies $\Sigma$ on schema $\cal D$ 
such that there exist {\em set-}chase results $(Q)_{\Sigma,S}$ for $Q$ and $(Q')_{\Sigma,S}$ for $Q'$.  Then $Q \equiv_{\Sigma,B} Q'$ implies $(Q)_{\Sigma,B} \equiv_{B} (Q')_{\Sigma,B}$ in the absence of all dependencies other than the set-enforcing dependencies on $\cal D$. 
%
\end{lemma}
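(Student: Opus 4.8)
The plan is to prove the implication in two stages: first push the hypothesis down to the terminal chase results, then upgrade dependency-aware bag equivalence of those results to dependency-free bag equivalence. For the first stage, recall that by definition every sound bag-chase step $Q \Rightarrow^{\sigma} Q''$ preserves $\equiv_{\Sigma,B}$, so each of the two sound chase sequences witnessing $(Q)_{\Sigma,B}$ and $(Q')_{\Sigma,B}$ (which exist and terminate by Proposition~\ref{chase-termination-prop}) yields $Q \equiv_{\Sigma,B} (Q)_{\Sigma,B}$ and $Q' \equiv_{\Sigma,B} (Q')_{\Sigma,B}$. Writing $U := (Q)_{\Sigma,B}$ and $U' := (Q')_{\Sigma,B}$, transitivity of $\equiv_{\Sigma,B}$ together with the hypothesis $Q \equiv_{\Sigma,B} Q'$ gives $U \equiv_{\Sigma,B} U'$. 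It therefore remains to show that, for terminal sound bag-chase results, $U \equiv_{\Sigma,B} U'$ forces $U \equiv_B U'$ in the absence of all dependencies other than the set-enforcing ones; by Theorem~\ref{cv-updated-thm} this is equivalent to proving that the reductions $U_{red}$, $U'_{red}$ (obtained by deleting duplicate subgoals over set-valued relations) are isomorphic.

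For the second stage I would first recover the structural skeleton and only then the exact multiplicities. Applying the dependency-aware form of the implications in Proposition~\ref{b-bs-s-implic-prop}, namely $\equiv_{\Sigma,B} \Rightarrow \equiv_{\Sigma,BS} \Rightarrow \equiv_{\Sigma,S}$, to $U \equiv_{\Sigma,B} U'$ gives $U \equiv_{\Sigma,S} U'$; since $U$ is a partial set-chase of $Q$ we have $(U)_{\Sigma,S} \equiv_S (Q)_{\Sigma,S}$, and likewise $(U')_{\Sigma,S} \equiv_S (Q')_{\Sigma,S}$, so Theorem~\ref{chase-theorem} yields $(Q)_{\Sigma,S} \equiv_S (Q')_{\Sigma,S}$ and hence containment mappings both ways by the Chandra--Merlin criterion. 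This fixes the cores but not the per-predicate subgoal counts that \emph{isomorphism} of $U_{red}$ and $U'_{red}$ demands. To pin those down I would argue by contrapositive in the spirit of the proof of Theorem~\ref{uniqueness-theorem}: assuming $U_{red} \not\cong U'_{red}$, one of the two queries --- say $U$ --- has strictly more subgoals than the other over some predicate, and I would exhibit a database satisfying $\Sigma$ on which $U$ and $U'$ return different bags, contradicting $U \equiv_{\Sigma,B} U'$.

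The hard part is the construction of this separating $\Sigma$-database, for the following reason: the canonical database $D^{(U)}$ of a terminal sound bag-chase result need \emph{not} satisfy all of $\Sigma$ --- by Theorem~\ref{sigma-max-theorem} it satisfies only the maximal subset $\Sigma^{max}_B(Q,\Sigma)$ --- so one cannot simply evaluate the two queries on $D^{(U)}$ and invoke the hypothesis. The key enabling observation is that, because $U$ is terminal, every $\sigma \in \Sigma$ still applicable to $U$ is only \emph{unsoundly} applicable (Theorem~\ref{sigma-max-theorem} and its accompanying characterization), which means that completing a witness database so as to satisfy the missing dependencies can be arranged without disturbing the multiplicities being counted. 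Concretely, I would start from the canonical database of the set-chase completion $D^{((Q)_{\Sigma,S})}$, which does satisfy $\Sigma$, and then \emph{pump} it by replicating the tuples of the offending non-set-valued relation exactly as in the proof of Lemma~\ref{bag-stubborn-lemma}: with $m$ copies of each such tuple the multiplicity contributed by the query with more subgoals over that predicate grows as $m^{n_1}$, while the other grows only as $m^{n_2}$ with $n_2 < n_1$, so for large $m$ the two bags differ. Since the pumped database still satisfies $\Sigma$ and still satisfies the set-enforcing constraints (the replicated relation is, by choice, not set-enforced), it contradicts $U \equiv_{\Sigma,B} U'$, completing the contrapositive and hence the lemma.
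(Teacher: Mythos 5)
Your first stage is fine and coincides with the paper's opening move: soundness of each bag-chase step plus transitivity of $\equiv_{\Sigma,B}$ gives $(Q)_{\Sigma,B} \equiv_{\Sigma,B} (Q')_{\Sigma,B}$. The genuine gap is in your second stage, at the step where you pass from ``$U_{red} \not\cong U'_{red}$'' to ``one of the two queries has strictly more subgoals than the other over some predicate,'' and then, tacitly, to ``that predicate can be chosen to be non-set-enforced.'' Neither reduction is valid. First, two CQ queries can be set-equivalent, have identical per-predicate subgoal counts, and still fail to be isomorphic: take bodies $\{p(X,X),\, p(X,Y)\}$ and $\{p(X,X),\, p(Y,X)\}$ with head variable $X$. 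Containment mappings both ways plus equal counts simply do not force isomorphism, so your contrapositive has an entire case that the pumping construction never touches. Second, even when the counts do differ, the discrepancy may sit on a predicate whose relation \emph{is} set-enforced (differing numbers of \emph{distinct} subgoals survive the reduction that defines $U_{red}$), and there the pumping of Lemma~\ref{bag-stubborn-lemma} is unavailable: replicating tuples of that relation produces a database violating exactly the set-enforcing constraints your witness is required to satisfy. So your argument yields a $\Sigma$-satisfying separating database only in the single case of a count discrepancy on a non-set-valued predicate; the other two ways in which $U_{red} \not\cong U'_{red}$ can hold are left open, and closing them is essentially the whole difficulty — in the dependency-free setting they are handled by the Chaudhuri--Vardi argument, but its witness databases are not guaranteed to satisfy $\Sigma$, which is precisely the obstruction you yourself identify.

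Two further remarks. You invoke the ``dependency-aware form'' of Proposition~\ref{b-bs-s-implic-prop}; in the paper that statement (Proposition~\ref{sigma-b-bs-s-implic-prop}) is proved \emph{from} Theorem~\ref{bag-chase-equiv-theorem}, which rests on the very lemma you are proving, so the citation is circular as written. It is repairable — both implications follow directly, since set-valued databases satisfying $\Sigma$ are a subclass of the bag-valued ones and bag evaluation coincides with bag-set evaluation on them — but you should argue it rather than cite it. More importantly, the paper's own proof avoids structural case analysis altogether and you may want to adopt its idea: fix an arbitrary database $D$ satisfying only the set-enforcing constraints, restrict it to the sub-instance $D'$ consisting of the tuples used by satisfying assignments of the two terminal chase results, observe via Theorem~\ref{sigma-max-theorem} (terminality of sound chase means every still-applicable dependency is only unsoundly applicable, so all chase-relevant dependencies hold on $D'$) that the $\Sigma$-equivalence of the chase results applies on $D'$, and conclude equality of the bags on $D$ because the answers on $D$ are computed from $D'$ alone. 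Without either that localization argument or $\Sigma$-satisfying witnesses for the two missing cases, your contrapositive does not go through.
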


\begin{proof}
First, from Proposition~\ref{chase-termination-prop} we obtain that sound chase of each of $Q$ and $Q'$ using $\Sigma$ is guaranteed to terminate under bag semantics. Further, from Theorem~\ref{uniqueness-theorem} it follows that there exist (1) a unique result $(Q)_{\Sigma,B}$ of sound chase  for $Q$, and (2) a unique result $(Q')_{\Sigma,B}$ of sound chase  for $Q'$ . Both results are unique  in the absence of all dependencies  other than the set-enforcing dependencies on $\cal D$, call these set-enforcing dependencies $\Sigma' \subseteq \Sigma$. 

From $Q \equiv_{\Sigma,B} Q'$ and by the soundness of chase in obtaining $(Q)_{\Sigma,B}$ and $(Q')_{\Sigma,B}$, we have $(Q)_{\Sigma,B} \equiv_{\Sigma,B} (Q')_{\Sigma,B}$. That is, on each bag-valued database $D$ that satisfies $\Sigma$, we have that $Q(D,B)$ and $Q'(D,B)$ are the same as bags.

To show that $(Q)_{\Sigma,B} \equiv_{B} (Q')_{\Sigma,B}$ in the absence of all dependencies  other than $\Sigma'$, it remains to prove that  $Q(D,B)$ and $Q'(D,B)$ are also the same as bags on each database $D$ that does not satisfy $\Sigma$ but does satisfy $\Sigma'$. There are two cases:

Case 1: Suppose $D$ violates only those dependencies that are not relevant in sound chase to either $Q$ or $Q'$. (In the terminology of Section~\ref{semant-appendix}, those would be exactly the dependencies that are pre-applicable to each of $(Q)_{\Sigma,B}$ and $(Q')_{\Sigma,B}$.) In this case, $D$ does not violate any dependencies as far as $(Q)_{\Sigma,B}$ or $(Q')_{\Sigma,B}$ are concerned, as formalized in Theorem~\ref{sigma-max-theorem}. 
Thus from $(Q)_{\Sigma,B} \equiv_{\Sigma,B} (Q')_{\Sigma,B}$ we obtain that $Q(D,B)$ and $Q'(D,B)$ are the same as bags on $D$.

Case 2: Suppose $D$ violates at least one dependency that is relevant in {\em sound} chase to either $Q$ or $Q'$. 
 (In the terminology of Section~\ref{semant-appendix}, those would be exactly the dependencies that are post-applicable to each of $(Q)_{\Sigma,B}$ and $(Q')_{\Sigma,B}$.) Still, by Theorem~\ref{sigma-max-theorem} the definitions of $(Q)_{\Sigma,B}$ and of $(Q')_{\Sigma,B}$ 
 ensure that all such relevant dependencies are enforced (i.e., do not fail) on all assignments $\gamma$ that satisfy each of $(Q)_{\Sigma,B}$ and  $(Q')_{\Sigma,B}$ w.r.t. $D$. Let $D_Q$ be the union of all tuples in all such satisfying assignments for $(Q)_{\Sigma,B}$ w.r.t $D$; $D_{Q'}$ is defined analogously  for $(Q')_{\Sigma,B}$.  Then $D' = D_Q \bigcup D_{Q'}$ satisfies all the dependencies of $\Sigma$ that are relevant in chase to either $Q$ or $Q'$. Thus, from  $(Q)_{\Sigma,B}$ $\equiv_{\Sigma,B} (Q')_{\Sigma,B}$ we obtain that $Q(D',B)$ and $Q'(D',B)$ are the same as bags. From the fact that none of the tuples of $D$ that are not in $D'$  participates in forming either $Q(D,B)$ or $Q'(D,B)$, it follows that $Q(D,B)$ and $Q'(D,B)$ are the same as bags  {\it on database $D$}. 
\end{proof}

\begin{lemma}
\label{bag-chase-second-lemma}
Given CQ queries $Q$, $Q'$, and given embedded dependencies $\Sigma$ on schema $\cal D$ 
such that there exist {\em set-}chase results $(Q)_{\Sigma,S}$ for $Q$ and $(Q')_{\Sigma,S}$ for $Q'$.  Then  $Q \equiv_{\Sigma,B} Q'$ holds whenever $(Q)_{\Sigma,B} \equiv_{B} (Q')_{\Sigma,B}$ in the absence of all dependencies  other than the set-enforcing dependencies on $\cal D$. 
\end{lemma}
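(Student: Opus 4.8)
The plan is to prove the stated implication by chaining three bag-equivalences under $\Sigma$ through the two sound-chase results. First I would invoke Proposition~\ref{chase-termination-prop} to guarantee that, since the set-chase results $(Q)_{\Sigma,S}$ and $(Q')_{\Sigma,S}$ exist, sound chase of each of $Q$ and $Q'$ under $\Sigma$ terminates under bag semantics; by Theorem~\ref{uniqueness-theorem} the resulting queries $(Q)_{\Sigma,B}$ and $(Q')_{\Sigma,B}$ are well defined (unique up to isomorphism after dropping duplicate subgoals corresponding to set-valued relations), so the hypothesis $(Q)_{\Sigma,B} \equiv_{B} (Q')_{\Sigma,B}$ is unambiguous.

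The key observation is that sound chase preserves $\equiv_{\Sigma,B}$. By the definition of a sound bag-chase step (Section~\ref{making-chase-sound-section}), each step $Q_i \Rightarrow^{\sigma}_B Q_{i+1}$ in the chase sequence that produces $(Q)_{\Sigma,B}$ satisfies $Q_i \equiv_{\Sigma,B} Q_{i+1}$; since $\equiv_{\Sigma,B}$ is an equivalence relation, transitivity along the (finite, by Proposition~\ref{chase-termination-prop}) sequence yields $Q \equiv_{\Sigma,B} (Q)_{\Sigma,B}$, and likewise $Q' \equiv_{\Sigma,B} (Q')_{\Sigma,B}$.

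Next I would lift the hypothesis from bag-equivalence modulo the set-enforcing dependencies to bag-equivalence under all of $\Sigma$. Let $\Sigma'\subseteq\Sigma$ denote the set-enforcing dependencies on $\cal D$. Every bag-valued database $D$ with $D\models\Sigma$ also satisfies $\Sigma'$, so $\{D : D\models\Sigma\}\subseteq\{D : D\models\Sigma'\}$; hence bag-equivalence of $(Q)_{\Sigma,B}$ and $(Q')_{\Sigma,B}$ on the larger class (the hypothesis) implies their bag-equivalence on the smaller class, i.e. $(Q)_{\Sigma,B} \equiv_{\Sigma,B} (Q')_{\Sigma,B}$. This is the natural generalization of Proposition~\ref{intro-sigma-prop} from ``no dependencies'' to ``a subset of the dependencies.'' Combining the three equivalences by transitivity,
\[
Q \equiv_{\Sigma,B} (Q)_{\Sigma,B} \equiv_{\Sigma,B} (Q')_{\Sigma,B} \equiv_{\Sigma,B} Q',
\]
gives $Q \equiv_{\Sigma,B} Q'$, as required.

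The main obstacle is the lifting step: one must verify that the bag-equivalence granted by the hypothesis is relative to exactly the set-enforcing constraints $\Sigma'$, and that these constraints are indeed enforced on every $\Sigma$-database, so that restricting the class of databases from those satisfying $\Sigma'$ to those satisfying $\Sigma$ is legitimate. Everything else is bookkeeping: the soundness of the individual chase steps (guaranteed under the format conditions of Theorem~\ref{bag-chase-sound-theorem}) and the transitivity of $\equiv_{\Sigma,B}$.
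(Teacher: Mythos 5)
Your proof is correct and follows essentially the same route as the paper's: soundness of the bag-chase steps gives $Q \equiv_{\Sigma,B} (Q)_{\Sigma,B}$ and $Q' \equiv_{\Sigma,B} (Q')_{\Sigma,B}$ (via Proposition~\ref{chase-termination-prop} for finiteness), the hypothesis is lifted to $(Q)_{\Sigma,B} \equiv_{\Sigma,B} (Q')_{\Sigma,B}$, and transitivity of $\equiv_{\Sigma,B}$ closes the chain. If anything, your lifting step is stated more carefully than the paper's, which simply cites Proposition~\ref{intro-sigma-prop} (formulated for the dependency-free case) where your subset-of-databases argument, generalizing that proposition to equivalence modulo the set-enforcing dependencies $\Sigma' \subseteq \Sigma$, is what is actually needed.
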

The proof of Lemma~\ref{bag-chase-second-lemma} is immediate from the fact that each of $(Q)_{\Sigma,B}$ and $(Q')_{\Sigma,B}$ was obtained using sound chase steps under bag semantics (which implies $(Q)_{\Sigma,B} \equiv_{\Sigma,B} Q$ and $(Q')_{\Sigma,B} \equiv_{\Sigma,B} Q'$), as well as from  Propositions~\ref{chase-termination-prop}  and~\ref{intro-sigma-prop} and from transitivity of bag equivalence in presence of dependencies. 



\nop{

\section{Query-Independent Notion of Key-Based TGDs}
\label{app-key-based}

\reminder{See which part of this appendix is still of interest for the current status of the paper. It seems that the non-removed part of this appendix, see below, can be generalized to the case of my (correct) $\Sigma^{max}_B$ and $\Sigma^{max}_{BS}$.}

\nop{
All tgds that we consider in this appendix are restricted to have only one conjunct on the right-hand side, that is, to be of the form $\sigma: \phi(\bar{X},\bar{Y}) \rightarrow \exists{\bar{Z}} \ p(\bar{Y},\bar{Z})$, where $P$ is a relation in the given database schema. We refer to a tgd of this form as {\em tgd  of the mono-succedent type.} For this class of tgds, we give a definition of a class of dependencies -- key-based tgds -- that we will use in defining sound chase steps under bag and bag-set semantics. 
 \begin{definition}{Key-based tgd}
\label{old-app-key-based-tgds-def}
Let 
$\sigma: \phi(\bar{X},\bar{Y}) \rightarrow \exists{\bar{Z}} \ p(\bar{Y},\bar{Z})$ be a tgd  of the mono-succedent type on database schema $\cal D$. Then $\sigma$ is a {\em key-based tgd} if 
$\bar{Y}$ is a superkey of relation $P$ in $\cal D$. 
\end{definition}
Note that a key-based tgd $\sigma: \phi \rightarrow \psi$ can be defined only by a combination of a tgd  with those egds that define the keys of the relation used in $\psi$. 
(Appendix~\ref{key-app} provides the basics of the standard notion of keys of relations.) Referential-integrity constraints~\cite{GarciaMolinaUW02} are a proper subclass of key-based tgds. 

For bag-valued databases, chase is sound under key-based tgds $\sigma: \phi \rightarrow \psi$ such that each relation mentioned in $\psi$ is set valued on all databases satisfying $\sigma$. That is, we require the interaction of a tgd not only with the key-defining egds for $\psi$, but also with dependencies that restrict certain stored relations to be sets in all bag-valued databases satisfying $\sigma$. 
Such set-enforcing dependencies can be formally defined as egds, provided that tuple IDs are defined for the respective relations, please see Appendix~\ref{appendix-a} for the details. (Note that the requirement that certain stored relations be set valued arises naturally if one seeks soundness of chase under bag semantics, please see~\cite{DeutschDiss}.) 


We now provide necessary and sufficient conditions for soundness of chase steps under bag and bag-set semantics for query evaluation, for our mono-succedent restriction on tgds in sets of embedded dependencies:

\begin{theorem}
\label{key-bag-chase-sound-theorem}
Given a CQ query $Q$ and a set of embedded dependencies $\Sigma$ where each tgd is of the mono-succedent type. Under bag semantics, 
chase step $Q \Rightarrow^{\sigma}_B Q'$ using $\sigma \in \Sigma$ is sound if and only if $\sigma$ is of one of the following types: (1) key-based tgds $\phi \rightarrow \psi$, where the relation used in $\psi$ is set valued; or (2) egds, where duplicate query subgoals $p(\bar{X})$ can be removed by chase only if relation $P$ is set valued.
\end{theorem}

In Example~\ref{motivating-example}, dependencies $\sigma_3$ through $\sigma_5$ are the only dependencies that provide sound chase steps under bag semantics. The reason is, $\sigma_3$ becomes a key-based tgd in presence of $\sigma_4$, and $\sigma_5$ ensures that the relation $T$ (used in the right-hand side of $\sigma_3$) is set valued. Under these dependencies, the sound result $Q_{\Sigma}$ of chasing $Q$   under bag semantics is isomorphic to view $W$. 

\begin{theorem}
\label{key-bag-set-chase-sound-theorem}
Given a CQ query $Q$ and a set of embedded dependencies $\Sigma$ where each tgd is of the mono-succedent type. Under bag-set semantics, 
chase step $Q \Rightarrow^{\sigma}_{BS} Q'$ using $\sigma \in \Sigma$ is sound if and only if $\sigma$ is of one of the following types: (1) key-based tgds; (2) egds; or (3) value-preserving tgds.
\end{theorem}

In Example~\ref{motivating-example}, dependencies $\sigma_2$ through $\sigma_5$ are the only dependencies in $\Sigma$ that provide sound chase steps under bag-set semantics. The reason is, $\sigma_3$ and $\sigma_4$ together result in a key-based tgd, and $\sigma_2$ is a value-preserving tgd. (Note that $\sigma_5$ is still valid, yet redundant, under bag-set and set semantics.) Under  dependencies $\sigma_2$ through $\sigma_5$, the sound result $Q_{\Sigma}$ of chasing $Q$  under bag-set semantics is isomorphic to view $V$. 

The proofs of Theorems~\ref{key-bag-chase-sound-theorem} and~\ref{key-bag-set-chase-sound-theorem} are straightforward from the proofs of Theorems~\ref{bag-chase-sound-theorem} and~\ref{bag-set-chase-sound-theorem}. Please see Appendix~\ref{proof-sound-chase-steps-appendix} for the details.
} 

We now discuss construction, from the given set of embedded dependencies where each tgd is of the mono-succedent type, of sets of only those dependencies that ensure sound chase steps under bag or bag-set semantics. 
Given a CQ query $Q$ and a set of embedded dependencies $\Sigma_S$, let $\Sigma_B$ be a subset of $\Sigma_S$, such that chase steps using all dependencies in $\Sigma_B$  are sound under bag semantics. It is easy to see that there exists a unique set $\Sigma^{max}_B$, such that each $\Sigma_B$ is a subset of $\Sigma^{max}_B$. 
Similarly, let $\Sigma_{BS}$ be a subset of $\Sigma_S$, such that chase steps using all dependencies in $\Sigma_{BS}$  are sound under bag-set semantics. Then there exists a unique set $\Sigma^{max}_{BS}$, such that for each $\Sigma_{BS}$ 
it holds that $\Sigma_{BS} \subseteq \Sigma^{max}_{BS}$.

\begin{proposition}
\label{key-dep-subset-prop}
Given a CQ query $Q$ and a set of embedded dependencies $\Sigma_S$ where each tgd is of the mono-succedent type. Let $\Sigma^{max}_B$ ($\Sigma^{max}_{BS}$, respectively) be the maximal subset of $\Sigma_S$, such that chase steps using all dependencies in $\Sigma^{max}_B$ ($\Sigma^{max}_{BS}$, respectively)  are sound. Then $\Sigma^{max}_B \subseteq \Sigma^{max}_{BS} \subseteq \Sigma_S$.
\end{proposition}

The 
proof is immediate from our results on soundness of chase steps under each of the three semantics.


\nop{

\subsection{The Case where the Set of Dependencies Is Empty}
\label{no-dep-sec}

The main results of this subsection are twofold:
\begin{itemize}
 	\item We show that for an arbitrary CQ query $Q$ and for an arbitrary (finite) set $\cal V$ of CQ views, the spaces of equivalent $\cal V$-based conjunctive rewritings of $Q$ under the three semantics for query evaluation form a hierarchy, as follows. Let $Space_X(Q,{\cal V})$ denote the space of equivalent $\cal V$-based view-minimal conjunctive rewritings of $Q$ under the semantics $X$ for query evaluation, where $X$ is one of S, B, and BS, which stand for set, bag, and bag-set semantics, respectively. Then $Space_B(Q,{\cal V}) \subseteq Space_{BS}(Q,{\cal V}) \subseteq Space_S(Q,{\cal V})$. These results are immediate from Propositions~\ref{bag-bag-set-proposition} and~\ref{bag-set-set-proposition}.

	\item We provide a CQ query $Q$ and a set $\cal V$ of CQ views for which both set inclusions are proper: \linebreak $Space_B(Q,{\cal V}) \subset Space_{BS}(Q,{\cal V}) \subset Space_S(Q,{\cal V})$. Please see Example~\ref{proper-incl-example}.
\end{itemize}

We now substantiate the claims of this subsection. Let $Q$ and $Q'$ be two CQ queries defined on database schema $\cal D$. 
Then, we have that

\begin{proposition}
\label{bag-bag-set-proposition}
Whenever $Q \equiv_B Q'$ then $Q \equiv_{BS} Q'$.
\end{proposition}

That is, if $Q$ and $Q'$ are equivalent under {\it bag} semantics, then $Q$ and $Q'$ are also  equivalent  under {\it bag-set} semantics.

\begin{proof}
The proof follows directly from Theorem~\ref{cv-theorem} \reminder{************actually from my extension of this theorem to the case where bag databases require some base relations to be sets!!!]***********} , where the conditions for equivalence under bag-set semantics translate, on bag-valued databases, into the conditions for equivalence under bag semantics. 
\end{proof}

\begin{proposition}
\label{bag-set-set-proposition}
Whenever $Q \equiv_{BS} Q'$ then $Q \equiv_{S} Q'$.
\end{proposition}

That is, if $Q$ and $Q'$ are equivalent under {\it bag-set} semantics, then $Q$ and $Q'$ are also  equivalent  under {\it set} semantics.

\begin{proof}
Let $Q \equiv_{BS} Q'$. We prove the claim of Proposition~\ref{bag-set-set-proposition} by building a containment mapping from $Q$ to $Q'$, and another from $Q'$ to $Q$. Then, from the results of~\cite{ChandraM77} it follows that $Q \equiv_{S} Q'$. 

We build the containment mappings as follows: First, we construct a containment mapping $\mu$ from the variables in $Q$ to the variables in $Q'$ in such a way that the result of applying $\mu$ to the canonical representation of $Q$ (canonical representations of queries are obtained by dropping duplicate subgoals only, see Theorem~\ref{cv-theorem}) is a canonical representation of $Q'$. By  Theorem~\ref{cv-theorem}, $Q \equiv_{BS} Q'$ means that the canonical representations of $Q$ and $Q'$ are isomorphic up to variable renamings. Thus, such a mapping $\mu$ always exists. 

Now observe that $\mu$, as constructed, is a {\it bijective mapping} between the versions of $Q$ and $Q'$ that (versions) can be considered to be in process of minimization, in the set-semantics setting. That is, removing duplicate subgoals is a step in the minimization process of~\cite{ChandraM77}. Thus, the isomorphism between the canonical representations of $Q$ and $Q'$ means that their minimized versions, under set semantics, will be equivalent.

Thus, $\mu$ is a containment mapping from $Q$ to $Q'$, and $\mu^{-1}$  is a containment mapping from $Q'$ to $Q$. 
\end{proof}

\begin{corollary}
\label{bag-set-proposition}
Whenever $Q \equiv_{B} Q'$ then $Q \equiv_{S} Q'$.
\end{corollary}

That is, if $Q$ and $Q'$ are equivalent under {\it bag} semantics, then $Q$ and $Q'$ are also  equivalent  under {\it set} semantics. This result follows directly from Propositions~\ref{bag-bag-set-proposition} and~\ref{bag-set-set-proposition}.

We now provide an example of a CQ query $Q$ and of a set $\cal V$ of CQ views, for which $Space_B(Q,{\cal V}) \subset Space_{BS}(Q,{\cal V}) \subset Space_S(Q,{\cal V})$. 
\begin{example}
\label{proper-incl-example}
Let query $Q$ and views $U$, $V$, $W_1$, $W_2$ be defined as follows: 
\begin{tabbing}
$Q(A) \ :- \ p(A,B), r(B,C), s(C,D), t(D,E).$ \\
$U(A,B,C) \ :- \ p(A,B), r(B,C).$ \\
$V(C) \ :- \ s(C,D), t(D,E).$ \\
$W_1(B,C) \ :- \ r(B,C), s(C,D), t(D,E).$ \\
$W_2(B) \ :- \ r(B,C), s(C,D), t(D,E).$ 
\end{tabbing}
Intuitively, $Q$ is a chain query, and the body of each view is a subchain of the body of $Q$. The only difference between $W_1$ and $W_2$ is in the head variables.

Consider three rewritings, $R_1,$ $R_2,$ and $R_3,$ of query $Q$ that are based on these four views. We also provide the expansions of the rewritings.
\begin{tabbing}
$R_1(A) \ :- \ U(A,B,C), V(C).$ \\
$R_1^{exp}(A) \ :- \ p(A,B), r(B,C), s(C,D), t(D,E).$ \\
$R_2(A) \ :- \ U(A,B,C), W_1(B,C).$ \\
$R_2^{exp}(A) \ :- \ p(A,B), r(B,C), r(B,C), s(C,D), t(D,E).$ \\
$R_3(A) \ :- \ U(A,B,C), W_2(B).$ \\
$R_3^{exp}(A) \ :- \ p(A,B), r(B,C), r(B,C_1), s(C_1,D), t(D,E).$ 
\end{tabbing}

Rewriting $R_1$ is equivalent (modulo the given views) to query $Q$ under bag semantics for query evaluation, because $R_1^{exp} \equiv_B Q$ by Theorem~\ref{cv-theorem}. 
By our Proposition~\ref{bag-bag-set-proposition} and Corollary~\ref{bag-set-proposition}, $R_1$ is also equivalent to $Q$ under bag-set and set semantics for query evaluation.

Rewriting $R_2$ is equivalent to query $Q$ under bag-set semantics for query evaluation, because $R_2^{exp} \equiv_{BS} Q$ by Theorem~\ref{cv-theorem}. 
By our Proposition~\ref{bag-set-set-proposition}, $R_2$ is also equivalent to $Q$ under set semantics for query evaluation. At the same time, $R_2$ is {\it not} equivalent to $Q$ under {\it bag} semantics, as evidenced by database $D_1$, defined as follows. In bag-valued database $D_1$, let $P = \{\{ (1,2) \}\}$, $R = \{\{ (2,3), (2,3) \}\}$ (notice that $R$ is a purely bag-valued relation), $S = \{\{ (3,4) \}\}$, and $T = \{\{ (4,5) \}\}$. On this database $D_1$, the answer $Q(D_1)$ to the query $Q$ is $Q(D_1) = \{\{ (1), (1) \}\}$, while $R_2^{exp}(D_1)  = \{\{ (1), (1), (1), (1) \}\}$ by rules of bag semantics for query evaluation. From the fact that $Q(D_1)$ and $R_2^{exp}(D_1)$ are not the same {\it bags,} we conclude that $R_2$ is not equivalent (modulo the views $U$ and $W_1$) to $Q$ under bag semantics. 

Thus, rewriting $R_2$ is in $Space_{BS}(Q,{\cal V})$ and in $Space_{S}(Q,{\cal V})$, but is not in $Space_B(Q,{\cal V}).$

Rewriting $R_3$ is equivalent to query $Q$ under set semantics for query evaluation. The equivalence holds because we can conclude $R_3^{exp} \sqsubseteq_{S} Q$ and $Q \sqsubseteq_{S} R_3^{exp}$ from the existence of containment mappings in both directions and by the containment-mapping theorem of~\cite{ChandraM77}.  
At the same time, $R_3$ is {\it not} equivalent to $Q$ under {\it bag-set} semantics, as evidenced by database $D_2$, defined as follows. In set-valued database $D_2$, let $P = \{ (1,2) \}$, $R = \{ (2,3), (2,6) \}$, $S = \{ (3,4) \}$, and $T = \{ (4,5) \}$. On this database $D_2$, the answer $Q(D_2)$ to the query $Q$ is $Q(D_2) = \{\{ (1) \}\}$, while $R_3^{exp}(D_2)  = \{\{ (1), (1)  \}\}$ by rules of bag-set semantics for query evaluation. From the fact that $Q(D_2)$ and $R_3^{exp}(D_2)$ are not the same {\it bags,} we conclude that $R_3$ is not equivalent (modulo the views $U$ and $W_2$) to $Q$ under bag-set semantics. 

Thus, rewriting $R_3$ is in $Space_{S}(Q,{\cal V})$, but is not in $Space_{BS}(Q,{\cal V})$.

\end{example}

\subsection{The Case of Nonempty Sets of Dependencies}
\label{basis-for-c-and-b-sec}


\reminder{*******These dependencies are the basis of our extending C\&B to bag and bag-set semantics *******}

\reminder{********* Note that unlike the dependency-free case, in presence of $\Sigma$ we no longer have $\equiv_B$ implies $\equiv_{BS}$ (cf. Proposition~\ref{bag-bag-set-proposition}) [[[Must double check this claim! --- should follow from different rules for chase safety under B and BS]]]*********}

\begin{proposition}
\label{dep-bag-to-set-proposition}
Given embedded dependencies $\Sigma$ such that chase is sound and is guaranteed to terminate under bag semantics, and given CQ queries $Q$, $Q'$. Then  $Q \equiv_{\Sigma,B} Q'$ implies $Q \equiv_{\Sigma,S} Q'$.
%
\end{proposition}

\begin{proposition}
\label{dep-bag-set-to-set-proposition}
Given embedded dependencies $\Sigma$ such that chase is sound and is guaranteed to terminate under bag-set semantics, and given CQ queries $Q$, $Q'$. Then  $Q \equiv_{\Sigma,BS} Q'$ implies $Q \equiv_{\Sigma,S} Q'$.
\end{proposition}

The proofs of Propositions~\ref{dep-bag-to-set-proposition} and~\ref{dep-bag-set-to-set-proposition} are straightforward from Proposition~\ref{bag-set-set-proposition} and Corollary~\ref{bag-set-proposition}, as well as from the fact that whenever chase steps are sound under either bag 
 or bag-set semantics, then they are also sound under set semantics.

Note that under the conditions of Proposition~\ref{dep-bag-set-to-set-proposition}, the result of applying to $Q$ sound {\it bag-set} chase steps is the same as the result of applying to $Q$ sound {\it set} chase steps. The same holds for the relationship between $Q'_{\Sigma,BS}$ and $Q'_{\Sigma,S}$, in the context of Proposition~\ref{dep-bag-set-to-set-proposition}. At the same time, it is still not true that $Q \equiv_{\Sigma,BS} Q'$ holds {\it if and only if} $Q \equiv_{\Sigma,S} Q'$. Consider a counterexample:
\begin{example}
Let $\Sigma = \{ \sigma \}$ consist of a single value-preserving tgd $\sigma: p(X,Y) \rightarrow s(X).$  Consider CQ queries $Q$ and $Q'$, defined as follows.
\begin{tabbing}
$Q(X) \ :- \ p(X,Y), p(X,Z).$ \\
$Q'(X) \ :- \ p(X,Y).$
\end{tabbing}
Then 
\begin{tabbing}
$Q_{\Sigma}(X) \ :- \ p(X,Y), p(X,Z), s(X).$ \\
$Q'_{\Sigma}(X) \ :- \ p(X,Y), s(X).$
\end{tabbing}

It holds that $Q \equiv_{\Sigma,S} Q'$ by the dependency-free test $Q_{\Sigma} \equiv_S Q'_{\Sigma}$ of Theorem~\ref{chase-theorem}. At the same time, from Theorem~\ref{bag-set-chase-equiv-theorem} (via failure of the test for $Q_{\Sigma} \equiv_{BS} Q'_{\Sigma}$) it follows that $Q$ and $Q'$ are not equivalent under bag-set semantics in presence of $\Sigma$.
\end{example}

Analogous observations can be made concerning Proposition~\ref{dep-bag-to-set-proposition}.
} 
} 

\section{$\Sigma$-Based Version of Prop. 2.1}
\label{dep-b-bs-implic-appendix} 


In this appendix we provide the proof of Proposition~\ref{sigma-b-bs-s-implic-prop}, which is the dependency-based version of Proposition~\ref{b-bs-s-implic-prop} (\cite{VardiBagsPods93}, see Section~\ref{bag-equiv-defs} of this current paper). By Theorems~\ref{bag-chase-equiv-theorem} and~\ref{bag-set-chase-equiv-theorem}, the proof works both for the formulation of Proposition~\ref{sigma-b-bs-s-implic-prop} and for the formulation that parallels Proposition~\ref{b-bs-s-implic-prop} (see Proposition~\ref{he-b-bs-s-implic-prop} below.) We also provide a proof of Proposition~\ref{cor-dep-subset-prop}. Finally, we provide the analogs of Theorem~\ref{bag-c-and-b-theorem}  for (a) CQ queries under bag-set semantics, and for (b) CQ queries with grouping and aggregation.


\begin{proof}{(Proposition~\ref{sigma-b-bs-s-implic-prop})}

\mbox{}

{\it Proof of (1):} Assume 
\begin{equation}
\label{eq-one}
Q \equiv_{\Sigma,B} Q' . 
\end{equation}
or, equivalently (by Theorem~\ref{bag-chase-equiv-theorem}), assume 
\begin{equation}
\label{eq-two}
(Q)_{\Sigma,B} \equiv_{B} (Q')_{\Sigma,B}  
\end{equation}
in the absence of all dependencies other than the set-enforcing dependencies of the given database schema. 
Then Equation~\ref{eq-three} 
\begin{equation}
\label{eq-three}
(Q)_{\Sigma,B} \equiv_{BS} (Q')_{\Sigma,B} . 
\end{equation}
follows from Equation~\ref{eq-two} by Proposition~\ref{b-bs-s-implic-prop}. Equation~\ref{eq-four} 
\begin{equation}
\label{eq-four}
(Q)_{\Sigma,B} \equiv_{\Sigma,BS} (Q')_{\Sigma,B} . 
\end{equation}
follows from Equation~\ref{eq-three} by Proposition~\ref{intro-sigma-prop}. Equation~\ref{eq-five} 
\begin{equation}
\label{eq-five}
((Q)_{\Sigma,B})_{\Sigma,BS} \equiv_{BS} ((Q')_{\Sigma,B})_{\Sigma,BS} . 
\end{equation}
follows from Equation~\ref{eq-four} by Theorem~\ref{bag-set-chase-equiv-theorem}. Equation~\ref{eq-six} 
\begin{equation}
\label{eq-six}
(Q)_{\Sigma,BS} \equiv_{BS} (Q')_{\Sigma,BS} . 
\end{equation}
follows from Equation~\ref{eq-five} for the following reasons:
\begin{itemize}
	\item By Proposition~\ref{dep-subset-prop} (also see Theorem~\ref{bag-chase-sound-theorem} and the definitions of chase steps), the set $\Sigma_1 \subseteq \Sigma$ of dependencies that are soundly applicable  to a query under bag semantics is a subset of  the set $\Sigma_2 \subseteq \Sigma$ of dependencies that are soundly applicable  to the same query under bag-set semantics.
	\item From Theorem~\ref{uniqueness-theorem} and its analog for bag-set semantics (Theorem~\ref{bag-set-uniqueness-theorem}), it follows that\linebreak $((Q)_{\Sigma,B})_{\Sigma,BS} \equiv_{BS} (Q)_{\Sigma,BS}$, and similarly\linebreak $((Q')_{\Sigma,B})_{\Sigma,BS} \equiv_{BS} (Q')_{\Sigma,BS}$ . 
	\item By transitivity of $\equiv_{BS}$, we obtain Equation~\ref{eq-six}.
\end{itemize} 

Finally,  Equation~\ref{eq-seven} 
\begin{equation}
\label{eq-seven}
Q \equiv_{\Sigma,BS} Q' . 
\end{equation}
follows from Equation~\ref{eq-six} by Theorem~\ref{bag-set-chase-equiv-theorem}.

\mbox{}


{\it Proof of  (2):}  
Assume 
\begin{equation}
\label{eq-eight}
Q \equiv_{\Sigma,BS} Q' . 
\end{equation}
or, equivalently (by Theorem~\ref{bag-set-chase-equiv-theorem}), assume 
\begin{equation}
\label{eq-nine}
(Q)_{\Sigma,BS} \equiv_{BS} (Q')_{\Sigma,BS} . 
\end{equation}
Then Equation~\ref{eq-ten} 
\begin{equation}
\label{eq-ten}
(Q)_{\Sigma,BS} \equiv_{S} (Q')_{\Sigma,BS} . 
\end{equation}
follows from Equation~\ref{eq-nine} by Proposition~\ref{b-bs-s-implic-prop}. Equation~\ref{eq-eleven} 
\begin{equation}
\label{eq-eleven}
(Q)_{\Sigma,BS} \equiv_{\Sigma,S} (Q')_{\Sigma,BS} . 
\end{equation}
follows from Equation~\ref{eq-ten} by Proposition~\ref{intro-sigma-prop}. Equation~\ref{eq-twelve} 
\begin{equation}
\label{eq-twelve}
((Q)_{\Sigma,BS})_{\Sigma,S} \equiv_{S} ((Q')_{\Sigma,BS})_{\Sigma,S} . 
\end{equation}
follows from Equation~\ref{eq-eleven} by Theorem~\ref{chase-theorem}. Equation~\ref{eq-thirteen} 
\begin{equation}
\label{eq-thirteen}
(Q)_{\Sigma,S} \equiv_{S} (Q')_{\Sigma,S} . 
\end{equation}
follows from Equation~\ref{eq-twelve} for the following reasons:
\begin{itemize}
	\item By Proposition~\ref{dep-subset-prop} (also see Theorem~\ref{bag-set-chase-sound-theorem} and the definitions of chase steps), the set $\Sigma_1 \subseteq \Sigma$ of dependencies that are soundly applicable  to a query under bag-set semantics is a subset of  the set $\Sigma_2 \subseteq \Sigma$ of dependencies that are (always soundly) applicable  to the same query under set semantics.
	\item From the analog of Theorem~\ref{uniqueness-theorem} for bag-set semantics (Theorem~\ref{bag-set-uniqueness-theorem}) and from the definitions of chase steps, it follows that $((Q)_{\Sigma,BS})_{\Sigma,S} \equiv_{S} (Q)_{\Sigma,S}$, and similarly $((Q')_{\Sigma,BS})_{\Sigma,S} \equiv_{S} (Q')_{\Sigma,S}$ . 
	\item By transitivity of $\equiv_{S}$, we obtain Equation~\ref{eq-thirteen}.
\end{itemize} 

Finally,  Equation~\ref{eq-fourteen} 
\begin{equation}
\label{eq-fourteen}
Q \equiv_{\Sigma,S} Q' . 
\end{equation}
follows from Equation~\ref{eq-thirteen} by Theorem~\ref{chase-theorem}. 
\end{proof}

\vspace{-0.1cm}
\begin{proposition}
\label{he-b-bs-s-implic-prop}
Given two CQ queries $Q_1$ and $Q_2$, and a set of embedded dependencies $\Sigma$, such that there exists the {\em set-}chase result in chase of each of $Q_1$ and $Q_2$ using $\Sigma$. Then (1) $Q_1 \equiv_{\Sigma,B} Q_2$ implies  $Q_1 \equiv_{\Sigma,BS} Q_2$, and (2) $Q_1 \equiv_{\Sigma,BS} Q_2$ implies  $Q_1 \equiv_{\Sigma,S} Q_2$.
\end{proposition}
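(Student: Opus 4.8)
The plan is to reduce the dependency-aware statement of Proposition~\ref{he-b-bs-s-implic-prop} to the chase-result statement of Proposition~\ref{sigma-b-bs-s-implic-prop}, which has already been established, by invoking the three equivalence tests. Concretely, Theorem~\ref{bag-chase-equiv-theorem} gives $Q_1 \equiv_{\Sigma,B} Q_2$ iff $(Q_1)_{\Sigma,B} \equiv_B (Q_2)_{\Sigma,B}$ (modulo the set-enforcing dependencies), Theorem~\ref{bag-set-chase-equiv-theorem} gives $Q_1 \equiv_{\Sigma,BS} Q_2$ iff $(Q_1)_{\Sigma,BS} \equiv_{BS} (Q_2)_{\Sigma,BS}$, and Theorem~\ref{chase-theorem} gives $Q_1 \equiv_{\Sigma,S} Q_2$ iff $(Q_1)_{\Sigma,S} \equiv_S (Q_2)_{\Sigma,S}$. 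Since each side of both implications in Proposition~\ref{he-b-bs-s-implic-prop} is, by one of these iff's, logically equivalent to the corresponding side of Proposition~\ref{sigma-b-bs-s-implic-prop}, the two propositions carry identical content and the already-given proof transfers. Thus the first thing I would do is state this translation explicitly, noting that all three chase results $(Q_i)_{\Sigma,S}$, $(Q_i)_{\Sigma,BS}$, $(Q_i)_{\Sigma,B}$ exist, which is guaranteed by the hypothesis that the set-chase result exists together with Proposition~\ref{chase-termination-prop}.

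Second, to keep the argument self-contained I would spell out the forward chain for part~(1) directly at the level of queries, mirroring the proof of Proposition~\ref{sigma-b-bs-s-implic-prop}. Starting from $Q_1 \equiv_{\Sigma,B} Q_2$, Theorem~\ref{bag-chase-equiv-theorem} yields $(Q_1)_{\Sigma,B} \equiv_B (Q_2)_{\Sigma,B}$. The dependency-free implication of Proposition~\ref{b-bs-s-implic-prop} upgrades bag equivalence to bag-set equivalence, giving $(Q_1)_{\Sigma,B} \equiv_{BS} (Q_2)_{\Sigma,B}$; Proposition~\ref{intro-sigma-prop} then re-internalizes the dependencies to $(Q_1)_{\Sigma,B} \equiv_{\Sigma,BS} (Q_2)_{\Sigma,B}$, and a further application of Theorem~\ref{bag-set-chase-equiv-theorem} produces $((Q_1)_{\Sigma,B})_{\Sigma,BS} \equiv_{BS} ((Q_2)_{\Sigma,B})_{\Sigma,BS}$. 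The symmetric chain for part~(2) replaces $B,BS$ by $BS,S$ throughout and closes with Theorem~\ref{chase-theorem}, using that under set semantics every embedded dependency is soundly applicable.

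Third, and this is where the real work sits, I would collapse the iterated chase $((Q_i)_{\Sigma,B})_{\Sigma,BS}$ back to the direct chase $(Q_i)_{\Sigma,BS}$. The key fact is Proposition~\ref{dep-subset-prop}: the set of dependencies applicable in sound bag chase is contained in the set applicable in sound bag-set chase. Hence the bag-chase sequence producing $(Q_i)_{\Sigma,B}$ is itself a prefix of a sound bag-set-chase sequence, so chasing $(Q_i)_{\Sigma,B}$ further under bag-set semantics yields a terminal bag-set-chase result of $Q_i$; by the uniqueness theorems (Theorem~\ref{uniqueness-theorem} under bag semantics and its bag-set analog Theorem~\ref{bag-set-uniqueness-theorem}) any such terminal result is $\equiv_{BS}$ to $(Q_i)_{\Sigma,BS}$, so $((Q_i)_{\Sigma,B})_{\Sigma,BS} \equiv_{BS} (Q_i)_{\Sigma,BS}$. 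Substituting this into the chain above and using transitivity of $\equiv_{BS}$ gives $(Q_1)_{\Sigma,BS} \equiv_{BS} (Q_2)_{\Sigma,BS}$, whence Theorem~\ref{bag-set-chase-equiv-theorem} delivers $Q_1 \equiv_{\Sigma,BS} Q_2$. I expect this collapsing step to be the main obstacle: the rest is a mechanical passage through the three iff-tests and the Chaudhuri--Vardi implication, whereas justifying $((Q_i)_{\Sigma,B})_{\Sigma,BS} \equiv_{BS} (Q_i)_{\Sigma,BS}$ genuinely relies on the interaction of chase uniqueness with the inclusion $\Sigma^{max}_B(Q_i,\Sigma) \subseteq \Sigma^{max}_{BS}(Q_i,\Sigma)$, and care is needed to keep the ``absence of dependencies other than the set-enforcing ones'' qualifier consistent across the successive applications of the equivalence tests.
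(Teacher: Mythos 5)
Your proposal is correct and takes essentially the same route as the paper: the paper proves Proposition~\ref{sigma-b-bs-s-implic-prop} by exactly your chain (Theorem~\ref{bag-chase-equiv-theorem}, then Proposition~\ref{b-bs-s-implic-prop}, Proposition~\ref{intro-sigma-prop}, Theorem~\ref{bag-set-chase-equiv-theorem}, and finally the collapse $((Q)_{\Sigma,B})_{\Sigma,BS} \equiv_{BS} (Q)_{\Sigma,BS}$ justified via Proposition~\ref{dep-subset-prop} together with Theorems~\ref{uniqueness-theorem} and~\ref{bag-set-uniqueness-theorem} and transitivity), and it observes, just as you do, that the equivalence tests make this proof serve verbatim for the formulation of Proposition~\ref{he-b-bs-s-implic-prop}. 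Your identification of the collapsing step as the genuine crux, and your care with the ``absence of dependencies other than the set-enforcing ones'' qualifier, correspond precisely to where the paper's own argument does its real work.
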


\vspace{-0.1cm}
We next provide a proof of Proposition~\ref{cor-dep-subset-prop}. 

\begin{proof}{(Proposition~\ref{cor-dep-subset-prop})}
Consider a pair $(Q,\Sigma)$ that satisfies conditions of Theorem~\ref{sigma-max-theorem}. By definition of chase steps (see Section~\ref{chase-prelims}), in an arbitrary {\em set-}chase sequence ${\bf C} = Q,Q_1,\ldots$ for $Q$ and $\Sigma$, for each element $Q_i$ of ${\bf C}$ such that  $Q_{i+1}$ is also an  element of ${\bf C}$, it holds that $Q_{i+1} \sqsubseteq_S Q_i$ in the absence of dependencies. (Also, trivially, for each CQ query $Q$ it holds that $Q \sqsubseteq_S Q$.) By transitivity and reflexivity of $\sqsubseteq_S$, for an arbitrary pair $(Q_i,Q_{i+j})$ (for $j \geq  0$) of elements of ${\bf C}$, it holds that $Q_{i+j} \sqsubseteq_S Q_i$. By definition of sound chase under bag and bag-set semantics (see Section~\ref{new-sound-chase-sec}), the same {\em set-containment} relationship $Q_{i+j} \sqsubseteq_S Q_i$ holds for an arbitrary pair $(Q_i,Q_{i+j})$ (for $j \geq 0$) of elements of a sound-chase sequence ${\bf C'}$ under bag or bag-set semantics. The rest of the proof of Proposition~\ref{cor-dep-subset-prop} is immediate from the result of Proposition~\ref{dep-subset-prop} that 
$\Sigma^{max}_{B}(Q,\Sigma) \subseteq  \Sigma^{max}_{BS}(Q,\Sigma) \subseteq \Sigma$ for the above fixed pair $(Q,\Sigma)$ and from Proposition~\ref{sigma-b-bs-s-implic-prop}. 
\end{proof}

We now provide the analog of Theorem~\ref{bag-c-and-b-theorem} for CQ queries under bag-set semantics.  

\vspace{-0.1cm}

\begin{theorem}
\label{bag-set-c-and-b-theorem}
Given CQ query $Q$ and set $\Sigma$ of embedded dependencies 
such that {\em set} chase 
of $Q$ under $\Sigma$ terminates in finite time. Then {\sc Bag-Set-C\&B} returns all $\Sigma$-minimal reformulations $Q'$ such that $Q' \equiv_{\Sigma,BS} Q$. 
\end{theorem}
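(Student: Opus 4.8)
{(Plan.)}
The plan is to mirror the soundness-and-completeness argument for the set-semantics algorithm C\&B (Theorem~\ref{c-and-b-theorem}), substituting the two ingredients that were changed in forming {\sc Bag-Set-C\&B}. Recall that {\sc Bag-Set-C\&B} first computes, by \emph{sound} bag-set chase of $Q$ using $\Sigma$, the bag-set universal plan $U := (Q)_{\Sigma,BS}$; it then iterates over all queries $U'$ whose head is $head(U)$ and whose body is a nonempty subset of $body(U)$, and outputs each $\Sigma$-minimal $U'$ for which $(U')_{\Sigma,BS} \equiv_{BS} U$ holds in the absence of dependencies. First I would settle termination: since set-chase of $Q$ terminates by hypothesis, and (exactly as in the original C\&B) set-chase of each backchase candidate $U'$ terminates as well, Proposition~\ref{chase-termination-prop} upgrades each of these to termination of sound bag-set chase, so both phases of {\sc Bag-Set-C\&B} halt.

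Next I would prove soundness, namely that every returned $U'$ satisfies $U' \equiv_{\Sigma,BS} Q$. Since $U = (Q)_{\Sigma,BS}$ is produced by a sequence of sound bag-set chase steps, the definition of soundness in Section~\ref{making-chase-sound-section} gives $Q \equiv_{\Sigma,BS} U$, and Theorem~\ref{bag-set-uniqueness-theorem} guarantees that $U$ is well defined up to bag-set equivalence. Applying Theorem~\ref{bag-set-chase-equiv-theorem} to the pair $U', Q$ (together with $(Q)_{\Sigma,BS} \equiv_{BS} U$), the backchase test $(U')_{\Sigma,BS} \equiv_{BS} U$ holds precisely when $U' \equiv_{\Sigma,BS} Q$; hence every output is a bag-set reformulation of $Q$, and it is $\Sigma$-minimal by the minimality filter inherited from C\&B.

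The crux is completeness: I would show that every $\Sigma$-minimal $Q'$ with $Q' \equiv_{\Sigma,BS} Q$ is returned up to isomorphism. By Theorem~\ref{bag-set-chase-equiv-theorem}, $Q' \equiv_{\Sigma,BS} Q$ is equivalent to $(Q')_{\Sigma,BS} \equiv_{BS} (Q)_{\Sigma,BS} = U$ in the absence of dependencies, so by Theorem~\ref{cv-theorem}(2) the canonical representations of $(Q')_{\Sigma,BS}$ and $U$ are isomorphic. Because $Q'$ is reached from itself to $(Q')_{\Sigma,BS}$ only by sound chase steps that add atoms or merge variables, and because $Q'$ is $\Sigma$-minimal, I would argue that $Q'$ is isomorphic to a subquery of $U$ --- this is the embedding step of the original C\&B completeness proof, with the set test of Theorem~\ref{chase-theorem} replaced by the bag-set test of Theorem~\ref{bag-set-chase-equiv-theorem} and the ``all set-chase results agree'' fact replaced by the uniqueness of sound bag-set chase (Theorem~\ref{bag-set-uniqueness-theorem}). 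Then $Q'$ appears among the backchase candidates and passes the test.

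The hard part will be exactly this embedding step. The difficulty is that, by Proposition~\ref{cor-dep-subset-prop}, the bag-set plan $U = (Q)_{\Sigma,BS}$ is in general strictly less chased than $(Q)_{\Sigma,S}$ (only $(Q)_{\Sigma,S} \sqsubseteq_S (Q)_{\Sigma,BS}$ is available), so one must verify that searching the subqueries of $U$, rather than of $(Q)_{\Sigma,S}$, still captures every $\Sigma$-minimal bag-set reformulation. This is where uniqueness of the sound bag-set chase result (Theorem~\ref{bag-set-uniqueness-theorem}) and the exactness of the bag-set equivalence test (Theorem~\ref{bag-set-chase-equiv-theorem}) are indispensable: together they force $(Q')_{\Sigma,BS}$ and $U$ to share a canonical form, so a $\Sigma$-minimal $Q'$ can never require an atom absent from $U$.
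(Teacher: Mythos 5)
Your plan coincides with the paper's own argument: the paper disposes of this theorem (together with Theorems~\ref{bag-c-and-b-theorem} and~\ref{aggr-c-and-b-theorem}) in a single line, stating that it ``follows from the soundness and completeness of C\&B of~\cite{DeutschPT06} (see Appendix~\ref{c-and-b-appendix}) and from the results of this paper'' --- i.e., exactly the substitution you describe of sound bag-set chase (terminating by Proposition~\ref{chase-termination-prop}, unique up to $\equiv_{BS}$ by Theorem~\ref{bag-set-uniqueness-theorem}) for set chase, and of the test of Theorem~\ref{bag-set-chase-equiv-theorem} for the test of Theorem~\ref{chase-theorem}, inside the two-phase C\&B skeleton. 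Your write-up is in fact more explicit than the paper's, which delegates the completeness/embedding step you flag as the crux entirely to the cited C\&B result rather than re-proving it.
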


\vspace{-0.1cm}
Finally, we provide the analog of Theorem~\ref{bag-c-and-b-theorem} for CQ queries with grouping and aggregation.  

\vspace{-0.1cm}

\begin{theorem}
\label{aggr-c-and-b-theorem}
Given CQ query $Q$ with aggregate\linebreak function $max$, $min$, $sum$, or $count$, and set $\Sigma$ of embedded dependencies 
such that {\em set} chase 
of the core of $Q$ under $\Sigma$ terminates in finite time. Then (1) If the aggregate function of $Q$ is $max$ or $min$, then {\sc Max-Min-C\&B} returns all $\Sigma$-minimal reformulations $Q'$ of $Q$ such that $Q' \equiv_{\Sigma} Q$;  (2) If the aggregate function of $Q$ is $sum$ or $count$, then {\sc Sum-Count-C\&B} returns all $\Sigma$-minimal reformulations $Q'$ of $Q$ such that $Q' \equiv_{\Sigma} Q$. 
\end{theorem}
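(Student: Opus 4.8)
The final statement is Theorem~\ref{aggr-c-and-b-theorem}, which asserts that the two aggregate-reformulation algorithms ({\sc Max-Min-C\&B} and {\sc Sum-Count-C\&B}) are sound and complete for $max/min$ and $sum/count$ aggregate queries, respectively, whenever set-chase of the core terminates.

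\textbf{Overall approach.} The plan is to reduce the correctness of each aggregate algorithm to the already-established correctness of the underlying core-level reformulation machinery, using the aggregate-to-core equivalence reductions of Theorem~\ref{equival-aggr-theorem} together with the dependency-aware core equivalence characterization of Theorem~\ref{aggr-chase-equiv-theorem}. The key conceptual point is that, by the definition in Section~\ref{aggr-prelims}, $\Sigma$-equivalence of two compatible aggregate queries is \emph{defined} in terms of the appropriate semantics-level equivalence of their CQ cores; so the aggregate algorithms merely strip the aggregate head, run the appropriate core-level C\&B variant on the core $\breve{Q}$, and re-attach the aggregate head. Thus the proof factors cleanly into a soundness direction and a completeness direction, each inherited from the corresponding property of the core algorithm.

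\textbf{Key steps, in order.} First I would recall that {\sc Max-Min-C\&B} invokes the set-semantics C\&B on the core $\breve{Q}$, whose soundness and completeness is Theorem~\ref{c-and-b-theorem}, while {\sc Sum-Count-C\&B} invokes {\sc Bag-Set-C\&B}, whose soundness and completeness is Theorem~\ref{bag-set-c-and-b-theorem}. Second, for the $max/min$ case I would invoke Theorem~\ref{aggr-chase-equiv-theorem}(1): $Q \equiv_{\Sigma} Q'$ iff $(\breve{Q})_{\Sigma,S} \equiv_S (\breve{Q}')_{\Sigma,S}$ in the absence of dependencies, which by Theorem~\ref{chase-theorem} is exactly $\breve{Q} \equiv_{\Sigma,S} \breve{Q}'$. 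Hence the set of cores returned by C\&B run on $\breve{Q}$ is precisely the set of $\Sigma$-minimal cores $\Sigma$-equivalent to $\breve{Q}$ under set semantics; re-attaching the head yields exactly the $\Sigma$-minimal aggregate reformulations $Q'$ with $Q' \equiv_{\Sigma} Q$. Third, for the $sum/count$ case I would run the identical argument with Theorem~\ref{aggr-chase-equiv-theorem}(2) and bag-set semantics, invoking Theorem~\ref{bag-set-chase-equiv-theorem} in place of Theorem~\ref{chase-theorem} to convert the dependency-free bag-set core equivalence test into the $\equiv_{\Sigma,BS}$ relation, and Theorem~\ref{bag-set-c-and-b-theorem} for the enumeration guarantee. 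Throughout I would use the fact that $\Sigma$-minimality of an aggregate query is defined (Section~\ref{problem-stmt-sec}) as $\Sigma$-minimality of its core, so the minimality of the returned cores transfers directly to minimality of the aggregate outputs. Finally, termination follows because the set-chase of $\breve{Q}$ terminates by hypothesis, so by Proposition~\ref{chase-termination-prop} the sound bag-set chase used inside {\sc Bag-Set-C\&B} also terminates, guaranteeing finite-time completion.

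\textbf{Main obstacle.} The routine part is the bidirectional inclusion of solution sets; the delicate point I expect to need care with is the faithful correspondence between the head-variable lists of the aggregate queries and the head arguments of their cores, since the equivalence of compatible aggregate queries is sensitive to grouping attributes (the $\bar{S}$ and $\bar{Y}$ of Equation~\ref{aggr-equation}). I would need to verify that re-attaching the original aggregate head to every core reformulation produces a \emph{compatible} aggregate query and that no valid reformulation is lost by the head-stripping step --- i.e., that the head variables appearing as grouping and aggregation arguments are preserved by every $\Sigma$-minimal core reformulation, which follows because those variables are head variables of $\breve{Q}$ and C\&B preserves the head. Establishing this correspondence rigorously, and confirming that no aggregate reformulation using a \emph{different} aggregate function can arise (so that compatibility is not silently violated), is the one place where the otherwise mechanical reduction requires genuine checking.
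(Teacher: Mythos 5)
Your proposal is correct and follows essentially the same route as the paper: the paper's own (very terse) justification is exactly that the aggregate algorithms strip the head, run C\&B (resp.\ {\sc Bag-Set-C\&B}) on the core, re-attach the head, and that correctness follows from Theorem~\ref{c-and-b-theorem} (resp.\ Theorem~\ref{bag-set-c-and-b-theorem}) combined with the aggregate-to-core reduction of Theorem~\ref{aggr-chase-equiv-theorem} and the definition of $\Sigma$-minimality of aggregate queries via their cores. Your write-up is in fact more careful than the paper's, in particular in checking head compatibility and that C\&B's preservation of head variables guarantees no reformulation is lost.
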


\vspace{-0.1cm}

\nop{

\section{View Minimality and $\Sigma$-Minimality for Query Rewritings}
\label{nonsigma-min-appendix}

\reminder{Do I need this section???}

\reminder{******Finalized and proofread********}

In this section we provide an example of a view-minimal solution $R$ to an instance $({\cal D}, X, Q, {\cal V}, \Sigma, {\cal L}_3)$ of the Query-Rewriting Problem, such that $R^{exp}$ is not $\Sigma$-minimal. Please see Section~\ref{problem-stmt-sec} for the notation and definitions. 
\begin{example}
Consider an instance ${\cal P} = (\{ R,S,T \},$ $S, Q, \{ V_1, V_2 \}, \{ \sigma \}, {\cal L}_3)$ of the Query-Rewriting Problem. In this instance, let ${\cal L}_3$ be the language of CQ queries, and let  the input set of dependencies consist of a single egd $\sigma$:
\begin{tabbing}
$\sigma: p(X,Y) \wedge p(X,Z) \rightarrow Y = Z.$ 
\end{tabbing}
Let the definitions of the input query $Q$ and of the input views $V_1$ and $V_2$ be as follows.
\begin{tabbing}
$Q(X) \ :- \ r(X,Y), s(X,U), t(X,W).$ \\
$V_1(X) \ :- \ r(X,Y), s(X,U).$ \\
$V_2(X) \ :- \ r(X,Z), t(X,W).$ 
\end{tabbing}
Note that the result $Q_{\{ \sigma \}}$ of chasing the query $Q$ under $\{ \sigma \}$ is isomorphic to $Q$.

We now show that the view-minimal rewriting 
\begin{tabbing}
$R(X) \ :- \ V_1(X), V_2(X)$
\end{tabbing}
is a solution to the problem instance $\cal P$. Indeed,
\begin{tabbing}
$R^{exp}(X) \ :- \ r(X,Y), r(X,Z) s(X,U), t(X,W).$ \\
$R_{\{ \sigma \}}^{exp}(X) \ :- \ r(X,Y), s(X,U), t(X,W).$ 
\end{tabbing}
Thus, from $R_{\{ \sigma \}}^{exp}(X) \equiv_S Q_{\{ \sigma \}}$ and by Theorem~\ref{chase-theorem}, we have that $R^{exp} \equiv_{\{ \sigma \},S} Q.$

It is easy to see that $R^{exp}$ is not $\{ \sigma \}$-minimal, nor can it be made $\{ \sigma \}$-minimal.
\end{example}

} 

} 

\end{document}